\newtheorem{corollary}{Corollary}[section]
\newtheorem{lemma}{Lemma}[section]
\theoremstyle{definition}
\newtheorem{example}{Example}[section]
\newtheorem{theorem}{Theorem}[section]
\newtheorem{ass}{Assumption}[section]
\newtheorem{defn}{Definition}[section]
\newtheorem{prop}{Proposition}[section]
\newtheorem*{result*}{Result}
\newtheorem*{remark*}{Remark}
\newtheorem*{claim*}{Claim}
\newcommand{\dto}{\stackrel{d}{\longrightarrow}}
\newcommand{\Cipw}{\mathcal{C}^{\textrm{W-proj}}}
\newcommand{\Cnull}{\mathcal{C}^{\mathrm{bal}}}
\newcommand{\Cipwlamb}{\mathcal{C}^{\textrm{W-proj}}_{\bm \Lambda}}
\newcommand{\Cipwprojlamb}{\mathcal{C}^{\textrm{proj}}_{\bm \Lambda}}
\newcommand{\Cnulllamb}{\mathcal{C}^{\mathrm{bal}}_{\bm \Lambda}}
\newcommand{\wipwlamb}{\hat{\bm w}^{\textrm{W-proj}}_{\bm \Lambda}}
\newcommand{\wipwlambnobm}{\hat{w}^{\textrm{W-proj}}_{\bm \Lambda}}
\newcommand{\wipwprojlamb}{\hat{\bm w}^{\textrm{proj}}_{\bm \Lambda}}
\newcommand{\wnulllamb}{\hat{\bm w}^{\mathrm{bal}}_{\bm \Lambda}}
\newcommand{\wipwlambtilde}{\tilde{\bm w}^{\mathrm{IPW}}_{\bm \Lambda}}
\newcommand{\phiipw}{\phi^{\textrm{W-proj}}}
\newcommand{\phinull}{\phi^{\mathrm{bal}}}
\newcommand{\sone}{\textrm{W-proj}}
\newcommand{\stwo}{\textrm{proj}}
\newcommand{\sthree}{\textrm{bal}}
\newcommand{\Pto}{\stackrel{\mathbb P}{\longrightarrow}}
\newcommand{\Var}{\mathrm{Var}}
\newcommand{\ind}{\mbox{$\perp\!\!\!\perp$}}
\newcommand{\iid}{\stackrel{i.i.d.}{\sim}}
\newcommand{\var}{\mathrm{Var}}
\numberwithin{equation}{section}
\title{\textbf{Low-rank Covariate Balancing Estimators\\ under Interference}}
\author{Souhardya Sengupta\thanks{Department of Statistics, Harvard University. Email: \href{mailto:ssengupta@g.harvard.edu}{ssengupta@g.harvard.edu} } \qquad Kosuke Imai\thanks{Department of Government and Department of Statistics, Harvard University.  1737 Cambridge Street, Institute for Quantitative Social Science, Cambridge MA 02138, U.S.A.  Email: \href{mailto:imai@harvard.edu}{imai@harvard.edu} URL:
\href{https://imai.fas.harvard.edu}{https://imai.fas.harvard.edu}} \qquad Georgia Papadogeorgou\thanks{Department of Statistics, University of Florida. 230 Newell Drive, Gainesville FL 32611, U.S.A. Email: \href{mailto:gpapadogeorgou@ufl.edu}{gpapadogeorgou@ufl.edu} URL: \href{https://gpapadogeorgou.netlify.app}{https://gpapadogeorgou.netlify.app}}}
\date{}
\begin{document}
\maketitle
\begin{abstract}
   In observational studies, causal effect estimation becomes difficult when the outcome of one unit depends on the treatments of many other units. Existing approaches rely on inverse-probability weighting (IPW) with estimated propensity scores, which can lead to misspecification bias especially when treatment uptake exhibits complex dependencies across units. To address this challenge, we propose a general class of \textit{covariate balancing estimators} under interference. Existing balancing estimators are limited to settings without spillover effects. We develop our approach under the assumed low-rank structure of potential outcomes, which encompasses a broad class of commonly used assumptions, including anonymous, nearest neighbor, and additive interference, while allowing for more complex, study-specific forms of dependence. We show that the proposed balancing estimators are consistent and asymptotically normal. We also develop an asymptotically valid confidence interval and a data-driven procedure for selecting among candidate low-rank structures. Simulation and empirical studies demonstrate that the proposed estimator is substantially more efficient than the standard IPW estimator. 
   
\bigskip
\noindent {\bf Keywords:} anonymous interference, nearest-neighbor interference, partial interference, spillover effects, weighting
\end{abstract}

\section{Introduction}
Although most traditional causal inference methods assume the absence of spillover effects, researchers across disciplines find that the outcome of one person is often influenced by the treatments assigned to others beyond their own treatment \citep[e.g.,][]{christakis2007,nickerson2008,banerjee2013,paluck2016}. The past two decades have witnessed the emergence of causal inference methods that explicitly address this problem of ``interference between units,'' with respect to effect estimation and inference \citep[e.g.,][]{rosenbaum2007,hudgensandhalloran2008,ettvanderweele2010,Bowers_Fredrickson_Panagopoulos_2013,aronow2017,papadogeordou219,barkley2020,imai:jian:mala:21,hu2022,papa:etal:22}, experimental designs \citep[e.g.,][]{jagadeesanetal2020,YuChristinaLee2022Ettt,viviano2022experimentaldesignnetworkinterference,kandiros2025conflictgraphdesignestimating,viviano2025causalclusteringdesigncluster}, and policy learning \citep[e.g.,][]{kitagawanadwang2021,chanchenyuandkang2016, viviano2024policy,zhang2025individualizedpolicyevaluationlearning}.

The fundamental challenge of interference is that the number of possible treatment patterns grows exponentially, rendering the treatment space high-dimensional. When the true propensity score is known, causal effects can be estimated using inverse probability weighting (IPW) \citep{ettvanderweele2010,papadogeordou219}. In observational settings, however, the propensity score is unknown and must be estimated. Existing approaches rely on IPW with propensity scores estimated via mixed-effects regression models \citep[e.g.,][]{kang2023propensity,parkkang2022,ettvanderweele2010}. The reliance on parametric models in such high-dimensional settings raises concerns about bias arising from model misspecification.

In the absence of spillover effects, covariate balancing estimators offer a robust alternative to IPW. By directly balancing covariates between treated and control groups, these methods enable causal effect estimation under the unconfoundedness assumption while mitigating bias from model misspecification \citep[e.g.,][]{Hainmueller_2012,imaiandratkovic2013,zubuzaretta2015,benmichael2021balancingactcausalinference}. However, existing covariate balancing approaches are not readily applicable in settings with spillover effects or complex dependencies in treatment assignment across units.

In this paper, we address this gap by developing a general class of covariate balancing estimators that accommodates interference and complex dependencies in treatment assignments across units. We first show that, without restrictions on the interference pattern, no weighting estimator is uniformly unbiased in the absence of knowledge of the propensity score. This result highlights the need for structural assumptions on interference. We introduce \textit{low rank structures} as a flexible class of assumptions that captures a wide range of interference patterns, including anonymous or stratified interference \citep[e.g.,][]{hudgensandhalloran2008}, nearest neighbor interference \citep[e.g.,][]{sussman2017elementsestimationtheorycausal}, and additive interference \citep[e.g.,][]{zhang2025individualizedpolicyevaluationlearning}. Beyond these commonly used cases, our framework can accommodate more complex, study-specific interference patterns not previously considered in the literature.



Exploiting this low-rank structure, we propose covariate balancing estimators that are asymptotically unbiased for a broad class of causal estimands without requiring knowledge of the true propensity score. We establish asymptotic normality of the proposed estimator under clustered network, or partial, interference settings. We further characterize its bias and variance under misspecification of the low-rank structure. Furthermore, we develop a consistent variance estimator and construct asymptotically valid confidence intervals. The proposed approach yields substantial efficiency gains over the standard IPW estimator.

Finally, we propose a data-driven method for selecting among a set of candidate low-rank structures, enabling inference even when knowledge of the interference structure is imperfect. We also show that in the case that the propensity score is known (as happens in experimental designs), our methodology can be modified to allow IPW-style estimation that leverages low-rank structures. Simulation studies confirm the theoretical advantages, and a real-world application demonstrates the practical utility of our proposed methodology.

\subsection*{Related literature}

Significant progress has been made in the development of methods for estimating causal effects in the presence of interference between units. As noted above, most of the existing literature focuses on experimental settings, where the true propensity score is known \citep[e.g.,][]{jagadeesanetal2020,imai:jian:mala:21,viviano2022experimentaldesignnetworkinterference,kandiros2025conflictgraphdesignestimating,viviano2025causalclusteringdesigncluster,toulis2013,eichhorn2024loworderoutcomesclustereddesigns,holtz2025}.  In contrast, the methods for observational studies with interference are relatively underdeveloped.  We develop methods that can be applied to both experimental and observational studies.  

When the propensity score is known, the IPW estimator is commonly employed for estimation. Many consider partial interference, which assumes that interference occurs within each of the non-overlapping clusters of units \citep[e.g.,][]{ettvanderweele2010,hudgensandhalloran2014,liuhudgensbacker-dreps2016,papadogeordou219,barkley2020,imai:jian:mala:21}. Our methods have finite-sample estimation validity and only assume partial interference for asymptotic inference. 

Due to the high variances of IPW estimators, researchers frequently impose additional structural assumptions (beyond partial interference) on the interference pattern. 
A broad class of assumptions are based on a graph on the units, where a unit's outcome is affected only by the treatment status of its neighbors \citep[e.g.,][]{sussman2017elementsestimationtheorycausal,liu2019}. A related assumption is an exposure mapping, which assumes that interference depends on a function of the characteristic of the neighborhood \citep[e.g.,][]{aronow2017,Forastiere03042021}. Anonymous interference (also known as stratified interference) is a special case, where a unit's potential outcome depends only on the number (or proportion) of treated units rather than their identities \citep[e.g.,][]{Sobel01122006,hudgensandhalloran2008}.
More recently, \cite{YuChristinaLee2022Ettt} and \cite{zhang2025individualizedpolicyevaluationlearning} consider additive interference, in which one's potential outcomes depend on an additive function of treatment assignments of all units. 

Many of these interference assumptions may appear to be distinct from one another. As a result, the literature has developed a tailored estimation method under each assumption.
In contrast, our proposed low-rank structure framework encompasses many such restrictions as special cases. 
Our method also shows how the IPW estimator can be modified, given a complex interference structure, to improve its efficiency.

In observational studies with unknown propensity scores, existing work typically employs parametric propensity score models.  The most popular choice is a mixed effect logistic regression \citep[e.g.,][]{parkkang2022,liuhudgensbacker-dreps2016,ettvanderweele2010,perez-heydrich2014}, while  \cite{kang2023propensity} extends this model to a semi-parametric additive model.  Other approaches include doubly robust estimation, in which the outcome is also modeled \citep[e.g.,][]{liu2019,kilpatrick2024}.  However, they also rely on parametric models, suggesting that model misspecification can lead to biased estimates.
In the presence of complex dependence and interference, model misspecification is of particular concern for both the propensity score and outcome models.

To address this problem, we develop a covariate balancing  approach that yields unbiased estimates of causal effects by sidestepping the need to model the propensity score.  While there exist a large number of weighting estimators that achieve covariate balance, all of them assume no interference between units \citep[e.g.,][]{Hainmueller_2012,imaiandratkovic2013,zubuzaretta2015,hazlett2015,wong2018,benmichael2021balancingactcausalinference}. To our knowledge, balancing methods have not been extended to a general observational study under interference.  \cite{ben-michael2024} develops a related balancing method for clustered observational studies under partial interference, in which all units in each cluster are assumed to have the same treatment status. In contrast, our weighting estimator is applicable to a more general interference setting, in which units in each cluster can have different treatment conditions.  

Finally, there is an important connection between our method and the balancing method developed by \cite{additivefactorial} for observational studies with a factorial treatment. Both methods tackle the problem of an exponentially growing number of treatment patterns.
In fact, their method can be seen as a special case of ours when an additive interference structure is imposed, and the cluster sizes are equal; in this case, the number of factors corresponds to the cluster size.
However, our framework is more general in that it can accommodate varying cluster sizes and leverage a broader class of restrictions on interference patterns.  We also analyze how the mis-specification of the low-rank structure affects the performance of the proposed weighting estimator.

\section{Setup}
\label{sec:setup}

In this section, we lay out the assumptions and define causal estimands of interest.  Finally, we also define a general class of weighting estimators considered in the remainder of the paper.

\subsection{Assumptions}
Assume that we have $n$ clusters with the $c^{\mathrm{th}}$ cluster containing $M_c$ units, resulting in a total of $M = \sum_{c=1}^n M_c$ units. Let $\bm X_{ci}\in \mathcal X\subset \mathbb R^p$ and $A_{ci}\in \mathcal{A}=\{0,1\}$ denote $p$ pre-treatment covariates and the binary treatment variable, respectively, for the $i^{\mathrm{th}}$ unit in the $c^{\mathrm{th}}$ cluster.  Note that $\bm X_{ci}$ may contain cluster-level covariates including the cluster size $M_c$, individual-level characteristics, and their interactions.  
Furthermore, we use $\bm X_c = \left(\bm X^\top_{c1}, \ldots, \bm X^\top_{cM_c}\right)^\top \in \mathcal{X}^{\mathcal{M}_c}$ and $\bm A_c = (A_{c1},\dots, A_{cM_c})^\top \in \mathcal{A}^{M_c}$ to denote the stacked vectors of covariates and the treatments across all units in the $c^{\mathrm{th}}$ cluster, respectively.  
Let $\boldsymbol a_c\in \mathcal{A}^{M_c}$ denote a treatment pattern among the units that belong to the $c^{\mathrm{th}}$ cluster, and $\bm a \in \mathcal{A}^{M}$ for the treatment pattern for all $M$ units in the study.
Lastly, $Y_{ci}(\bm a)$ represents the potential outcome for the $i^{\mathrm{th}}$ unit in the $c^{\mathrm{th}}$ cluster under $\bm a \in \mathcal{A}^M$. Throughout this paper, we assume partial interference, as formally outlined next.
\begin{ass}[Partial interference]
 \label{assm:partial}
    For all $i=1,2,\ldots,M_c$, $c=1,2,\ldots,n$, and $\bm a, \bm a' \in \mathcal{A}^M$ with $\bm a_c = \bm a'_c$, we have that $Y_{ci}(\boldsymbol a)=Y_{ci}(\bm a')$. Therefore, a potential outcome can be written as a function of its cluster-level treatment pattern, i.e., $Y_{ci}(\bm a_c)$.
\end{ass}
Next, we use $Y_{ci}$ to denote the observed outcome of the $i^{\mathrm{th}}$ unit in the $c^{\mathrm{th}}$ cluster, which is linked to the potential outcome through the following consistency assumption.
\begin{ass}[Consistency]
\label{assm:consistency}
    For all $i=1,2,\ldots,M_c$, $c=1,2,\ldots,n$, and $\bm a_c \in \mathcal{A}^{M_c}$, $\bm A_c=\bm a_c$ implies $Y_{ci} = Y_{ci}(\bm A_c)$.
\end{ass}
Furthermore, throughout this paper, we also maintain the following standard set of superpopulation causal inference assumptions.
\begin{ass}[IID Sampling of Clusters]
    \label{assm:super_pop}
    There exists a probability measure $\mathcal G$, such that $\left(\left\{Y_{ci}(\bm a_c)\right\}_{i,\bm a_c}, \bm X_c, \bm A_c\right)\iid \mathcal G$.
\end{ass}
Assumption~\ref{assm:super_pop} clarifies that all cluster information, including the cluster characteristics $\bm X_c$ and cluster size $M_c$, are random and drawn i.i.d. from a super-population.
\begin{ass}[Unconfoundedness]
    \label{assm:unconfound}
    At the cluster level, the treatment assignment is independent of the set of all potential outcomes, conditional on covariates.  That is, $\left(\left\{Y_{ci}(\bm a_c)\right\}_{i,\bm a_c}, \bm X_c, \bm A_c\right)\sim \mathcal G$  implies $\bm A_c \ind \{Y_{ci}(\bm a_c)\}_{i, \bm a_c} \mid \bm X_c$ almost surely.
\end{ass}
\begin{ass}[Positivity]
    \label{assm:pos}
    For any $\bm a_c \in \mathcal{A}^{M_c}$, the (cluster-level) propensity score satisfies $0 < e(\bm a_c; \bm X_c) := \mathbb P(\bm A_c = \bm a_c\mid \bm X_c)$, almost surely under $\mathcal G$.
\end{ass} 
All our causal estimands of interest are defined with respect to this super-population distribution.  We emphasize that, under this framework, the cluster-level characteristics $\bm X_c$, which include the cluster size $M_c$, is also a random variable drawn from $\mathcal G$.  Thus, the \textit{propensity score} $e(\bm a_c; \bm x_c)$ defined in Assumption~\ref{assm:pos} is a function of treatment patterns, $\bm a_c$, and covariates, $\bm X_c$, of varying lengths. In this paper, we are primarily concerned with observational studies, in which the propensity score is unknown.

Finally, we assume the following model of potential outcomes.
\begin{ass}[Signal-and-noise model]   \label{assm:outcome_model}
    We assume the following model of potential outcomes,
    \begin{align}
\label{eqn:outcome_model}
        Y_{ci}(\bm a_c) = g_{ci}^{(\bm a_c)}(\bm X_c) + \epsilon_{ci}^{(\bm a_c)},
    \end{align}
    where $g_{ci}^{(\bm a_c)}(\bm x)=\mathbb{E}[Y_{ci}(\bm a_c) \mid \bm X_c=\bm x],\forall \bm x$, is the unit-specific conditional expectation function of the potential outcome under a given cluster-level treatment assignment pattern $\bm a_c \in \mathcal{A}^{\mathcal{M}_c}$, and $\epsilon_{ci}^{(\bm a_c)}$ is mean zero error term independent of the other error terms and $\{(\bm X_c,\bm A_c)\}_{1\leq c\leq n}$.
\end{ass}

This model is flexible, allowing the conditional expectation function for each potential outcome $g^{(\bm a_c)}$ to vary across units and clusters. In addition, the potential outcomes of each unit can depend on the entire set of covariates $\bm X_c$, including cluster-level characteristics such as cluster size, and individual-level characteristics of itself and other units in the cluster.

\subsection{Causal estimand of interest}

In the above setting, we consider a general class of causal estimands. We first introduce the notion of counterfactual weight. 
\begin{defn}[Counterfactual weight] 
    Let $\mathcal Z_{m}:= \mathcal{A}^m \times \mathcal X^m$ and $\mathcal Z:=\bigcup_{m\geq 1}\mathcal Z_m$ where $m$ is a given cluster size. Then, the counterfactual weight is a function $f:\mathcal Z\mapsto \mathbb R$. 
\end{defn}
A prominent example of counterfactual weight represents the probability distribution of treatment patterns resulting from a stochastic intervention, which includes a deterministic intervention as a special case.
\begin{example}[Counterfactual weight of stochastic intervention]
The counterfactual weight represents a stochastic intervention if $\textrm{Range}(f)\subseteq [0,1]$ and for any $m \geq 1$ and fixed $\bm x \in \mathcal{X}^m$,
    \[
        \sum_{\bm a\in \mathcal{A}^m} f(\bm a;\bm x) = 1.
    \]
\end{example}

Based on the definition of counterfactual weight, we define a broad class of causal estimands as the weighted average of potential outcomes among all units in a cluster drawn from $\mathcal G$.
\begin{defn}[A general class of causal estimands]
A general class of causal estimands is defined as the following weighted average of potential outcomes across all units of a cluster, using counterfactual weight $f$:
\begin{align}
\label{eqn:counterfactual}
    \mu_f := \mathbb E\left[\frac{1}{M_c}\sum_{i=1}^{M_c}\sum_{\bm a_c \in \mathcal{A}^{M_c}}Y_{ci}(\bm a_c)f(\bm a_c, \bm X_c)\right].
\end{align}
\end{defn}
In the above definition, if the counterfactual weight represents a stochastic intervention, then the innermost summand within the expectation represents the expected potential outcome for unit $i$ over the distribution of treatment assignments following $f$. However, the counterfactual weight $f$ need not represent a stochastic intervention.
The following examples show that a variety of causal effects belong to the class of causal estimands defined above:
\begin{example}[Global average treatment effect]
    \label{ex:mu1}
    Let $\bm 1_c$ and $\bm 0_c$ represent the $M_c$-dimensional vector of ones and that of zeros, respectively. Define the following counterfactual weight: $f(\bm 1_c,\bm x_c) = 1$, $f(\bm 0_c,\bm x_c) = -1$, and $f(\bm a_c, \bm x_c) = 0$ for all $\bm a_c \notin\{ \bm 1_c,\bm 0_c\}$ and $\bm x_c \in \mathcal{X}^{M_c}$. Then, $\mu_f$ equals the global average treatment effect:
    \[
        \mu_f  = \ \mathbb E\left[\frac{1}{M_c}\sum_{i=1}^{M_c}\left(Y_{ci}(\bm 1_c) - Y_{ci}(\bm 0_c)\right)\right].
    \]
\end{example}

\begin{example}[Direct effect under a stochastic intervention]
    \label{ex:mu2}
    Following \cite{papadogeordou219}, we define the direct effect of one's own treatment with respect to a stochastic intervention $\ell$ as:
    \begin{align*}
&  \tau_{\ell}  
        := \mathbb E\Bigg[\frac{1}{M_c}\sum_{i=1}^{M_c}\sum_{\bm a\in \{0,1\}^{M_c-1}} \Big( Y_{ci}(a_{ci}=1,\bm a_{c,-i} = \bm a)\ell(\bm A_{c,-i}=\bm a\mid A_{ci} = 1,\bm X_c) - \\
        & \hspace{180pt} Y_{ci}(a_{ci}=0,\bm a_{c,-i} = \bm a)\ell(\bm A_{c,-i}=\bm a\mid A_{ci} = 0,\bm X_c) \Big) \Bigg],
    \end{align*}
    where,
    \[
        \ell(\bm A_{c,-i}=\bm a\mid A_{ci} = a^\prime;\bm X_c):= \frac{\ell(\bm A_{c,-i}=\bm a, A_{ci} = a^\prime;\bm X_c)}{\sum_{\bm a \in \{0,1\}^{m_c-1}} \ell(\bm A_{c,-i}=\bm a, A_{ci} = a^\prime;\bm X_c)},
    \]
    for all $\bm a \in \{0,1\}^{M_c-1}$ and $a^\prime \in \{0,1\}$. Now, consider the counterfactual weight,
    \[
    f(\bm a_c, \bm X_c) = \sum_{j=1}^{M_c}(-1)^{1-a_{cj}}\ell(\bm A_{c,-j}=\bm a_{c,-j}\mid A_{c,j}=a_{cj}, \bm X_c),
    \]
    which yields $\mu_{f} = \tau_\ell$.
\end{example}

\subsection{Weighting estimators}

In general, covariate balancing estimators can be viewed as weighing estimators. Therefore, we consider the following broad class of weighting estimators:
$$T({\bm w}) := \frac{1}{n}\sum_{c=1}^n\sum_{i=1}^{M_c} w_{ci}Y_{ci} = \frac{\bm w^\top \bm Y}{n}
$$ 
where the weight vector $\bm w$ is a function of treatment assignment patterns $\bm A$ and covariates $\bm X$, defined as $\bm w := \bm w(\bm A, \bm X)=[\bm w_1(\bm A_{1}, \bm X_{1}), \ldots, \bm w_n(\bm A_{n}, \bm X_{n})]^\top$ with 
$\bm w_{c}(\bm A_{c}, \bm X_{c}) = [w_{c1}(\bm A_{c}, \bm X_{c}), \ldots$, $ w_{c M_c}(\bm A_c, \bm X_c)]^\top$. 

We focus on the choice of $\bm w$ when the true propensity score is unknown. However, if the propensity score is known, this class of weighting estimators includes, as a special case, the following standard IPW estimator,
\begin{equation}
    T_{\mathrm{IPW}} := T({\bm w}_{\mathrm{IPW}}) = \frac{1}{n}\sum_{c=1}^n \sum_{i=1}^{M_c}\frac{f(\bm A_c;\bm X_c)}{M_c\cdot e(\bm A_c;\bm X_c)} Y_{ci}, \label{eqn:IPWest}
\end{equation}
where the IPW weight for the $i^{\text{th}}$ unit in the $c^{\text{th}}$ cluster is given by $w_{\mathrm{IPW},ci}(\bm A_c, \bm X_c)=\frac{f(\bm A_c, \bm X_c)}{M_c\cdot e(\bm A_c, \bm X_c)}$, and $\bm w_\text{IPW}$ is the vector of $w_{\mathrm{IPW},ci}$ stacked first across $i$ and then across $c$.  It is straightforward to show that under Assumptions~\ref{assm:partial}--\ref{assm:outcome_model}, this IPW estimator is unbiased for $\mu_f$:
$$
\mathbb E[T_{\mathrm{IPW}}] = \frac{1}{n} \sum_{c=1}^n \sum_{i=1}^{M_c} \mathbb E\left[\mathbb E\left[Y_{ci}(\bm A_c)\frac{f(\bm A_c,\bm X_c)}{M_c\cdot e(\bm A_c, \bm X_c)}\ \middle| \ \bm X_c\right]\right] = \mu_f.
$$

Given this setup, we develop unbiased balancing estimators when the propensity score is unknown, which, as we show, requires a structural assumption on the interference pattern.

\section{Low-rank weighting estimators}
\label{sec:low_rank}

In this section, we introduce a flexible class of assumptions on the outcome regression functions, referred to as \textit{low-rank structures}. Special cases of this class include many commonly invoked restrictions on interference in the literature. We leverage this general class of low-rank structures and derive a class of weighting estimators. Furthermore, we show how to choose optimal weights and derive the statistical properties of the resulting optimal weighting estimators.

\subsection{Uniformly unbiased weighting estimators}
\label{sec:unbiased_weights}

We begin by showing that, without restrictions on the interference pattern or knowledge of the true propensity score, uniformly unbiased estimation of $\mu_f$ is impossible. This result motivates the need to impose structure on the interference pattern. 

We begin by introducing some additional notation.  Fix a lexicographic ordering on all the possible $2^M$ treatment patterns across all units in the sample and define the enumeration $\{0,1\}^{M} = \{\bm a^{(j)}: j=0,1,\dots, 2^{M}-1\}$. Define $\bm R$ as an $M\times (M\cdot 2^M)$ block matrix given by,
\begin{align}
\label{eqn:r}
\bm R := (\mathbb I(\bm A = \bm a^{(0)}), \ldots, \mathbb I(\bm A = \bm a^{(2^M-1)}))^\top\otimes \bm I_M,
\end{align}
where $\mathbb I(E)$ denotes the indicator of event $E$, and $\bm I_M$ denotes the $M\times M$ identity matrix. Next, for each $\bm a \in \mathcal{A}^M$, define $\bm g^{(\bm a)}:=[g_{11}^{(\bm a_1)}(\bm X_1), g_{12}^{(\bm  a_1)}(\bm X_1),\ldots, g_{nM_n}^{(\bm a_{n})}(\bm X_n)]$ as an ${M}$-dimensional vector of outcomes under treatment pattern $\bm a$ with $g_{ci}^{(\bm a_c)}(\bm X_c)$ arranged first across $i$ and then across $c$.  Let $\bm g$ be an $(M \cdot 2^M)$-dimensional vector, defined as
$$\bm g:= \left({\bm g^{(\bm a^{(0)})}}^\top, \dots, {\bm g^{(\bm a^{(2^M-1)})}}^\top\right)^\top.$$
Then, Equation~\eqref{eqn:outcome_model} implies that the model for the observed data can be written as
\begin{equation}
    \bm Y = \bm R\bm g + \bm \epsilon, \label{eq:model}
\end{equation}
where a typical element of $\bm \epsilon$ is $\epsilon_{ci}^{(\bm A_c)}$ and $\bm \epsilon$ is also arranged first across $i$ and then across $c$.

Now, the unbiasedness of the IPW estimator implies $\mu_f = \mathbb E[{\bm w_{\mathrm{IPW}}}^\top\bm R\bm g]$. Hence, the class of weights $\bm w$ that makes the estimator $T(\bm w)$ unbiased is given by
\begin{equation}
    \mathcal C(\bm g, e) = \left\{\bm w: \mathbb E\left[(\bm w - {\bm w}_{\mathrm{IPW}})^\top\bm R\bm g\right] = 0\right\}. \label{eq:unbiased_estimators}
\end{equation}
The expectation in Equation~\eqref{eq:unbiased_estimators} is taken with respect to the sampling of clusters from $\mathcal{G}$ with the unknown, true propensity score. 

Unfortunately, since $\bm g$ is unknown, it is not possible to further characterize this class of unbiased estimators $\mathcal C(\bm g, e)$ based on the observed data. The following theorem shows that unbiased weighted estimation of causal effects is not possible without additional assumptions. Moreover, if we were to assume that the propensity score is known, the IPW estimator is the only uniformly unbiased weighting estimator.
\begin{theorem}[Uniformly unbiased weighting estimators]
    \label{thm:subclass1} Suppose that Assumption~\ref{assm:partial}--~\ref{assm:unconfound} holds. 
    If the propensity score $e$ is known, then the IPW estimator is the only uniformly unbiased weighting estimator over all possible choices of $\bm g$, that is, $\bm w \in \mathcal C(\bm g, e)$ for all $\bm g$ implies $\bm w = \bm w_{\mathrm{IPW}}$. If the propensity score $e$ is unknown and 
    \begin{align}
    \label{eqn:impossibility_condition}
        \mathbb P\left(\exists~c~{\rm and }~\bm a_c\in \{0,1\}^{M_c}\setminus\{\bm A_c\}~\textrm{ such that }f(\bm a_c, \bm X_c)\neq 0\right)>0.
    \end{align}
    Then there is no uniformly unbiased weighting estimator over all possible choices of $\bm g$ and $e$.
\end{theorem}
This result motivates the need for a structural assumption that places restrictions on $\bm g$.  Next, we introduce a general class of such restrictions.

\subsection{Low-rank structure as a general assumption about interference}
\label{sec:examples_low_rank}

The following assumption forms the basis of all methodological developments in this paper.
\begin{ass}[Low-rank structure of the outcome model] \label{ass:low_rank}
    In the model of Equation~\eqref{eq:model}, $\bm g$ lies in a lower dimensional subspace and satisfies the following \textit{low-rank structure},
    \begin{align}
    \label{eqn:low_rank_assm}
    \bm g = \bm \Lambda \bm h,
    \end{align}
    where $\bm h$ is a vector with $\dim(\bm h) \leq \dim(\bm g)$, and $\bm \Lambda$ can depend on $\bm X$, but not on the observed treatment patterns, $\bm A$.
\end{ass}

We now demonstrate that this low-rank structure encompasses, as special cases, a wide range of common interference assumptions used in the literature. Furthermore, we later show in Section~\ref{sec:lm_balancing} that functional-form assumptions, such as a linear model for the outcome regression, can also be encoded as a low-rank structure.
\begin{example}[Stratified/anonymous interference \citep{Sobel01122006,hudgensandhalloran2008}]
\label{ex:1}
    The assumption states that within each cluster, the potential outcome of one unit depends only on the number of treated units in the same cluster, regardless of who receives the treatment.  Let $r_{ci}(\bm a_c) \le s \le M_c$ denote the number of treated units in the prespecified neighborhood of unit $i$ in cluster $c$, and $s$ is a global upper bound on neighborhood size.  Under this assumption, we have $g_{ci}^{(\bm a_c)}(\bm X_c) = h_{ci}^{(r_{ci}(\bm a_c))}(\bm X_c)$ for all $c,i,\bm a_c$.  Thus, $g_{ci}^{(\bm a_c)}(\bm X_c)$ takes one value in the set $\{h_{ci}^{(j)}(\bm X_c):0\leq j\leq s\}$, which depends on $\bm X_c$. We form $\bm h$ by stacking $h_{ci}^{(j)}$ across $j$, $i$, and $c$ in that order.  Finally, let $\bm u_t$ denote an $s$-dimensional vector whose $t^{\mathrm{th}}$ element is equal to one while all the other elements are zero. Then, we have $\bm g = \bm \Lambda \bm h$, where $\bm \Lambda$ is a $(\sum_{c=1}^nM_c\cdot2^{M_c})\times (s\cdot M)$ matrix given by,
\begin{align*}
    \bm \Lambda =
    \begin{bmatrix}
\textrm{diag}(\bm u(\bm a^{(0)})) \\
\textrm{diag}(\bm u(\bm a^{(1)})) \\
\vdots \\
\textrm{diag}(\bm u(\bm a^{(2^M-1)}))
    \end{bmatrix},
    \quad \text{where} \quad \textrm{diag}(\bm u(\bm a^{(j)})) = \begin{bmatrix}
        \bm u_{r_{11}(\bm a^{(j)}_{1})} & \bm 0 & \cdots & \bm 0\\
        \bm 0 & \bm u_{r_{12}(\bm a^{(j)}_{1})} &  \cdots & \bm 0\\
        \vdots & \vdots & \ddots & \vdots\\
        \bm 0 & \bm 0 & \cdots & \bm u_{r_{nM_n}(\bm a^{(j)}_{n})}    
    \end{bmatrix}
    \end{align*}
    for $j=1,2,\ldots,2^M-1$.
\end{example}

\begin{example}[$k$-nearest neighbor interference \citep{sussman2017elementsestimationtheorycausal,alzubaidihiggins2024}]
\label{ex:2} The assumption states that the potential outcome of one unit depends only on the treatment pattern of its $k$-nearest neighbors, which are determined using a distance metric based on covariates.  
Define $l_{ci}(\bm a_c; k, \bm X_c)$ as the index for a total of $2^k$ possible treatment patterns among the $k$-nearest neighbors for the $i^{\mathrm{th}}$ unit in the $c^{\mathrm{th}}$ cluster.  Then, for any unit $i$ and treatment pattern $\bm a_c$, we have $g_{ci}^{(\bm a_c)}(\bm X_c) = h_{ci}^{(l_{ci}(\bm a_c; k, \bm X_c))}(\bm X_c)$ for some set of functions $\left\{h_{ci}^{(\bm j)}(\bm X_c)\right\}_{\bm j\in \{0,1\}^k}$. As done in Example~\ref{ex:1}, define $\bm u_{l}$ as a $2^k$-dimensional vector whose $l^{\mathrm{th}}$ element is equal to one and 0 elsewhere. Then, we can define $\bm \Lambda$ exactly as in Example \ref{ex:1} by replacing $\bm u_{r_{ci}(\bm a^{(j)}_c)}$ with $\bm u_{l_{ci}(\bm a^{(j)}_c;k,\bm X_c)}$.
\end{example}

\begin{example}[Additive interference \citep{YuChristinaLee2022Ettt,zhang2025individualizedpolicyevaluationlearning}]
\label{ex:3}
This assumption states that the conditional expectation of the potential outcome is an additive function of treatment assignments of units in the same cluster, implying
$$
g^{(\bm a_c)}_{ci}(\bm X_c) = \sum_{j=1}^{M_c}h_{cij}^{(a_{cj})}(\bm X_c),
$$
where $h_{cij}^{(a_{cj})}(\bm X_c)$ denotes the effect of the $j^{\mathrm{th}}$ unit in the $c^\mathrm{th}$ cluster whose treatment status is $a_{cj}\in \{0,1\}$ on the $i^{\mathrm{th}}$ unit of the same cluster. Define $\bm h$ as the $2^{\sum_c 2M_c^2}$-dimensional vector formed by stacking $\left(h_{cij}^{(0)}(\bm X_c),h_{cij}^{(1)}(\bm X_c)\right)^\top$ across $j$, $i$, and then $c$ in that order. Next, define the $2M_c$-dimensional vector $\bm v_{ci}(\bm a_c) := (\mathbb I(a_{c1} = 0), \mathbb I(a_{c1} = 1),\dots, \mathbb I(a_{c{M_c}} = 0), \mathbb I(a_{c{M_c}} = 1))^\top$. Then, the low-rank structure assumption is satisfied with the following $\bm \Lambda$, which is a $\sum_{c=1}^nM_c\cdot2^{M_c}\times \sum_{c=1}^n2M_c^2$ matrix,
\begin{align*}
    \bm \Lambda = \begin{bmatrix}
        \bm v_{11}(\bm a_1) & \bm 0 & \cdots & \bm 0\\
        \bm 0 & \bm v_{12}(\bm a_1) & \cdots & \bm 0\\
        \vdots & \vdots & \ddots & \vdots\\
        \bm 0 & \bm 0 & \cdots & \bm v_{nM_n}(\bm a_n)
    \end{bmatrix}.
\end{align*}
\end{example}

Moreover, the proposed low-rank assumption can combine multiple restrictions, thereby enabling researchers to specify their own study-specific low-rank structures.  We illustrate this feature of the proposed methodology with the following example, which combines the restrictions from Examples~\ref{ex:2}~and~\ref{ex:3}.
\begin{example}[$k$-nearest neighbor additive interference model] 
\label{ex:4} Suppose that for each unit, interference is an additive function of the treatment assignments of the $k$-nearest neighbors. To formulate this restriction, we first write $\bm g = \bm \Lambda_1\bm h$, where $\bm \Lambda_1$ encodes the $k$-nearest neighbor interference restriction from Example~\ref{ex:2} and $\bm h$ has entries $h_{ci}^{(l_{ci}(\bm a_c; k, \bm X_c))}(\bm X_c)$. We then place the additive restriction on $\bm h$ as done in Example~\ref{ex:3} and write it as $\bm h = \bm \Lambda_2\bm q$, where $\bm q$ has a typical entry $q_{cij}^{(a_{cj})}(\bm X_c)$ that quantifies the additive effect of the $j^{\mathrm{th}}$ nearest neighbor on the $i^{\mathrm{th}}$ unit in the $c^{\mathrm{th}}$ cluster when the treatment status of unit $j$ is $a_{cj}$. As a result, we have $\bm \Lambda = \bm \Lambda_1\bm \Lambda_2$. In general, one can follow this strategy to break down a complex low-rank structure into an ensemble of simpler structures to obtain $\bm \Lambda$.
\end{example}

Imposing low-rank structures changes the unknown object from $\bm g$ to $\bm h$, where the elements of $\bm h$ can be viewed as \textit{effective treatment effects}. In Example~\ref{ex:2}, each $\bm h_{ci}^{(\bm j)}(\bm X_c)$ denotes the effective treatment effect of having the treatment pattern $\bm j$ in the neighborhood. Similarly, in Example~\ref{ex:3}, $h_{cij}^{(t)}(\bm X_c)$ is the effective treatment effect, representing the additive contribution of the $j^{\mathrm{th}}$ unit under the treatment status $t \in \{0,1\}$. The complexity of this effective treatment space is much smaller than that of the original treatment space allowing us to improve on estimation efficiency.

\subsection{Deriving optimal weights under the low-rank structure}
\label{sec:low_rank_weights}

We now show how to leverage the low-rank structure and obtain weighting estimators without knowledge of the true propensity score.
Under the low-rank assumption, the class of unbiased weighting estimators is a function of $\bm h$ rather than $\bm g$ as in Equation~\eqref{eq:unbiased_estimators} and can be expressed as:
\begin{equation}
\mathcal C_{\bm \Lambda}(\bm h, e) = \left\{\hat{\bm w}: \mathbb E\left[(\hat{\bm w} - \hat{\bm w}_{\mathrm{IPW}})^\top\bm R \bm \Lambda \bm h\right] = 0\right\}.
\label{eq:unbiased_estimators_R}
\end{equation}
We extend the discussion that precedes Theorem~\ref{thm:subclass1} by requiring our weighting estimator to be uniformly unbiased across choices of $\bm h$ and $e$. The following theorem shows that, unlike Theorem~\ref{thm:subclass1}, this requirement can yield non-trivial classes of weights. 
\begin{theorem}[A uniformly unbiased weighting estimator under a low-rank structure]
\label{thm:subclass2}
Suppose that Assumptions~\ref{assm:partial}--~\ref{assm:unconfound}~and~\ref{ass:low_rank} hold. 
Then, a uniformly unbiased weighting estimator (across all possible $\bm h$ and $e$ in Equation~\eqref{eq:unbiased_estimators_R}), whose weights do not depend on the unknown true propensity score, satisfies the following relation:
\begin{align}
\label{eqn:baldefn}
\bm w \in \mathcal C_{\bm \Lambda}(\bm h, e),\forall \bm h,e \implies \bm w \in \Cnulllamb:=\{\hat{\bm w}: \bm \Lambda^\top\bm R^\top\hat{\bm w} = \bm \Lambda^\top\bm f\},
\end{align}
where $\bm f$ denotes the vector formed by stacking $$\bm f^{(j)}:=\left(f(\bm a_1^{(j)};\bm X_1)(\bm 1_{M_1}\otimes \bm I_{d_h})^\top,\cdots, f(\bm a_n^{(j)};\bm X_n)(\bm 1_{M_n}\otimes \bm I_{d_h})^\top\right)^\top$$ on top of each other across $j=0,\cdots, 2^M-1$, where $d_h = \mathrm{dim}(\bm h)$.
\end{theorem}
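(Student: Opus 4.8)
The plan is to mirror the proof of Theorem~\ref{thm:subclass1} while exploiting the extra freedom created by the low-rank parametrization. The statement is a one-directional implication, so I only need to extract enough consequences from the moment family in Equation~\eqref{eq:unbiased_estimators_R} to pin down the almost-sure identity defining $\Cnulllamb$. I would organize the argument as: (a) condition on $\bm X$ to isolate the randomness; (b) use arbitrariness of $\bm h$ to remove the outer expectation; and (c) use the quantifier over all $e$ to turn the resulting $e$-indexed identities into a single algebraic constraint.

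First I would condition on $\bm X$. Writing $\bm e_{\bm A} := (\mathbb I(\bm A=\bm a^{(0)}),\ldots,\mathbb I(\bm A=\bm a^{(2^M-1)}))^\top$, we have $\bm R=\bm e_{\bm A}^\top\otimes\bm I_M$ and $\bm R^\top\bm R=(\bm e_{\bm A}\bm e_{\bm A}^\top)\otimes\bm I_M$. The stacked potential weights $\tilde{\bm w}$ and $\tilde{\bm w}_{\mathrm{IPW}}$, the loading $\bm\Lambda$, and $\bm h$ are all $\bm X$-measurable, so only $\bm R$ carries the randomness of $\bm A$, and $\mathbb E[\bm R^\top\bm R\mid\bm X]=\bm D_e\otimes\bm I_M$ with $\bm D_e:=\operatorname{diag}(e(\bm a^{(0)};\bm X),\ldots,e(\bm a^{(2^M-1)};\bm X))$. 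The moment condition collapses to $\mathbb E\big[(\tilde{\bm w}-\tilde{\bm w}_{\mathrm{IPW}})^\top(\bm D_e\otimes\bm I_M)\,\bm\Lambda\,\bm h\big]=0$ for all $\bm h,e$. Since $\bm h$ may be an arbitrary measurable function of $\bm X$, this reads $\mathbb E_{\bm X}[\bm\psi_e^\top\bm h]=0$ with $\bm\psi_e:=\bm\Lambda^\top(\bm D_e\otimes\bm I_M)(\tilde{\bm w}-\tilde{\bm w}_{\mathrm{IPW}})$; choosing $\bm h=\bm\psi_e$ gives $\mathbb E\|\bm\psi_e\|^2=0$, hence $\bm\Lambda^\top(\bm D_e\otimes\bm I_M)(\tilde{\bm w}-\tilde{\bm w}_{\mathrm{IPW}})=\bm 0$ a.s., for every $e$.

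The crux is the final step: converting this $e$-indexed family into the stated identity. Here I would use two facts. Under Assumption~\ref{assm:super_pop} the joint propensity factorizes, $e(\bm a;\bm X)=\prod_c e(\bm a_c;\bm X_c)$, while the weights and the loading block are unit-block-diagonal with the $ci$-block depending on $\bm a^{(j)}$ only through $\bm a_c^{(j)}$. Summing over the treatment patterns of the other clusters therefore collapses the factors $\prod_{c'\neq c}\sum_{\bm a_{c'}}e(\bm a_{c'};\bm X_{c'})=1$. In the IPW term this is exactly what cancels the own-cluster propensity $e(\bm a_c;\bm X_c)$ hidden inside $\tilde{\bm w}_{\mathrm{IPW}}$, producing an $e$-free vector, which up to the bookkeeping in the definition of $\bm f$ is $\bm\Lambda^\top\bm f$; in the candidate term it leaves a single factor $e(\bm a_c;\bm X_c)$. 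The surviving identity then reads, cluster by cluster, $\sum_{\bm a_c}e(\bm a_c;\bm X_c)\,\bm\Lambda_c(\bm a_c)^\top\tilde{\bm w}_c(\bm a_c)=\bm\tau_c$ for a fixed, $e$-free target $\bm\tau_c$, and must hold for every within-cluster propensity. Since the right-hand side is constant while the left-hand side is linear in the probability vector $(e(\bm a_c;\bm X_c))_{\bm a_c}$, comparing two interior distributions that differ by shifting mass between two patterns forces $\bm\Lambda_c(\bm a_c)^\top\tilde{\bm w}_c(\bm a_c)$ to be independent of $\bm a_c$ and equal to $\bm\tau_c$. Reassembling across clusters yields $\bm\Lambda_{(\bm A)}^\top\bm w=\bm\Lambda^\top\bm f$ for a.e.\ realized $\bm A$, which is membership in $\Cnulllamb$ because $\bm\Lambda^\top\bm R^\top\hat{\bm w}=\bm\Lambda_{(\bm A)}^\top\bm w$.

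I expect the main obstacle to be precisely this last step: handling the quantifier over all $e$ in the presence of the product-form joint propensity. The delicate bookkeeping is verifying that the cross-cluster factors sum to one and that the own-cluster propensity cancels in the IPW term, so that the right-hand side is genuinely $e$-free and agrees with the stated $\bm f$. Once the problem is reduced to a single cluster and a linear functional on the probability simplex, the vertex (or mass-shifting) argument is routine, and the remaining verification that $\bm\Lambda^\top\bm R^\top\hat{\bm w}$ selects the observed-treatment block is immediate from the Kronecker structure of $\bm R$.
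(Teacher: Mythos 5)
Your proposal is correct and takes essentially the same route as the paper's proof: you first use the arbitrariness of $\bm h$ (conditioning on $\bm X$) to upgrade the moment condition to the almost-sure identity $\bm \Lambda^\top\bm D(\tilde{\bm w}-\tilde{\bm w}_{\mathrm{IPW}})=\bm 0$ for every $e$, and then exploit the quantifier over $e$ together with linearity in the propensity vector to force $\bm \Lambda(\bm a)^\top\tilde{\bm w}(\bm a)=\bm \Lambda^\top\bm f$ for every pattern, evaluating at $\bm a=\bm A$. The only differences are cosmetic: the paper finishes by sending $e(\bm a^\ast;\bm X)\to 1$ and taking a limit rather than your mass-shifting comparison of interior propensity vectors, and it asserts the $e$-free identity $\bm \Lambda^\top\bm D\tilde{\bm w}_{\mathrm{IPW}}=\bm \Lambda^\top\bm f$ without the cluster-factorization bookkeeping you spell out.
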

Note that $\Cnulllamb$ is empty if the following equation does not have a solution in ${\bm w}$,
\begin{align} \label{eqn:balancing_equation}
    \bm \Lambda^\top\bm R^\top{\bm w} = \bm \Lambda^\top\bm f.
\end{align}
In fact, the proof of Theorem~\ref{thm:subclass2} shown in Appendix~\ref{sec:proof_subclass2} suggests that the difference in the left and right hand sides of this equation is proportional to the bias of the resulting weighting estimator. Hence, we call Equation~\eqref{eqn:balancing_equation} the \textit{balancing equation under interference} and $\Cnulllamb$ the class of \textit{balancing weights}.  This method of covariate adjustment differs from how an IPW estimator achieves balance. While the IPW weights specifically aim at weighting a data point to make the resulting weighted distribution resemble the target population, the balancing approach for interference outlined above directly obtains weights to eliminate the bias of the resulting weighting estimator. 

According to Theorem~\ref{thm:subclass2}, a weighting estimator with balancing weights does not result in an unconditionally unbiased estimate of $\mu_f$ as indicated by the uni-dimensional implication in Equation~\eqref{eqn:baldefn}. The reason is that if $B_{\bm \Lambda}$ denotes the event that Equation~\eqref{eqn:balancing_equation} admits a solution (i.e., the class of balancing weights is non-empty), then for any $\bm w \in \Cnulllamb$, we have $\mathbb E[T(\bm w)\mid B_{\bm \Lambda}] = \frac{1}{n}\mathbb E\left[\bm f^\top\bm \Lambda\bm h\mid B_{\bm \Lambda}\right]$, which is not necessarily equal to $\mu_f = \frac{1}{n}\mathbb E\left[\bm f^\top\bm \Lambda\bm h\right]$. However, in Section~\ref{sec:asymp_normality}, we show that under minimal restrictions, $\mathbb P(B_{\bm \Lambda})$ approaches 1 as $n\to \infty$. This make the balancing estimators asymptotically unbiased for $\mu_f$, under, for example, boundedness assumptions on $\bm f$, $\bm h$ and $\bm \Lambda$. As our simulation study in Appendix~\ref{sec:further_simulations_fixed_bias} illustrates, we often find this bias to be low in finite samples under various interference settings. Thus, with moderate and large samples, the balancing estimator is an effective choice for estimating $\mu_f$. 

Theorem~\ref{thm:subclass2} shows that by exploiting the low-rank structure assumption, which is encapsulated by $\bm \Lambda$, one can now restrict the search of unbiased weights to the much smaller class $\Cnulllamb$. This class represents a solution space of $(\bm R\bm \Lambda)^\top$, which is restricted by the complexity of $\bm \Lambda$ as further discussed in Section~\ref{sec:lm_balancing}. Besides enabling non-trivial 
 estimation, this restriction makes it possible to identify the minimum-variance weighting estimator within this class as we show next.

An optimal choice of weight within this class can be written as a solution to the following optimization problem:
\begin{equation}   \label{eqn:optimiation_problem}
\wnulllamb = \underset{w \in \Cnulllamb}{\arg\min}~\phi(\bm w),
\end{equation}
where $\phi$ is an objective function of choice. In this paper, we use $\phinull(\bm w):=\|\bm w\|^2$ while selecting a weight from $\Cnulllamb$. As the next theorem shows, this choice directly targets the variance of the proposed estimator and, under homoskedasticity, yields the uniformly minimum variance estimators within $\Cnulllamb$.

\begin{ass}[Homoskedasticity]
    \label{assm:iid_error} 
    Under the outcome model given in Equation~\eqref{eqn:outcome_model}, the error terms $\epsilon_{ci}^{(\bm a_c)}$ are i.i.d. across the indices $c,i$ and $\bm a_c$ and have a constant variance $\sigma^2>0$.
\end{ass}

\begin{theorem}[Conditionally unbiased estimator with minimum variance]
\label{thm:umvue}
Suppose that Assumptions~\ref{assm:partial}--\ref{assm:outcome_model},~\ref{ass:low_rank},~and~\ref{assm:iid_error} hold. Recall that $B_{\bm \Lambda}$ denotes the event that Equation~\eqref{eqn:balancing_equation} has a solution. Then, $\forall \bm w \in \Cnulllamb$,
\[
 n^2\mathrm{Var}(T(\bm w)\mid B_{\bm \Lambda}) = \mathrm{Var}\left(\mathbb E\left[\bm w_{\mathrm{IPW}}^\top\bm R\bm g\mid\bm X\right]\ \Bigl | \ B_{\bm \Lambda}\right) + \sigma^2\mathbb E\left[\|\bm w\|^2\mid B_{\bm \Lambda} \right].
\]
Furthermore, $\wnulllamb$ corresponds to the weighting estimator with the smalllest variance: 
\[\mathrm{Var}(T(\wnulllamb)\mid B_{\bm \Lambda})\leq \mathrm{Var}(T(\bm w)\mid B_{\bm \Lambda})\textrm{, for all }\bm w \in \Cnulllamb.
\]
\end{theorem}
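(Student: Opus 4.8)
The plan is to substitute the low-rank outcome model into $T(\bm w)$, split the result into a ``signal'' piece and a ``noise'' piece, and argue that, conditional on $B_{\bm \Lambda}$, the signal is common to every weight in $\Cnulllamb$ while the noise is exactly what the penalty $\|\bm w\|^2$ controls. Concretely, combining Equation~\eqref{eq:model} with the low-rank assumption $\bm g = \bm \Lambda \bm h$ from Equation~\eqref{eqn:low_rank_assm} gives $\bm Y = \bm R \bm \Lambda \bm h + \bm \epsilon$, so that
\begin{equation*}
n\, T(\bm w) = \bm w^\top \bm R \bm \Lambda \bm h + \bm w^\top \bm \epsilon .
\end{equation*}
First I would note that on $B_{\bm \Lambda}$ any $\bm w \in \Cnulllamb$ satisfies the balancing equation~\eqref{eqn:balancing_equation}; transposing $\bm \Lambda^\top \bm R^\top \bm w = \bm \Lambda^\top \bm f$ and right-multiplying by $\bm h$ yields $\bm w^\top \bm R \bm \Lambda \bm h = \bm f^\top \bm \Lambda \bm h$. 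The key consequence is that the signal piece is the same for every $\bm w \in \Cnulllamb$ and is $\sigma(\bm X)$-measurable, since each factor of $\bm f$, $\bm \Lambda$, and $\bm h$ is a function of $\bm X$ alone.

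The second step identifies this common signal with the conditional mean of the IPW signal. Repeating the conditional-expectation computation that establishes unbiasedness of $T_{\mathrm{IPW}}$, but stopping at conditioning on $\bm X$ rather than taking a full expectation, gives
\begin{equation*}
\mathbb{E}\!\left[\bm w_{\mathrm{IPW}}^\top \bm R \bm \Lambda \bm h \mid \bm X\right] = \sum_{c,i}\sum_{\bm a_c} \frac{f(\bm a_c,\bm X_c)}{M_c}\, g_{ci}^{(\bm a_c)}(\bm X_c) = \bm f^\top \bm \Lambda \bm h,
\end{equation*}
where $\mathbb{P}(\bm A_c = \bm a_c \mid \bm X_c) = e(\bm a_c;\bm X_c)$ cancels the propensity score and the last equality uses the definition of $\bm f$. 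Hence, on $B_{\bm \Lambda}$ the signal piece equals $\mathbb{E}[\bm w_{\mathrm{IPW}}^\top \bm R \bm \Lambda \bm h \mid \bm X]$ for every $\bm w \in \Cnulllamb$. I expect this bookkeeping --- matching the entries of $\bm f$ to those of $\bm g = \bm \Lambda \bm h$ so that the $1/M_c$ factors land correctly --- to be the main obstacle, as it is where the precise block structure of $\bm R$, $\bm \Lambda$, and $\bm f$ must be tracked.

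The third step computes the conditional variance through the signal/noise split. Because $\bm \epsilon$ is independent of $(\bm A,\bm X)$ with $\mathbb{E}[\bm \epsilon \mid \bm A,\bm X] = \bm 0$ and, under homoskedasticity (Assumption~\ref{assm:iid_error}), $\mathbb{E}[\bm \epsilon \bm \epsilon^\top \mid \bm A,\bm X] = \sigma^2 \bm I$, and because $\bm w$ and $B_{\bm \Lambda}$ are $\sigma(\bm A,\bm X)$-measurable, conditioning on $(\bm A,\bm X)$ gives $\mathbb{E}[\bm w^\top \bm \epsilon \mid \bm A,\bm X] = 0$ and $\mathbb{E}[(\bm w^\top \bm \epsilon)^2 \mid \bm A,\bm X] = \sigma^2 \|\bm w\|^2$. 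Thus the noise piece has conditional mean zero, its covariance with the $\sigma(\bm X)$-measurable signal piece vanishes, and $\Var(\bm w^\top \bm \epsilon \mid B_{\bm \Lambda}) = \sigma^2 \mathbb{E}[\|\bm w\|^2 \mid B_{\bm \Lambda}]$. Adding the signal variance $\Var(\mathbb{E}[\bm w_{\mathrm{IPW}}^\top \bm R \bm \Lambda \bm h \mid \bm X]\mid B_{\bm \Lambda})$ reproduces exactly the claimed decomposition of $n^2\,\Var(T(\bm w)\mid B_{\bm \Lambda})$.

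Finally, for the UMVUE claim I would observe that the signal term is free of $\bm w$, so minimizing $\Var(T(\bm w)\mid B_{\bm \Lambda})$ over $\Cnulllamb$ reduces to minimizing $\mathbb{E}[\|\bm w\|^2 \mid B_{\bm \Lambda}]$. Since $\wnulllamb$ solves the optimization~\eqref{eqn:optimiation_problem} with $\phi^{\mathrm{bal}}(\bm w) = \|\bm w\|^2$ realization-wise --- it is the minimum-norm solution of the balancing equation at each draw of $(\bm A,\bm X)$ --- it satisfies $\|\wnulllamb\|^2 \le \|\bm w\|^2$ almost surely on $B_{\bm \Lambda}$ for every competing $\bm w \in \Cnulllamb$. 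Taking conditional expectations preserves this inequality, yielding $\Var(T(\wnulllamb)\mid B_{\bm \Lambda}) \le \Var(T(\bm w)\mid B_{\bm \Lambda})$ and completing the argument.
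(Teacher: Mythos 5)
Your proposal is correct and follows essentially the same route as the paper's proof: the paper applies the law of total variance conditional on $B_{\bm \Lambda}$ (which is legitimate precisely because, as you note, $B_{\bm \Lambda}$ is $\sigma(\bm A,\bm X)$-measurable), uses the balancing equation to identify the signal term $\bm w^\top\bm R\bm\Lambda\bm h = \bm f^\top\bm\Lambda\bm h = \mathbb E[\bm w_{\mathrm{IPW}}^\top\bm R\bm\Lambda\bm h\mid\bm X]$, and invokes the realization-wise minimality $\|\wnulllamb\|^2\leq\|\bm w\|^2$ to conclude. Your direct signal-plus-noise split with the vanishing cross-covariance is just an unpacked version of that same total-variance decomposition, and your explicit IPW-cancellation computation makes rigorous a step the paper asserts by definition.
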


The following remarks are in order. First, because the \textit{balancing estimator} $T(\wnulllamb)$ exists only when the balancing equation has a solution, we evaluate its variance conditional on the event $B_{\bm \Lambda}$ in finite samples. Second, Theorem~\ref{thm:umvue} shows that the variance of the proposed estimator can be divided into two components. The first component only depends on the inner product between the IPW weights and the outcome regression functions (and in particular does not depend on the low-rank structure), while the second component is a function of the norm of the weights alone, which is minimized. 

Finally, we derive a closed-form expression of this optimal weight.
\begin{theorem}[Expressions of the low-rank weighting estimators]
\label{thm:weights_expression} If
$\wnulllamb$ exists, it takes the form  $\wnulllamb = (\bm \Lambda^\top\bm R^\top)^{+}\bm \Lambda^\top\bm f$ where $\bm X^{+}$ denotes the Moore-Penrose pseudo-inverse of $\bm X$.
\end{theorem}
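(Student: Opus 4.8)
The plan is to recognize that the optimization defining $\wnulllamb$ is nothing but the search for the minimum-Euclidean-norm solution of a (random, data-dependent) \emph{consistent} linear system, and then to invoke the characterization of such solutions via the Moore–Penrose pseudo-inverse. To fix notation I would set $\bm B:=\bm\Lambda^\top\bm R^\top$ and $\bm b:=\bm\Lambda^\top\bm f$, so that $\Cnulllamb=\{\bm w:\bm B\bm w=\bm b\}$ and the problem in Equation~\eqref{eqn:optimiation_problem} with $\phinull(\bm w)=\|\bm w\|^2$ reads: minimize $\|\bm w\|^2$ subject to $\bm B\bm w=\bm b$. Throughout I would work pathwise on the event $B_{\bm\Lambda}$ that Equation~\eqref{eqn:balancing_equation} admits a solution, so that $\bm B$ and $\bm b$ are deterministic functions of the realized $(\bm A,\bm X)$; measurability of $\wnulllamb$ is then automatic, since the pseudo-inverse is a measurable map of the entries of $\bm B$.

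First I would verify that the candidate $\bm w^\ast:=\bm B^{+}\bm b$ is feasible. On $B_{\bm\Lambda}$ the system is consistent, i.e. $\bm b\in\mathrm{range}(\bm B)$. Using the standard Moore–Penrose identity that $\bm B\bm B^{+}$ is the orthogonal projector onto $\mathrm{range}(\bm B)$, we obtain $\bm B\bm w^\ast=\bm B\bm B^{+}\bm b=\bm b$, so $\bm w^\ast\in\Cnulllamb$. Next, for minimality I would use an orthogonal decomposition of an arbitrary competitor $\bm w\in\Cnulllamb$. Since $\bm B(\bm w-\bm w^\ast)=\bm b-\bm b=\bm 0$, the difference $\bm d:=\bm w-\bm w^\ast$ lies in $\mathrm{null}(\bm B)$. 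On the other hand, $\bm w^\ast=\bm B^{+}\bm b\in\mathrm{range}(\bm B^{+})=\mathrm{range}(\bm B^\top)=\mathrm{null}(\bm B)^{\perp}$, again a standard pseudo-inverse fact together with the fundamental orthogonality relation for a matrix and its transpose. Hence $\bm w^\ast\perp\bm d$, and the Pythagorean theorem yields $\|\bm w\|^2=\|\bm w^\ast+\bm d\|^2=\|\bm w^\ast\|^2+\|\bm d\|^2\ge\|\bm w^\ast\|^2$, with equality if and only if $\bm d=\bm 0$. Therefore $\bm w^\ast$ is the unique minimizer, giving $\wnulllamb=\bm B^{+}\bm b=(\bm\Lambda^\top\bm R^\top)^{+}\bm\Lambda^\top\bm f$.

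An equivalent route, which I would mention as a cross-check, is to form the Lagrangian $\|\bm w\|^2-\bm\lambda^\top(\bm B\bm w-\bm b)$: stationarity forces $\bm w=\tfrac12\bm B^\top\bm\lambda\in\mathrm{range}(\bm B^\top)$, and the unique element of $\mathrm{range}(\bm B^\top)$ satisfying the constraint is precisely $\bm B^{+}\bm b$. Both arguments deliver the same closed form.

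This proof is essentially bookkeeping once the two Moore–Penrose properties (that $\bm B\bm B^{+}$ projects onto $\mathrm{range}(\bm B)$ and that $\mathrm{range}(\bm B^{+})=\mathrm{range}(\bm B^\top)$) are in hand, so there is no deep obstacle. The only point that genuinely requires care is that feasibility of $\bm w^\ast$ relies on conditioning on $B_{\bm\Lambda}$: off this event $\bm B\bm B^{+}\bm b\neq\bm b$ and the minimization is vacuous, which is exactly why Theorem~\ref{thm:umvue} and Theorem~\ref{thm:weights_expression} are phrased conditionally on the balancing equation admitting a solution.
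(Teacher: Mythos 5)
Your proposal is correct and follows essentially the same route as the paper: the paper's proof simply invokes the standard fact that the minimum-norm solution of a consistent system $\bm A\bm x=\bm b$ is $\bm A^{+}\bm b$, which is exactly the reduction you make (you additionally prove that fact from the projection and range identities of the pseudo-inverse, and correctly note that feasibility requires working on the event $B_{\bm\Lambda}$).
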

Theorem~\ref{thm:weights_expression} follows directly from the fact that pseudoinverse estimators are the minimum-normed solution to a least-squares problem. Note that, regardless of whether the balancing equations are feasible or not, one can always compute this mathematical expression for $\wnulllamb$. However, $\Cnulllamb$ is non-empty and contains this quantity only when the balancing equation is feasible. In fact, it enables us to determine whether the optimization problem in Equation~\eqref{eqn:optimiation_problem} is feasible or not by directly checking whether $(\bm \Lambda^\top\bm R^\top)^{+}\bm \Lambda^\top\bm f\in \Cnulllamb$. We can also use Theorem~\ref{thm:weights_expression} to obtain an expression of the second component of $\mathrm{Var}\left(T(\wnulllamb)\mid B_{\bm \Lambda}\right)$ from Theorem~\ref{thm:umvue}. We thus see that it is only this second component that is affected by the low-rank structure.

As noted earlier, because of conditioning on $B_{\bm \Lambda}$, we are unable to establish the finite-sample unbiasedness of this optimal balancing estimator.  However, in Section~\ref{sec:asymptotics}, we analyze the asymptotic behavior of the estimator and derive various properties, including its asymptotic unbiasedness for $\mu_f$.

We note that all the finite-sample properties of the balancing estimator discussed in this section holds even for a single network $n=1$ with arbitrary interference. The partial interference assumption becomes necessary for statistical inference as discussed in Section~\ref{sec:asymptotics}, where we develop asymptotic theories as the nubmer of clusters tend to infinity. 

\subsection{Linear outcome models and balancing estimators}
\label{sec:lm_balancing}

Recall that the balancing estimator requires $\Cnulllamb$ to be nonempty, which holds when $\bm \Lambda^\top \bm f$ lies in the column space of $(\bm R \bm \Lambda)^\top$. In this case, the dimension of $\Cnulllamb$ equals that of the corresponding solution space, i.e., the null space of $(\bm R \bm \Lambda)^\top$, which is a $\mathrm{dim}(\bm h) \times M$ matrix. A sufficient condition for this null space to be nontrivial is that the length of $\bm h$ is less than $M$.  However, the low rank assumptions in Examples~\ref{ex:1}–\ref{ex:4} take the general form $g_{ij}^{(\bm a_c)}(\bm X_c) = \bm \Lambda_{ij}(\bm a_c)\bm h_{ci}$ for all $c, i$. If at least one $\bm h_{ci}$ is multi-dimensional, then $\bm h$, formed by stacking the $\bm h_{ci}$, has length greater than $M$.

The core issue is that these low rank structural assumptions allow the length of $\bm h$ to grow with $M$, since each $\bm h_{ci}$ is distinct. To be able to solve the balancing equation, it is therefore necessary to impose additional restrictions that relate the $\bm h_{ci}$ terms and prevent $\dim(\bm h)$ from increasing linearly with $M$. As we show next, a linear model assumption for the outcome provides a natural way to achieve this without sacrificing interpretability.  We emphasize that even under no interference, the linear model assumption is generally required for the unbiasedness of balancing estimators \citep{Hainmueller_2012,zubuzaretta2015}. 
\begin{example}[Linear model]
\label{ex:5}
Consider the following linear model: $\bm g_{ci}^{(\bm a_c)}(\bm X_c) = \bm X_{ci}^\top\bm \beta_{ci}{(\bm a_c)}$. Form the vector $\bm \beta$ by stacking $\bm \beta_{ci}{(\bm a_c)}$ across $i$, $c$, and $\bm a$ in that order. Then, we can define $\bm \Lambda(\bm X)$ as the following block diagonal matrix:
$$
\bm \Lambda(\bm X) = \begin{bmatrix}
    \bm \Lambda_1(\bm X_1) & \bm 0 & \cdots & \bm 0\\
    \bm 0 & \bm \Lambda_2(\bm X_2) & \cdots & \bm 0 \\
    \vdots & \vdots & \ddots & \vdots \\
    \bm 0 & \bm 0 & \cdots & \bm \Lambda_n(\bm X_n)
\end{bmatrix} \ \text{where} \ \bm \Lambda_c(\bm X_c) \ = \ 
\bm I_{2^{M_c}}\otimes \begin{bmatrix}
    \bm X_{c1}^\top & \bm 0^\top & \cdots & \bm 0^\top\\
    \bm 0^\top & \bm X_{c2}^\top & \cdots& \bm 0^\top\\
    \vdots & \vdots & \ddots & \vdots\\
    \bm 0^\top & \bm 0^\top & \cdots & \bm X_{cM_c}^\top
    \end{bmatrix},
$$
and $\bm g = \bm \Lambda(\bm X)\bm \beta$.
\end{example}

Although this model does not lead to any dimension reduction (i.e., $\dim(\bm \beta)=\dim(\bm g)$), it separates the treatment effect parameters $\bm \beta$ from covariates. One can then impose a restriction directly on $\bm \beta$ as $\bm \beta = \bm \Lambda^{(\bm \beta)}\bm h$, leading to $\bm g = \bm \Lambda \bm h$, where $\bm \Lambda = \bm \Lambda(\bm X)\bm \Lambda^{(\bm \beta)}$. Here, one can interpret $\bm \Lambda^{(\bm \beta)}$ as a \textit{coefficient-level} low-rank structure matrix. Note that the model still allows for heterogeneous effects through the covariates, which are now absorbed into the low-rank structure (along with the restrictions on $\{\bm \beta_{ci}(\bm a_c)\}$). Because any information about the interference is directly imposed on $\{\bm \beta_{ci}(\bm a_c)\}$, which are of fixed length, this allows for the resulting $\bm h$ to be of fixed-length. 
For example, if all the clusters have the same size, $M/n$, one may consider the following constant effect assumption: $\bm \beta_{ci}{(\bm a_c)} = \bm h^{(\bm a_c)}$ across all $i$ and $c$, for any $\bm a_c$. Because all units share the same coefficients under this assumption, the length of $\bm{h}$, which is formed by stacking $\bm h(\bm a_c)$ across $\bm a_c \in \{0,1\}^{M/n}$, is reduced by a factor of $M$ compared to the length of $\bm{g}$.

In Appendix~\ref{sec:linear_low_rank_ex}, we explicitly show how one can now directly impose the assumptions discussed in the examples of Section~\ref{sec:examples_low_rank} on $\bm \beta_{ci}{(\bm a_c)}$ and obtain their analogs for the above linear model. We emphasize that the linear model introduced above is flexible. In particular, the set of covariates can include any set of basis functions that spans $\left\{g_{ci}^{(\bm a_c)}(\cdot
):\bm a_c\right\}$. The choice of basis functions has been extensively discussed in the literature for the non-interference case (for example, see \cite[Section 5.2]{benmichael2021balancingactcausalinference} and the references therein). A similar discussion extends to the interference case.


Lastly, in Appendix~\ref{sec:plug-in}, we show that the plug-in estimator of $\mu_f$ using an OLS estimate of $\bm h$ is equivalent to $T(\wnulllamb)$. Despite this equivalence, our covariate balancing framework provides several practically important features that are not available under this model-based approach. First, one still needs to check the feasibility of the balancing equations to determine whether this plug-in estimator is unbiased for $\frac{1}{n}\mathbb E[\bm f^\top\bm \Lambda \bm h\mid B_{\bm \Lambda}]$. Second, it is through this balancing framework that we establish the variance-optimality of the plug-in estimator (equal to the balancing estimator) in $\Cnulllamb$. Third, the class of covariate balancing estimators $\Cnulllamb$ is general and allows one to flexibly choose different optimal weights. For example, the aforementioned equivalence relation only holds when the objective function $\phinull$ is squared norm, but other choices, such as entropy \citep{Hainmueller_2012,zhaopercival2016}, are also possible.  Fourth, our framework enables us to construct a balancing estimator that tolerates a pre-specified degree of covariate imbalance, which in some cases may lead to a substantial efficiency gain at the cost of small bias \citep[see][]{zubuzaretta2015}.  We leave the detailed analysis of these extensions for future work. In the next section, we propose a method to assess imbalance when the balancing equations are not feasible.

\subsection{Assessing covariate imbalance when the balancing equation is infeasible}
\label{sec:assessing_imbalance}

When the balancing equation does not have a feasible solution, the balancing estimator will be biased. In such situations, it is important to quantify the degree of covariate imbalance for assessing the potential bias of the weighting estimator \citep{rubin2008}.  Below, we consider a balance assessment method in the setting with interference that reduces to a standard measure when there is no spillover effects \citep[e.g.,][]{zubuzaretta2015}.

For any given weighting estimator $T(\bm w)$, we define the \textit{imbalance vector} as $$\bm \nu := \frac{1}{n}(\bm \Lambda^\top\bm R^\top\bm w - \bm \Lambda^\top\bm f).$$ 
This $j^{th}$ component of the imbalance vector, $\nu_j$, describes the degree of slack in the $j^{th}$ component of the system of balancing equations in Equation~\eqref{eqn:balancing_equation}.
When balance is achieved for the $\mathrm{dim}(\bm h)$ system (i.e., the balancing equations are feasible), $\nu_j=0,\forall j$. If the value of $\nu_j$ is small for all $j$, \textit{imbalance} is small in all the balancing equations, indicating a small bias of the weighting estimator. In fact, the relation between imbalance and bias is given by 
$$\mathbb E[\bm \nu^\top\bm h] = \mathbb E[T(\bm w)] - \mu_f.$$  
Thus, if all the components of $\bm \nu$ are bounded between $-\delta$ and $\delta$ (for some $\delta>0$), the absolute bias of the estimator is bounded by $\delta\|\bm h\|_1$. 

Since there is no reference scale for interpreting the magnitude of $\nu_j$, we consider a measure of \textit{relative imbalance} based on the notion of the effective treatment introduced at the end of Section~\ref{sec:examples_low_rank}. In the case of the linear model assumption (as discussed in Section~\ref{sec:lm_balancing}), where the low-rank structure induces $\ell$ effective treatments, we divide $\bm h$ into $\ell$ vectors of length $p$, i.e., $\bm h = (\bm h_1^\top,\cdots, \bm h_{\ell}^\top)^\top$, where $\bm h_j\in \mathbb R^p$ and $\bm \Lambda_{ci}(\bm a_c)^\top\bm h_j$ represents the $j^{\mathrm{th}}$ effective treatment effect on the $i^{\mathrm{th}}$ unit in the $c^{\mathrm{th}}$ cluster under cluster treatment pattern $\bm a_c$. We similarly write $\bm \nu = (\bm \nu_1^\top,\cdots, \bm \nu_{\ell}^\top)^\top$, for $\bm \nu_j\in \mathbb R^p$. 

Recall that $\mu_f = \frac{1}{n}\mathbb E[\bm \Lambda^\top \bm h] = \mathbb E\left[\frac{1}{M_c}\sum_{i=1}^{M_c}\sum_{\bm a_c}\bm \Lambda_{ci}(\bm a_c)^\top\bm h\right]$. To obtain a scale of comparison for the $t^{\mathrm{th}}$ component of $\bm \nu_j$, we use the standard deviation of the covariate pre-multiplying the $t^{\mathrm{th}}$ component of $\bm h_j$ in the expression of $\mu_f$ above, that is:
\[
    \sigma_{tj} = \sqrt{\Var\left(\frac{1}{M_c}\sum_{i=1}^n\sum_{\bm a_c\in \{0,1\}^{M_c}}\bm \Lambda_{ci}(\bm a_c)^\top\bm u_{tj}\right)},
\]
where $\bm us_{tj}$ is a Euclidean vector of length $p\cdot \ell$, with 1 at the position corresponding to the $t^{\mathrm{th}}$ coordinate for the $j^{\mathrm{th}}$ effective treatment. 

Let $\hat \sigma_{tj}$ denote its empirical analogue and $\nu_{tj}$ denote the imbalance in the $t^{\mathrm{th}}$ covariate multiplying the $j^{\mathrm{th}}$ effective treatment (which is also equal to the $t^{\mathrm{th}}$ component of $\bm \nu_j$). We then define the corresponding \textit{relative imbalance} as $\nu_{tj}^*:=\nu_{tj}/\hat{\sigma}_{tj}$. We also construct an omnibus measure of imbalance for the $t^{\mathrm{th}}$ covariate (for $1\leq t\leq p$) as follows: Let $m_j$ be the number of units for which the $j^{\mathrm{th}}$ effective treatment is observed. Then, $\bar \nu_t = \sum_{1\leq j\leq \ell} \nu^*_{tj}m_j/\sum_{1\leq j\leq \ell}m_j$ is an overall measure of the relative imbalance in the $t^{\mathrm{th}}$ covariate. As a rule of thumb, we recommend a threshold of $10\%$ for these quantities.

\section{Asymptotic inference}
\label{sec:asymptotics}

In this section, we derive the asymptotic properties of our balancing estimator and evaluate its asymptotic efficiency. We first show that the proposed balancing estimator without the knowledge of the true propensity score, is more efficient than the IPW estimator based on the true propensity score. Furthermore, we propose a consistent estimator of the variance that yields asymptotically valid confidence intervals for the balancing estimator. Lastly, we develop data-adaptive methodologies for choosing a suitable low-rank structure among several candidate structures.

\subsection{Asymptotic normality}
\label{sec:asymp_normality}

We first state the asymptotic normality of the IPW estimator, which provides our benchmark for the comparison of the estimators' asymptotic variances.
\begin{theorem}[Asymptotic normality of the standard IPW estimator]
\label{thm:ipw_normality}
Suppose that Assumptions~\ref{assm:partial}--\ref{assm:outcome_model} and \ref{assm:iid_error} hold and the propensity score is known. Let $\bm g_c^{(\bm A_c)}(\bm X_c)$ be the vector formed by stacking $g_{ci}^{(\bm A_c)}(\bm X_c)$ across $i$, and similarly define the vector $\bm w_{\mathrm{IPW},c}$. Then, assuming $\mathbb E\left\|\bm g^{(\bm A_1)}_{1}(\bm X_1)\right\|^4<\infty$ and $\mathbb E\left\|\bm w_{\mathrm{IPW},1}\right\|^4<\infty$, we have, 
\begin{align*}
    \sqrt{n}\left(T({\bm w}_{\mathrm{IPW}}) - \mu_f\right)\dto \mathcal N(0, \sigma^2_{\mathrm{IPW}}),
\end{align*}
where $\sigma^2_{\mathrm{IPW}} = \Var\left[{\bm g_c^{(\bm A_c)}}^\top\bm w_{\mathrm{IPW},c} \right] + \sigma^2 \mathbb E\left[\left\|\bm w_{\mathrm{IPW},c} \right\|^2\right]$.
\end{theorem}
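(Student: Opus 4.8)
The plan is to exploit the i.i.d. cluster structure granted by Assumption~\ref{assm:super_pop} and reduce the claim to the classical Lindeberg--L\'evy central limit theorem. Writing $S_c := \sum_{i=1}^{M_c} w_{\mathrm{IPW},ci} Y_{ci} = \bm w_{\mathrm{IPW},c}^\top \bm Y_c$ for the scalar contribution of cluster $c$, we have $T(\bm w_{\mathrm{IPW}}) = \frac{1}{n}\sum_{c=1}^n S_c$. Since $S_c$ is a measurable function of $(\{Y_{ci}(\bm a_c)\}_{i,\bm a_c}, \bm X_c, \bm A_c)$---including the random cluster size $M_c$, which is encoded in $\bm X_c$, so that the random dimension of $\bm w_{\mathrm{IPW},c}$ and $\bm g_c^{(\bm A_c)}(\bm X_c)$ poses no difficulty for the scalar $S_c$---Assumption~\ref{assm:super_pop} implies that $\{S_c\}_{c=1}^n$ are i.i.d. The already-established unbiasedness $\mathbb E[T_{\mathrm{IPW}}] = \mu_f$ gives $\mathbb E[S_c] = \mu_f$. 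Thus, once I verify $\Var(S_c) < \infty$, Lindeberg--L\'evy yields $\sqrt{n}(T(\bm w_{\mathrm{IPW}}) - \mu_f) \dto \mathcal N(0, \Var(S_c))$, and the only remaining task is to identify $\Var(S_c)$ with $\sigma^2_{\mathrm{IPW}}$.

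For the finite-variance step, I would substitute $\bm Y_c = \bm g_c^{(\bm A_c)}(\bm X_c) + \bm \epsilon_c^{(\bm A_c)}$ (Assumptions~\ref{assm:consistency} and~\ref{assm:outcome_model}) to split $S_c = \bm w_{\mathrm{IPW},c}^\top \bm g_c^{(\bm A_c)}(\bm X_c) + \bm w_{\mathrm{IPW},c}^\top \bm \epsilon_c^{(\bm A_c)} =: U_c + V_c$. The signal term is bounded by two applications of Cauchy--Schwarz, $\mathbb E[U_c^2] \le \mathbb E[\|\bm w_{\mathrm{IPW},c}\|^2\|\bm g_c^{(\bm A_c)}(\bm X_c)\|^2] \le (\mathbb E\|\bm w_{\mathrm{IPW},c}\|^4)^{1/2}(\mathbb E\|\bm g_c^{(\bm A_c)}(\bm X_c)\|^4)^{1/2}$, which is precisely where the two fourth-moment hypotheses enter; the noise term satisfies $\mathbb E[V_c^2] = \sigma^2\mathbb E\|\bm w_{\mathrm{IPW},c}\|^2 < \infty$ by the computation below. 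Hence $\Var(S_c) < \infty$ and the CLT applies.

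It remains to compute $\Var(S_c) = \Var(U_c) + \Var(V_c) + 2\,\mathrm{Cov}(U_c, V_c)$ and show it collapses to the stated form. Conditioning on $(\bm A_c, \bm X_c)$ and using that the errors are mean-zero and independent of $(\bm A_c, \bm X_c)$ (Assumption~\ref{assm:outcome_model}), while $U_c$ and $\bm w_{\mathrm{IPW},c}$ are $(\bm A_c, \bm X_c)$-measurable, I get $\mathbb E[V_c \mid \bm A_c, \bm X_c] = 0$. This kills the cross-covariance, $\mathrm{Cov}(U_c, V_c) = \mathbb E[U_c\, \mathbb E[V_c \mid \bm A_c, \bm X_c]] = 0$, and leaves $\Var(U_c) = \Var({\bm g_c^{(\bm A_c)}}^\top \bm w_{\mathrm{IPW},c})$, the first term of $\sigma^2_{\mathrm{IPW}}$. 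For the noise term, the uncorrelated constant-variance error structure gives $\mathbb E[\bm \epsilon_c^{(\bm A_c)}{\bm \epsilon_c^{(\bm A_c)}}^\top \mid \bm A_c, \bm X_c] = \sigma^2 \bm I_{M_c}$, whence $\Var(V_c) = \mathbb E[V_c^2] = \sigma^2\mathbb E\|\bm w_{\mathrm{IPW},c}\|^2$, the second term.

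I do not expect a deep obstacle here: the genuine content is the moment bound ensuring $\Var(S_c)<\infty$ and the conditioning argument decoupling signal from noise. The one point requiring care is that a single scalar $\sigma^2$ in the limit presumes a common error variance; the clean $\sigma^2\mathbb E\|\bm w_{\mathrm{IPW},c}\|^2$ form needs the homoskedastic, uncorrelated structure of Assumption~\ref{assm:iid_error}, so I would either invoke that assumption or read $\sigma^2$ as the common variance implicit in the signal-and-noise formulation of Assumption~\ref{assm:outcome_model}.
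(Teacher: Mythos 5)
Your proposal is correct and follows essentially the same route as the paper's proof: reduce to the classical CLT for the i.i.d.\ cluster-level scalars $\bm w_{\mathrm{IPW},c}^\top \bm Y_c$, split into signal and noise terms, verify finite second moments via Cauchy--Schwarz with the two fourth-moment hypotheses, and identify the limiting variance by conditioning (the paper phrases this last step via the law of total variance, you via the vanishing cross-covariance, which is the same computation). Your closing remark is also well taken: the paper's stated variance likewise presumes the common error variance $\sigma^2$ of Assumption~\ref{assm:iid_error}, which the theorem's hypothesis list omits, so flagging that one must either invoke it or read $\sigma^2$ as the common variance implicit in Assumption~\ref{assm:outcome_model} is a legitimate refinement rather than a deviation.
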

Theorem~\ref{thm:ipw_normality} shows that the asymptotic variance of the IPW estimator equals a sum of two terms: the first term captures the variation of the IPW-weighted outcome regression function, while the second term is related to the squared norm of the IPW weights. Interestingly, the signal strength only affects the first term, while the noise variance $\sigma^2$ only affects the second term.

Next, we establish the asymptotic normality of our balancing estimator under the linear model assumption described in Section~\ref{sec:lm_balancing}, ensuring that the length of $\bm h$ does not vary with $n$. When this condition does not hold, even though one cannot obtain balancing estimators, one can modify IPW estimators to leverage the low-rank structure so long as the true propensity score is known. We present an asymptotic analysis of this case in Appendix~\ref{sec:varying_asymp}. 

In this section, we extend our analyses by allowing for the misspecification of the low-rank structure where we denote the true low-rank structure by $\boldsymbol \Lambda^\ast$ and the structure used to derive the balancing weights by $\bm \Lambda$. If $\bm \Lambda = \bm \Lambda^*$, then the low-rank structure is \textit{exactly specified}, if the column-space of $\bm \Lambda^*$ is contained in that of $\bm \Lambda$, it is \textit{overspecified}, and otherwise, the low-rank structure is \textit{misspecified}. We call a low-rank structure \textit{correctly specified}, if it is not misspecified. Next, we state some required assumptions.


\begin{ass}[The balancing equations are feasible]
\label{assm:balancing_feasibility}
For the event $B_{\bm \Lambda}$ defined in Theorem~\ref{thm:umvue}, we have $\mathbb P(B_{\bm \Lambda})\to 1$ as the number of clusters $n\to \infty$.
\end{ass}
Before stating the next assumption, we introduce some notation. Consider the low-rank structure assumption $\bm g = \bm \Lambda \bm h$. Given a treatment pattern $\bm a_c\in \{0,1\}^{M_c}$ for cluster $c$, we define $\bm \Lambda_c(\bm a_c)$ as the sub-matrix of $\bm \Lambda$ such that
\[
    \bm \Lambda_c(\bm a_c)\bm h = (g_{c1}^{(\bm a_c)}(\bm X_c), \cdots, g_{cM_c}^{(\bm a_c)}(\bm X_c))^\top.
\]
\begin{ass}[Regularity conditions under a fixed low-rank structure] 
\label{assm:clt_technical_fixed} 
Let the following regularity conditions hold.
\begin{enumerate}[label=(\alph*)]
    \item $\mathbb E[\bm \Lambda_c(\bm A_c)^\top\bm \Lambda_c(\bm A_c)]$ is invertible.
    \item $\mathbb E\left[\sigma^4_{\max}\left(\bm \Lambda_c(\bm A_c)\right)\right]<\infty$ and $\mathbb E\left[\sigma^2_{\max}(\bm \Lambda_c(\bm A_c)^\top\bm \Lambda^\ast_c(\bm A_c))\right]$, where $\sigma_{\max}(\bm M)$ denotes the maximum singular value of a matrix $\bm M$.
    \item $\mathbb E\left[M_c^2\right]<\infty$.
    \end{enumerate}
\end{ass}

Assumption~\ref{assm:balancing_feasibility} requires the asymptotic feasibility of the balancing estimator, while Assumption~\ref{assm:clt_technical_fixed} presents technical conditions required for establishing asymptotic normality. In Appendix~\ref{sec:balancing_feasibility}, we show that under mild restrictions on the covariate distribution, Assumption~\ref{assm:balancing_feasibility} holds for a rich class of low-rank structures (and in particular for all the interference patterns discussed in Examples~\ref{ex:1}--\ref{ex:4} under the linear model presented in Example~\ref{ex:5}).  

We now establish the asymptotic normality of the balancing estimator under the potential misspecification of the low-rank structure.
\begin{theorem}[Asymptotic normality under potential misspecification]
\label{thm:clt_fixed}
Let the potential outcome regression functions satisfy $\bm g = \bm \Lambda^\ast \bm h$, and consider the linear model assumption. 
Then, under Assumptions~\ref{assm:partial}--\ref{assm:outcome_model},~\ref{assm:iid_error}, \ref{assm:balancing_feasibility}, \ref{assm:clt_technical_fixed}, as well as assuming $\mathbb E\left[\|\bm w_{{\rm IPW},1}\|^4\right]<\infty$, we have that
\begin{align*} &\sqrt{n}\left(T(\wnulllamb) - \mu_f(\bm \Lambda, \bm \Lambda^\ast)\right)\dto \mathcal N\left(0, \sigma_{\sthree}^2(\bm \Lambda, \bm \Lambda^\ast)\right),
\end{align*}
where $\mu_f(\bm \Lambda, \bm \Lambda^\ast)$ and $\sigma^2_{\sthree}(\bm \Lambda, \bm \Lambda^\ast)$ are defined as follows: First, define $\bm h_{\bm \Lambda^\ast} := \bm V^{-1}\mathbb E\left[\bm \Lambda_c(\bm A_c)^\top\bm g_c^{(\bm A_c)}\right]$, where $\bm V = \mathbb E[\bm \Lambda_c(\bm A_c)^\top\bm \Lambda_c(\bm A_c)]$, and the theoretical projection operator ${\mathcal P}$ as ${\mathcal P}\bm l:= \bm \Lambda_c(\bm A_c)\bm V^{-1}\mathbb E[\bm \Lambda_c(\bm A_c)^\top\bm l]$ for any vector $\bm l$. Then 
$$
\begin{aligned}
 \mu_f(\bm \Lambda, \bm \Lambda^\ast) & = \mathbb E\left[\bm w_{\mathrm{IPW},c}^\top {\mathcal P}\bm g_c^{(\bm A_c)}\right] \\
    \sigma^2_{\sthree}(\bm \Lambda, \bm \Lambda^\ast) & = \mathrm{Var}\left[\left((I - {\mathcal P})\bm g_c^{(\bm A_c)}\right)^\top\left({\mathcal P}\bm w_{\mathrm{IPW},c}\right) + \mathbb E\left[\bm w_{\mathrm{IPW},c}^\top {\mathcal P}\bm g_c^{(\bm A_c)}\mid \bm X_c\right]\right] +  \sigma^2\mathbb E\left[\left\| {\mathcal P}\bm w_{\mathrm{IPW},c}\right\|^2\right].
    \end{aligned}
$$
\end{theorem}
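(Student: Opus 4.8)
The plan is to recognize the balancing estimator $T(\wnulllamb)$, via the OLS equivalence noted after Example~\ref{ex:5} (and developed in Appendix~\ref{sec:plug-in}), as a plug-in estimator built from a finite-dimensional least-squares fit of $\bm h$. In the fixed-$\bm h$ setting, $\wnulllamb = (\bm \Lambda^\top\bm R^\top)^{+}\bm \Lambda^\top\bm f$ from Theorem~\ref{thm:weights_expression}, and the estimator reduces to a sample-average functional of the i.i.d.\ cluster-level quantities $(\bm X_c, \bm A_c, \bm Y_c)$. So the first step is to write $T(\wnulllamb)$ explicitly as a smooth function of empirical averages. Because $\bm h$ is fixed, the normal equations for the least-squares fit involve $\frac{1}{n}\sum_c \bm\Lambda_c(\bm A_c)^\top\bm\Lambda_c(\bm A_c)$ and $\frac{1}{n}\sum_c \bm\Lambda_c(\bm A_c)^\top\bm g_c^{(\bm A_c)}$ (plus a noise term from $\bm\epsilon$). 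By the law of large numbers and Assumption~\ref{assm:clt_technical_fixed}(a) (invertibility of $\bm V := \mathbb E[\bm\Lambda_c(\bm A_c)^\top\bm\Lambda_c(\bm A_c)]$), the empirical Gram matrix converges to $\bm V$, and the fitted $\hat{\bm h}$ converges in probability to $\bm h_{\bm\Lambda^\ast} = \bm V^{-1}\mathbb E[\bm\Lambda_c(\bm A_c)^\top\bm g_c^{(\bm A_c)}]$ — the population least-squares projection coefficient, which is where the probability limit (and hence $\mu_f(\bm\Lambda,\bm\Lambda^\ast)$) comes from even under misspecification.

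The second step is a standard delta-method / M-estimator linearization. I would write $\sqrt{n}(\hat{\bm h} - \bm h_{\bm\Lambda^\ast})$ as $\bm V^{-1}$ times $\sqrt{n}$ times the centered empirical average of the score $\bm\Lambda_c(\bm A_c)^\top(\bm g_c^{(\bm A_c)} - {\mathcal P}\bm g_c^{(\bm A_c)} + \bm\epsilon_c)$, plus a $o_{\mathbb P}(1)$ remainder controlled by the fourth-moment conditions in Assumption~\ref{assm:clt_technical_fixed}(b) (these bound the error in replacing the empirical Gram matrix by $\bm V$). Applying the CLT to this i.i.d.\ average gives asymptotic normality of $\hat{\bm h}$. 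The functional $T(\wnulllamb) = \frac{1}{n}\bm f^\top\bm\Lambda\hat{\bm h}$ is then an asymptotically linear transformation of $\hat{\bm h}$ together with the explicit $\bm w_{\mathrm{IPW},c}$-weighted structure; pushing the linearization through and identifying the influence function gives a single i.i.d.\ sum whose variance is computed by the law of total variance. The two additive pieces of $\sigma^2_{\sthree}(\bm\Lambda,\bm\Lambda^\ast)$ then fall out naturally: the first (the $\Var[\cdots]$ term) is the signal contribution, splitting into the projected-residual cross term $((I-{\mathcal P})\bm g_c^{(\bm A_c)})^\top({\mathcal P}\bm w_{\mathrm{IPW},c})$ and the conditional-mean term $\mathbb E[\bm w_{\mathrm{IPW},c}^\top{\mathcal P}\bm g_c^{(\bm A_c)}\mid\bm X_c]$, while the second, $\sigma^2\mathbb E\|{\mathcal P}\bm w_{\mathrm{IPW},c}\|^2$, is the noise contribution coming from $\bm\epsilon$ being mean-zero, independent, and homoskedastic.

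To make this rigorous I would first condition on the event $B_{\bm\Lambda}$ and invoke Assumption~\ref{assm:balancing_feasibility}: since $\mathbb P(B_{\bm\Lambda})\to 1$, conditioning on $B_{\bm\Lambda}$ does not affect the limiting distribution (by a Slutsky-type argument, $\mathbb P(B_{\bm\Lambda}^c)\to 0$ means the event has asymptotically negligible probability), which is also what upgrades the finite-sample conditional statements of Theorem~\ref{thm:umvue} into an unconditional asymptotic result and yields asymptotic unbiasedness for $\mu_f$ when the structure is correctly specified (so that ${\mathcal P}\bm g_c^{(\bm A_c)} = \bm g_c^{(\bm A_c)}$ and $\mu_f(\bm\Lambda,\bm\Lambda^\ast) = \mu_f$). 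I expect the main obstacle to be the linearization remainder: controlling the difference between the empirical pseudo-inverse $(\bm\Lambda^\top\bm R^\top)^{+}$ and its population counterpart uniformly enough to guarantee the remainder is $o_{\mathbb P}(1/\sqrt{n})$. This is precisely where Assumption~\ref{assm:clt_technical_fixed}(b) enters — the fourth-moment bound on $\lambda_{\max}(\bm\Lambda_c(\bm A_c))$ and the second-moment bound on $\lambda_{\max}(\bm\Lambda_c(\bm A_c)^\top\bm\Lambda_c^\ast(\bm A_c))$ furnish the integrability needed to apply a matrix law of large numbers and a uniform-continuity argument for the pseudo-inverse. Carefully tracking the interaction between the $\bm\Lambda$-based projection (the estimated structure) and the $\bm\Lambda^\ast$-based truth through the operator ${\mathcal P}$ is the part requiring the most bookkeeping, but it is conceptually routine once the asymptotically linear representation is in hand.
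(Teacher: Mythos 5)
Your proposal is correct, and in substance it is the paper's own argument entered from the primal rather than the dual side. The paper's proof (Appendix~\ref{sec:proof_thm_clt_fixed}, which reuses Appendix~\ref{sec:proof_thm_known_propensity_clt_fixed}) represents the balancing weights through Lagrangian duality as $\hat{\bm w}_c = -\bm\Lambda_c(\bm A_c)\hat{\bm\lambda}$, treats the multiplier $\hat{\bm\lambda}$ as a $Z$-estimator via Theorem~5.41 of \cite{vandervaart1998asymptotic}, and finishes with a joint CLT and delta method for the signal and noise pieces; you instead start from the plug-in identity $T(\wnulllamb)=\frac{1}{n}\bm f^\top\bm\Lambda\,\hat{\bm h}$ of Proposition~\ref{prop:ols_plug_in} and linearize the OLS coefficient $\hat{\bm h}$. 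These are the same estimator written two ways: both reduce to $\bar{\bm v}^\top\bigl(\frac{1}{n}\sum_c\bm\Lambda_c(\bm A_c)^\top\bm\Lambda_c(\bm A_c)\bigr)^{-1}\bigl(\frac{1}{n}\sum_c\bm\Lambda_c(\bm A_c)^\top\bm Y_c\bigr)$ with $\bm v_c=\mathbb E[\bm\Lambda_c(\bm A_c)^\top\bm w_{\mathrm{IPW},c}\mid\bm X_c]$, and both conclude with the same joint-CLT-plus-delta-method computation, the same moment verifications from Assumption~\ref{assm:clt_technical_fixed}, and the same appeal to $\mathbb P(B_{\bm \Lambda})\to 1$ to remove the conditioning on feasibility. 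What each entry point buys: yours makes the misspecification limit $\bm h_{\bm\Lambda^\ast}$, and hence $\mu_f(\bm\Lambda,\bm\Lambda^\ast)$, immediately transparent; the paper's dual route is the one that carries over unchanged to the projection and weighted-projection estimators of Appendix~\ref{sec:extension_known_propensity}, for which no plug-in identity is stated and only the definition of $\bm v_c$ changes. The one place where your sketch is thinner than the paper is the joint treatment of the random prefactor $\bar{\bm v}=\frac{1}{n}\sum_c\bm v_c$: its fluctuation is correlated (through $\bm X_c$) with the score of $\hat{\bm h}$, and it is precisely this joint accounting --- the $\bm\Sigma_{12}$-type cross terms in the paper's computation --- that makes the conditional-mean term $\mathbb E[\bm w_{\mathrm{IPW},c}^\top{\mathcal P}\bm g_c^{(\bm A_c)}\mid\bm X_c]$ (rather than the unconditional $\bm w_{\mathrm{IPW},c}^\top{\mathcal P}\bm g_c^{(\bm A_c)}$) appear inside the variance, which is exactly what distinguishes the balancing estimator's asymptotic variance from the projection estimator's.
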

We provide intuition about the asymptotic mean of our estimators, $\mu_f(\bm \Lambda, \bm \Lambda^\ast)$. This quantity represents the correlation between the true IPW weights, $\bm w_{\mathrm{IPW},c}$, and the part of the true low-rank structure in the potential outcomes captured by the assumed low-rank structure, ${\mathcal P}\bm g_c^{(\bm A_c)}$. If the low-rank structure is misspecified, this term will reflect the bias, which is equal to the part of the outcome that is not captured by the assumed low-rank structure and is correlated with the treatment assignment. This indicates that the misspecification bias is driven by the ``residual'' part of the outcome that is correlated with the treatment assignment. The following corollary, which follows immediately from Theorem~\ref{thm:clt_fixed}, summarizes the results under correct specification of the low-rank structure.
\begin{corollary}[Asymptotic normality under correct specification]
\label{cor:correct_specification_fixed}
Consider the setting of Theorem~\ref{thm:clt_fixed}. If the low-rank structure is either exactly specified or over-specified, that is, $\mathrm{col.sp}(\bm \Lambda)\geq\mathrm{col.sp}(\bm \Lambda^\ast)$, where $\mathrm{col.sp}(\bm M)$ denotes the column-space of a matrix $\bm M$, then we have:
\begin{enumerate}[leftmargin=*]
    \item $\mu_f(\bm \Lambda, \bm \Lambda) = \mu_f$, so that the balancing estimator is asymptotically unbiased,
    \item $\sigma^2_{\sthree}(\bm \Lambda, \bm \Lambda) = \mathrm{Var}\left[\mathbb E[{\bm g_c^{(\bm A_c)}}^\top\bm w_{\mathrm{IPW},c}\mid \bm X_c]\right] + \sigma^2\mathbb E\left[\left\| {\mathcal P}\bm w_{\mathrm{IPW},c}\right\|^2\right]$.
    \item $\sigma^2_{\sthree}(\bm \Lambda, \bm \Lambda)\leq \sigma^2_{\mathrm{IPW}}$, so that the balancing estimator is at least as efficient as the IPW estimator.
    \end{enumerate}
\end{corollary}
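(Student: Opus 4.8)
The plan is to obtain all three claims from Theorem~\ref{thm:clt_fixed} by first establishing a single identity: under correct or over-specification, the population projection $\mathcal P$ leaves the true outcome regression vector invariant, that is, $\mathcal P \bm g_c^{(\bm A_c)} = \bm g_c^{(\bm A_c)}$ almost surely, where $\bm V := \mathbb E[\bm \Lambda_c(\bm A_c)^\top\bm \Lambda_c(\bm A_c)]$ is invertible by Assumption~\ref{assm:clt_technical_fixed}(a). To prove it, I would translate the column-space containment $\mathrm{col.sp}(\bm \Lambda^\ast)\subseteq\mathrm{col.sp}(\bm \Lambda)$ into the existence of a \emph{fixed} matrix $\bm C$ with $\bm \Lambda^\ast = \bm \Lambda\bm C$, so that the true model $\bm g = \bm \Lambda^\ast\bm h$ is rewritten as $\bm g = \bm \Lambda\tilde{\bm h}$ with the constant vector $\tilde{\bm h}:=\bm C\bm h$. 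Restricting to a cluster gives $\bm g_c^{(\bm A_c)} = \bm \Lambda_c(\bm A_c)\tilde{\bm h}$, and since $\tilde{\bm h}$ does not depend on $(\bm A_c,\bm X_c)$, applying $\mathcal P$ yields $\mathcal P\bm g_c^{(\bm A_c)} = \bm \Lambda_c(\bm A_c)\bm V^{-1}\bm V\tilde{\bm h} = \bm \Lambda_c(\bm A_c)\tilde{\bm h} = \bm g_c^{(\bm A_c)}$.

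Given this identity, the first two claims are immediate. For the asymptotic mean, substituting into $\mu_f(\bm \Lambda,\bm \Lambda^\ast) = \mathbb E[\bm w_{\mathrm{IPW},c}^\top\mathcal P\bm g_c^{(\bm A_c)}]$ collapses it to $\mathbb E[\bm w_{\mathrm{IPW},c}^\top\bm g_c^{(\bm A_c)}]$; I would then identify this with $\mu_f$ by taking the expectation of $T_{\mathrm{IPW}}$, noting that the mean-zero errors of Assumption~\ref{assm:outcome_model} drop out and clusters are i.i.d., so that $\mathbb E[\bm w_{\mathrm{IPW},c}^\top\bm g_c^{(\bm A_c)}] = \mathbb E[T_{\mathrm{IPW}}] = \mu_f$ by the IPW unbiasedness already verified in Section~\ref{sec:setup}. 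This gives $\mu_f(\bm \Lambda,\bm \Lambda) = \mu_f$. For the variance, the same substitution kills the residual term since $(\bm I - \mathcal P)\bm g_c^{(\bm A_c)} = \bm 0$, so the first component of $\sigma^2_{\sthree}$ reduces to $\mathrm{Var}[\mathbb E[{\bm g_c^{(\bm A_c)}}^\top\bm w_{\mathrm{IPW},c}\mid \bm X_c]]$, while the noise component $\sigma^2\mathbb E[\|\mathcal P\bm w_{\mathrm{IPW},c}\|^2]$ is unchanged, yielding the stated expression for $\sigma^2_{\sthree}(\bm \Lambda,\bm \Lambda)$.

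For the efficiency claim, I would bound the two components of $\sigma^2_{\sthree}(\bm \Lambda,\bm \Lambda)$ separately against the two terms of $\sigma^2_{\mathrm{IPW}}$. The signal term follows from the law of total variance: writing $Z := {\bm g_c^{(\bm A_c)}}^\top\bm w_{\mathrm{IPW},c}$, we have $\mathrm{Var}[\mathbb E[Z\mid\bm X_c]]\leq\mathrm{Var}[Z]$. The noise term uses that $\mathcal P$ is an orthogonal projection on the Hilbert space of square-integrable cluster-level random vectors under $\langle\bm u,\bm v\rangle=\mathbb E[\bm u^\top\bm v]$: I would check idempotency $\mathcal P^2=\mathcal P$ (which holds because $\bm V^{-1}\mathbb E[\bm \Lambda_c(\bm A_c)^\top\bm l]$ is a constant vector, so that $\mathbb E[\bm \Lambda_c(\bm A_c)^\top\bm \Lambda_c(\bm A_c)\bm V^{-1}\mathbb E[\bm \Lambda_c(\bm A_c)^\top\bm l]] = \mathbb E[\bm \Lambda_c(\bm A_c)^\top\bm l]$) and self-adjointness ($\langle\mathcal P\bm u,\bm v\rangle = \mathbb E[\bm \Lambda_c(\bm A_c)^\top\bm u]^\top\bm V^{-1}\mathbb E[\bm \Lambda_c(\bm A_c)^\top\bm v]$, which is symmetric in $\bm u,\bm v$). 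An orthogonal projection is a contraction, so $\mathbb E[\|\mathcal P\bm w_{\mathrm{IPW},c}\|^2]\leq\mathbb E[\|\bm w_{\mathrm{IPW},c}\|^2]$. Adding the two inequalities gives $\sigma^2_{\sthree}(\bm \Lambda,\bm \Lambda)\leq\sigma^2_{\mathrm{IPW}}$.

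The main obstacle is the central identity $\mathcal P\bm g_c^{(\bm A_c)}=\bm g_c^{(\bm A_c)}$, and the subtlety lies entirely in the word \emph{fixed}: pointwise membership of $\bm g_c^{(\bm A_c)}$ in the column space of $\bm \Lambda_c(\bm A_c)$ is \emph{not} sufficient, because $\mathcal P$ projects using a single population coefficient $\bm V^{-1}\mathbb E[\bm \Lambda_c(\bm A_c)^\top\,\cdot\,]$ shared across all realizations. What rescues the argument is that, in the fixed-$\bm h$ regime of this corollary, correct/over-specification is a containment of the constant coefficient-restriction matrices (as in the linear model of Example~\ref{ex:5}), which furnishes a realization-independent $\bm C$ and hence the constant representation $\bm g=\bm \Lambda\tilde{\bm h}$. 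I would therefore take care to argue that $\bm C$, and thus $\tilde{\bm h}$, can be chosen constant in $(\bm A,\bm X)$; once this is in place, the remaining steps are routine given the elementary Hilbert-space facts above.
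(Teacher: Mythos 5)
Your proposal is correct and follows essentially the route the paper intends: the paper treats this corollary as an immediate consequence of Theorem~\ref{thm:clt_fixed}, obtained exactly as you do by converting the column-space containment into a constant matrix $\bm C$ so that $\bm g = \bm \Lambda\tilde{\bm h}$ with $\tilde{\bm h}$ fixed (the same device used in the paper's proof of Theorem~\ref{thm:decide_true_structure}), whence ${\mathcal P}\bm g_c^{(\bm A_c)} = \bm g_c^{(\bm A_c)}$ collapses the mean and the residual variance term, and the efficiency claim follows from the law of total variance plus the contraction $\mathbb E\left[\|{\mathcal P}\bm w_{\mathrm{IPW},c}\|^2\right]\leq \mathbb E\left[\|\bm w_{\mathrm{IPW},c}\|^2\right]$ noted in the discussion following the corollary. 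Your explicit verification that ${\mathcal P}$ is a self-adjoint idempotent on the $L_2$ space of cluster-level random vectors, and your care that $\bm C$ (hence $\tilde{\bm h}$) must be realization-independent in the fixed-$\bm h$ regime, simply make rigorous steps the paper leaves implicit.
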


Corollary~\ref{cor:correct_specification_fixed} fills the missing gaps in the finite sample analysis of Theorem~\ref{thm:umvue}. The second result of Corollary~\ref{cor:correct_specification_fixed} similarly establishes that the asymptotic variance of the balancing estimator $\sigma^2_{\rm bal}$ can be decomposed into two terms, where the first term does not depend on the chosen low-rank structure. This term is provably smaller than the corresponding first term of $\sigma^2_{\rm IPW}$ due to the following variance decomposition, 
$$
\mathrm{Var}\left[{\bm g_c^{(\bm A_c)}}^\top\bm w_{\mathrm{IPW},c}\right] = \mathrm{Var}\left[ \mathbb E\left[{\bm g_c^{(\bm A_c)}}^\top\bm w_{\mathrm{IPW},c}\mid \bm X_c\right]\right] + \mathbb E\left[ \mathrm{Var}\left[{\bm g_c^{(\bm A_c)}}^\top\bm w_{\mathrm{IPW},c}\mid \bm X_c\right]\right].
$$ 
This efficiency gain of the balancing estimator stems from the fact that the balancing equation achieves balance in each given sample, as opposed to the IPW estimator, which balances \textit{in expectation} \citep{Zubizarreta01112011}.

The second component of $\sigma^2_{\sthree}$ is also provably smaller than the corresponding component of $\sigma^2_{\mathrm{IPW}}$, i.e., $\sigma^2\mathbb E\left[\|{\mathcal P} \bm w_{\mathrm{IPW,c}}\|^2\right] \le \sigma^2\mathbb E\left[\|\bm w_{\mathrm{IPW,c}}\|^2\right]$. This is exactly where the balancing estimator exploits the interference structure for efficiency gain, and in particular, a more informative low-rank structure yields a lower rank for ${\mathcal P}$, leading to a greater gain. Thus, although over-specification of the low-rank structure would still yield unbiased estimates, the second term in the asymptotic variance is larger for oversimplified models, indicating that a parsimonious yet correct model should be preferred. The impact of the reduction in this second component is larger when the signal-to-noise ratio is small.

In summary, we find that the balancing estimator, despite not requiring knowledge of the true propensity score, can be substantially more efficient than the IPW estimator by satisfying the balancing equation and exploiting the assumed low-rank structure.

Finally, to conduct inference on $\mu_f$ (or $\mu_f(\bm \Lambda, \bm \Lambda^\ast)$) using an asymptotic confidence interval, we need a consistent estimator for the asymptotic variance of the balancing estimator. Following the strategy used in \cite{chanetal2016,additivefactorial}, the next theorem derives such an estimator by using the consistent sandwich variance estimator for the $Z$-estimator. The exact expression of this variance estimator is given in Equation~\eqref{eqn:var_est_exprn} of the Appendix.
\begin{theorem}[Consistent variance estimator]
\label{thm:var_estimation}
Let $\hat\sigma_{\sthree}^2$ denote the variance estimator as described in Equation~\eqref{eqn:var_est_exprn} of the Appendix. Then, under the setting and the assumptions in Theorem \ref{thm:clt_fixed}, we have $\hat \sigma^{2}_{\sthree}\Pto \sigma^2_{\sthree}(\bm \Lambda, \bm \Lambda^\ast)$, as $n\rightarrow \infty$. Hence, for any $\alpha \in (0,1)$, the confidence interval given by
\[
    \hat C^{(1-\alpha)}_{\bm \Lambda} = \left(T\left(\wnulllamb\right)\pm z_{\alpha/2}\frac{\hat \sigma_{\sthree}}{\sqrt{n}}\right),
\]
satisfies, $\underset{n\to \infty}{\lim\inf}~\mathbb P\left(\mu_f(\bm \Lambda, \bm \Lambda^*)\in  \hat C^{(1-\alpha)}_{\bm \Lambda}\right)\geq 1-\alpha$. Here $z_{\alpha/2}$ satisfies $\mathbb P(Z\geq z_{\alpha/2})=\alpha/2$ for $Z\sim \mathcal N(0,1)$.
\end{theorem}

\subsection{Choosing a low-rank structure}
\label{sec:choose_structure}

Our methods developed so far rely on the specification of a correct low-rank structure for unbiased inference, which might be difficult to assume in practice. In this section, we present a data adaptive method for choosing a low-rank structure among candidate structures.

Suppose that we have a collection of low-rank structures $\bm \Lambda_1,\cdots, \bm \Lambda_L$ such that for $1\leq i\leq j\leq L$, $\mathrm{col.sp.}(\bm \Lambda_i)\leq \mathrm{col.sp.}(\boldsymbol \Lambda_j)$, implying that $\bm \Lambda_i$ is a more informative structure than $\bm \Lambda_j$. Assume also that one of these structures is correctly specified. Since overspecification of the low-rank structure still results in correct specification, the least restrictive structure $\bm \Lambda_L$ is guaranteed to be correctly specified under this assumption. However, recovering the most informative low-rank structure from this available class (defined to be the correct structure $\bm \Lambda_l$ for the smallest possible $l$) can result in more efficient inference. 

The next theorem presents a consistent test of whether or not a particular structure is correctly specified.  The theorem also shows how to consistently recover the most informative correct structure. As before, we focus on the linear model assumption of Section~\ref{sec:lm_balancing}, while Appendix~\ref{sec:varying_asymp} considers the case where this assumption does not hold.

\begin{theorem}[Consistent test of low-rank structure]
\label{thm:decide_true_structure}
    Suppose that Assumptions~\ref{assm:partial}--\ref{assm:outcome_model} hold under the linear model assumption discussed in Section~\ref{sec:lm_balancing}. 
    In addition, suppose that Assumption~\ref{assm:clt_technical_fixed} is satisfied for every hypothesized structure $\bm \Lambda_l,\forall l$. Also, suppose that $L$ is fixed and does not vary with $n$. Consider the hypothesis $H_{l}:\bm g \in \mathrm{col.sp.}\left(\bm \Lambda_l\right)\textrm{ versus }H_L:\bm g \in \mathrm{col.sp.}\left(\bm \Lambda_L\right)$ (note that the true low-rank structure is the alternate hypothesis while the hypothesized structure is the null).  Define,
    \[
        S_{lL} := \left(\frac{\left(\hat{\bm w}^{\sthree}_{\bm \Lambda_l} - \hat{\bm w}^{\sthree}_{\bm \Lambda_L}\right)^\top\bm y}{\hat\sigma \left\|\hat{\bm w}^{\sthree}_{\bm \Lambda_l} - \hat{\bm w}^{\sthree}_{\bm \Lambda_L}\right\|}\right)^2.
    \]Then, we have the following results:
    \begin{enumerate}[label = (\alph*)]
        \item If $H_{l}$ holds, we have  $S_{lL} \dto \chi^2_1$.
        \item If $H_{l}$ does not hold and $\mu_f(\bm \Lambda_l,\bm \Lambda_L) \neq \mu_f(\bm \Lambda_L, \bm \Lambda_L)$, then, $ S_{lL}\Pto\infty$.
        \item For any given $\alpha\in (0,1)$, let $\chi^2_{1;\alpha}$ denote the $(1-\alpha)$-quantile of the $\chi_1^2$ distribution. Furthermore, let 
        $l^*:=\arg\min\left\{l: \bm \Lambda_l\textrm{ is a true low-rank structure}\right\}$, and $\hat l = \arg\min\left\{l:S_{lL}<\chi^2_{1;\alpha}\right\}$. Then, we have that $\mathbb P\left(\bm \Lambda_{\hat l}\textrm{ is a true low-rank structure}\right)\rightarrow 1$, as $n\rightarrow \infty$, and $\underset{n\to \infty}{\lim\inf}~\mathbb P\left(\hat l = l^*\right)\geq 1-\alpha$.
    \end{enumerate}
\end{theorem}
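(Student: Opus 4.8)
The plan is to reduce all three parts to a single structural observation: because the candidate structures are nested, the numerator of $S_{lL}$ carries no signal under the null $H_l$, so that $S_{lL}$ becomes an \emph{exactly self-normalized} statistic of a weighted sum of the errors. Write $\bm d := \hat{\bm w}^{\sthree}_{\bm \Lambda_l} - \hat{\bm w}^{\sthree}_{\bm \Lambda_L}$. Since $\mathrm{col.sp.}(\bm \Lambda_l)\subseteq \mathrm{col.sp.}(\bm \Lambda_L)$, there is a matrix $\bm C$ (possibly depending on $\bm X$) with $\bm \Lambda_l = \bm \Lambda_L\bm C$. First I would note that $B_{\bm \Lambda_L}\subseteq B_{\bm \Lambda_l}$: any solution of $\bm \Lambda_L^\top\bm R^\top\bm w = \bm \Lambda_L^\top\bm f$ also solves the $\bm \Lambda_l$-balancing equation after left-multiplication by $\bm C^\top$. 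Hence on $B_{\bm \Lambda_L}$ (which has probability $\to 1$ by Assumption~\ref{assm:balancing_feasibility}) \emph{both} weights satisfy $\bm \Lambda_l^\top\bm R^\top\bm w = \bm \Lambda_l^\top\bm f$, so subtracting gives $\bm \Lambda_l^\top\bm R^\top\bm d = \bm 0$. Under $H_l$ we have $\bm g = \bm \Lambda_l\bm h_l$, whence $\bm d^\top\bm R\bm g = (\bm \Lambda_l^\top\bm R^\top\bm d)^\top\bm h_l = 0$, and the model $\bm y = \bm R\bm g + \bm \epsilon$ collapses the numerator to pure noise, $\bm d^\top\bm y = \bm d^\top\bm \epsilon$. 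This cancellation is the conceptual key to the whole theorem.

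For part (a) I would condition on $\mathcal F = \sigma(\bm X, \bm A)$; on $B_{\bm \Lambda_L}$ the vector $\bm d$ is $\mathcal F$-measurable, and $\bm d^\top\bm \epsilon = \sum_{c,i} d_{ci}\epsilon_{ci}$ is conditionally a weighted sum of i.i.d.\ mean-zero errors with conditional variance $\sigma^2\|\bm d\|^2$. Applying the weighted (Lindeberg--Feller) central limit theorem conditionally gives $\bm d^\top\bm \epsilon/(\sigma\|\bm d\|)\mid\mathcal F \dto \mathcal N(0,1)$; since the limit does not depend on $\mathcal F$, dominated convergence of the characteristic function upgrades this to the unconditional statement. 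Because $\hat\sigma\Pto\sigma$ (consistency via the variance-estimation machinery underlying Theorem~\ref{thm:var_estimation}), Slutsky and squaring yield $S_{lL}\dto\chi^2_1$. Equivalently, $\sqrt n\,(T(\hat{\bm w}^{\sthree}_{\bm \Lambda_l})-T(\hat{\bm w}^{\sthree}_{\bm \Lambda_L})) = \bm d^\top\bm \epsilon/\sqrt n\dto\mathcal N(0,\sigma^2 V)$ with $V:=\lim \|\bm d\|^2/n$, while the studentizer $\hat\sigma^2\|\bm d\|^2/n\Pto\sigma^2 V$ matches this asymptotic variance exactly, producing the pivotal $\chi^2_1$ limit.

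For part (b), when $H_l$ fails the signal no longer cancels. By Theorem~\ref{thm:clt_fixed}, $T(\hat{\bm w}^{\sthree}_{\bm \Lambda_l})\Pto\mu_f(\bm \Lambda_l,\bm \Lambda_L)$ and, by correct specification of $\bm \Lambda_L$ together with Corollary~\ref{cor:correct_specification_fixed}, $T(\hat{\bm w}^{\sthree}_{\bm \Lambda_L})\Pto\mu_f(\bm \Lambda_L,\bm \Lambda_L)=\mu_f$. Thus the numerator is $\bm d^\top\bm y = n\,(T(\hat{\bm w}^{\sthree}_{\bm \Lambda_l})-T(\hat{\bm w}^{\sthree}_{\bm \Lambda_L})) = n\,(c+o_P(1))$ with $c=\mu_f(\bm \Lambda_l,\bm \Lambda_L)-\mu_f\neq 0$, whereas the denominator satisfies $\hat\sigma\|\bm d\| = O_P(\sqrt n)$ because the per-cluster weights are bounded in $L^2$ (Assumption~\ref{assm:clt_technical_fixed}) and there are $O(n)$ of them. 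Consequently $\sqrt{S_{lL}} = \Theta_P(\sqrt n)\to\infty$, i.e.\ $S_{lL}\Pto\infty$; note that only the $O(\sqrt n)$ upper bound on $\|\bm d\|$ is needed here.

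For part (c) the guarantees assemble from (a)--(b) and the definition of $\chi^2_{1;\alpha}$ as the $(1-\alpha)$-quantile. By nestedness every $\bm \Lambda_l$ with $l\geq l^*$ is correctly specified, so $\{\hat l\geq l^*\}$ forces $\bm \Lambda_{\hat l}$ to be a true structure (taking the convention $S_{LL}:=0$ so that $L$ always passes and the argmin set is nonempty). For each of the finitely many $l<l^*$, $H_l$ fails, so part (b) gives $\mathbb P(S_{lL}<\chi^2_{1;\alpha})\to 0$; a union bound yields $\mathbb P(\hat l\geq l^*)\to 1$, proving the first claim. For the second, $\{\hat l = l^*\} = \{S_{l^*L}<\chi^2_{1;\alpha}\}\cap\bigcap_{l<l^*}\{S_{lL}\geq\chi^2_{1;\alpha}\}$; part (a) gives $\mathbb P(S_{l^*L}<\chi^2_{1;\alpha})\to 1-\alpha$ and the complement of the intersection has probability $\to 0$, so $\mathbb P(A\cap B)\geq\mathbb P(B)-\mathbb P(A^c)$ gives $\liminf_n\mathbb P(\hat l = l^*)\geq 1-\alpha$. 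The main obstacle is the middle of part (a): verifying the negligibility/Lindeberg condition $\max_{c,i}d_{ci}^2/\|\bm d\|^2\to 0$ together with the non-degeneracy $\|\bm d\|^2/n\Pto V>0$, and matching $\hat\sigma^2\|\bm d\|^2/n$ to the asymptotic variance, all of which rest on the moment conditions of Assumption~\ref{assm:clt_technical_fixed} and on careful bookkeeping over the feasibility event $B_{\bm \Lambda_L}$. The signal-cancellation identity itself is short, but it is what makes the null distribution pivotal and hence drives every part of the statement.
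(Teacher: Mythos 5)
Your core cancellation identity is exactly the paper's: using nestedness ($\bm \Lambda_l = \bm \Lambda_L\bm C$) and the two balancing equations, the signal drops out of the numerator, so $\bm d^\top\bm y = \bm d^\top\bm\epsilon$ under $H_l$; parts (b) and (c) are then assembled the same way the paper does (your union-bound bookkeeping in (c) is a spelled-out version of the paper's ``immediate consequence'' remark, and your observation $B_{\bm\Lambda_L}\subseteq B_{\bm\Lambda_l}$ is a nice point the paper leaves implicit). Where you genuinely diverge is the distributional step in (a): you condition on $(\bm X,\bm A)$ and invoke a weighted Lindeberg--Feller CLT for the self-normalized quantity $\bm d^\top\bm\epsilon/(\sigma\|\bm d\|)$, whereas the paper never conditions. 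Instead it uses the Lagrangian/dual representation of the balancing weights, $(\hat{\bm w}^{\sthree}_{\bm\Lambda})_c = -\bm\Lambda_c(\bm A_c)\hat{\bm\lambda}$ with $\hat{\bm\lambda} = \bm\lambda^* + O_{\mathbb P}(n^{-1/2})$ (the Z-estimator expansion underlying Theorem~\ref{thm:clt_fixed}), swaps $\hat{\bm\lambda},\hat{\bm\delta}$ for their limits at cost $o_{\mathbb P}(1)$, applies the ordinary i.i.d.\ CLT to $\sum_c\left(\bm\Lambda_c(\bm A_c)\bm\lambda^* - \bm\Delta_c(\bm A_c)\bm\delta^*\right)^\top\bm\epsilon_c$, and uses the LLN to show $\|\bm d\|^2/n\Pto\mathbb E\left\|\bm\Lambda_c(\bm A_c)\bm\lambda^*-\bm\Delta_c(\bm A_c)\bm\delta^*\right\|^2$ before finishing with Slutsky. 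Your conditional route does buy something real: the self-normalized statistic is pivotal once negligibility holds, so you never need to identify the limiting variance explicitly.

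The deferral of that negligibility check is, however, the one genuine gap, and it is not a routine one. A priori the balancing weights are global objects (a pseudoinverse coupling all clusters), so none of $\max_{c,i}d_{ci}^2/\|\bm d\|^2\Pto 0$, $\|\bm d\|^2/n\Pto V>0$, or your part-(b) claim $\|\bm d\| = O_{\mathbb P}(\sqrt n)$ (``per-cluster weights bounded in $L^2$'') follows from Assumption~\ref{assm:clt_technical_fixed} directly: that assumption bounds moments of singular values of $\bm\Lambda_c(\bm A_c)$, not of the weights. The bridge is precisely the dual representation $(\hat{\bm w}^{\sthree}_{\bm\Lambda})_c = -\bm\Lambda_c(\bm A_c)\hat{\bm\lambda}$ with $\hat{\bm\lambda}$ consistent, which converts all three statements into consequences of the moment conditions (for instance $\max_{c\le n}\sigma^2_{\max}(\bm\Lambda_c(\bm A_c)) = o(\sqrt n)$ almost surely, by Borel--Cantelli under the fourth-moment bound, gives negligibility once $\|\bm d\|^2\asymp nV$). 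You would also need to carry, as the paper implicitly does, the non-degeneracy $V>0$ (if the two structures induce asymptotically identical weights the statistic is $0/0$), together with feasibility $\mathbb P(B_{\bm\Lambda_L})\to 1$, which you do invoke. So: right architecture, same key identity, a genuinely different CLT mechanism for part (a) --- but the plan closes only after you import the dual representation of the weights, which is the technical heart of the paper's proof and is absent from yours.
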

According to the theorem, $\hat l$ asymptotically recovers the correct, most restricted low-rank structure among the $L$ low-rank structures considered, with probability at least $1-\alpha$. To obtain a consistent estimator $\hat \sigma$ of $\sigma$, a standard choice can be the linear regression-based estimator using $\bm \Lambda_L$, i.e., $\sqrt{\frac{\|(\bm I - \bm P_{\bm R\bm \Lambda_L})\bm y\|^2}{n-\mathrm{rank}(\bm R\bm \Lambda_L)}}$.

Note that even though Theorem~\ref{thm:decide_true_structure} shows how to consistently choose a correct low-rank structure indexed by $\hat l$, constructing the confidence interval $\hat C^{(1-\alpha)}_{\bm \Lambda_{\hat l}}$ using the same data may lead to under-coverage due to the post-selection bias arising from the data-dependent choice of $\hat l$. However, our simulation experiments show that this bias may be of small magnitude, as this confidence interval exhibits robust empirical coverage (see Section~\ref{sec:experiments_simulated}). 


Finally, we note that our asymptotic analysis and methods makes the assumption of homoskedastic error (i.e., Assumption~\ref{assm:iid_error}). However, the theory presented in this section can also accommodate heteroskedasticity, where we assume that $\Var\left(\epsilon_{ci}^{(\bm A_c)}\right) = \sigma^2_{ci}$, and the $\sigma^2_{ci}$'s are allowed to vary so long as $\left\{\frac{1}{n}\sum_{c=1}^n\sum_{i=1}^{M_c}\sigma_{ci}^2\right\}_{n\geq 1}$ is a convergent sequence. A similar result appears in \cite{additivefactorial}, and the results in this section can be extended similarly.

\section{Leveraging knowledge of the true propensity score}
\label{sec:propensity_extensions}

So far, we have focused on developing balancing estimators under interference without knowing the propensity scores.  However, when the propensity score is known, our methodology can be modified to improve upon IPW-type estimation by leveraging a low-rank structure. We summarize some key results while leaving the details to Appendix~\ref{sec:extension_known_propensity}.

\subsection{Projection estimators}

In this setting, we construct an optimal weighting estimator in the way similar to 
Theorem~\ref{thm:subclass2}. Specifically, we choose a weight from $\Cipwlamb(e)=\{\bm w \in \mathcal C_{\bm \Lambda}(\bm h, e),\forall \bm h\}$ while keeping $e$ fixed. In this known propensity score case, we do not require the linear model assumption to ensure non-triviality of $\Cipwlamb(e)$. However, this assumption is needed to obtain an estimator that is more efficient than the IPW estimator. Given the low-rank structure assumption $\bm g = \bm \Lambda \bm h$, under the linear model assumption with fixed $\bm h$, we obtain the following class of weights,
\[
    \Cipwprojlamb(e) := \{\bm w: \bm \Lambda^\top\bm R^\top(\bm w - {\bm w}_{\mathrm{IPW}}) = \bm 0\}\subseteq \Cipwlamb(e).
\]
We denote optimal weights within this class by $\wipwprojlamb(e)$ and refer to the resulting estimator as the \textit{projection estimator}. The projection estimator is, according to Theorem~\ref{thm:propensity_estimators_fixed_h}, provably at least as efficient as the IPW estimator, for any fixed number of clusters. 

Note that $\bm w_{\rm IPW}$ always belongs to $\Cipwprojlamb(e)$. In fact, similar to the balancing estimator, one also needs a linear model assumption that ensures that the matrix $\bm R\bm \Lambda$ has more rows than columns to ensure that $\Cipwprojlamb(e)$ contains more than just the IPW weights. Otherwise, choosing optimal weights directly from $\Cipw(e)$ 
yields the following set of weights, which always exist non-trivially (without making the linear model assumption),
\begin{align*}
    (\wipwlamb(e))_{ci} = \bm R^\top_{c}\bm D_{c}^{-1/2}\bm P_{\bm D_{c}^{1/2}\bm \Lambda_{ci}}\bm D_{c}^{1/2}\tilde{\bm w}_{\mathrm{IPW},c,i},
    \end{align*}
where $\bm R_c = \left(\mathbb I\left(\bm A_c = \bm a^{(0)}_c\right), \dots, \mathbb I\left(\bm A = \bm a^{(2^{M_c}-1)}_c\right)\right)^\top$, $\bm D_c = \mathbb E\left[\bm R_c^\top\bm R_c\middle| \bm X_c\right]$ and $\tilde{\bm w}_{\mathrm{IPW},c,i}$ is formed by stacking the IPW weights vector across all the observable treatment patterns $\bm a \in \{0,1\}^{M_c}$ for the $c^{\rm th}$ cluster. We call $T\left(\wipwlamb(e)\right)$ the \textit{weighted projection estimator}.

Unlike the balancing and the projection estimator, Corollaries~\ref{cor:correct_specification_indiv}~and~\ref{cor:propensity_correct_specification_fixed} in the appendix show that the weighted projection estimator does not necessarily has a smaller variance than the IPW estimator. Nevertheless, when the signal-to-noise ratio is not too high, the weighted projection estimator is expected to be more efficient than the IPW estimator.

\subsection{Asymptotic efficiency comparison}

Under the linear model assumption with fixed $\bm h$, all of the proposed low-rank weighting estimators exist.  Therefore, we briefly summarize the efficiency comparison in this setting, while leaving the details to Appendix~\ref{sec:extension_known_propensity}.  The asymptotic variances of all proposed estimators are given by, 
\begin{align*}
    \sigma^2_{\mathrm{IPW}} &= \Var \left(\bm w_{\mathrm{IPW},c}^\top\bm g_c^{(\bm A_c)}\right)  + \sigma^2 \mathbb E\left[\|\bm w_{\mathrm{IPW}}\|^2\right]\\
    \sigma^2_{\sone} &= \mathrm{Var}\left(\mathbb E\left[\bm w_{\mathrm{IPW},c}^\top {\bm g_c^{(\bm A_c)}}\mid\bm X_c\right]\right) + \mathbb E\left(\mathrm{Var}\left[ \left({\mathcal P}\bm w_{\mathrm{IPW},c}\right)^\top {\bm g_c^{(\bm A_c)}}  \mid \bm X_c\right]\right) + \sigma^2\mathbb E\left[\left\| {\mathcal P}\bm w_{\mathrm{IPW},c}\right\|^2\right]\\
    \sigma^2_{\stwo} &= \Var \left(\bm w_{\mathrm{IPW},c}^\top\bm g_c^{(\bm A_c)}\right)+ \sigma^2\mathbb E\left[\left\| {\mathcal P}\bm w_{\mathrm{IPW},c}\right\|^2\right]\\
    \sigma^2_{\sthree} &= \mathrm{Var}\left(\mathbb E[\bm w_{\mathrm{IPW},c}^\top\bm g_c^{(\bm A_c)}\mid \bm X_c]\right) + \sigma^2\mathbb E\left[\left\| {\mathcal P}\bm w_{\mathrm{IPW},c}\right\|^2\right],
\end{align*}
where ${\mathcal P}$ is defined in Theorem~\ref{thm:clt_fixed}.  Then, we can immediately obtain the following relations: 
$$\sigma^2_{\sthree}\leq \sigma^2_{\stwo}\leq \sigma^2_{\mathrm{IPW}}, \quad \text{and} \quad \sigma^2_{\sthree}\leq \sigma^2_{\sone}.$$
This result implies that the balancing estimator is asymptotically more efficient than all the other weighting estimators that use knowledge of the true propensity score. As discussed in Section~\ref{sec:asymp_normality}, this primarily stems from the variance reduction the balancing estimator achieves by satisfying the balancing equations, which none of the other estimators do. For example, the projection estimator does not satisfy the balancing equations, and hence, the first term of $\sigma^2_{\rm proj}$ is the same as that of $\sigma^2_{\rm IPW}$. However, because it still exploits the low-rank structure, the second term of $\sigma^2_{\rm proj}$ and $\sigma^2_{\rm bal}$ are identical. Thus, regardless of knowledge of the true propensity score, under Assumption~\ref{assm:balancing_feasibility}, the balancing estimator has the smallest asymptotic variance and remains our recommended choice for causal effect estimation under appreciable sample size.

We note that even if the $\bm h$ is not fixed across $n$, the weighted projection estimator still exists and can leverage the low-rank structure as well as the known propensity score. The variance of the weighted projection estimator has an expression similar to that of $\sigma^2_{\sone}$ in this case as well. See Appendix~\ref{thm:propensity_estimators_fixed_h} for details. Furthermore, we show in Appendix~\ref{sec:weighted_generalization} that the weighted projection estimator generalizes many existing IPW-type estimators from the literature that assume some specific interference patterns.

\section{Simulation studies}
\label{sec:experiments_simulated}

In this section, we examine the finite sample performance of the proposed methodology through simulation studies.

\subsection{Setup}

We consider the setting of Section \ref{sec:asymptotics}, in which
$\bm h$ is fixed and does not depend on the sample size. In our simulation, we set $n \in \{100, 300, 500, 700\}$.  For each cluster $c$, we independently draw $M_c \in \{10, 15\}$, each with the probability 1/2. For each unit $i$ in the $c^{\mathrm{th}}$ cluster, we generate a four-dimensional (i.e., $p=4$) covariate $\bm X_{ci} = (X_{ci,1},  X_{ci,2},  X_{ci,3}, X_{ci,4})^\top$ as follows: First we generate $\bm Z_{ci} = (Z_{ci,1},  Z_{ci,2},  Z_{ci,3}, Z_{ci,4})^\top \iid \mathcal N_4\left(\bm 0, \bm \Sigma(\rho)\right)$, drawn independently across all the units, where $\bm \Sigma$ is a Toeplitz structure with correlation $\rho$. We then normalize each covariate as $X_{ci,j} = \frac{Z_{ci,j}}{\sqrt{\sum_{i^\prime=1}^{M_c}Z^2_{ci^\prime,j}}}$, for all $1\leq c\leq n, 1\leq i\leq M_c, 1\leq j \leq p$.

To define the propensity score and the counterfactual weights, we consider the class of probabilities 
$\pi_{ci}(\bm X_c;\kappa) = \Phi\left(\frac{1}{\sqrt{p}}\sum_{j=1}^p \bar X_{c\cdot, j} + \kappa\bar X_{ci,\cdot}\right)$,
parametrized by the \textit{counterfactual deviation}$, \kappa$. Here, $\bar X_{c\cdot, j}$ and $\bar X_{ci,\cdot}$ represent the average value of the $j^\text{th}$ covariate in cluster $c$, and the average value of a unit's four covariates, respectively. We define the propensity score for the $i^{\mathrm{th}}$ individual in the $c^{\mathrm{th}}$ cluster as $e_{ci}(\bm X_c) = \pi_{ci}(\bm X_c;0)$, and treatments follow $A_{ci}\mid \bm X_c\sim \mathrm{Bernoulli}( \pi_{ci}(\bm X_c; 0))$, independently across $i$.
Our counterfactual weights are based on a stochastic interventions from this same class. 

Next, we assume the signal-and-noise model of Equation~\eqref{eqn:outcome_model} with $g^{(\bm a_c)}_{ci}(\bm X_c) = \sum_{j=1}^3\beta_{ci}^{(j)}(\bm a_c)x_{ci,j} + \beta_{ci}^{(4)}(\bm a_c)\bar{x}_{c\cdot,4}$, where $\bm \beta_{ci}(\bm a_c) = \left(\beta_{ci}^{(1)}(\bm a_c),\cdots, \beta_{ci}^{(4)}(\bm a_c)\right)^\top$ and $\bm x_{ci} = (x_{c,i,1},\cdots, x_{c,i,4})^\top$. Note that $\bar{x}_{c\cdot, 4}$ is a cluster-level covariate shared by the potential outcome of all units in a cluster. For the low-rank structure, we assume nearest neighbor interference (where the nearest neighbor graph is computed based on the distance between the covariates $\bm X_{ci}$) restricted to the first five nearest neighbors. Hence, we generate $\bm \Lambda_{ci}$ as described in Example~\ref{ex:6} of the appendix and set $\bm \beta_{ci}{(\bm a_c)} = \bm \Lambda_{ci}(\bm a_c)\bm h$. Appendix~\ref{sec:further_simulations_fixed} presents additional simulation studies under stratified and additive interference settings described in Examples~\ref{ex:1}~and~\ref{ex:3}, respectively. 

Lastly, we set
$\bm h = \gamma_{\mathrm{nn}}\times (1,\cdots, 32)^\top\otimes \bm 1_4$,
for a constant $\gamma_{\mathrm{nn}}$ that is calibrated based on the signal-to-noise ratio (SNR), which is defined as $\mathrm{Var}\left[{\bm g_c^{(\bm A_c)}}^\top\bm w_{\mathrm{IPW,c}}\right]\Big /\sigma^2 \mathbb E\left[\|\bm w_{\mathrm{IPW},c}\|^2\right]$. This particular definition of SNR is motivated by the expression of the variance of the IPW estimator derived in Theorem~\ref{thm:clt_fixed}. In that expression, the term in the numerator exactly equals the part of $\sigma^2_{\mathrm{IPW}}$ affected by the signal component of the potential outcomes alone, while the one in the denominator is affected by the noise (and is also the part reduced by making a correct low-rank assumption, as the discussion after Corollary~\ref{cor:correct_specification_fixed} suggests).

\subsection{Results}

We compare the performance of the proposed balancing estimators with the standard IPW estimator with true or estimated propensity score, where, as commonly done in practice, a mixed-effects probit regression is used to estimate the propensity score \citep{kang2023propensity}.  In addition, for the remainder of this section, the term ``IPW estimator'' without any qualifier would by default indicate the IPW estimator with the true propensity score.

We focus on a $k$-nearest neighbor interference structure with $k=3$, where the nearest neighbor graph within a cluster is formed based on the distances between the covariates of units.  We consider two cases: (1) we know the true low-rank structure of $k=3$, and evaluate the balancing estimator, and (2) we use the data-adaptive estimator described in Section~\ref{sec:choose_structure} to choose a low-rank structure among the five candidate low-rank structures $\{\bm \Lambda_l\}_{1\leq l\leq 5}$ representing $l$-nearest neighbor interference. Under each case, we construct 95\% confidence intervals based on the respective balancing estimator as well as both the IPW estimators discussed in the previous paragraph. Recall that the confidence interval based on the latter estimator does not have a theoretical coverage guarantee.

\begin{figure}[t!]
    \centering
    \includegraphics{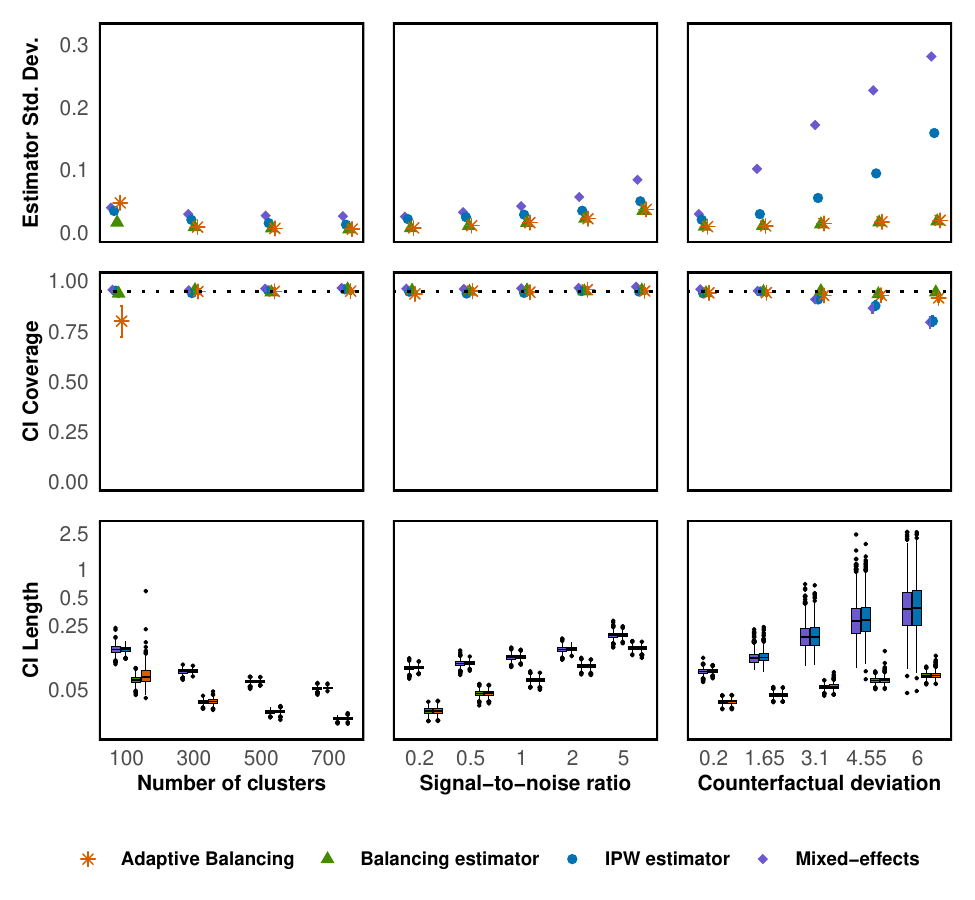}
    \caption{Empirical Performance of the Estimators and their 95\% Confidence Intervals. The three rows from the top depict the standard deviation of the estimators, empirical CI coverages (with its plus or minus two standard errors), and empirical CI length.  The results are based on 1,000 Monte Carlo simulations.  In the left column, we fix the deviation of the counterfactual policy from the realized one $\kappa = 0.2$ and signal-to-noise ratio $\textrm{SNR} = 0.2$, while varying the number of clusters $n$. In the middle column, we set $n = 300$ and $\kappa = 0.2$ while varying $\textrm{SNR}$. In the right column, we fix $n = 300$ and $\textrm{SNR} = 0.2$, while varying $\kappa$. For all simulations, we set the variance $\sigma^2 = 1$. The plots in the bottom row plots the log-transformed CI lengths though the y-axis label is in the original scale for interpretability.
    }
    \label{fig:fixed_ci_main}
\end{figure}

Figure~\ref{fig:fixed_ci_main} summarizes the results, with each point in any of the plots obtained by averaging across 1,000 independent replicates of the corresponding simulation setting. For the balancing estimator with knowledge of the true structure, we average over the cases where the balancing equation is feasible. For the data-adaptive balancing estimator, we average over the cases in which the balancing equation corresponding to all the five low-rank structures is feasible. For all the simulation settings considered here, the balancing equation using the true structure was always feasible. 
In contrast, we were able to compute its data-adaptive variant for 100$\%$ of the instances in all cases except for the following settings: for the left column, we could compute it only for 10$\%$ of the instances when $n=100$; for all the settings in the middle column and $\kappa=6$ in the right column, we were able to compute it in more than 99$\%$ of the instances.

The first row of the figure estimates the standard deviation of each estimator. We find that across different settings, the balancing estimator and its data-adaptive variant exhibit a significantly smaller variability than the IPW estimator with the true propensity score or the estimated propensity score based on the mixed effects model. \color{black} The percentage decrease in standard deviation of the balancing estimator over the IPW estimator varies from about 30$\%$ (corresponding to the setting with the highest SNR in the center panel) to approximately $88\%$ (corresponding to $\kappa=6$ in the rightmost panel).

Next, we turn to the performance of the confidence intervals (CIs). The leftmost panels of the figure present a setting with a low signal-to-noise ratio of 20$\%$. On average, the CI of the balancing estimator is about $53\%$ shorter than that of the standard IPW estimator. We also find that in many cases the adaptive balancing estimator produces CIs whose lengths are similar to those of the balancing estimator with knowledge about the true low-rank structure. However, for the setting with the smallest number of clusters (i.e., $n=100$), the CI of the adaptive balancing estimator leads to undercoverage. Although the data-adaptive-balancing CI does not have a theoretical guarantee, it exhibits robust empirical coverage for all the other settings. 

The center panels show that as the signal-to-noise ratio (SNR) increases, the lengths of all the CIs start to grow.  However, the length of the IPW-based CI remains substantially higher than the length of the CIs of the balancing estimators. This is consistent with our theoretical result that the asymptotic variance of the balancing estimator benefits more from exploiting the low-rank structure in the low SNR regime, as outlined in Section~\ref{sec:asymp_normality}. In the smallest SNR regime, the CI of the balancing estimator is about $66\%$ shorter than that of the IPW estimator.  As before, the performance of the adaptive balancing estimator is similar to that of the non-adaptive estimator. These findings indicate the superior performance of the balancing estimators and their CIs. 

Finally, the rightmost panels, which summarize the results with a varying value of $\kappa$, show that for a fixed sample size of $n=300$, the empirical coverage of the IPW estimator-based CI using both the true and the estimated propensity score starts to deteriorate as $\kappa$ becomes more extreme (that is, the counterfactual intervention diverges further from the propensity score). In contrast, the CIs of the balancing estimators achieve better coverage in part because they have a much smaller variance. We also find that the performance of the IPW-based CI with the estimated propensity score is generally similar to that of the IPW-based CI with the true propensity score in all of our simulation settings.

Appendix~\ref{sec:further_simulations_fixed_ci} reports additional simulation results, where we also compare the estimators described in Section~\ref{sec:propensity_extensions} that leverage a low-rank structure along with the knowledge of the true propensity score.  This appendix section also presents the results under stratified and additive interference, showing that our main conclusions continue to hold. Additionally, the results shown in Appendix~\ref{sec:further_simulations_fixed_bias} also empirically demonstrate that the proposed estimators have negligible bias in small samples.

\section{Application to the diffusion of microfinance study}
\label{sec:experiments_real}

In this section, we apply our balancing weights methodology to analyze the diffusion of microfinance dataset \citep{banerjee2013}.

\subsection{Data}

The dataset is based on a non-randomized experiment that was conducted across 43 villages in the state of Karnataka in India. The goal of the study is to assess the spread of information about a microfinance program and eventual participation. In each village, some households are referred to as ``leaders'' if their members hold important societal roles.  A primary hypothesis of the study is that these leaders are more capable of spreading information to others.  Motivated by this hypothesis, the authors informed these leaders about the microfinance program at the beginning of the study. At various points in time during the study, the information about each household's participation in the program was collected. 

Since the selection of leaders is not random, this study represents an example of an observational study. Our unit of analysis is the household, and the treatment indicator variable denotes whether or not a household was a leader and informed about the microfinance program at the beginning of the study. 
The outcome variable of interest is the household participation in the microfinance program at the end of the study.  The household-level covariates include type of roof, number of rooms, number of beds, religion, type of electricity, and type of latrine.

\cite{banerjee2013} acknowledge the presence of interference and summarize this interference pattern as a graph on the households for each of the villages. In particular, the graph draws an edge between two households if their members interact with each other. Such interactions include visiting, borrowing/lending money and material goods, and receiving/giving advice (see \cite{banerjee2013} for details). Furthermore, the authors state that due to geographical distances, interference is likely to be restricted within each village, suggesting that the partial interference assumption may hold. 



\subsection{Analysis setup}
\label{sec:real_data_analysis_setup}

We apply our proposed methodology to estimate the causal effects based on the following several different counterfactual weights:
\begin{enumerate}[label = (\alph*)]
    \item {\bf Uniformly random selection:} As a baseline comparison, we consider the uniformly random selection of households to be informed, while keeping the number of selected households within each village identical to the number of leaders in the study $L_c$. Formally, this counterfactual policy is given by:
    \[
        f_{\text{random}}(\bm a_c,\bm X_c) = \frac{1}{\binom{M_c}{L_c}}\mathbb I\left(\bm 1^\top\bm a_c = L_c\right),\forall \bm a_c\in \{0,1\}^{M_c}.
    \]

    \item {\bf Most connected households:} We also consider assigning the treatment to $L_c$ households in each cluster that have the greatest number of connected edges. If we use $\mathcal P_c$ to denote the set of $L_c$ most connected households, then this counterfactual policy is given by:
    \[
        f_{\text{most}}(\bm a_c,\bm X_c) =\mathbb I\left(\{1\leq i\leq M_c:a_{ci}=1\} = \mathcal P_c\right), \forall \bm a_c \in \{0,1\}^{M_c}.
    \]
    \item {\bf Least connected households:} Lastly, we consider treating $L_c$ households in each cluster that have the least number of connected edges. If we use $\mathcal Q_c$ to denote the set of $L_c$ least connected households, then this counterfactual policy is given by:
    \[
        f_{\text{least}}(\bm a_c,\bm X_c) =\mathbb I\left(\{1\leq i\leq M_c:a_{ci}=1\} = \mathcal Q_c\right), \forall \bm a_c \in \{0,1\}^{M_c}.
    \]
    
\end{enumerate}

Following Section~\ref{sec:lm_balancing}, we impose the following linear model: $g_{ci}^{(\bm a_c)}(\bm X_c) = \bm X_{ci}^\top\left(\bm \Lambda_{ci}^{(\bm \beta)}(\bm a_c)\otimes \bm I_p\right)\bm h$, where $p=17$ is the dimension of the covariates. Under this model, we consider the following candidate low-rank structures and use them to illustrate the methodology developed in this paper. For brevity of presentation, we briefly describe these structures, while their explicit mathematical expressions are deferred to Appendix~\ref{sec:imbal_real_data}.
\begin{enumerate}
    \item {\bf No interference:} The potential outcome of a unit only depends on its own treatment status. We denote this low-rank structure by $\bm \Lambda_{\text{none}}$. This is likely a misspecified structure as \cite{banerjee2013} acknowledges the presence of interference.
    
    \item {\bf Anonymous interference within neighborhood:} The potential outcome of a unit is an additive function of its own treatment status and the number of treated households in its neighborhood.  We denote this anonymous interference structure by $\bm \Lambda_{\text{neighbor1}}$.  However, the number of treated neighbors can take a large number of unique values, leading to many effective treatments.  This makes it impossible to achieve covariate balance in this data set. We therefore also consider a discretization of this variable to reduce the number of effective treatments.  Specifically, we use three categories (low, medium, and high), where the thresholds are the 33$^{rd}$ and 67$^{th}$ percentiles of all values across all the units. Then, we use $\bm \Lambda_{\text{neighbor1}}^\ast$ to denote anonymous interference based on this coarsened variable.
    
    \item \textbf{Anonymous interference within neighborhoods of neighbors}: We assume that the potential outcome of one unit is an additive function of its own treatment, treatment of neighbors, as well as neighbors of neighbors. Like the previous intervention, we assume anonymous interference based on the number of the treated among one's neighbors, as well as among one's neighbors of neighbors. We denote this structure by $\bm \Lambda_{\text{neighbor2}}$.  We also discretize both variables as done above using the three categories (low, medium, and high).  We denote the resulting structure based on the two coarsened variables by $\bm \Lambda_{\text{neighbor2}}^*$.
    
\end{enumerate}


\subsection{Findings}

For each of the above three policies  and the five candidate low-rank structures, we obtain the omnibus relative imbalance measures for all $17$ covariates as described in Section~\ref{sec:assessing_imbalance}. We find that when the coarsened versions of low-rank structures ($\bm \Lambda_{\text{none}}$, ${\bm \Lambda}^*_{\text{neighbor1}}$, and ${\bm \Lambda}^*_{\text{neighbor2}}$) were used, the maximum omnibus relative imbalance for any covariate across any combination of the policies and coarsened low-rank structures is less than 0.0001, implying a negligible degree of imbalance.  Therefore, balancing equations are satisfied under all the coarsened structures. 

\begin{figure}[t!]
    \centering \includegraphics{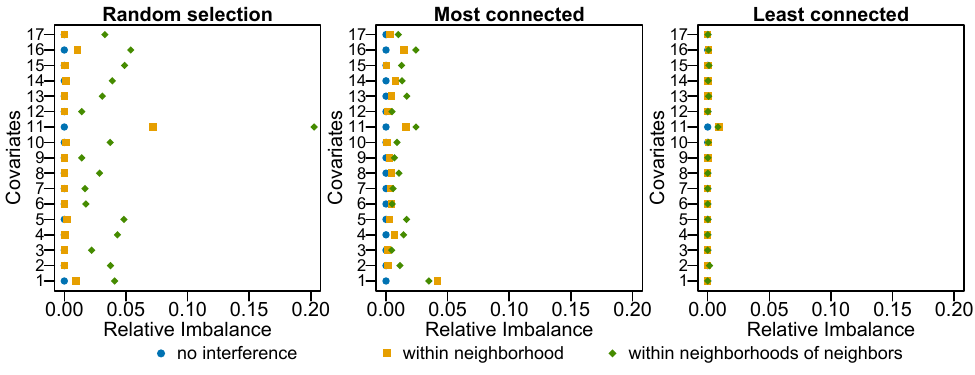}
    \caption{Relative covariate imbalance measures for all the covariates across the three non-coarsened low-rank structures.  These low-rank structures include no interference (blue circles), anonymous interference within neighborhoods (yellow squares), and anonymous interference within neighborhoods of neighbors (green diamonds). The imbalance measure is evaluated under three stochastic interventions: random selection of households (left panel), selection of most connected households (middle), and selection of least connected households (right).}
    \label{fig:imbal_real_data}
\end{figure}

Under the non-coarsened low-rank structures (${\bm \Lambda}_{\text{neighbor1}}$ and ${\bm \Lambda}_{\text{neighbor2}}$), however, the relative imbalance for some covariates remains non-negligible. Figure~\ref{fig:imbal_real_data} summarizes these results. We find that for interference restricted to within neighborhoods, the maximum relative imbalance across the covariates is about 0.075. When interference is restricted to the neighborhoods of neighbors, we obtain the maximum relative imbalance of 0.2. As expected, we find that a more informative low-rank structure results in better covariate balance.

\begin{table}[!b]
    \centering
    \begin{tabular}{|ll|r|r|r|}
    \hline null & alternative
         &  $f_{\textrm{random}}$ & $f_{\textrm{most}}$ & $f_{\textrm{least}}$\\
         \hline
         $H_{\text{none}}$ & $H_{\text{neighbor1}}$ & \textbf{0.085} & \textbf{0.074}&  \textbf{0.024}\\
         \hline
         $H_{\text{none}}$ & $H_{\text{neighbor2}}$ & 0.367 & 0.131 & \textbf{0.058}\\
         \hline
         $H_{\text{neighbor1}}$ & $H_{\text{neighbor2}}$ & 0.562  & 0.535 & 0.414 \\
         \hline
    \end{tabular}
     \caption{Asymptotic p-values for the tests of three candidate low-rank structures under the three stochastic interventions (listed across columns). Entries less than 0.1 are boldfaced.}
     \label{tab:pval_structre_test}
\end{table}

We apply Theorem~\ref{thm:decide_true_structure} and choose a low-rank structure among the coarsened low-rank assumptions, which lead to negligible covariate imbalance. We consider the following three hypotheses: $H_{\text{none}}:\bm g\in \mathrm{col.sp.}({\bm \Lambda}_{\text{none}})$, $H_{\text{neighbor1}}:\bm g\in \mathrm{col.sp.}({\bm \Lambda}_{\text{neighbor1}}^*)$, and $H_{\text{neighbor2}}:\bm g\in \mathrm{col.sp.}({\bm \Lambda}_{\text{neighbor2}}^*)$.  We then conduct three pairwise tests under each stochastic intervention. Note that this is done for the illustrative purpose, as in practice researchers will use the most over-specified structure as the alternative. Our analysis also illustrates that the statistical power depends on how close the most over-specified structure is to the exact specification.


Table~\ref{tab:pval_structre_test} presents the p-values of the pairwise tests. We find that assuming ${\bm \Lambda}_{\text{neighbor2}}^*$ as correct low-rank structure (as in Theorem~\ref{thm:decide_true_structure}) does not reject any of the other structures except for under the intervention $f_{\textrm{least}}$ where it rejects $H_{\textrm{none}}$ (at $6\%$ level). Assuming ${\bm \Lambda}_{\text{neighbor1}}^*$ as correct low-rank structure, however, significantly detects ${\bm \Lambda}_{\text{none}}^*$ to be false (at 10$\%$ level) under all the counterfactual interventions (and hence, showing the presence of interference). Thus, in order for our low-rank structure detection method described in Section~\ref{sec:choose_structure} to be powerful, the most uninformative, correct structure needs to be as well-specified as possible.

\begin{table}[t]
     \centering
     \begin{tabular}{|l|c|c|c|}
     \hline
      & $f_{\textrm{random}}$ & $f_{\textrm{most}}$  & $f_{\textrm{least}}$ \\
     \hline
      No interference    & \makecell{16.92$\%$\\ (15.46, 19.65) } &  \makecell{17.92$\%$ \\ (15.32, 19.51)} &  \makecell{16.27$\%$\\ (15.54, 19.73)} \\
     \hline
     Within neighborhoods & \makecell{17.32$\%$\\ (15.24, 19.39)} & \makecell{17.99$\%$\\ (15.91, 20.07)}  & \makecell{16.1$\%$\\ (14.02, 18.17)}\\
     \hline
     Within neighborhoods of neighbors &  \makecell{17.55$\%$\\ (14.94, 18.89)} & \makecell{17.42$\%$\\ (15.96, 19.90)} & \makecell{17.63$\%$\\ (14.30, 18.24)}\\
     \hline
     \end{tabular}
     \caption{Point estimates and 95$\%$ confidence intervals of the program participation rates under the three stochastic interventions (columns) and three candidate low-rank structures (rows). Note that for interference restricted to within neighborhoods and neighborhoods of neighbors, we only use the coarsened low-rank structures.}
     \label{tab:inference_real_data}
\end{table}


Lastly, Table~\ref{tab:inference_real_data} summarizes the estimates of program participation rate (and their 95\% confidence intervals) based on the balancing estimator under different stochastic interventions and low-rank structures.  Recall that as shown in Table~\ref{tab:pval_structre_test}, the coarsened neighborhood interference is the likely low-rank structure among the three structures considered here.  Under this low-rank structure, the estimated participation rates across the three counterfactual interventions are all similar.  The point estimates are slightly higher for the intervention on the most connected households while the lowest estimated participation rate is obtained for the intervention on the least connected households.  However, the confidence intervals are wide and overlapping.  We also note that the average participation rate when the leaders are treated is about $17.99\%$, similar to the estimate under the intervention on the most connected households. The results suggest that we do not have strong evidence for distinguishing the effectiveness of the three intervention strategies considered in our analysis.

\section{Concluding Remarks}

In this paper, we developed new weighting estimators for unbiased estimation of causal effects under interference.  We demonstrated that the proposed low-rank structure of interference patterns can lead to a substantial efficiency gain.  Our low-rank structure encompasses a wide range of common interference patterns and enables researchers to adopt their own interference assumptions in a given study. Importantly, our methodology does not require knowledge of the true propensity score, making it applicable in observational studies. 
We derived asymptotic properties of these low-rank weighting estimators, and demonstrated their effectiveness through a wide range of simulations and a real-data study. 

There are several possible extensions of the proposed methodology. First, while we require partial interference, the extension to a single network interference setting is of interest. For example, one may use graph asymptotics techniques \citep[e.g.,][]{liandwager2022} to further explore the behavior of the proposed estimators in this setting.  We emphasize that the finite sample properties of our estimator developed in Section~\ref{sec:low_rank_weights} hold even under such single network setting.

Second, our balancing estimator requires the balancing condition to hold exactly. While we have shown that this requirement is likely to be met as the sample size increases, it may not be satisfied in small samples.  Developing a method that enables for a small degree of imbalance is useful.  We may also consider the settings in which a low-rank structure holds only approximately. For example, some researchers have considered approximate neighborhood interference \citep[e.g.,][]{leung2022}.  

\section*{Acknowledgements}
The authors thank an anonymous reviewer of IQSS's rapidPeer pre-peer review program for helpful comments. SS also thanks Daniel Nevo and Yuzhou Lin for their helpful discussions and comments.

\bibliography{refs}

\clearpage
\appendix

\begin{center}
\LARGE \textbf{Supplementary Appendix}
\end{center}

\section{Low-rank weighting estimators with the known propensity score}
\label{sec:extension_known_propensity}

In this appendix, we provide a detailed discussion of low-rank weighting estimators when the true propensity score is known.  We also present the formal results, which are discussed in Section~\ref{sec:propensity_extensions}. First, we introduce an additional notation: for any choice of weight $\bm w$, which depends on the treatment pattern $\bm A$ of all units and the set of covariates $\{\bm X_c\}_{c=1}^n$, let $\tilde{\bm w}(\bm a)$ denote the {\it potential} weights one would obtain when $\bm A = \bm a$ (for ease of presentation, we have suppressed the possible notational dependence of $\tilde{\bm w}(\bm a)$ on the set of covariates $\{\bm X_c\}_{c=1}^n$). Stack $\tilde{\bm w}(\bm a)$ across different values of $\bm a$ to form the vector of potential weights $\tilde{\bm w}:= (\tilde{\bm w}(\bm a^{(0)}), \ldots, \tilde{\bm w}(\bm a^{(2^M-1)}))^\top$. Note that the {\it observed} weights satisfy $\bm w = \bm R\tilde{\bm w}$ where $\bm R$ is defined in Equation~\eqref{eqn:r}. Then the following is an equivalent representation of the class $\mathcal C_{\bm \Lambda}(\bm h, e)$ described in Equation~\eqref{eq:unbiased_estimators_R}.
\begin{equation}
\label{eq:appendix_unbiased_estimators_R}
\mathcal C_{\bm \Lambda}(\bm h, e) = \left\{\bm R \tilde{\bm w}: \mathbb E\left[(\tilde{\bm w} - \tilde{\bm w}_{\mathrm{IPW}})^\top\bm R^\top\bm R \bm \Lambda \bm h\right] = 0\right\}.
\end{equation}

\subsection{Projection and weighted projection estimators}

Similar to Theorem~\ref{thm:subclass2}, we characterize the class of uniformly unbiased weighting estimators across the choice of $\bm h$ since $e$ is known. The following theorem now formally defines this class.
\begin{theorem}
    \label{thm:extension_subclass2}
    Under the setting and assumption of Theorem~\ref{thm:subclass2}, but for a known, true propensity score $e$, a uniformly (i.e., across all possible $\bm h$ values in Equation~\eqref{eq:appendix_unbiased_estimators_R}) unbiased weighting estimator is characterized by
    \begin{align}
        \label{eqn:eprojdefn}
        \bm w \in \mathcal C_{\bm \Lambda}(\bm h, e),\forall \bm h\Longleftrightarrow  \bm w \in \Cipwlamb(e) := \left\{\bm R\tilde{\bm w}: \bm \Lambda^\top\bm D(\tilde{\bm w} - \tilde{\bm w}_{\mathrm{IPW}}) = \bm 0,\textrm{ a.s.}\right\},
    \end{align}
    where, $\bm D = \mathbb E[\bm R^\top\bm R\mid \bm X]$.
\end{theorem}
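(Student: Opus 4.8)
The plan is to establish the biconditional in Equation~\eqref{eqn:eprojdefn} by first reducing the uniform-unbiasedness requirement to a single unconditional moment identity, and then exploiting the freedom to vary $\bm h$ over all admissible effective-treatment functions. First I would recall that, by construction, the potential weights $\tilde{\bm w}$ and $\tilde{\bm w}_{\mathrm{IPW}}$, the low-rank matrix $\bm \Lambda$, and the effective-treatment vector $\bm h$ are all functions of $\bm X$ alone, whereas the only dependence on the realized treatment assignment enters through $\bm R$ (and hence $\bm R^\top\bm R$). Writing $\tilde{\bm \delta} := \tilde{\bm w} - \tilde{\bm w}_{\mathrm{IPW}}$, membership $\bm w\in\mathcal C_{\bm \Lambda}(\bm h, e)$ reads $\mathbb E[\tilde{\bm \delta}^\top\bm R^\top\bm R\bm \Lambda\bm h]=0$. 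Conditioning on $\bm X$ and applying the tower property pulls out the $\bm X$-measurable factors and replaces $\bm R^\top\bm R$ by $\bm D=\mathbb E[\bm R^\top\bm R\mid\bm X]$, so the condition becomes
\[
\mathbb E\!\left[\tilde{\bm \delta}^\top\bm D\,\bm \Lambda\,\bm h\right]=0.
\]
I would also record that $\bm D$ is symmetric positive semidefinite, being the conditional expectation of the symmetric positive semidefinite matrices $\bm R^\top\bm R$; symmetry is the only property I will actually use.

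For the reverse ($\Longleftarrow$) direction I would assume $\bm \Lambda^\top\bm D\tilde{\bm \delta}=\bm 0$ almost surely. Using symmetry of $\bm D$, the integrand factors as $\tilde{\bm \delta}^\top\bm D\bm \Lambda\bm h=(\bm \Lambda^\top\bm D\tilde{\bm \delta})^\top\bm h$, which vanishes almost surely for every choice of $\bm h$; hence the displayed expectation is zero for all $\bm h$, giving $\bm w\in\mathcal C_{\bm \Lambda}(\bm h,e)$ for all $\bm h$. This direction is immediate once the reduction is in place.

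The substance is in the forward ($\Longrightarrow$) direction, where uniformity over $\bm h$ must be converted into an almost-sure statement. The key idea is to feed an adversarial test function back into the identity: since the moment condition holds for every admissible $\bm h$, and $\bm h$ ranges over functions of $\bm X$, I would take the covariate-dependent choice $\bm h(\bm X)=\bm \Lambda(\bm X)^\top\bm D(\bm X)\tilde{\bm \delta}(\bm X)$, which has the same dimension $d_h$ as $\bm h$. Substituting and again using symmetry of $\bm D$ collapses the identity to $\mathbb E\big[\|\bm \Lambda^\top\bm D\tilde{\bm \delta}\|^2\big]=0$, and non-negativity of the integrand forces $\bm \Lambda^\top\bm D\tilde{\bm \delta}=\bm 0$ almost surely, i.e.\ $\bm w\in\Cipwlamb(e)$. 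Because $\bm R$, $\bm \Lambda$, and $\bm D$ are block-diagonal across the i.i.d.\ clusters, this almost-sure identity decouples into the per-cluster conditions, consistent with the cluster-level interpretation of the estimator.

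The hard part will be the admissibility of the adversarial choice of $\bm h$: the test function $\bm \Lambda^\top\bm D\tilde{\bm \delta}$ need not have finite second moment in general, so substituting it into an identity that presupposes integrability is not automatically justified. I would handle this with a standard truncation argument, replacing $\bm h$ by $\bm h_K(\bm X)=\bm \Lambda^\top\bm D\tilde{\bm \delta}\,\mathbb I\{\|\bm \Lambda^\top\bm D\tilde{\bm \delta}\|\le K\}$ (bounded and $\bm X$-measurable, hence admissible), deducing $\mathbb E[\|\bm \Lambda^\top\bm D\tilde{\bm \delta}\|^2\,\mathbb I\{\|\bm \Lambda^\top\bm D\tilde{\bm \delta}\|\le K\}]=0$ for each $K$, and letting $K\to\infty$ by monotone convergence. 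Equivalently, one can invoke the elementary fact that if $\mathbb E[\bm v(\bm X)^\top\bm h(\bm X)]=0$ for every bounded measurable $\bm h$, then $\bm v(\bm X)=\bm 0$ almost surely. Modulo this measure-theoretic care, the two implications together establish Equation~\eqref{eqn:eprojdefn}.
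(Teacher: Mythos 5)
Your proposal is correct and takes essentially the same route as the paper: the reverse direction multiplies the almost-sure identity $\bm \Lambda^\top\bm D(\tilde{\bm w}-\tilde{\bm w}_{\mathrm{IPW}})=\bm 0$ by an arbitrary $\bm X$-measurable $\bm h$ and applies the tower property, while the forward direction uses the same adversarial substitution $\bm h = \bm \Lambda^\top\bm D(\tilde{\bm w}-\tilde{\bm w}_{\mathrm{IPW}})$ to collapse the moment condition to $\mathbb E\left[\left\|\bm \Lambda^\top\bm D(\tilde{\bm w}-\tilde{\bm w}_{\mathrm{IPW}})\right\|^2\right]=0$, exactly as in the paper's Equation~\eqref{eqn:classequations}. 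Your truncation argument justifying the integrability of the adversarial test function is a minor measure-theoretic refinement that the paper leaves implicit.
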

Unlike the definition of $\Cnulllamb$ given in Equation~\eqref{eqn:baldefn}, Equation~\eqref{eqn:eprojdefn} provides a necessary and sufficient condition, so that $\Cipwlamb(e)$ is an exact characterization of uniformly unbiased weights. Furthermore, $\Cipwlamb(e)$ is non-empty as it always contains the IPW estimator. Another important characteristic of $\Cipwlamb(e)$ is that it contains more than just the IPW estimator even for a minimal restriction imposed through $\bm \Lambda$ (unlike, $\Cnulllamb$, which requires the linear model assumption of Section~\ref{sec:lm_balancing}). To see this, observe that the dimension of $\Cipwlamb(e)$ is the same as the dimension of the null-space of $\bm \Lambda^\top\bm D$. The matrix $\bm D$ includes the propensity scores as its diagonal elements, so that under Assumption~\ref{assm:pos}, it is invertible. This in turn implies that the dimension of $\Cipwlamb(e)$ matches the dimension of the null-space of $\bm \Lambda^\top$, which is a $\mathrm{dim}(\bm h)\times \mathrm{dim}(\bm g)$ matrix. A minimal restriction for the low-rank structure $\bm \Lambda$ ensures that $\mathrm{dim}(\bm h)< \mathrm{dim}(\bm g)$, resulting in a non-trivial nullspace.

Next, following the same approach used in Section~\ref{sec:low_rank_weights}, we choose weights from $\Cipwlamb(e)$, by optimizing some objective function (e.g., $L_2$ norm). Noting that $\Cipwlamb(e)$ directly imposes a restriction on $\tilde{\bm w}$ (rather than on $\bm w$, as $\Cnulllamb$ does), our choice of the objective function naturally extends to $\phi^{\sone}({\bm w}) = \mathbb E\left[\phinull(\bm R\tilde{\bm w})\mid \bm X\right] = \tilde{\bm w}^\top\bm D\tilde{\bm w}$. We use $\wipwlamb(e)$ to denote the set of optimal weights obtained by solving Equation~\eqref{eqn:optimiation_problem} with $\phi = \phiipw$ and $\Cnulllamb$ replaced by $\Cipwlamb(e)$. 

We first obtain the expression of $\wipwlamb(e)$ under a general low-rank structure assumption: $g_{ci}^{(\bm a_c)}(\bm X_c) = \bm \Lambda_{ci}(\bm a_c)\bm h_{ci}$, as in Examples~\ref{ex:1}--\ref{ex:3}.  To distinguish it from the specifications discussed in Section~\ref{sec:lm_balancing}, we refer to it as the \textit{unit-specific} low-rank structural assumption. Under this general assumption, each unit is allowed to have a different $\bm h_{ci}$ and is not required to share a fixed-length $\bm h$ vector as in Section~\ref{sec:lm_balancing}. Recall from Section~\ref{sec:lm_balancing} that the balancing estimator does not exist in this case. The following theorem shows that in such a case, the expression of the weight for each unit simplifies substantially and solely depends on the cluster characteristics to which the unit belongs; its proof crucially hinging on the fact that $\bm \Lambda$ is block-diagonal in this case. 
\begin{theorem}
\label{thm:simplified_indiv_weights}
    Suppose that the following low-rank assumption holds: $g_{ci}^{(\bm a_c)}(\bm X_c) = \bm \Lambda_{ci}(\bm a_c)\bm h_{ci}$. Then, solving Equation~\eqref{eqn:optimiation_problem} with $\phi = \phiipw$ and $\Cnulllamb$ replaced by $\Cipwlamb(e)$ is equivalent to solving the problem for each individual weight $w_{ci}$ using the low-rank structure, $\bm \Lambda_{ci}$. Under Assumption~\ref{assm:pos}, this yields,
    \begin{align}
    \label{eqn:misspecified_indiv}
    (\wipwlamb(e))_{ci} = \bm R^\top_{c}\bm D_{c}^{-1/2}\bm P_{\bm D_{c}^{1/2}\bm \Lambda_{ci}}\bm D_{c}^{1/2}\tilde{\bm w}_{\mathrm{IPW},c,i},
    \end{align}
    where, $\bm R_c = \left(\mathbb I\left(\bm A_c = \bm a^{(0)}_c\right), \dots, \mathbb I\left(\bm A_c = \bm a^{(2^{M_c}-1)}_c\right)\right)^\top$ and $\bm D_c = \mathbb E\left[\bm R_c^\top\bm R_c\middle| \bm X_c\right]$ with $\{0,1\}^{M_c} = \left\{\bm a_c^{(j)}:0\leq j\leq 2^{M_c}-1\right\}$ denoting an enumeration. 
\end{theorem}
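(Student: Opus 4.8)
The plan is to reduce the single global quadratic program defining $\wipwlamb(e)$ to $M$ independent per-unit subproblems, solve each in closed form, and reassemble. First I would record two structural facts. Under the hypothesized low-rank structure $g_{ci}^{(\bm a_c)}(\bm X_c) = \bm \Lambda_{ci}(\bm a_c)\bm h_{ci}$, the matrix $\bm \Lambda$ is block diagonal in the unit index $(c,i)$, because each $\bm h_{ci}$ feeds only into the rows of $\bm g$ indexed by unit $(c,i)$. Separately, $\bm D = \mathbb E[\bm R^\top \bm R\mid \bm X]$ is diagonal in the unit index: since exactly one treatment pattern is realized, $\bm R^\top\bm R$ has $j$-th diagonal block $\mathbb I(\bm A=\bm a^{(j)})\bm I_M$ and vanishing off-diagonal blocks, so $\bm D = \mathrm{diag}(e(\bm a^{(j)};\bm X))\otimes \bm I_M$. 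Consequently $\bm \Lambda^\top\bm D$ is block diagonal in $(c,i)$, and the constraint $\bm \Lambda^\top\bm D(\tilde{\bm w}-\tilde{\bm w}_{\mathrm{IPW}})=\bm 0$ appearing in the definition of $\Cipwlamb(e)$ splits into one linear constraint per unit.

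Next I would show the objective decouples in the same way. Writing $\phi^{\sone}(\bm w)=\tilde{\bm w}^\top\bm D\tilde{\bm w}=\sum_j e(\bm a^{(j)};\bm X)\sum_{c,i}\tilde{\bm w}_{ci}(\bm a^{(j)})^2$ and interchanging the order of summation, partial interference (Assumption~\ref{assm:partial}) forces the potential weight $\tilde{\bm w}_{ci}(\bm a)$ to depend on $\bm a$ only through $\bm a_c$; marginalizing $e(\bm a^{(j)};\bm X)$ over the treatment patterns of the remaining clusters (whose probabilities sum to one) collapses the global sum to $\sum_{c,i}\tilde{\bm w}_{ci}^\top\bm D_c\tilde{\bm w}_{ci}$, where $\bm D_c=\mathrm{diag}(e(\bm a_c^{(j)};\bm X_c))$ acts on the $2^{M_c}$-vector of cluster-level potential weights for unit $(c,i)$. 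The identical reduction applied to the per-unit constraint gives $\bm \Lambda_{ci}^\top\bm D_c(\tilde{\bm w}_{ci}-\tilde{\bm w}_{\mathrm{IPW},ci})=\bm 0$. This is precisely the assertion that the global optimization separates into per-unit problems governed by $\bm \Lambda_{ci}$. Carrying out this reduction, and in particular keeping the partial-interference marginalization and the unit-wise block structure straight at the same time, is the step I expect to be the main obstacle.

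Finally I would solve each per-unit subproblem, a $\bm D_c$-weighted minimum-norm problem: minimize $\tilde{\bm w}_{ci}^\top\bm D_c\tilde{\bm w}_{ci}$ subject to $\bm \Lambda_{ci}^\top\bm D_c\tilde{\bm w}_{ci}=\bm \Lambda_{ci}^\top\bm D_c\tilde{\bm w}_{\mathrm{IPW},ci}$. Substituting $\bm z=\bm D_c^{1/2}\tilde{\bm w}_{ci}$, which is legitimate since $\bm D_c\succ 0$ under Assumption~\ref{assm:pos}, turns this into minimizing $\|\bm z\|^2$ subject to $(\bm D_c^{1/2}\bm \Lambda_{ci})^\top\bm z=(\bm D_c^{1/2}\bm \Lambda_{ci})^\top\bm z_{\mathrm{IPW}}$, where $\bm z_{\mathrm{IPW}}=\bm D_c^{1/2}\tilde{\bm w}_{\mathrm{IPW},ci}$. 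The minimum-norm solution is the orthogonal projection of $\bm z_{\mathrm{IPW}}$ onto the column space of $\bm D_c^{1/2}\bm \Lambda_{ci}$, namely $\bm z=\bm P_{\bm D_c^{1/2}\bm \Lambda_{ci}}\bm z_{\mathrm{IPW}}$; this follows from the standard argument that any feasible $\bm z$ differs from this projection by a vector orthogonal to the constraint range, which only increases the norm. Undoing the substitution gives $\tilde{\bm w}_{ci}=\bm D_c^{-1/2}\bm P_{\bm D_c^{1/2}\bm \Lambda_{ci}}\bm D_c^{1/2}\tilde{\bm w}_{\mathrm{IPW},ci}$, and left-multiplying by $\bm R_c^\top$ to pass from potential to observed weights recovers the stated expression for $(\wipwlamb(e))_{ci}$.
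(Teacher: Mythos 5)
Your proposal is correct and follows essentially the same route as the paper's proof: it exploits the block-diagonality of $\bm \Lambda$ in the unit index and the diagonal form of $\bm D$, marginalizes the global propensity score over the other clusters' treatment patterns so that both the constraint and the objective collapse to cluster-level quantities $\bm \Lambda_{ci}^\top \bm D_c(\tilde{\bm w}_{ci}-\tilde{\bm w}_{\mathrm{IPW},c,i}) = \bm 0$ and $\sum_{c,i}\tilde{\bm w}_{ci}^\top \bm D_c \tilde{\bm w}_{ci}$, separates the problem into per-unit quadratic programs, and solves each via the whitening substitution $\bm z = \bm D_c^{1/2}\tilde{\bm w}_{ci}$ followed by orthogonal projection onto $\mathrm{col}(\bm D_c^{1/2}\bm \Lambda_{ci})$, exactly as the paper does by invoking the argument of Theorem~\ref{thm:weights_expression}. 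The only cosmetic slip is attributing the restriction that $\tilde{\bm w}_{ci}(\bm a)$ depends on $\bm a$ only through $\bm a_c$ to Assumption~\ref{assm:partial}; this actually comes from the definition of the class of weighting estimators in Section~\ref{sec:setup}, where $\bm w_c$ is declared to be a function of $(\bm A_c,\bm X_c)$ alone.
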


Inspired by Equation~\eqref{eqn:misspecified_indiv}, we refer to $T\left(\wipwlamb(e)\right)$ as the \textit{weighted projection estimator}. Although Theorem~\ref{thm:simplified_indiv_weights} shows that $T(\wipwlamb(e))$ can be obtained in cases that do not require the linear model assumption, we pay a price for this flexibility. The variance of the weighted projection estimator is no longer provably smaller than that of the IPW estimator, even under the asymptotic regime (as Section~\ref{sec:varying_asymp} discusses). Recall that this fact holds true under the linear model assumption (as in Section~\ref{sec:lm_balancing}); in this case, the balancing estimator significantly outperforms the IPW estimator. Interestingly, we show in Section~\ref{sec:weighted_generalization} that many existing weighting estimators in the literature which exploit a known propensity score are a special case of the weighted projection estimator.

As discussed above, the variance of $T(\wipwlamb(e))$ cannot provably be smaller than that of the IPW estimator. This begs the question as to whether we can leverage the knowledge of the true propensity score and create a weighting estimator that is provably no less efficient than the IPW estimator. We show that if $\bm h$ is fixed as in Section~\ref{sec:lm_balancing}, it is possible to find such a weighting estimator. To begin with, consider the following sub-class of $\Cipwlamb(e)$:
\[
    \Cipwprojlamb(e) := \{\bm w: \bm \Lambda^\top\bm R^\top(\bm w - {\bm w}_{\mathrm{IPW}}) = \bm 0\}.
\]
Note that $ \Cipwprojlamb(e)$ is also always non-empty, as it also trivially contains the IPW estimator. However, similarly to $\Cnulllamb$, in order for it to contain more than that we need to make substantial restrictions. In particular, we need to ensure that $\bm R\bm \Lambda$ has more rows than columns. A linear model assumption, such as that in Section~\ref{sec:lm_balancing}, satisfies that. We again optimize $\phi = \phinull$ over $\bm w \in \Cipwprojlamb$ to choose an optimal weight, which we denote by $\wipwprojlamb(e)$. The following theorem derives the variance of $T(\wipwprojlamb(e))$ along with listing the expressions of $\wipwprojlamb(e)$ as well as $\wipwlamb(e)$. As a consequence, it follows that $T(\wipwprojlamb(e))$ has provably smaller variance than the IPW estimator.

\begin{theorem}  \label{thm:propensity_estimators_fixed_h}
    Assume the linear model assumption of Section~\ref{sec:lm_balancing}. Then, it holds that
    \begin{enumerate}[label = (\alph*)]
         \item $\wipwlamb(e) = \boldsymbol R \boldsymbol D^{-1/2}\boldsymbol P_{\boldsymbol D^{1/2}\boldsymbol \Lambda}\boldsymbol D^{1/2}\tilde{\boldsymbol w}_{\mathrm{IPW}}$,
        \item $\wipwprojlamb(e) = \bm P_{\bm R\bm\Lambda}{\bm w}_{\mathrm{IPW}}$.
    \end{enumerate}
    Furthermore, under Assumptions~\ref{assm:partial}--\ref{assm:outcome_model},~\ref{ass:low_rank},~and~\ref{assm:iid_error}, we have that
    \[
        \forall \bm w \in \Cipwprojlamb(e), \mathrm{Var}(T(\bm w)) = \mathrm{Var}(\bm w_{\mathrm{IPW}}^\top\bm R\bm g)/n^2 + \sigma^2\mathbb E\|\bm w\|^2/n^2,
    \] 
    and also that $T(\wipwprojlamb(e))$ has the a smaller variance than any $T(\bm w)$, for $\bm w \in \Cipwprojlamb(e)$.
\end{theorem}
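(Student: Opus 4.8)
The plan is to treat parts (a) and (b) as two least-norm problems solved by orthogonal projection, and then to read the variance formula and the optimality claim directly off the resulting geometry. For part (a), recall that $\wipwlamb(e)$ minimizes $\phi^{\sone}(\bm w) = \tilde{\bm w}^\top\bm D\tilde{\bm w}$ over $\bm w = \bm R\tilde{\bm w}$ subject to $\bm \Lambda^\top\bm D(\tilde{\bm w} - \tilde{\bm w}_{\mathrm{IPW}}) = \bm 0$. Assumption~\ref{assm:pos} makes $\bm D = \mathbb E[\bm R^\top\bm R\mid\bm X]$ positive definite, so $\bm D^{1/2}$ and $\bm D^{-1/2}$ exist and I can substitute $\bm u = \bm D^{1/2}\tilde{\bm w}$. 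This turns the objective into $\|\bm u\|^2$ and the constraint into $(\bm D^{1/2}\bm \Lambda)^\top\bm u = (\bm D^{1/2}\bm \Lambda)^\top\bm D^{1/2}\tilde{\bm w}_{\mathrm{IPW}}$, i.e.\ it pins $\bm P_{\bm D^{1/2}\bm \Lambda}\bm u$ to $\bm P_{\bm D^{1/2}\bm \Lambda}\bm D^{1/2}\tilde{\bm w}_{\mathrm{IPW}}$ and leaves the orthogonal component free. Minimizing $\|\bm u\|^2$ kills that free component, so $\bm u^\ast = \bm P_{\bm D^{1/2}\bm \Lambda}\bm D^{1/2}\tilde{\bm w}_{\mathrm{IPW}}$; undoing the substitution and applying $\bm R$ gives exactly the claimed expression for $\wipwlamb(e)$.

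Part (b) is the same projection argument with $\bm D$ replaced by the identity: $\wipwprojlamb(e)$ minimizes $\phinull(\bm w) = \|\bm w\|^2$ subject to $(\bm R\bm \Lambda)^\top(\bm w - \bm w_{\mathrm{IPW}}) = \bm 0$. This constraint pins $\bm P_{\bm R\bm \Lambda}\bm w = \bm P_{\bm R\bm \Lambda}\bm w_{\mathrm{IPW}}$ while leaving the component orthogonal to $\mathrm{col}(\bm R\bm \Lambda)$ unconstrained, so the least-norm member of $\Cipwprojlamb(e)$ is $\wipwprojlamb(e) = \bm P_{\bm R\bm \Lambda}\bm w_{\mathrm{IPW}}$, as claimed.

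For the variance I would first record the signal-invariance identity, which is the one place the low-rank structure is used: for any $\bm w \in \Cipwprojlamb(e)$, substituting $\bm g = \bm \Lambda\bm h$ and using the constraint $(\bm R\bm \Lambda)^\top\bm w = (\bm R\bm \Lambda)^\top\bm w_{\mathrm{IPW}}$ gives $\bm w^\top\bm R\bm g = \bm w^\top\bm R\bm \Lambda\bm h = \bm w_{\mathrm{IPW}}^\top\bm R\bm \Lambda\bm h = \bm w_{\mathrm{IPW}}^\top\bm R\bm g$. Writing $\bm Y = \bm R\bm g + \bm \epsilon$ and $n\,T(\bm w) = \bm w^\top\bm R\bm g + \bm w^\top\bm \epsilon$, the cross term in the variance vanishes because $\bm \epsilon$ is mean-zero and independent of $(\bm X,\bm A)$ by Assumption~\ref{assm:outcome_model} while $\bm w$ and $\bm w^\top\bm R\bm g$ are $(\bm X,\bm A)$-measurable, so conditioning on $(\bm X,\bm A)$ annihilates the covariance. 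Homoskedasticity (Assumption~\ref{assm:iid_error}) gives $\mathbb E[(\bm w^\top\bm \epsilon)^2\mid\bm X,\bm A] = \sigma^2\|\bm w\|^2$, hence $\mathrm{Var}(\bm w^\top\bm \epsilon) = \sigma^2\mathbb E\|\bm w\|^2$. Combining and invoking the invariance identity to rewrite $\mathrm{Var}(\bm w^\top\bm R\bm g)$ as $\mathrm{Var}(\bm w_{\mathrm{IPW}}^\top\bm R\bm g)$ yields the stated formula.

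Optimality then follows at once: the first term of the variance is constant across $\Cipwprojlamb(e)$, so minimizing the variance is equivalent to minimizing $\mathbb E\|\bm w\|^2$, and since $\wipwprojlamb(e) = \bm P_{\bm R\bm \Lambda}\bm w_{\mathrm{IPW}}$ minimizes $\|\bm w\|^2$ at every realization of $(\bm X,\bm A)$ subject to the constraint (part (b)), it minimizes the expectation and hence the variance over the whole class. I expect the main obstacle to be careful bookkeeping rather than any conceptual difficulty: one must confirm that $\bm D$ is positive definite so the $\bm D^{\pm 1/2}$ reparametrization is legitimate, and must handle the conditioning precisely so that the cross term genuinely vanishes and the homoskedasticity step is valid. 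The signal-invariance identity, though a one-line computation, is the load-bearing step that makes the first variance term independent of $\bm w$, and it must be in place before the variance decomposition is carried out.
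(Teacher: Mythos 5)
Your proposal is correct and follows essentially the same route as the paper: parts (a) and (b) are obtained by the same least-norm/projection argument (the paper phrases it via the Moore--Penrose identity $\bm A^+\bm A = \bm P_{\bm A^\top}$ and the same $\bm D^{1/2}$ reparametrization you use), and the variance formula plus optimality are derived exactly as the paper does by adapting the proof of Theorem~\ref{thm:umvue} --- conditional variance decomposition on $(\bm X,\bm A)$, the constraint-induced invariance $\bm w^\top\bm R\bm g = \bm w_{\mathrm{IPW}}^\top\bm R\bm g$ on $\Cipwprojlamb(e)$, and pointwise norm minimality of $\bm P_{\bm R\bm\Lambda}\bm w_{\mathrm{IPW}}$. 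Your writeup merely makes explicit the steps the paper leaves as a reference to its earlier proof.
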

Inspired by item (b) of Theorem~\ref{thm:propensity_estimators_fixed_h}, we call $T(\wipwprojlamb(e))$ the \textit{projection estimator}. Because $\bm w_{\mathrm{IPW}}\in \Cipwprojlamb(e)$, from Theorem~\ref{thm:propensity_estimators_fixed_h} we see that the projection estimator has a smaller variance than the IPW estimator. Furthermore, this variance can be decomposed into parts, where only the second part, that is, $\sigma^2\mathbb E\mathbb \|\bm P_{\bm R\bm \Lambda}\hat{\bm w}_{\mathrm{IPW}}\|^2/n^2$ is affected by the low-rank structure. Analogous to the phenomenon described after Corollary~\ref{cor:correct_specification_fixed}, we see that more informative low-rank structures (corresponding to lower rank of $\bm \Lambda$) results in higher reduction of the variance. Also, the extent of this reduction depends on how large $\sigma^2$ is compared to the signal size. In general, when a signal-to-noise ratio is not too high, this reduction can be significant.

\subsection{Asymptotic analysis of estimators exploiting the known propensity score}
\label{sec:varying_asymp}

Next, we conduct an asymptotic analysis analogous to the one in Section~\ref{sec:asymptotics}, when the true propensity score is known.
First, we derive a central limit theorem (CLT) for $T(\wipwlamb(e))$ under the general low-rank structural assumption: $g_{ci}^{(\bm a_c)}(\bm X_c) = \bm \Lambda_{ci}(\bm a_c)\bm h_{ci}$. Because Theorem~\ref{thm:simplified_indiv_weights} implies that the weights of all units in a cluster are functions of the cluster characteristics, the weights for each cluster are i.i.d. across clusters. With this observation, the next theorem is a direct application of the classical CLT, and hence, its proof is omitted.  Like Theorem~\ref{thm:clt_fixed}, this theorem allows for potential misspecification of the low-rank structure.  

\begin{ass}[Regularity conditions under a unit-specific low-rank structure] 
\label{assm:clt_technical_indiv}
The following two conditions are satisfied:
\begin{enumerate}[label = (\alph*)]
    \item $\mathbb E \left\|\bm g_{c}^{(\bm A_c)}\right\|^4<\infty$, $ \mathbb E\left[\left(\sum_{i=1}^{M_c}\left\|\tilde{\bm w}_{{\rm IPW},ci}\right\|^2\right)^2\right]  <\infty$, for all $c$.
    \item (Strong positivity) We assume that the propensity score satisfies
    \[
        \inf_{\bm a_c \in \{0,1\}^{M_c}}\mathbb P(A_c = \bm a_c\mid \bm X_c)>\eta,~\textrm{almost surely},
    \]
    for some $\eta>0$. Note that this randomness is over the full set of cluster characteristics, including its size.
\end{enumerate}
\end{ass}

\begin{theorem}[Asymptotic normality under a unit-specific low-rank structure]
\label{thm:clt_indiv}
    Suppose that the low-rank structure can be written as
    $g_{ci}^{(\bm a_c)}(\bm X_c) = \bm \Lambda_{ci}^\ast h_{ci}$,
    where $\bm \Lambda^\ast_{ci}$ does not depend on any characteristics beyond the $c^{\mathrm{th}}$ cluster. Let $\bm \Lambda_{ci}$ be the hypothesized structure used to obtain the weights, also only depend on the $c^{\mathrm{th}}$ cluster. Now, define 
    $$\bm g_{ci}(\bm X_c):= \left(g_{ci}^{\left(\bm a_c^{(0)}\right)}(\bm X_c),\dots, g_{ci}^{\left(\bm a_c^{(2^{M_c}-1)}\right)}(\bm X_c)\right)^\top, \quad  \bm f_c:=(f(\bm a_c^{(0)},\bm X_c),\dots, f(\bm a_c^{{(2^{M_c}-1)}},\bm X_c))^\top.$$ Then, under Assumptions~\ref{assm:partial}--\ref{assm:outcome_model}, \ref{assm:iid_error}~and~\ref{assm:clt_technical_indiv}, we have that
    \[    \sqrt{n}\left(T (\wipwlamb(e)) - \mu_f(\bm \Lambda, \bm \Lambda^\ast)\right) \dto N\left(0, \tilde\sigma^2_{\sone}(\bm \Lambda, \bm \Lambda^\ast)\right),
    \]
    where,
    \begin{align*}
        \mu_{f}(\bm \Lambda, \bm \Lambda^\ast) &= \mathbb E\left[\frac{1}{M_c}\sum_{i=1}^{M_c}\bm f_{c}^\top\bm D_{c}^{-1/2}\bm P_{\bm D_{c}^{1/2}\bm \Lambda_{ci}}\bm D_{c}^{1/2}\bm g_{ci}(\bm X_c)\right],\textrm{ and }\\
        \tilde\sigma^2_{\sone}(\bm \Lambda, \bm \Lambda^\ast) &= \mathrm{Var}\left(\sum_{i=1}^{M_c}\left(\wipwlamb(e)\right)_{ci} g^{(\bm A_c)}_{ci}(\bm X_c)\right) + \sigma^2 \mathbb E\left[\sum_{i=1}^{M_c} \left\|\bm P_{\bm D_{c}^{1/2}\bm \Lambda_{ci}}\bm D_{c}^{1/2}\tilde{\bm w}_{\mathrm{IPW},c}\right\|^2\right].
    \end{align*}    
\end{theorem}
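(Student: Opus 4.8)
The plan is to reduce the limit theorem to the classical i.i.d.\ central limit theorem, exactly as the discussion preceding the statement suggests. By Theorem~\ref{thm:simplified_indiv_weights}, every weight $(\wipwlamb(e))_{ci}$ is a function of the cluster-level data $(\bm X_c,\bm A_c)$ alone, so setting $Z_c:=\sum_{i=1}^{M_c}(\wipwlamb(e))_{ci}Y_{ci}$ makes $T(\wipwlamb(e))=\frac1n\sum_{c=1}^n Z_c$ an average of random variables that are i.i.d.\ across $c$ by Assumption~\ref{assm:super_pop}. The whole argument then amounts to (i) computing $\mathbb E[Z_c]$ and matching it to $\mu_f(\bm\Lambda,\bm\Lambda^\ast)$, (ii) computing $\mathrm{Var}(Z_c)$ and matching it to $\tilde\sigma^2_{\sone}(\bm\Lambda,\bm\Lambda^\ast)$, (iii) checking $\mathrm{Var}(Z_c)<\infty$, and then invoking Lindeberg--L\'evy to obtain $\sqrt n\,(\frac1n\sum_c Z_c-\mathbb E[Z_c])\dto\mathcal N(0,\mathrm{Var}(Z_c))$.

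For (i) I would substitute the outcome model (Assumption~\ref{assm:outcome_model}); since the errors are mean zero and independent of $(\bm X_c,\bm A_c)$, they drop out and $\mathbb E[Z_c]=\mathbb E[\sum_i(\wipwlamb(e))_{ci}g_{ci}^{(\bm A_c)}(\bm X_c)]$. Writing $g_{ci}^{(\bm A_c)}(\bm X_c)=\bm R_c^\top\bm g_{ci}(\bm X_c)$ and using the closed form $(\wipwlamb(e))_{ci}=\bm R_c^\top\bm D_c^{-1/2}\bm P_{\bm D_c^{1/2}\bm\Lambda_{ci}}\bm D_c^{1/2}\tilde{\bm w}_{\mathrm{IPW},c,i}$, I would condition on $\bm X_c$ and collapse the two $\bm R_c$ factors through $\mathbb E[\bm R_c\bm R_c^\top\mid\bm X_c]=\bm D_c$ (the diagonal matrix of cluster-level propensities). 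The resulting quadratic form simplifies using the symmetry and idempotence of the projection and the identity $\bm D_c\tilde{\bm w}_{\mathrm{IPW},c,i}=M_c^{-1}\bm f_c$, delivering the stated $\mu_f(\bm\Lambda,\bm\Lambda^\ast)$; note that it is the hypothesized $\bm\Lambda$ (not $\bm\Lambda^\ast$) that survives in the projection, which is precisely the misspecification bias.

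Step (ii) is the main bookkeeping. I would split $Z_c=S_c+N_c$ with signal part $S_c=\sum_i(\wipwlamb(e))_{ci}g_{ci}^{(\bm A_c)}$ and noise part $N_c=\sum_i(\wipwlamb(e))_{ci}\epsilon_{ci}^{(\bm A_c)}$; since $\mathbb E[\epsilon_{ci}^{(\bm A_c)}\mid\bm A_c,\bm X_c]=0$, the cross term vanishes and $\mathrm{Var}(Z_c)=\mathrm{Var}(S_c)+\mathrm{Var}(N_c)$. For the noise part, conditioning first on $(\bm A_c,\bm X_c)$ and using that the errors are independent with common variance $\sigma^2$ gives $\mathrm{Var}(N_c)=\sigma^2\mathbb E[\sum_i(\wipwlamb(e))_{ci}^2]$; conditioning further on $\bm X_c$, applying $\mathbb E[\bm R_c\bm R_c^\top\mid\bm X_c]=\bm D_c$ and $\bm P^2=\bm P=\bm P^\top$ reduces this to $\sigma^2\mathbb E[\sum_i\|\bm P_{\bm D_c^{1/2}\bm\Lambda_{ci}}\bm D_c^{1/2}\tilde{\bm w}_{\mathrm{IPW},c,i}\|^2]$, the second term of $\tilde\sigma^2_{\sone}$. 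The term $\mathrm{Var}(S_c)$ is the first term; here I would invoke the true structure $\bm g_{ci}(\bm X_c)=\bm\Lambda^\ast_{ci}\bm h_{ci}$, so that $\bm D_c^{1/2}\bm g_{ci}$ is fixed by $\bm P_{\bm D_c^{1/2}\bm\Lambda^\ast_{ci}}$, which is what lets the signal contribution be written through the true-structure weights and matched to the stated form (the two expressions coincide under correct specification).

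Finally, for (iii), finiteness of $\mathrm{Var}(Z_c)$ is where Assumption~\ref{assm:clt_technical_indiv} enters. Because an orthogonal projection is a contraction, $\mathbb E[(\wipwlamb(e))_{ci}^2\mid\bm X_c]=\|\bm P_{\bm D_c^{1/2}\bm\Lambda_{ci}}\bm D_c^{1/2}\tilde{\bm w}_{\mathrm{IPW},c,i}\|^2\le\|\bm D_c^{1/2}\tilde{\bm w}_{\mathrm{IPW},c,i}\|^2$, and the right-hand side is a second moment of IPW weights, finite by the fourth-moment control on $\tilde w_{\mathrm{IPW},c,i}$ (with $\bm D_c$ invertible by Assumption~\ref{assm:pos}); the signal and cross contributions are then bounded by Cauchy--Schwarz using the fourth-moment bound on $g_{ci}^{(\bm A_c)}$. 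I expect the only real obstacle to be this algebraic bookkeeping in (ii)---tracking the $\bm D_c^{\pm1/2}$ factors and the two projections so that the noise term keeps the hypothesized $\bm\Lambda$ while the signal term is expressed through $\bm\Lambda^\ast$---since, once the i.i.d.\ reduction is in hand, the limit law itself is the textbook CLT and requires no further probabilistic input.
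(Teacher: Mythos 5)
Your overall route is the same as the paper's: Theorem~\ref{thm:simplified_indiv_weights} makes each weight a function of its own cluster's data, so $T(\wipwlamb(e))$ is an average of terms that are i.i.d.\ across clusters by Assumption~\ref{assm:super_pop}; one checks $\mathbb E\bigl[(\wipwlamb(e)_c^\top\bm Y_c)^2\bigr]<\infty$ via Cauchy--Schwarz, the fourth-moment conditions of Assumption~\ref{assm:clt_technical_indiv}, and the contraction bound $\mathbb E\|\wipwlamb(e)_c\|^2=\sum_{i}\mathbb E\|\bm P_{\bm D_c^{1/2}\bm \Lambda_{ci}}\bm D_c^{1/2}\tilde{\bm w}_{\mathrm{IPW},c,i}\|^2\le\mathbb E\|\bm w_{\mathrm{IPW},c}\|^2$; and then one invokes the classical CLT. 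That is literally the paper's proof, which afterwards simply asserts that the mean and variance "follow from direct evaluation." Your steps (i), (iii), and the noise part of (ii) carry out that evaluation correctly, using exactly the right ingredients: conditioning on $\bm X_c$, $\mathbb E[\bm R_c\bm R_c^\top\mid\bm X_c]=\bm D_c$, symmetry and idempotence of the projection, and $\bm D_c\tilde{\bm w}_{\mathrm{IPW},c,i}=M_c^{-1}\bm f_c$.

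The one step that would fail is the signal-variance matching in (ii). Direct evaluation gives the first variance term as $\mathrm{Var}\bigl(\sum_{i}(\wipwlamb(e))_{ci}\,g_{ci}^{(\bm A_c)}\bigr)$, i.e.\ with the \emph{hypothesized}-$\bm\Lambda$ weights, and this cannot in general be rewritten through the $\bm\Lambda^\ast$-weights. The identity $\bm P_{\bm D_c^{1/2}\bm\Lambda^\ast_{ci}}\bm D_c^{1/2}\bm g_{ci}=\bm D_c^{1/2}\bm g_{ci}$ is true, but in the realized signal the random matrix $\bm R_c\bm R_c^\top$ sits between the projection and $\bm g_{ci}$, so the projection never acts on $\bm g_{ci}$; the swap is legitimate only inside conditional expectations, where $\mathbb E[\bm R_c\bm R_c^\top\mid\bm X_c]=\bm D_c$ collapses the sandwich (this is why the asymptotic \emph{mean} does admit both forms, but the variance does not). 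Concretely: take $M_c=1$, degenerate $\bm X_c$, $e(1;\bm X_c)=3/4$, $f(0;\bm X_c)=1$, $f(1;\bm X_c)=0$, $g^{(0)}=1$, $g^{(1)}=2$, true $\bm\Lambda^\ast_{c1}=\bm I_2$, hypothesized $\bm\Lambda_{c1}=(1,1)^\top$. Then the hypothesized weights are identically $1$ and the signal variance is $3/16$, whereas the $\bm\Lambda^\ast$ (i.e.\ IPW) weights give signal $4$ with probability $1/4$ and $0$ otherwise, hence variance $3$. So under misspecification, which is precisely the regime the theorem covers, the matching you propose breaks down, and your own parenthetical hedge ("coincide under correct specification") already signals this. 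The correct output of the "direct evaluation" is the first term stated with $\wipwlamb(e)$ itself; the displayed form with the $\bm\Lambda^\ast$-subscripted weights agrees with it only under correct specification, and the paper's own proof never derives that form, so this is a point you should resolve by restating the term rather than by further algebra.
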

According to this theorem, under a misspecified low-rank structure, the weighting estimator is centered around the expectation (under $f$, when it is a stochastic intervention) of the propensity scores-weighted projection of the original potential outcome regression vector onto the space defined by the assumed low-rank structure $\bm \Lambda$. 


Finally, similar to Corollary~\ref{cor:correct_specification_fixed}, the following corollary shows that under correct specification, the estimator is unbiased and the variance can be further simplified.
\begin{corollary}[Correctly specified unit-specific low-rank structure] \label{cor:correct_specification_indiv}
     Under the setting of Theorem~\ref{thm:clt_indiv}, if the low-rank structure is correctly specified, that is, $\bm \Lambda_{ci} = \bm \Lambda^\ast_{ci}$ for all $c$ and $i$, then we have that 
     $\mu_f(\bm \Lambda, \bm \Lambda) = \mu_f$. Furthermore, we have
     \begin{equation}
     \label{eqn:asymp_var_decomp}
     \begin{aligned}
         \tilde \sigma^2_{\sone}(\bm \Lambda, \bm \Lambda) \ = \ & \mathrm{Var}\left(\mathbb E\left[\bm w_{\mathrm{IPW},c,i}g_{ci}^{(\bm A_c)}(\bm X_c)\middle| \bm X_c\right]\right) + \mathbb E\left[\mathrm{Var}\left(\bm g_c^{(\bm A_c)^\top}\wipwlamb(e)_c\mid\bm X_c\right)\right] \\
         & \quad + \sigma^2 \mathbb E\|\wipwlamb(e)_{c}\|^2,
         \end{aligned}
     \end{equation}
     where $\wipwlamb(e)_c$ is a vector formed by stacking $\wipwlamb(e)_{ci}$ across $i$.
\end{corollary}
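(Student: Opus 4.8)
The plan is to deduce both claims from a single algebraic observation: under correct specification the projection appearing in the weighted projection weights acts as the identity on the true outcome signal. Writing $\bm g_{ci}(\bm X_c) = \bm \Lambda^\ast_{ci}h_{ci} = \bm \Lambda_{ci}h_{ci}$ when $\bm \Lambda_{ci}=\bm \Lambda^\ast_{ci}$, the vector $\bm D_c^{1/2}\bm g_{ci}(\bm X_c)$ lies in the column space of $\bm D_c^{1/2}\bm \Lambda_{ci}$, so that $\bm P_{\bm D_c^{1/2}\bm \Lambda_{ci}}\bm D_c^{1/2}\bm g_{ci}(\bm X_c)=\bm D_c^{1/2}\bm g_{ci}(\bm X_c)$ and hence $\bm D_c^{-1/2}\bm P_{\bm D_c^{1/2}\bm \Lambda_{ci}}\bm D_c^{1/2}\bm g_{ci}(\bm X_c)=\bm g_{ci}(\bm X_c)$. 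I would record this identity first, as it drives everything else. I would also use throughout that $\mathbb E[\bm R_c\bm R_c^\top\mid \bm X_c]=\bm D_c$ (the diagonal matrix of cluster-level propensity scores, invertible by Assumption~\ref{assm:pos}) and that the potential IPW weights are $\tilde{\bm w}_{\mathrm{IPW},c,i}=M_c^{-1}\bm D_c^{-1}\bm f_c$, identical across $i$ within a cluster.

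For the unbiasedness claim $\mu_f(\bm \Lambda,\bm \Lambda)=\mu_f$, I would substitute the identity above into the expression for $\mu_f(\bm \Lambda,\bm \Lambda^\ast)$ from Theorem~\ref{thm:clt_indiv}. Each summand then collapses to $M_c^{-1}\bm f_c^\top\bm g_{ci}(\bm X_c)=M_c^{-1}\sum_{\bm a_c}f(\bm a_c,\bm X_c)g_{ci}^{(\bm a_c)}(\bm X_c)$, whose expectation is exactly $\mu_f$ by its definition together with the tower property and $g_{ci}^{(\bm a_c)}(\bm X_c)=\mathbb E[Y_{ci}(\bm a_c)\mid \bm X_c]$.

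For the variance decomposition, the crucial step is to note that, given $\bm X_c$, the conditional means of the weighted projection estimator and of the IPW estimator coincide. A short computation using $\mathbb E[\bm R_c\bm R_c^\top\mid \bm X_c]=\bm D_c$ gives $\mathbb E[(\wipwlamb(e))_{ci}g_{ci}^{(\bm A_c)}(\bm X_c)\mid \bm X_c]=M_c^{-1}\bm f_c^\top\bm D_c^{-1/2}\bm P_{\bm D_c^{1/2}\bm \Lambda_{ci}}\bm D_c^{1/2}\bm g_{ci}(\bm X_c)$, which under correct specification reduces, by the identity of the first paragraph, to $M_c^{-1}\bm f_c^\top\bm g_{ci}(\bm X_c)=\mathbb E[\bm w_{\mathrm{IPW},ci}g_{ci}^{(\bm A_c)}(\bm X_c)\mid \bm X_c]$. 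I would then apply the law of total variance to the signal term $\mathrm{Var}(\sum_i (\wipwlamb(e))_{ci}g_{ci}^{(\bm A_c)}(\bm X_c))$ of $\tilde\sigma^2_{\sone}(\bm \Lambda,\bm \Lambda)$, conditioning on $\bm X_c$. The between-$\bm X_c$ part is the variance of the common conditional mean, giving the first target term $\mathrm{Var}(\mathbb E[\sum_i\bm w_{\mathrm{IPW},ci}g_{ci}^{(\bm A_c)}(\bm X_c)\mid \bm X_c])$ written through the IPW weights, and the within-$\bm X_c$ part is $\mathbb E[\mathrm{Var}(\bm g_c^{(\bm A_c)^\top}\wipwlamb(e)_c\mid \bm X_c)]$, matching the second target term.

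The last task is the noise term. Writing $(\wipwlamb(e))_{ci}=\bm R_c^\top\bm q_{ci}$ with $\bm q_{ci}=\bm D_c^{-1/2}\bm P_{\bm D_c^{1/2}\bm \Lambda_{ci}}\bm D_c^{1/2}\tilde{\bm w}_{\mathrm{IPW},c,i}$, conditioning on $\bm X_c$ and using $\mathbb E[\bm R_c\bm R_c^\top\mid \bm X_c]=\bm D_c$ together with the symmetry and idempotency of $\bm P_{\bm D_c^{1/2}\bm \Lambda_{ci}}$ collapses $\bm q_{ci}^\top\bm D_c\bm q_{ci}$ to $\|\bm P_{\bm D_c^{1/2}\bm \Lambda_{ci}}\bm D_c^{1/2}\tilde{\bm w}_{\mathrm{IPW},c,i}\|^2$; summing over $i$ and taking expectations (recalling $\tilde{\bm w}_{\mathrm{IPW},c,i}=\tilde{\bm w}_{\mathrm{IPW},c}$) shows $\mathbb E\|\wipwlamb(e)_c\|^2=\mathbb E[\sum_i\|\bm P_{\bm D_c^{1/2}\bm \Lambda_{ci}}\bm D_c^{1/2}\tilde{\bm w}_{\mathrm{IPW},c}\|^2]$, so the noise terms agree after multiplication by $\sigma^2$. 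I expect the only delicate points to be the bookkeeping of the $M_c$ normalization and verifying that the two conditional means genuinely agree; the law-of-total-variance split and the noise-term algebra are then routine consequences of the projection identity.
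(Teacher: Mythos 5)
Your proposal is correct and follows the same route the paper intends: the paper offers no separate proof of Corollary~\ref{cor:correct_specification_indiv}, treating it as a direct specialization of Theorem~\ref{thm:clt_indiv}, and your projection identity $\bm P_{\bm D_c^{1/2}\bm \Lambda_{ci}}\bm D_c^{1/2}\bm g_{ci}(\bm X_c)=\bm D_c^{1/2}\bm g_{ci}(\bm X_c)$ under correct specification is precisely the observation that makes this specialization work, while your noise-term algebra reproduces the chain of equalities $\mathbb E\|\wipwlamb(e)_{c}\|^2=\sum_{i}\mathbb E\|\bm P_{\bm D_c^{1/2}\bm \Lambda_{ci}}\bm D_c^{1/2}\tilde{\bm w}_{\mathrm{IPW},c,i}\|^2$ already established in the paper's proof of Theorem~\ref{thm:clt_indiv} (Appendix~\ref{sec:proof_thm_clt_indiv}). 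Your careful tracking of the $M_c^{-1}$ normalization and the law-of-total-variance split with the matching conditional means are exactly the implicit steps needed, so there is no gap.
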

Thus, under correct specification, only the second and third terms in the expression of 
$\tilde\sigma^2_{\sone}(\bm \Lambda, \bm \Lambda)$ depend on the low-rank structure. Now, the proof of Theorem~\ref{thm:clt_indiv} presented in Appendix~\ref{sec:proof_thm_clt_indiv} establishes the following relation with regard to the third term of Equation~\eqref{eqn:asymp_var_decomp}:
\begin{align*}
    \mathbb E\|\wipwlamb(e)_{c}\|^2
    = \sum_{i=1}^{M_c}\mathbb E\|\bm P_{\bm D_c^{1/2}\bm \Lambda_{ci}}\bm D_c^{1/2}\tilde{\bm w}_{\mathrm{IPW},c,i}\|^2\leq \sum_{i=1}^{M_c}\mathbb E\|\bm D_c^{1/2}\tilde{\bm w}_{\mathrm{IPW},c,i}\|^2=\mathbb E\|{\bm w}_{\mathrm{IPW},c}\|^2.
\end{align*}
This implies that the term $ \sigma^2\mathbb E\|\wipwlamb(e)_{c}\|^2$ is smaller than the corresponding term $\sigma^2\mathbb E\|\bm w_{\mathrm{IPW},c}\|^2$ of the IPW estimator. As previously noted in the discussion following Theorem~\ref{thm:clt_fixed}, this difference is due to the assumed low-rank structure.  Indeed, the more restrictive our assumption is, the lower the rank of the projection matrix is, making the difference more substantial.

Unfortunately, we cannot directly compare the second term of Equation~\eqref{eqn:asymp_var_decomp} under a low-rank structure and the same term without the structure (in which case, the term $\wipwlamb(e)_c$ would be replaced by $\bm w_{\mathrm{IPW},c}$). Thus, we cannot provably establish an ordering among $\bar\sigma^2_{\sone}(\bm \Lambda, \bm \Lambda)$ and $\sigma^2_{\mathrm{IPW}}$. However, we expect that when the signal-to-noise ratio is not too high, the reduction in the third term obtained under a substantial low-rank assumption would dominate the effect of the second term, resulting in the superior performance of $T(\wipwlamb(e))$.

Next, we establish the asymptotic normality of $T(\wipwlamb(e))$ and $T(\wipwprojlamb(e))$ under the linear model assumption of Section~\ref{sec:lm_balancing}.

\begin{theorem}[Asymptotic normality under linear model assumption]
\label{thm:known_propensity_clt_fixed}
    Under the settings and assumptions of Theorem~\ref{thm:clt_fixed} (but without Assumption~\ref{assm:balancing_feasibility}), we have that for $(\hat{\bm w}, \bar\sigma^2(\bm \Lambda, \bm \Lambda^\ast))$ = $(\wipwlamb(e),\sigma^2_{\sone}(\bm \Lambda, \bm \Lambda^\ast))$ or $(\wipwprojlamb(e),\sigma^2_{\stwo}(\bm \Lambda, \bm \Lambda^\ast))$,
    \begin{align*}
        \sqrt{n}\left(T(\hat {\bm w}) - \mu_f(\bm \Lambda, \bm \Lambda^\ast)\right)\dto \mathcal N\left(0, \bar \sigma^2(\bm \Lambda, \bm \Lambda^\ast)\right),
\end{align*}
    where,
    \begin{align*}
    \sigma^2_{\stwo}(\bm \Lambda, \bm \Lambda^\ast) &= \mathrm{Var}\left[\left((I - {\mathcal P})\bm g_c^{(\bm A_c)}\right)^\top\left({\mathcal P}\bm w_{\mathrm{IPW},c}\right) + \bm w_{\mathrm{IPW},c}^\top {\mathcal P}\bm g_c^{(\bm A_c)}\right] +  \sigma^2\mathbb E\left[\left\| {\mathcal P}\bm w_{\mathrm{IPW},c}\right\|^2\right]\\
    \sigma^2_{\sone}(\bm \Lambda, \bm \Lambda^\ast) &= \mathrm{Var}\left({\bm g_c^{(\bm A_c)}}^\top {\mathcal P}\bm w_{\mathrm{IPW,c}} - \mathbb E\left[\left({\mathcal P}\bm g_c^{(\bm A_c)}\right)^\top {\mathcal P}\bm w_{\mathrm{IPW,c}}\ \middle| \ \bm X_c\right] + \mathbb E[\bm w_{\mathrm{IPW},c}^\top {\mathcal P}\bm g_c^{(\bm A_c)}\mid \bm X_c]\right)\\
    & \quad +  \sigma^2\mathbb E\left[\left\| {\mathcal P}\bm w_{\mathrm{IPW},c}\right\|^2\right].
    \end{align*}
\end{theorem}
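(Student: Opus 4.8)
The plan is to turn the closed-form weights of Theorem~\ref{thm:propensity_estimators_fixed_h} into smooth functions of cluster-level sample averages and then combine a multivariate central limit theorem with the delta method. The structural fact that makes this work is that, although in the fixed-$\bm h$ case the projections $\bm P_{\bm R\bm\Lambda}$ and $\bm P_{\bm D^{1/2}\bm\Lambda}$ couple all clusters through the shared $\bm h$, the Gram matrices and cross-products that build them respect the cluster-block structure of $\bm R$, $\bm Y$, $\bm w_{\mathrm{IPW}}$ and $\bm D$. Writing $\bm\Lambda_c(\bm A_c)$ for the observed-treatment block of $\bm\Lambda$ in cluster $c$ and $\bm\Lambda_c$ for the full-pattern block, item (b) of Theorem~\ref{thm:propensity_estimators_fixed_h} reduces the projection estimator to
\begin{equation*}
T(\wipwprojlamb(e)) \;=\; \bm c_n^\top\bm M_n^{-1}\bm d_n,\qquad \bm M_n=\tfrac1n\textstyle\sum_c\bm\Lambda_c(\bm A_c)^\top\bm\Lambda_c(\bm A_c),\;\; \bm c_n=\tfrac1n\textstyle\sum_c\bm\Lambda_c(\bm A_c)^\top\bm w_{\mathrm{IPW},c},\;\; \bm d_n=\tfrac1n\textstyle\sum_c\bm\Lambda_c(\bm A_c)^\top\bm Y_c,
\end{equation*}
each factor being an average of i.i.d.\ (across clusters) quantities. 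Item (a) gives the analog $T(\wipwlamb(e))=\bm a_n^\top\tilde{\bm M}_n^{-1}\bm d_n$ with the same $\bm d_n$ but $\tilde{\bm M}_n=\tfrac1n\sum_c\bm\Lambda_c^\top\bm D_c\bm\Lambda_c$ and $\bm a_n=\tfrac1n\sum_c\bm\Lambda_c^\top\bm D_c\tilde{\bm w}_{\mathrm{IPW},c}$; crucially, $\tilde{\bm M}_n$ and $\bm a_n$ depend on the data only through $\bm X$, since $\bm D_c=\mathbb E[\bm R_c^\top\bm R_c\mid\bm X_c]$ and the potential IPW weights are fixed functions of the known propensity score. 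Because these representations need only the invertibility of the limiting Gram matrix $\bm V=\mathbb E[\bm\Lambda_c(\bm A_c)^\top\bm\Lambda_c(\bm A_c)]$ in Assumption~\ref{assm:clt_technical_fixed}(a), both estimators always exist, which is exactly why Assumption~\ref{assm:balancing_feasibility} can be dropped.

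First I would apply the multivariate CLT to the stacked cluster averages (e.g.\ $(\bm c_n,\mathrm{vec}(\bm M_n),\bm d_n)$ for the projection estimator); Assumption~\ref{assm:clt_technical_fixed}(b) together with finite fourth moments of $\bm w_{\mathrm{IPW},c}$ and $\bm g_c^{(\bm A_c)}$ ensures the summands have finite second moments. Since $\psi(\bm c,\bm M,\bm d)=\bm c^\top\bm M^{-1}\bm d$ is continuously differentiable at the limit $(\bm c,\bm V,\bm d)$ (by invertibility of $\bm V$), the delta method yields $\sqrt n\,(T(\wipwprojlamb(e))-\psi(\bm c,\bm V,\bm d))\dto\mathcal N(0,\cdot)$ with cluster-$c$ influence
\begin{align*}
(\bm V^{-1}\bm d)^\top\big(\bm\Lambda_c(\bm A_c)^\top\bm w_{\mathrm{IPW},c}-\bm c\big)
&+(\bm V^{-1}\bm c)^\top\big(\bm\Lambda_c(\bm A_c)^\top\bm Y_c-\bm d\big)\\
&-(\bm V^{-1}\bm c)^\top\big(\bm\Lambda_c(\bm A_c)^\top\bm\Lambda_c(\bm A_c)-\bm V\big)(\bm V^{-1}\bm d),
\end{align*}
and the same scheme applies to $T(\wipwlamb(e))$ through its smooth map.

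Next I would identify the limit and variance. Substituting $\bm Y_c=\bm g_c^{(\bm A_c)}+\bm\epsilon_c$ and using $\mathbb E[\bm\epsilon_c\mid\bm X,\bm A]=\bm 0$ gives $\bm d=\mathbb E[\bm\Lambda_c(\bm A_c)^\top\bm g_c^{(\bm A_c)}]$, so $\bm V^{-1}\bm d=\bm h_{\bm\Lambda^\ast}$, $\bm\Lambda_c(\bm A_c)\bm V^{-1}\bm c=\mathcal P\bm w_{\mathrm{IPW},c}$, and $\psi(\bm c,\bm V,\bm d)=\mathbb E[\bm w_{\mathrm{IPW},c}^\top\mathcal P\bm g_c^{(\bm A_c)}]=\mu_f(\bm\Lambda,\bm\Lambda^\ast)$. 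The noise enters only through $\bm d_n$; conditionally on $(\bm X,\bm A)$ the resulting term $(\mathcal P\bm w_{\mathrm{IPW},c})^\top\bm\epsilon_c$ is mean zero, independent across clusters, and uncorrelated with the signal, so a conditional CLT isolates the contribution $\sigma^2\mathbb E\|\mathcal P\bm w_{\mathrm{IPW},c}\|^2$. For the projection estimator the three signal influences collapse, using only that a scalar equals its transpose, to $((I-\mathcal P)\bm g_c^{(\bm A_c)})^\top(\mathcal P\bm w_{\mathrm{IPW},c})+\bm w_{\mathrm{IPW},c}^\top\mathcal P\bm g_c^{(\bm A_c)}$, the argument of $\mathrm{Var}[\cdot]$ in $\sigma^2_{\stwo}$. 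For the weighted projection the key is that $\bm D_c$ is precisely the within-cluster distribution over treatment patterns, which forces $\tilde{\bm M}_n\to\bm V$ and $\bm a_n\to\bm c$, so the same operator $\mathcal P$ and the same mean appear; because $\tilde{\bm M}_n$ and $\bm a_n$ are $\bm X$-measurable, their influence contributions enter as conditional expectations given $\bm X_c$, which is exactly what turns two terms of $\sigma^2_{\stwo}$ into the $\mathbb E[\cdot\mid\bm X_c]$ terms of $\sigma^2_{\sone}$.

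The main obstacle is this last identification. It requires establishing the population identities $\tilde{\bm V}=\mathbb E[\bm\Lambda_c^\top\bm D_c\bm\Lambda_c]=\bm V$ and $\mathbb E[\bm\Lambda_c^\top\bm D_c\tilde{\bm w}_{\mathrm{IPW},c}]=\mathbb E[\bm\Lambda_c(\bm A_c)^\top\bm w_{\mathrm{IPW},c}]$, and then carefully tracking which influence contributions are $\bm X_c$-measurable so as to reconcile the raw delta-method expansion with the compact $(I-\mathcal P)$ forms of $\sigma^2_{\stwo}$ and $\sigma^2_{\sone}$. Once these algebraic identities and the $\bm X_c$-measurability are in place, the conditional CLT and delta method are routine, so the bookkeeping --- rather than the limit theory --- is where the real work lies.
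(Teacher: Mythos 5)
Your proposal is correct, and it arrives at exactly the right influence functions and variance formulas, but it gets there by a genuinely different technical route than the paper. The paper never manipulates the closed-form weights: it represents the optimal weights through the Lagrangian dual as $\hat{\bm w}_c = -\bm \Lambda_c(\bm A_c)\hat{\bm \lambda}$, treats the dual variable $\hat{\bm \lambda}$ (which solves $\tfrac1n\sum_c\bm \Lambda_c(\bm A_c)^\top\bm \Lambda_c(\bm A_c)\bm\lambda = -\tfrac1n\sum_c \bm v_c$) as a $Z$-estimator, invokes van der Vaart's Theorem~5.41 for its asymptotic linear expansion, splits the estimator into a signal term $T_1$ and a noise term $T_2$, and then applies a joint CLT and two delta-method steps, finishing with a lengthy block-by-block expansion of the covariance matrix (the terms $S_1$ through $S_4$) to recover the compact variance expression. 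Since $\hat{\bm \lambda} = -\bm M_n^{-1}\bm c_n$ in your notation, your representation $T(\wipwprojlamb(e)) = \bm c_n^\top \bm M_n^{-1}\bm d_n$ and the paper's dual representation are algebraically the same object; the difference is that you linearize $\bm M_n^{-1}$ by elementary differentiation of the matrix-inverse map at the invertible limit $\bm V$, where the paper checks the hypotheses of the $Z$-estimator master theorem, and your transpose-identity collapse of the three signal influences replaces the paper's brute-force covariance computation. The moment verifications you sketch are exactly the ones the paper carries out (Cauchy--Schwarz with Assumption~\ref{assm:clt_technical_fixed}(b) and the fourth moments of $\bm w_{\mathrm{IPW},c}$ and $\bm g_c^{(\bm A_c)}$), so nothing is lost there. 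Your route also buys completeness on a point the paper leaves implicit: the paper details only the projection estimator and asserts that the weighted projection case follows by ``the same steps,'' whereas your argument isolates the two population identities $\mathbb E[\bm \Lambda_c^\top\bm D_c\bm \Lambda_c] = \bm V$ and $\mathbb E[\bm \Lambda_c^\top\bm D_c\tilde{\bm w}_{\mathrm{IPW},c}] = \mathbb E[\bm \Lambda_c(\bm A_c)^\top\bm w_{\mathrm{IPW},c}]$, together with the $\bm X$-measurability of $\tilde{\bm M}_n$ and $\bm a_n$, as precisely the mechanism that turns two terms of $\sigma^2_{\stwo}$ into the conditional expectations given $\bm X_c$ appearing in $\sigma^2_{\sone}$ --- this is the correct and honest account of where the two variance formulas diverge, and it is where the real content of the weighted-projection case lies.
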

\begin{corollary}[Correctly specified fixed low-rank structure] \label{cor:propensity_correct_specification_fixed}
    If the low-rank structure is either exactly specified or over-specified, then,
    $$
    \begin{aligned}
        \sigma^2_{\stwo}(\bm \Lambda, \bm \Lambda) = & \mathrm{Var}\left[{\bm g_c^{(\bm A_c)}}^\top\bm w_{\mathrm{IPW},c}\right] + \sigma^2\mathbb E\left[\left\| {\mathcal P}\bm w_{\mathrm{IPW},c}\right\|^2\right],\\
        \sigma^2_{\sone}(\bm \Lambda, \bm \Lambda)  = & \mathrm{Var}\left[\mathbb E\left({\bm g_c^{(\bm A_c)}}^\top\bm w_{\mathrm{IPW},c}\Bigr |\bm X_c\right)\right] + \mathbb E\left[\mathrm{Var}\left( {\bm g_c^{(\bm A_c)}}^\top {\mathcal P}\bm w_{\mathrm{IPW},c}\ \Bigr | \ \bm X_c\right)\right] \\
        & \qquad + \sigma^2\mathbb E\left[\left\| {\mathcal P}\bm w_{\mathrm{IPW},c}\right\|^2\right]\end{aligned}$$
\end{corollary}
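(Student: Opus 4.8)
The plan is to derive both identities directly from the asymptotic variance expressions in Theorem~\ref{thm:known_propensity_clt_fixed} by exploiting a single structural consequence of correct or over-specification: the theoretical projection ${\mathcal P}$ acts as the identity on the outcome regression vector. Concretely, I would first establish the key fact
\[
    {\mathcal P}\bm g_c^{(\bm A_c)} = \bm g_c^{(\bm A_c)}, \qquad \text{equivalently} \qquad (I - {\mathcal P})\bm g_c^{(\bm A_c)} = \bm 0,
\]
whenever $\mathrm{col.sp}(\bm \Lambda)\geq \mathrm{col.sp}(\bm \Lambda^\ast)$. This is the same mechanism already used to obtain $\mu_f(\bm \Lambda,\bm \Lambda)=\mu_f$ in Corollary~\ref{cor:correct_specification_fixed}, so it may be invoked directly; for completeness I would re-derive it as follows. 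Over-specification means $\bm \Lambda^\ast = \bm \Lambda\bm B$ for some fixed matrix $\bm B$, hence $\bm g_c^{(\bm A_c)} = \bm \Lambda_c^\ast(\bm A_c)\bm h = \bm \Lambda_c(\bm A_c)\bm h_0$ with $\bm h_0 := \bm B\bm h$ the same fixed vector for every cluster, treatment pattern, and covariate value. Substituting this into $\bm h_{\bm \Lambda^\ast} = \bm V^{-1}\mathbb E[\bm \Lambda_c(\bm A_c)^\top\bm g_c^{(\bm A_c)}]$ and using $\bm V = \mathbb E[\bm \Lambda_c(\bm A_c)^\top\bm \Lambda_c(\bm A_c)]$ gives $\bm h_{\bm \Lambda^\ast} = \bm V^{-1}\bm V\bm h_0 = \bm h_0$, so that ${\mathcal P}\bm g_c^{(\bm A_c)} = \bm \Lambda_c(\bm A_c)\bm h_{\bm \Lambda^\ast} = \bm \Lambda_c(\bm A_c)\bm h_0 = \bm g_c^{(\bm A_c)}$, as claimed.

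For $\sigma^2_{\stwo}$ the simplification is then immediate. Plugging $(I-{\mathcal P})\bm g_c^{(\bm A_c)} = \bm 0$ into the expression from Theorem~\ref{thm:known_propensity_clt_fixed} eliminates the first summand inside the variance, leaving $\mathrm{Var}[\bm w_{\mathrm{IPW},c}^\top{\mathcal P}\bm g_c^{(\bm A_c)}] + \sigma^2\mathbb E\|{\mathcal P}\bm w_{\mathrm{IPW},c}\|^2$; replacing ${\mathcal P}\bm g_c^{(\bm A_c)}$ by $\bm g_c^{(\bm A_c)}$ and rewriting the scalar $\bm w_{\mathrm{IPW},c}^\top\bm g_c^{(\bm A_c)} = {\bm g_c^{(\bm A_c)}}^\top\bm w_{\mathrm{IPW},c}$ yields the stated form.

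The $\sigma^2_{\sone}$ case requires one additional step, a law-of-total-variance decomposition, which I expect to be the only nonroutine part. After substituting ${\mathcal P}\bm g_c^{(\bm A_c)} = \bm g_c^{(\bm A_c)}$ in the second and third terms, the argument of the variance becomes $S + R$, where $S := {\bm g_c^{(\bm A_c)}}^\top{\mathcal P}\bm w_{\mathrm{IPW},c} - \mathbb E[{\bm g_c^{(\bm A_c)}}^\top{\mathcal P}\bm w_{\mathrm{IPW},c}\mid \bm X_c]$ has conditional mean zero given $\bm X_c$, and $R := \mathbb E[\bm w_{\mathrm{IPW},c}^\top\bm g_c^{(\bm A_c)}\mid \bm X_c]$ is $\bm X_c$-measurable. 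I would then check that $\mathrm{Cov}(S,R)=0$ by iterated expectations ($\mathbb E[SR] = \mathbb E[R\,\mathbb E[S\mid\bm X_c]] = 0$ and $\mathbb E[S]=0$), so that $\mathrm{Var}(S+R)=\mathrm{Var}(S)+\mathrm{Var}(R)$. Here $\mathrm{Var}(R) = \mathrm{Var}(\mathbb E[{\bm g_c^{(\bm A_c)}}^\top\bm w_{\mathrm{IPW},c}\mid\bm X_c])$ is the first stated term, while $\mathrm{Var}(S) = \mathbb E[\mathrm{Var}(S\mid\bm X_c)] = \mathbb E[\mathrm{Var}({\bm g_c^{(\bm A_c)}}^\top{\mathcal P}\bm w_{\mathrm{IPW},c}\mid\bm X_c)]$ (the subtracted conditional mean being constant given $\bm X_c$) is the second; adding back the untouched noise term $\sigma^2\mathbb E\|{\mathcal P}\bm w_{\mathrm{IPW},c}\|^2$ completes the derivation. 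The only subtlety to verify carefully is the vanishing cross-term, which rests solely on $S$ having conditional mean zero; everything else is bookkeeping once the projection-fixes-$\bm g$ identity is in hand.
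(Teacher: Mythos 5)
Your proof is correct and follows the same route the paper implicitly takes: the corollary is presented as an immediate consequence of Theorem~\ref{thm:known_propensity_clt_fixed}, the mechanism being precisely that correct or over-specification makes ${\mathcal P}$ fix $\bm g_c^{(\bm A_c)}$ (via the factorization $\bm \Lambda^\ast = \bm \Lambda\bm B$ with $\bm B$ a fixed matrix, the same reading of column-space containment the paper uses in the proof of Theorem~\ref{thm:decide_true_structure}). Your direct substitution for $\sigma^2_{\stwo}$ and the law-of-total-variance step with the vanishing cross-term for $\sigma^2_{\sone}$ are exactly the bookkeeping the paper leaves implicit.
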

Like Theorem~\ref{thm:propensity_estimators_fixed_h}, a more informative low-rank structure yields a smaller value of $\sigma^2_{\stwo}(\bm \Lambda, \bm \Lambda)$ by reducing the second term alone. In particular, we obtain the ordering $\sigma^2_{\sthree}(\bm \Lambda, \bm \Lambda)\leq \sigma^2_{\stwo}(\bm \Lambda, \bm \Lambda)\leq \sigma^2_{\mathrm{IPW}}$, which leads to a surprising fact that despite the lack of knowledge of the true propensity score, the balancing estimator is the most efficient. As discussed in Section~\ref{sec:asymp_normality}, a large decrease in this variance is due to the balancing equation satisfied by $\wnulllamb$. Finally, similar to Corollary~\ref{cor:correct_specification_indiv}, $\sigma^2_{\sone}$ is not always smaller than $\sigma^2_{\mathrm{IPW}}$. However, we expect this to be the case when the signal-to-noise ratio is not too high.

In sum, the balancing estimator is preferred under the linear model setting so long as Assumption~\ref{assm:balancing_feasibility} is satisfied. As discussed in Section~\ref{sec:balancing_feasibility}, this assumption holds under mild conditions. However, if the assumption is not met, the projection estimator should be used. If the linear model assumption does not hold, the weighted projection estimator is the only non-trivial choice. As mentioned earlier, we expect this estimator to outperform the IPW estimator so long as the signal-to-noise ratio is not too high.

Finally, the results given in Theorems~\ref{thm:var_estimation}~and~\ref{thm:decide_true_structure} can also be extended to the estimators discussed in this Section~\ref{sec:extension_known_propensity}. For example, under a general low-rank structure assumption, the asymptotic normality of $T(\wipwlamb(e))$ can be established using the classic CLT (see the discussion immediately preceding Theorem~\ref{thm:clt_indiv}). Hence, the sample variance can be used to consistently estimate the asymptotic variance. Under the linear model assumption of Section~\ref{sec:lm_balancing}, Theorem~\ref{thm:var_estimation} can be extended to the projection and weighted projection estimators as well; the expressions of these variance estimators are given in Equation~\eqref{eqn:var_est_exprn}.

In addition, under the linear model assumption, Theorem~\ref{thm:decide_true_structure} also holds if instead of the balancing weights, one uses the weights of the projection estimator. We prove this in Section~\ref{sec:proof_decide_true_structure}. If not under the linear model assumption, a similar result can also be derived for the weighted projection estimator. In particular, define, 
\[
    \hat \sigma_{lL}^{-2} = \frac{1}{n}\sum_{c=1}^n \left(\sum_{i=1}^n \left(\hat{\bm w}^{\sone}_{\bm \Lambda_l}(e)_{ci} - \hat{\bm w}^{\sone}_{\bm \Lambda_L}(e)_{ci}\right)y_{ci}\right)^2 - \left(T\left(\hat{\bm w}^{\sone}_{\bm \Lambda_l}(e)\right) - T\left(\hat{\bm w}^{\sone}_{\bm \Lambda_L}(e)\right)\right)^2,
\]
and with the abuse of notation, $S_{lL} = n\hat\sigma_{lL}^{-2}\left(T\left(\hat{\bm w}^{\sone}_{\bm \Lambda_l}(e)\right) - T\left(\hat{\bm w}^{\sone}_{\bm \Lambda_L}(e)\right)\right)^2$. Note that $T\left(\hat{\bm w}^{\sone}_{\bm \Lambda_l}(e)\right) - T\left(\hat{\bm w}^{\sone}_{\bm \Lambda_L}(e)\right)$ is also a sum of i.i.d. terms and $\hat \sigma_{lL}^{-2}$ is the sample variance of these terms (and hence, is consistent for the population version). A straightforward application of the CLT establishes that the conclusions of Theorem~\ref{thm:decide_true_structure} also hold in this case.

\section{Weighted projection estimators as a generalization of existing weighting estimators}
\label{sec:weighted_generalization}

In this section, we highlight the fact that the weighted projection estimator is a generalization of many existing estimators proposed in the literature. For example, the literature dealing with observational studies under (partial) interference has predominantly dealt with a low-rank assumption based on a graph that summarizes the interference (for example, see \cite{liuhudgensbacker-dreps2016,kilpatrick2024} and the references therein). In particular, many of these assumptions can be encompassed under the umbrella of a discrete version of the exposure mapping \citep{aronow2017} assumption, as we define below:
\begin{defn}[Discrete exposure mapping]
    \label{defn:exposure_mappings}
    For an $i^{\mathrm{th}}$ unit in the $c^{\mathrm{th}}$ cluster, a discrete exposure mapping is a function $\phi_{ci}:\{0,1\}^{M_c}\mapsto \mathcal S_{ci}$ where $\mathcal S_{ci}$ is finite, discrete.
\end{defn}
We can make a general low-rank structural assumption based on a discrete exposure mapping: For any $i^{\mathrm{th}}$ unit in the $c^{\mathrm{th}}$ cluster, $\phi_{ci}(\bm a_c) = \phi_{ci}(\bm a_c')\implies g^{(\bm a_c)}_{ci}(\bm X_c) = g^{(\bm a_c')}_{ci}(\bm X_c)$.
In particular, if $\phi_{ci}$ is based on the treatment pattern of the set of $k$-nearest neighbors of the $i^{\mathrm{th}}$ unit, we get the $k$-nearest neighbors graph assumption, discussed in Example~\ref{ex:2}. In case it maps a treatment pattern to the number of treated in the neighborhood, we get the stratified interference assumption of Example~\ref{ex:1}. The following result gives a simplification of the weights for the weighted projection estimator under the discrete exposure mapping assumption.
\begin{lemma}    
\label{lem:epxposure_map_weights}
    Assume that $\bm \Lambda$ is a correctly specified low-rank structure as described in the beginning of Section~\ref{sec:varying_asymp} and is based on discrete exposure mappings $\{\phi_{ci}\}$ given in Definition~\ref{defn:exposure_mappings}. For the $i^{\mathrm{th}}$ unit in the $c^{\mathrm{th}}$ cluster, define,
    \begin{align*}
        e_{{\phi}_{ci}}(\bm a_c,\bm X_c) = \sum_{\substack{\bm a^*_c \in \{0,1\}^{M_c}:\\ \phi_{ci}(\bm a^*_c) = \phi_{ci}(\bm a_c)}}e(\bm a^*_c, \bm X_c)\textrm{ and } f_{{\phi}_{ci}}(\bm a_c,\bm X_c) = \sum_{\substack{\bm a^*_c \in \{0,1\}^{M_c}:\\ \phi_{ci}(\bm a^*_c) = \phi_{ci}(\bm a_C)}}f(\bm a^*_c, \bm X_c).
    \end{align*}
    Then, we have that
    \[
        \hat{\bm w}^{\sone}_{\bm \Lambda}(e)_{ci} = \frac{f_{\phi_{ci}}(\bm A_c,\bm X_c)}{M_c\cdot e_{\phi_{ci}}(\bm A_c,\bm X_c)}.
    \]
\end{lemma}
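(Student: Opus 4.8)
The plan is to start from the closed-form expression for the weighted projection weights in Theorem~\ref{thm:simplified_indiv_weights}, namely $\hat{\bm w}^{\sone}_{\bm \Lambda}(e)_{ci} = \bm R_c^\top \bm D_c^{-1/2} \bm P_{\bm D_c^{1/2}\bm \Lambda_{ci}} \bm D_c^{1/2}\tilde{\bm w}_{\mathrm{IPW},c,i}$, and to evaluate each factor explicitly using the special structure that a discrete exposure mapping imposes on $\bm \Lambda_{ci}$. Under the assumption $\phi_{ci}(\bm a_c)=\phi_{ci}(\bm a_c')\implies g_{ci}^{(\bm a_c)}(\bm X_c)=g_{ci}^{(\bm a_c')}(\bm X_c)$, the vector $\bm g_{ci}(\bm X_c)$ is constant across all treatment patterns sharing a common exposure value, so its coordinate space is exactly the span of the exposure-class indicators. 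Since the projection $\bm P_{\bm D_c^{1/2}\bm \Lambda_{ci}}$ depends on $\bm \Lambda_{ci}$ only through $\mathrm{col.sp}(\bm \Lambda_{ci})$, I can without loss of generality take $\bm \Lambda_{ci}$ to be the $2^{M_c}\times|\mathcal S_{ci}|$ indicator matrix whose column indexed by $s\in\mathcal S_{ci}$ has a $1$ in row $j$ exactly when $\phi_{ci}(\bm a_c^{(j)})=s$. The structural observation that drives everything is that these columns have pairwise disjoint supports (each pattern $\bm a_c^{(j)}$ has a unique exposure value), and $\bm D_c=\mathrm{diag}(e(\bm a_c^{(j)},\bm X_c))_j$ is diagonal, so the columns of $\bm D_c^{1/2}\bm \Lambda_{ci}$ remain mutually orthogonal.

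Because of this orthogonality, I would write the projection as a sum of rank-one projections, one per exposure level: $\bm P_{\bm D_c^{1/2}\bm \Lambda_{ci}}=\sum_{s}\bm v_s\bm v_s^\top/\|\bm v_s\|^2$, where $\bm v_s$ denotes the $s$-th column of $\bm D_c^{1/2}\bm \Lambda_{ci}$. The norm collapses exactly onto the aggregated propensity, $\|\bm v_s\|^2=\sum_{j:\phi_{ci}(\bm a_c^{(j)})=s}e(\bm a_c^{(j)},\bm X_c)=e_{\phi_{ci}}(\bm a_c,\bm X_c)$ for any $\bm a_c$ with $\phi_{ci}(\bm a_c)=s$. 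This identity is precisely why the aggregated propensity $e_{\phi_{ci}}$ ends up in the denominator of the final expression.

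With the projection in hand, I would apply the operators to $\tilde{\bm w}_{\mathrm{IPW},c,i}$ from right to left, the whole computation amounting to a cancellation of $\sqrt{e}$ factors. Since the $j$-th entry of $\tilde{\bm w}_{\mathrm{IPW},c,i}$ is $f(\bm a_c^{(j)},\bm X_c)/(M_c\, e(\bm a_c^{(j)},\bm X_c))$, the vector $\bm D_c^{1/2}\tilde{\bm w}_{\mathrm{IPW},c,i}$ has $j$-th entry $f(\bm a_c^{(j)},\bm X_c)/(M_c\sqrt{e(\bm a_c^{(j)},\bm X_c)})$. Taking the inner product with $\bm v_s$ then sweeps out the $\sqrt{e}$ weights and aggregates the counterfactual weights to $f_{\phi_{ci}}(\bm a_c,\bm X_c)/M_c$; dividing by $\|\bm v_s\|^2=e_{\phi_{ci}}$, multiplying back by $\bm v_s$, and applying $\bm D_c^{-1/2}$ leaves, in every coordinate $j$ with exposure $s$, the value $f_{\phi_{ci}}(\bm a_c^{(j)},\bm X_c)/(M_c\, e_{\phi_{ci}}(\bm a_c^{(j)},\bm X_c))$. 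Finally, $\bm R_c^\top$ selects the coordinate corresponding to the observed pattern $\bm A_c$, yielding $\hat{\bm w}^{\sone}_{\bm \Lambda}(e)_{ci}=f_{\phi_{ci}}(\bm A_c,\bm X_c)/(M_c\, e_{\phi_{ci}}(\bm A_c,\bm X_c))$, as claimed.

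The only genuine obstacle is the disjoint-support orthogonality observation in the first paragraph; once that is established the remainder is exactly the bookkeeping of $\sqrt{e}$ cancellations described above. I would additionally note that positivity (Assumption~\ref{assm:pos}) guarantees $e_{\phi_{ci}}>0$, so that all inverses are well-defined, and that the argument is independent of which representative pattern $\bm a_c$ is chosen within an exposure class, precisely because both $f_{\phi_{ci}}$ and $e_{\phi_{ci}}$ are constant on exposure classes by construction.
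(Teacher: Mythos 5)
Your proposal is correct and follows essentially the same route as the paper's proof: your observation that the columns of $\bm D_c^{1/2}\bm \Lambda_{ci}$ have disjoint supports (hence the projection splits into rank-one pieces with norms $e_{\phi_{ci}}$) is exactly the paper's computation that the Gram matrix $\bm \Lambda_{ci}^\top\bm D_c\bm \Lambda_{ci}$ is diagonal with entries $e(\phi_{ci\cdot l})$, and your $\sqrt{e}$-cancellation bookkeeping matches the paper's identity $\bm D_c\tilde{\bm w}_{\mathrm{IPW},c,i} = \bm f_c\, M_c^{-1}$ followed by $\bm \Lambda_{ci}^\top\bm f_c = (f(\phi_{ci\cdot l}))_l$.
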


The above lemma is intuitive and suggests that our derived weights in this case exactly takes the form of an IPW estimator. However, instead of using the usual propensity score or the counterfactual weights, we use a version where they are marginalized over treatment patterns with the given exposure mapping value as the observed treatment. If our exposure mapping is the treatment pattern of the entire neighborhood, then this reduces to the IPW-type estimator discussed in, for example, \cite{liuhudgensbacker-dreps2016}.

\section{Examples of low-rank structure under linear models}
\label{sec:linear_low_rank_ex}

In this appendix, we discuss a few examples of how low-rank structures, like the ones in Examples~\ref{ex:1}--\ref{ex:4}, can be directly imposed on $\bm \beta$ under linear models. 
\begin{example}[Linear model and stratified interference restricted to the first $k$-neighbors]
\label{ex:6}
The stratified interference assumption under linear models implies that the \textit{effect} of having a given number of treated neighbors, $r$, is the same for all units without depending on $c$ or $i$. This effect is given by a vector $\bm h^{(r)}$ whose length is equal to that of $\bm X_{ci}$, for $0\le r\le k$. Form $\bm h$ by concatenating $\{\bm h^{(r)}\}_{0\leq r\leq k}$, and use $r_{ci}(\bm a_c)$ to denote the number of treated units among the first $k$-neighbors. If we set $\bm \Lambda_{ci}(\bm a_c):=[\mathbb I(r_{ci}(\bm a_c))=0,\cdots, \mathbb I(r_{ci}(\bm a_c) = k)]\otimes \bm I_p$, we have that, $\bm \beta_{ci}(\bm a) = \bm \Lambda_{ci}(\bm a_c)\bm h$. Then, the length of $\bm h$ is $(k+1)p$, and does not vary with $M$.
\end{example}

\begin{example}[Linear model and additive interference]
\label{ex:7}
First, consider the case where all cluster sizes are the same and equal $m\in \mathbb N$. Assume that the units can be enumerated and assigned types 1 through $m$. Then, the restriction can take the following form, $\bm \beta_{ci}^{(\bm a_c)} =\sum_{j=1}^{m}\bm \gamma_j(a_{cj})$.
We interpret $\bm \gamma_j(a_{cj})$ when $a_{cj}\in \{0,1\}$ as the additive contribution of the $j^{\mathrm{th}}$ unit to $\bm \beta_{ci}^{(\bm a_c)}$ based on its treatment status. The function $\bm \gamma_j$ does not depend on $c$ or $i$. Similarly to Example~\ref{ex:3}, this is a low-rank structure on $\bm \beta$, with $\bm h:=(\bm \gamma_1(0)^\top, \bm \gamma_1(1)^\top,\cdots, \bm \gamma_m(0)^\top,\bm \gamma_m(1)^\top)^\top$.  Note that $\bm h$ is of fixed length $2mp$, and does not vary with $M$.

The case with varying cluster sizes requires an additional assumption. In particular, we assume that the cluster sizes are bounded by $s$,  there are $s$ different \textit{types} of units, and each cluster is a non-empty subset of these $s$ types where each type can occur at most once. Then, the $j^{\mathrm{th}}$ type of unit contributes $\bm \gamma_j(a_{cj})$, depending on its treatment status $a_{cj}$. In this case, we can write $\bm \beta_{ci}^{(\bm a_c)} =\sum_{j=1}^{s}q_{cj}\bm \gamma_j(a_{cj})$, where now $q_{cj}$ is the indicator for the presence of the $j^{\mathrm{th}}$ type in the $c^{\mathrm{th}}$ cluster and can be thought of as a covariate of the $c^{\mathrm{th}}$ cluster. This assumption can analogously be represented as a low-rank structure, with $\bm h:=(\bm \gamma_1(0)^\top, \bm \gamma_1(1)^\top,\cdots, \bm \gamma_s(0)^\top,\bm \gamma_s(1)^\top)^\top$, a fixed vector of length $2sp$.
\end{example}

\section{The plug-in OLS estimator}
\label{sec:plug-in}

In this section, we derive an alternative interpretation of our balancing estimator as a regression-based estimator. Under the linear model assumption, the estimand is equal to
$$\mu_f = \mathbb E\left[\frac{1}{M_c}\sum_{i=1}^{M_c}\sum_{\bm a\in \{0,1\}^{M_c}}\bm \Lambda_{ci}(\bm a_c)f(\bm a_c,\bm X_c)\right]^\top\bm h,$$
and the observed data model is $\bm y = \bm R\bm \Lambda\bm h + \bm \epsilon$. Then, the plug-in regression-based estimator of the causal effect is
\begin{align*}
    \hat{T}_{\mathrm{OLS}} 
    = \frac{1}{n}\sum_{i=1}^n\left(\frac{1}{M_c}\sum_{j=1}^{M_c}\sum_{\bm a_c\in \{0,1\}^{M_c}}\bm \Lambda_{ci}(\bm a_c)f(\bm a_c,\bm X_c)\right)^\top(\bm R\bm \Lambda)^+\bm y  = \frac{1}{n}\bm f^\top\bm \Lambda (\bm R\bm \Lambda)^{+}\bm y.
\end{align*}
where $(\bm R\bm \Lambda)^+\bm y$ is the pseudoinverse-based OLS estimate of $\bm h$. Since $\wnulllamb = (\bm \Lambda^\top\bm R^\top)^+\bm \Lambda^\top\bm f$, we have $\hat{T}_{\mathrm{OLS}} = T(\wnulllamb)$. We summarize this observation in a proposition below; its proof follows immediately given the above observation, and hence is skipped.

\begin{prop}
\label{prop:ols_plug_in}
The equality $\hat{T}_{\mathrm{OLS}} = T(\wnulllamb)$ holds. Thus, if Assumptions~\ref{assm:partial}--\ref{assm:outcome_model},~\ref{ass:low_rank}~and~\ref{assm:iid_error} hold, the OLS estimator $\hat{T}_{\mathrm{OLS}}$ is uniformly unbiased for $\mathbb E\left[T(\wnulllamb)\mid B_{\bm \Lambda}\right] = \frac{1}{n}\mathbb E\left[\bm h^\top\bm \Lambda\bm f\mid B_{\bm \Lambda}\right]$ if and only if the optimization problem given in Equation~\eqref{eqn:optimiation_problem} with $\phi = \phinull$ and $\mathcal C_{\bm \Lambda} = \Cnulllamb$ is feasible.
\end{prop}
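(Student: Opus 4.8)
The plan is to establish the two assertions separately: first the deterministic identity $\hat{T}_{\mathrm{OLS}} = T(\wnulllamb)$, and then the unbiasedness equivalence, which I would reduce to a single conditional-expectation computation together with a column-space characterization of feasibility.

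For the identity, I would start from Theorem~\ref{thm:weights_expression}, which gives $\wnulllamb = (\bm \Lambda^\top\bm R^\top)^{+}\bm \Lambda^\top\bm f$, and substitute it into $T(\wnulllamb) = \frac{1}{n}(\wnulllamb)^\top\bm y$. The only non-routine ingredient is the pseudoinverse transpose rule $(\bm M^{+})^\top = (\bm M^\top)^{+}$, applied with $\bm M = \bm \Lambda^\top\bm R^\top$, which yields $\big((\bm \Lambda^\top\bm R^\top)^{+}\big)^\top = (\bm R\bm \Lambda)^{+}$. Transposing then gives $(\wnulllamb)^\top = \bm f^\top\bm \Lambda(\bm R\bm \Lambda)^{+}$, so that $T(\wnulllamb) = \frac{1}{n}\bm f^\top\bm \Lambda(\bm R\bm \Lambda)^{+}\bm y$, which is exactly the expression derived for $\hat{T}_{\mathrm{OLS}}$. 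I would emphasize that this holds regardless of feasibility, since the pseudoinverse is always defined.

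For the equivalence I would work under the linear low-rank model $\bm y = \bm R\bm \Lambda\bm h + \bm \epsilon$ with $\mathbb E[\bm \epsilon\mid\bm X,\bm A]=\bm 0$ (Assumption~\ref{assm:outcome_model}) and compute
\[
    \mathbb E[\hat{T}_{\mathrm{OLS}}\mid\bm X,\bm A] = \frac{1}{n}\bm f^\top\bm \Lambda(\bm R\bm \Lambda)^{+}(\bm R\bm \Lambda)\bm h = \frac{1}{n}\bm f^\top\bm \Lambda\bm P\bm h,
\]
where $\bm P := (\bm R\bm \Lambda)^{+}(\bm R\bm \Lambda)$ is the orthogonal projector onto the row space of $\bm R\bm \Lambda$, i.e.\ onto $\mathrm{col.sp.}(\bm \Lambda^\top\bm R^\top)$. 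The conditional bias relative to $\frac{1}{n}\bm f^\top\bm \Lambda\bm h$ is therefore $\frac{1}{n}\bm f^\top\bm \Lambda(\bm I-\bm P)\bm h$, which vanishes for every $\bm h$ (i.e.\ uniformly) if and only if $\bm \Lambda^\top\bm f = \bm P\bm \Lambda^\top\bm f$, that is, if and only if $\bm \Lambda^\top\bm f\in\mathrm{col.sp.}(\bm \Lambda^\top\bm R^\top)$. The final step is to recognize that this column-space membership is precisely the solvability of the balancing equation~\eqref{eqn:balancing_equation}, which is by definition the event $B_{\bm \Lambda}$ and the feasibility of the optimization problem~\eqref{eqn:optimiation_problem}. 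On $B_{\bm \Lambda}$ the conditional bias is zero for all $\bm h$, so taking expectations restricted to $B_{\bm \Lambda}$ gives $\mathbb E[\hat{T}_{\mathrm{OLS}}\mid B_{\bm \Lambda}] = \frac{1}{n}\mathbb E[\bm f^\top\bm \Lambda\bm h\mid B_{\bm \Lambda}]$, matching $\mathbb E[T(\wnulllamb)]$. For the converse, if the problem is infeasible with positive probability then $(\bm I-\bm P)\bm \Lambda^\top\bm f\neq\bm 0$, and choosing $\bm h\propto(\bm I-\bm P)\bm \Lambda^\top\bm f$ produces a nonzero bias, breaking uniform unbiasedness.

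The step I expect to require the most care is the bookkeeping around the event $B_{\bm \Lambda}$: because the target is itself a conditional expectation given $B_{\bm \Lambda}$, I must argue unbiasedness conditionally and keep the uniformity quantifier (over all $\bm h$) separate from the fixed feasibility event, so that the ``if and only if'' is airtight in both directions. The underlying linear algebra---that $(\bm R\bm \Lambda)^{+}(\bm R\bm \Lambda)$ is the row-space projector and that its fixing $\bm \Lambda^\top\bm f$ is equivalent to solvability of the balancing equation---is routine, but I would state it cleanly since it is the hinge connecting feasibility to unbiasedness.
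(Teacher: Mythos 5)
Your proof is correct and follows essentially the same route as the paper: the identity comes from Theorem~\ref{thm:weights_expression} together with the pseudoinverse transpose rule $\bigl((\bm \Lambda^\top\bm R^\top)^{+}\bigr)^\top = (\bm R\bm \Lambda)^{+}$, and the unbiasedness--feasibility equivalence reduces to whether $\bm \Lambda^\top\bm f$ lies in the column space of $\bm \Lambda^\top\bm R^\top$, i.e., whether the balancing equation~\eqref{eqn:balancing_equation} is solvable. The paper presents this as an immediate observation, leaning on its earlier discussion of $\Cnulllamb$ and the event $B_{\bm \Lambda}$; your explicit verification via the projector $\bm P = (\bm R\bm \Lambda)^{+}(\bm R\bm \Lambda)$ simply fills in the linear algebra that the paper leaves implicit, and your conditional-on-$(\bm X,\bm A)$ framing of the uniformity keeps the ``if and only if'' sound.
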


This proposition shows an equivalence between our balancing estimator and the plug-in OLS estimator. However, even when the plug-in estimator is equivalent to the balancing estimator, the balancing framework is still useful for a number of reasons. Importantly, even though the plug-in OLS estimate can always be obtained, the estimator estimates $\frac{1}{n}\mathbb E\left[\bm h^\top\bm \Lambda\bm f\mid B_{\bm \Lambda}\right]$ without bias (which, as Corollary~\ref{cor:correct_specification_fixed} shows, is asymptotically equal $\mu_f$) only if the balancing equation is feasible. See the end of Section~\ref{sec:lm_balancing} for further justifications.

\section{Feasibility of the balancing equations}
\label{sec:balancing_feasibility}

In this section, we discuss the asymptotic feasibility of the balancing equations. Assume the linear model setting introduced in Example~\ref{ex:5}:
\[
    \bm g_{ci}^{(\bm a_c)}(\bm X_c) = \bm X_{ci}^\top\bm \beta_{ci}(\bm a_c).
\]
Suppose we extend the description in Example~\ref{ex:5} a little bit more and write the low-rank structure on the coefficients as: $\bm \beta = \bm \Lambda^{(\bm \beta)}\bm h$ can be decomposed as $\bm \beta_{ci}(\bm a_c) = (\bm \Lambda_{ci}^{(\bm \beta)}(\bm a_c)\otimes \bm I_p)\bm h, \forall c,i$ (here we assume $\bm h$ to be of length $p\times \ell$). The following assumption summarizes a set of sufficient conditions under which the balancing equations asymptotically admit a solution.
\begin{ass}
    \label{assm:sufficient_feasibility}
    The following conditions are satisfied:
    \begin{enumerate}[label=(\alph*)]
        \item There exists a set $\Gamma = \{\bm \Lambda^{(1)},\cdots, \bm \Lambda^{(r)}\}\subset \mathbb R^{\ell}$ such that, $\bm \Lambda_{ci}^{(\bm \beta)}(\bm a_c)\in \Gamma,\forall c,i,\bm a_c$, and for any $c$, $\mathbb P\left(\bm \Lambda_{c1}^{(\bm \beta)}(\bm A_c) = \bm \Lambda^{(j)} \right)>0$, for all $1\leq j\leq r$.
        \item The distribution of $\bm X_{c1}$ is absolutely continuous.
    \end{enumerate}
\end{ass}
The class of low-rank structure that satisfy Assumption~\ref{assm:sufficient_feasibility}(a) is rich, and include any low-rank structure (with the linear model assumption) that results in $\bm \Lambda_{ci}^{(\bm \beta)}(\bm a_c)$ become a binary vector. This includes Examples~\ref{ex:1}--\ref{ex:3} and also the types of structures introduced in Example~\ref{ex:4}. If $\bm \Lambda_{ci}^{(\bm \beta)}(\bm a_c)$ is a binary vector, then it can have at most $2^{\ell}$ different possible choices, thereby satisfying Assumption~\ref{assm:sufficient_feasibility}(a). Also note that, we have deliberately used only the unit-level index 1 in stating the condition $\mathbb P\left(\bm \Lambda_{c1}^{(\bm \beta)}(\bm A_c) = \bm \Lambda^{(r)} \right)>0$. In particular, we only need that there is one unit per cluster, which has non-zero porbability of realizing every low-rank structure in $\bm \Gamma$.
Assumption~\ref{assm:sufficient_feasibility}(b) is not too restrictive as it only requires that at least one of the unit-level covariates in $\bm X_c$ have an absolutely continuous distribution. 

Finally, the following theorem shows that this and other mild assumptions, the balancing equations admit a solution with probability that converges to 1, enabling us to invoke the asymptotic theory established in Section~\ref{sec:asymptotics} for our balancing estimators.

\begin{theorem}
    \label{thm:bal_feasible}
    Suppose under the above linear model assumption, Assumptions~\ref{assm:pos}, \ref{assm:super_pop} and \ref{assm:sufficient_feasibility} are satisfied. Then, Assumption~\ref{assm:balancing_feasibility} is also satisfied. 
\end{theorem}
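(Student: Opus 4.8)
The plan is to reduce the event $B_{\bm\Lambda}$ --- that the balancing equation~\eqref{eqn:balancing_equation}, $\bm\Lambda^\top\bm R^\top\bm w=\bm\Lambda^\top\bm f$, admits a solution in $\bm w$ --- to a statement about the row space of $\bm R\bm\Lambda$, and then to show that this space eventually contains the right-hand side with probability tending to one. Since $\bm\Lambda^\top\bm R^\top=(\bm R\bm\Lambda)^\top$, the system $(\bm R\bm\Lambda)^\top\bm w=\bm\Lambda^\top\bm f$ is feasible if and only if $\bm\Lambda^\top\bm f\in\mathrm{col.sp.}((\bm R\bm\Lambda)^\top)=\mathrm{row.sp.}(\bm R\bm\Lambda)$. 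Under the linear model of Example~\ref{ex:5} with $\bm\beta_{ci}(\bm a_c)=(\bm\Lambda_{ci}^{(\bm\beta)}(\bm a_c)\otimes\bm I_p)\bm h$, the row of $\bm R\bm\Lambda$ indexed by unit $(c,i)$ equals $\bm\Lambda_{ci}^{(\bm\beta)}(\bm A_c)\otimes\bm X_{ci}^\top$, and every summand of $\bm\Lambda^\top\bm f$ is a multiple of some $\bm\Lambda_{ci}^{(\bm\beta)}(\bm a_c)\otimes\bm X_{ci}^\top$. By Assumption~\ref{assm:sufficient_feasibility}(a), each block $\bm\Lambda_{ci}^{(\bm\beta)}(\cdot)$ takes values in the finite set $\Gamma=\{\bm\Lambda^{(1)},\dots,\bm\Lambda^{(r)}\}$, so both $\mathrm{row.sp.}(\bm R\bm\Lambda)$ and $\bm\Lambda^\top\bm f$ lie in the fixed ambient subspace $W:=\mathrm{span}(\Gamma)\otimes\mathbb R^p$.

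First I would isolate the relevant values. Let $T_0$ be the set of indices $t$ such that a single cluster drawn from $\mathcal G$ realizes $\bm\Lambda_{ci}^{(\bm\beta)}(\bm A_c)=\bm\Lambda^{(t)}$ for some unit with positive probability. By Positivity (Assumption~\ref{assm:pos}) every treatment pattern occurs with positive conditional probability, so any value attained by an occurring cluster under some pattern $\bm a_c$ is in fact observationally realizable; hence every summand of $\bm\Lambda^\top\bm f$ lies in $W_0:=\mathrm{span}\{\bm\Lambda^{(t)}:t\in T_0\}\otimes\mathbb R^p$, giving $\bm\Lambda^\top\bm f\in W_0$ almost surely. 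It therefore suffices to prove that $\mathrm{row.sp.}(\bm R\bm\Lambda)=W_0$ with probability tending to one.

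The core of the argument, which I expect to be the main obstacle, is the spanning claim for the observed covariates within each realized value. For fixed $t\in T_0$ I would show that, with probability tending to one, the covariates $\{\bm X_{ci}\}$ of the units realizing $\bm\Lambda_{ci}^{(\bm\beta)}(\bm A_c)=\bm\Lambda^{(t)}$ span $\mathbb R^p$. This combines three ingredients: (i) by i.i.d. sampling of clusters (Assumption~\ref{assm:super_pop}) together with positivity, the per-cluster probability of realizing $\bm\Lambda^{(t)}$ with covariate in any fixed open set is bounded below by a positive constant, letting me extract one qualifying, independent covariate draw per cluster; (ii) by continuity of the covariate law (Assumption~\ref{assm:sufficient_feasibility}(b)) the relevant conditional distribution of $\bm X_{ci}$ is absolutely continuous, hence assigns zero mass to every proper linear subspace, so $p$ generic realizations are linearly independent almost surely; and (iii) since $M=\sum_c M_c\to\infty$ as $n\to\infty$, a Borel--Cantelli / law-of-large-numbers argument over the independent clusters produces arbitrarily many such realizations, whose covariates escape every proper subspace with probability tending to one. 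The delicate point is that the conditioning event $\{\bm\Lambda_{ci}^{(\bm\beta)}(\bm A_c)=\bm\Lambda^{(t)}\}$ may depend on both the treatment and the covariates; I would handle this by conditioning on $\bm X_c$, using positivity to decouple the assignment from the covariate support, and checking that the event does not confine $\bm X_{ci}$ to a lower-dimensional set, so that non-degeneracy of the conditional covariate law is preserved.

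Finally I would assemble the pieces using the Kronecker span identity $\mathrm{span}\{\bm a_t\otimes\bm v:t,\ \bm v\in\mathbb R^p\}=(\mathrm{span}\{\bm a_t\})\otimes\mathbb R^p$. Once, for each $t\in T_0$, the realized covariates span $\mathbb R^p$, the rows $\bm\Lambda^{(t)}\otimes\bm X_{ci}^\top$ span $\bm\Lambda^{(t)}\otimes\mathbb R^p$; taking the union over the (at most $r$, hence finitely many) indices $t\in T_0$ yields $\mathrm{row.sp.}(\bm R\bm\Lambda)=\mathrm{span}\{\bm\Lambda^{(t)}:t\in T_0\}\otimes\mathbb R^p=W_0$ on an event of probability tending to one, where a union bound over $T_0$ keeps the overall failure probability vanishing. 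Since $\bm\Lambda^\top\bm f\in W_0$ always, this gives $\bm\Lambda^\top\bm f\in\mathrm{row.sp.}(\bm R\bm\Lambda)$, i.e. feasibility of the balancing equation, with probability tending to one. Hence $\mathbb P(B_{\bm\Lambda})\to 1$, which is exactly Assumption~\ref{assm:balancing_feasibility}.
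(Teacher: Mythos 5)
Your proposal is correct and follows essentially the same route as the paper's proof: both decompose the balancing system by the finitely many effective-treatment values in $\Gamma$, use positivity together with i.i.d.\ cluster sampling to extract infinitely many independent observations of each realizable type, invoke continuity of the covariate law (preserved under conditioning on the type event, since that event has positive probability) so that $p$ such draws span $\mathbb R^p$ almost surely, and finish with a union/intersection bound over the finitely many types. The only differences are cosmetic---you phrase feasibility as row-space membership via the Kronecker span identity, while the paper explicitly constructs weights supported on the first unit of each qualifying cluster---so the probabilistic core and the key spanning lemma coincide.
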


\section{Proofs}
\subsection{Proof of Theorem  \ref{thm:subclass1}}
\label{sec:proof_subclass1}
\begin{proof}
If $\bm w \in \mathcal C(\bm g, e)$ for all $\bm g$, then this statement holds true in particular for, $\bm g = \bm g(\bm X) = \mathbb E\left[\bm R^\top(\bm w - \bm w_{\mathrm{IPW}})\mid \bm X\right]$. For this choice of $\bm g$, we have that 
\begin{align*}
    &\mathbb E[(\bm w - \bm w_{\mathrm{IPW}})^\top\bm R\bm g]= \mathbb E\left[\mathbb E[(\bm w - \bm w_{\mathrm{IPW}})^\top\bm R\bm g\mid \bm X]\right]= \mathbb E\left[\left\|\mathbb E[\bm R^\top(\bm w - \bm w_{\mathrm{IPW}})\mid\bm X]\right\|^2\right] = 0.
\end{align*}
This implies that the following equalities hold almost surely,
\begin{align*}
    & \mathbb E[\bm R^\top(\bm w - \bm w_{\mathrm{IPW}})\mid \bm X] = \mathbb E[\bm R^\top\bm R(\tilde{\bm w} - \tilde{\bm w}_{\mathrm{IPW}})\mid \bm X] = \bm 0,\\
    \implies & \bm D(\tilde{\bm w} - \tilde {\bm w}_{\mathrm{IPW}}) = \bm 0,
\end{align*}
where $\bm D = \mathbb E[\bm R^\top\bm R\mid \bm X]$ denotes the diagonal matrix of propensity scores. If we focus on the entry of $\bm D$ corresponding to the assigned treatment $\bm A$, then the above equalities imply
\[
    e(\bm A, \bm X)(\bm w - \bm w_{\mathrm{IPW}}) = \bm 0 \ \textrm{a.s.},
\]
where $e(\bm a, \bm X):=\mathbb P(\bm A_1 = \bm a_1, \cdots, \bm A_n = \bm a_c\mid \bm X)$. Note that because $A$ is a discrete random variable taking values in a finite space, we must have that $e(\bm A, \bm X)>0$ almost surely. This shows $\bm w = \bm w_{\mathrm{IPW}}$ almost surely,
thereby proving the first clause of Theorem~\ref{thm:subclass1}.
    
Next, the condition $\bm w \in \mathcal C(\bm g, e)$ can also be expressed as $\mathbb E\left[(\bm w - \bm w_{\mathrm{IPW}})^\top\bm R\bm g\right] = 0$, which implies $\mathbb E[\bm w^\top\bm R\bm g] = \mathbb E[\bm w_{\mathrm{IPW}}^\top\bm R\bm g]$.
Observe that $\mathbb E[\bm w_{\mathrm{IPW}}^\top\bm R\mid\bm X] = \bm f^\top$ so that the above can be written as $\mathbb E[\bm w^\top\bm R\bm g] = \mathbb E[\bm f^\top\bm g]$.  If this equality were to hold for all choices of $\bm g$, then an argument similar to the one made above shows that we must have $\bm R^\top\bm w = \bm f$, almost surely. Now, note that $\bm R^\top\bm w = \left[\mathbb I(\bm A = \bm a^{(0)}),\cdots,\mathbb I(\bm A = \bm a^{({2^M}-1)})\right]\otimes \bm w$ so that $\bm R^\top\bm w = \bm f$ implies $f(\bm a_c,\bm X_c) = \bm 0, \forall \bm a_c \neq \bm A_c,\forall c$. This contradicts the assumption in \eqref{eqn:impossibility_condition} and hence, completes the proof.
\end{proof}

\subsection{Proof of Theorem \ref{thm:subclass2}}
\label{sec:proof_subclass2}

\begin{proof}
Assume that $\bm w \in \mathcal C_{\bm \Lambda}(\bm h, e),\forall \bm h, e$. We follow the proof in Section~\ref{sec:proof_subclass1} above while replacing the matrix $\bm R$ with $\bm R\bm \Lambda$, implying that for any choice of $e$, the following equalities hold almost surely,
\begin{align}
\label{eqn:classequations}
\begin{aligned}
    &\bm \Lambda^\top\bm D(\tilde {\bm w} - \tilde {\bm w}_{\mathrm{IPW}}) = \bm 0
    \implies &\bm \Lambda^\top\bm D\tilde{\bm w} = \bm \Lambda^\top\bm f,
\end{aligned}
\end{align}
where $\tilde{\bm w}$ is as defined in Section~\ref{sec:extension_known_propensity}. Now, fix any treatment pattern $\bm a^\ast$ and consider the propensity score, $e$, which sets $e(\bm a^\ast, \bm X) = 1-1/m$, for $m\in \mathbb N$. Because $\bm w$, and hence, $\tilde{\bm w}$, does not depend on the true propensity score, any particular choice of $e$ does not alter $\tilde{\bm w}$. Then, under this family of scores for different values of $m$, taking $m\rightarrow\infty$ in the above equation yields, 
\begin{align*}
    \bm \Lambda(\bm a^\ast)^\top\tilde {\bm w}(\bm a^\ast) = \bm \Lambda^\top\bm f,\textrm{ almost surely}.
\end{align*}
Since the above holds for any treatment pattern $\bm a^*$ and in particular for the observed pattern $\bm A$, by substituting $\bm a^* = \bm A$, we obtain $\bm \Lambda(\bm A)^\top \bm w = \bm \Lambda^\top\bm f$, which holds almost surely.  This in turn implies that $\bm R^\top\bm \Lambda^\top w = \bm \Lambda^\top\bm f$ holds almost surely,
thereby proving the claim. 
\end{proof}

\subsection{Proof of Theorem~\ref{thm:extension_subclass2}}
\label{sec:proof_extension_subclass2}
\begin{proof}

Note that the fact that $\bm w \in \mathcal C_{\bm \Lambda}(\bm h, e),\forall \bm h\implies \bm w \in \Cipwlamb(e)$ follows directly from the first conclusion of Equation~\eqref{eqn:classequations}.  The other direction of implication also follows almost surely, 
\begin{align*}
    &\bm w \in \Cipwlamb(e)\\
    \Leftrightarrow & \bm \Lambda^\top\bm D\left(\tilde{\bm w} - \tilde{\bm w}_{\mathrm{IPW}}\right) = \bm 0, \\
    \Leftrightarrow& \mathbb E[\bm \Lambda^\top\bm R^\top\bm R\left(\tilde{\bm w} - \tilde{\bm w}_{\mathrm{IPW}})\mid \bm X\right] = \bm 0,\\
    \Leftrightarrow & \mathbb E[\bm \Lambda^\top\bm R^\top(\bm w - \bm w_{\mathrm{IPW}})\mid \bm X] = \bm 0, \\
    \implies  & \mathbb E[\bm h^\top\bm \Lambda^\top\bm R^\top(\bm w - \bm w_{\mathrm{IPW}})\mid \bm X] = \bm 0,\\
    \Leftrightarrow & \mathbb E[\bm g^\top\bm R^\top(\bm w - \bm w_{\mathrm{IPW}})\mid \bm X] = \bm 0, \\
    \implies & \mathbb E[\bm g^\top\bm R^\top(\bm w - \bm w_{\mathrm{IPW}})] = \bm 0,
\end{align*}
thereby showing that $T(\bm w)$ is unbiased for $\mu_f$.
\end{proof}

\subsection{Proof of Theorem~\ref{thm:umvue}}
\label{sec:proof_thm_umvue}

\begin{proof}
First note that for any choice of weights, $\bm w$, we have that
\begin{align}
\label{eqn:var_decompose}
\begin{aligned}
    \Var(T(\bm w)\mid \bm B_{\bm \Lambda}) &= \frac{1}{n^2}\left(\mathbb E\left[\Var(\bm w^\top\bm Y\mid \bm X, \bm A)\mid \bm B_{\bm \Lambda}\right] + \Var\left[\mathbb E(\bm w^\top\bm Y\mid \bm X, \bm A)\mid \bm B_{\bm \Lambda}\right]\right)\\
    &= \frac{1}{n^2}\left(\sigma^2\mathbb E\left[\|\bm w\|^2\mid \bm B_{\bm \Lambda}\right] + \Var\left(\bm w^\top\bm R\bm \Lambda\bm h\mid \bm B_{\bm \Lambda}\right)\right).
\end{aligned}
\end{align}
Now, if $\bm w \in \Cnulllamb$, by definition, we have that $\bm w^\top\bm R\bm \Lambda = \bm f^\top\bm \Lambda = \mathbb E\left[\bm w_{\mathrm{IPW}}^\top\bm R\bm \Lambda\bm h\mid \bm X\right]$, from which the first claim of Theorem~\ref{thm:umvue} follows.

Because, by definition, $\|\wnulllamb\|^2\leq \|\bm w\|^2$, almost surely, for any $\bm w \in \Cnull$, we must have that for any $\bm w \in \Cnulllamb$,
\begin{align*}
    n^2\Var\left(T(\bm w)\mid \bm B_{\bm \Lambda}\right)
    =& \Var\left(\bm f^\top\bm \Lambda\bm h\mid \bm B_{\bm \Lambda}\right) + \sigma^2 \mathbb E\left[\|\bm w\|^2\mid \bm B_{\bm \Lambda}\right]\\
    \geq & \Var\left(\bm f^\top\bm \Lambda\bm h\mid \bm B_{\bm \Lambda}\right) + \sigma^2 \mathbb E\left[\|\wnulllamb\|^2\mid \bm B_{\bm \Lambda}\right]\\
    =& n^2\Var\left(T(\wnulllamb)\mid \bm B_{\bm \Lambda}\right),
\end{align*}
thereby establishing the second claim.
\end{proof}

\subsection{Proof of Theorem \ref{thm:weights_expression}}
\label{sec:proof_thm_weights_expression}
\begin{proof}
    we use the fact that the solution to the problem: Minimize: $\|\bm x\|^2$, subject to $\bm A \bm x = \bm b$ is given by $\bm x = \bm A^+\bm b$. This immediately establishes that $\wnulllamb = \left(\bm \Lambda^\top\bm R^\top\right)^+\bm \Lambda^\top\bm f$.
\end{proof}

\subsection{Proof of Theorem~\ref{thm:propensity_estimators_fixed_h}}
\label{sec:proof_propensity_estimators_fixed_h}

\begin{proof}
Similar to the proof in Section~\ref{sec:proof_thm_weights_expression}, we have $\wipwprojlamb(e) = \left(\bm \Lambda^\top\bm R^\top\right)^+\bm \Lambda^\top\bm R^\top\bm w_{\mathrm{IPW}} = \bm P_{\bm R\bm \Lambda}\bm w_{\mathrm{IPW}}$, where we have used the fact that $\bm A^+\bm A$ is the projection matrix onto the row-space of $\bm A$, and hence, onto the column-space of $\bm A^\top$. Finally, to derive the expression for $\wipwlamb(e)$, note that the optimization problem determining it can be written as
\[
    \textrm{Minimize: }\|\bm w_*\|^2\hspace{0.5cm}\textrm{subject to }\bm \Lambda_*\bm w_* = \bm \Lambda_*\bm D^{1/2}\tilde{\bm w}_{\mathrm{IPW}},
\]
where, $\bm \Lambda_* = \bm D^{1/2}\bm \Lambda$ and $\bm w_* = \bm D^{1/2}\tilde{\bm w}_{\mathrm{IPW}}$. Then, the discussion above implies that the solution to this optimization problem is given by $\tilde{\bm w}_* = \bm P_{\bm \Lambda_*}\bm D^{1/2}\tilde{\bm w}_{\mathrm{IPW}}$. But also note that $\wipwlambtilde(e) = \bm D^{-1/2}\tilde{\bm w}_* = \bm D^{-1/2}\bm P_{\bm \Lambda_*}\bm D^{1/2}\tilde{\bm w}_{\mathrm{IPW}} = \bm D^{-1/2}\bm P_{\bm D^{1/2}\bm \Lambda}\bm D^{1/2}\tilde{\bm w}_{\mathrm{IPW}}$, from which the result follows.
Finally, one can use the approach in Section~\ref{sec:proof_thm_umvue} to derive the expression for $\Var(T(\wipwprojlamb(e))$ and show that it is the smallest among $\Var(T(\bm w))$, for any $\bm w \in \Cipwprojlamb(e)$.
\end{proof}

\subsection{Proof of Theorem \ref{thm:ipw_normality}}
\label{sec:proof_thm_ipw_normality}

\begin{proof}
Note that 
\begin{align*}
    \mathbb E[\bm w_{\mathrm{IPW},c}^\top\bm Y_c] &=\mathbb E\left[\sum_{\bm a_c\in \{0,1\}^{M_c}}\bm w_{\mathrm{IPW},c}(\bm a_c)^\top\bm Y_c(\bm a_c)\right] \\
    & = \mathbb E\left[\frac{1}{M_c}\sum_{\bm a_c\in \{0,1\}^{M_c}}\frac{f(\bm a_c, \bm X_c)}{e(\bm a_c, \bm X_c)}\bm 1^\top\bm Y_c(\bm a_c)e(\bm a_c, \bm X_c)\right] \\
    & = \mu_f.
\end{align*}
Now, $T(\bm w_{\mathrm{IPW}}) = \frac{1}{n}\sum_{i=1}^n\bm w_{\mathrm{IPW},i}^\top\bm Y_i$ is a sample mean of i.i.d. terms. Furthermore, note $\bm w_{\mathrm{IPW},c}^\top\bm Y_c = {\bm g_c^{(\bm A_c)}}^\top\bm w_{\mathrm{IPW},c} + \bm \epsilon_c^\top\bm w_{\mathrm{IPW},c}$. Then, we have
\begin{align*}
\mathbb E\left(\bm w_{\mathrm{IPW},c}^\top\bm Y_c\right)^2= & \mathbb E\left({\bm g_c^{(\bm A_c)}}^\top\bm w_{\mathrm{IPW},c}\right)^2 + \mathbb E\left(\bm \epsilon_c^\top\bm w_{\mathrm{IPW},c}\right)^2 \\
\leq & \sqrt{\mathbb E \left\| \bm g_c^{(\bm A_c)}\right\|^4\cdot\mathbb E \|\bm w_{\mathrm{IPW},c}\|^4} + \sigma^2\mathbb E\left\|\bm w_{\mathrm{IPW},c}\right\|^2\\
< & \infty,
\end{align*}
due to the Cauchy-Schwartz inequality. Thus, the central limit theorem applies and yields 
\begin{align*}
    \sigma^2_{\mathrm{IPW}} &= \Var(\bm w_{\mathrm{IPW},c}^\top\bm Y_c)\\
    &= \Var\mathbb E[ \bm w_{\mathrm{IPW},c}^\top\bm Y_c\mid \bm X_c] + \mathbb E\Var[ \bm w_{\mathrm{IPW},c}^\top\bm Y_c\mid \bm X_c]\\
    &= \Var\left[{\bm g_c^{(\bm A_c)}}^\top\bm w_{\mathrm{IPW},c}\right] + \sigma^2 \mathbb E\left[\left\|\bm w_{\mathrm{IPW},c}\right\|^2\right].
\end{align*}
\end{proof}

\subsection{Proof of Theorem \ref{thm:clt_indiv}}
\label{sec:proof_thm_clt_indiv}

\begin{proof}
Observe that under the conditions of the theorem, our weighting estimator is given by
\[
    T\left(\wipwlamb(e)\right) = \frac{1}{n}\sum_{c=1}^n\sum_{i=1}^{M_c} \wipwlambnobm(e)_{ci}Y_{ci} = \frac{1}{n}\sum_{c=1}^n \wipwlamb(e)_c^\top\bm Y_c.
\]
We intend to apply the central limit theorem. As in Section~\ref{sec:proof_thm_ipw_normality}, it suffices to show  $$\mathbb E\left(\wipwlamb(e)_c^\top\bm Y_c\right)^2<\infty.$$ Again, note $$\mathbb E\left(\wipwlamb(e)_c^\top\bm Y_c\right)^2 = \mathbb E\left[\left({\bm g_c^{(\bm A_c)}}^\top\wipwlamb(e)_c\right)^2 + \left(\bm \epsilon_c^\top\wipwlamb(e)_c\right)^2\right].$$ 
Following the same approach as in Section~\ref{sec:proof_thm_ipw_normality}, this quantity is bounded by $$\sqrt{\mathbb E \left\| \bm g_c^{(\bm A_c)}\right\|^4\cdot\mathbb E \|\wipwlamb(e)_c\|^4} + \sigma^2\mathbb E\left\|\wipwlamb(e)\right\|^2.$$
Hence, in light of Assumption~\ref{assm:clt_technical_indiv}, it suffices to show that $\mathbb E \|\wipwlamb(e)_c\|^4<\infty$. 
Define, $\bm {\bm u}_{ci} = \bm P_{\bm D_c^{1/2}\bm \Lambda_{ci}}\bm D_c^{1/2}\tilde{\bm w}_{{\rm IPW},ci}$, and note that
\begin{align*}
    &\|\wipwlamb(e)_c\|^4
    = \left(\sum_{i=1}^{M_c}\wipwprojlamb(e)^2_{c,i}\right)^2
    = \left(\sum_{i=1}^{M_c}\left(\bm R_c \bm D_c^{-1/2}\bm {\bm u}_{ci}\right)^2\right)^2
    = \left(\sum_{i=1}^{M_c}\left(\frac{\bm R_c \bm {\bm u}_{ci}}{\sqrt{e(\bm A_c; \bm X_c)}}\right)^2\right)^2\\
    \leq& \eta^{-2} \left(\sum_{i=1}^{M_c}\left(\bm R_c \bm {\bm u}_{ci}\right)^2\right)^2,
\end{align*}
where the last inequality follows due to item (b) of Assumption~\ref{assm:clt_technical_indiv}. Now let $ {\bm u}_{ci}(\bm a_c)$ denote the value of $\bm R_c \bm {\bm u}_{ci}$ if $\bm A_c = \bm a_c$. Then, we have that 
\begin{align*}
    &\mathbb E\left[\|\wipwlamb(e)_c\|^4\middle| \bm X_c\right]\\
    &\leq \eta^{-2} \mathbb E\left[ \left(\sum_{i=1}^{M_c}\left(\bm R_c \bm {\bm u}_{ci}\right)^2\right)^2\middle| \bm X_c\right] = \eta^{-2}\sum_{\bm a_c \in \{0,1\}^{M_c}} e(\bm a_c;\bm X_c) \left(\sum_{i=1}^{M_c}{\bm u}_{ci}(\bm a_c)^2\right)^2\\
    &\leq \eta^{-2} \sum_{\bm a_c \in \{0,1\}^{M_c}}\left(\sum_{i=1}^{M_c}{\bm u}_{ci}(\bm a_c)^2\right)^2 \leq \eta^{-2}\left(\sum_{\bm a_c \in \{0,1\}^{M_c}}\sum_{i=1}^{M_c}{\bm u}_{ci}(\bm a_c)^2\right)^2 = \eta^{-2}\left(\sum_{i=1}^{M_c}\|\bm {\bm u}_{ci}\|^2\right)^2\\
    &= \eta^{-2}\left(\sum_{i=1}^{M_c}\left\|\bm P_{\bm D_c^{1/2}\bm \Lambda_{ci}}\bm D_c^{1/2}\tilde{\bm w}_{{\rm IPW},ci}\right\|^2\right)^2 \leq \eta^{-2}\left(\sum_{i=1}^{M_c}\left\|\bm D_c^{1/2}\tilde{\bm w}_{{\rm IPW},ci}\right\|^2\right)^2 \leq \eta^{-2}\left(\sum_{i=1}^{M_c}\left\|\tilde{\bm w}_{{\rm IPW},ci}\right\|^2\right)^2.
\end{align*}
where the last inequality follows from the fact that the diagonal entries of $\bm D_c^{1/2}$ are square roots of propensity scores and hence are bounded above by 1. Item (a) of Assumption~\ref{assm:clt_technical_indiv} now shows that $\mathbb E\left[\|\wipwlamb(e)_c\|^4\right]\leq \eta^{-2} \mathbb E\left[\left(\sum_{i=1}^{M_c}\left\|\tilde{\bm w}_{{\rm IPW},ci}\right\|^2\right)^2\right]<\infty$, thereby proving the claim. This allows us to invoke the central limit theorem. The expressions for the asymptotic mean and variance follow from their direct evaluation on the quantity $\wipwlamb(e)_c^\top\bm Y_c$.
\end{proof}

\subsection{Proof of Theorem \ref{thm:known_propensity_clt_fixed}}
\label{sec:proof_thm_known_propensity_clt_fixed}

\begin{proof}
For simplicity, we establish the theorem only for the projection estimator. These same steps analogously establish the result for the weighted projection estimator as well; we give a proof sketch for that at the end in Section~\ref{sec:proof_sketch_weighted_projection}

The optimization problem furnishing the weights for the projection estimator is given by
\begin{align*}
    \textrm{Minimize: } \ \frac{1}{2n}\sum_{c=1}^n\|\bm w_c\|^2 \quad
    \textrm{subject to } \ \frac{1}{n}\sum_{c=1}^n\bm \Lambda^\top_c(\bm A)\bm w_c = \frac{1}{n}\sum_{c=1}^n \bm v_c,
\end{align*}
where $\bm v_c = \bm \Lambda_c(\bm A_c)^\top\bm{w}_{\mathrm{IPW},c}$.
The Lagarangian for the above optimization problem is:
\begin{align*}
    \mathcal L(\bm \lambda,\bm w) = \frac{1}{2n}\sum_{c=1}^n\|\bm w_c\|^2 + \bm \lambda^\top\left(\frac{1}{n}\sum_{c=1}^n \bm \Lambda^\top_c(\bm A_c)\bm w_{c} - \frac{1}{n}\sum_{c=1}^n \bm v_c\right),
\end{align*}
where $\bm \lambda$ denotes the Lagarange multiplier. To obtain the dual of the problem, we maximize $\mathcal L(\bm \lambda, \bm w)$ with respect to $\bm w$. The first order conditions give
\begin{align}
    \label{eqn:weights_first_order}
    \begin{aligned}
    \nabla_{\bm w_c}\mathcal L = 0
    \implies & \bm w_c +\bm \Lambda_c(\bm A_c)\bm \lambda=\bm 0\\
    \implies & \bm w_c = -\bm \Lambda_c(\bm A_c)\bm \lambda.
    \end{aligned}
\end{align}
Substituting this value of $\bm w_c$ back into the expression of $\mathcal L(\bm \lambda, \bm w)$, we get the dual problem:
\begin{align*}
    \hat{\bm \lambda} &= \underset{\bm \lambda}{\arg\max}\left(-\frac{1}{2n}\bm \lambda^\top\sum_{c=1}^n \bm \Lambda_c(\bm A_c)^\top\bm \Lambda_c(\bm A_c)\bm{\lambda} - \frac{1}{n}\bm \lambda^\top\sum_{c=1}^n \bm v_c\right)\\
    &= -\left(\frac{1}{n}\sum_{c=1}^n  \bm \Lambda_c(\bm A_c)^\top\bm \Lambda_c(\bm A_c)\right)^{-1} \left(\frac{1}{n}\sum_{c=1}^n \bm v_c\right).
\end{align*}
Then, by duality theory, the optimal weights are given by $\hat{\bm w}_c = -\bm \Lambda_c(\bm A_c)\hat{\bm \lambda}$. For notational simplicity, we have not made the dependence of $\bm \lambda$ and $\hat{\bm w}$ on $j$ explicit. Invoking the first-order conditions again, we see that the condition for dual maximization is equivalent to requiring
\begin{align*}
    \frac{1}{n}\sum_{c=1}^n \bm \Lambda_c(\bm A_c)^\top\bm \Lambda_c(\bm A_c)\hat{\bm \lambda} = -\frac{1}{n}\sum_{c=1}^n \bm v_c.
\end{align*}
Now, consider the weighting estimator given by
\begin{align}
    T(\hat{\bm w}) = \frac{1}{n}\sum_{c=1}^n \hat{\bm w}_c^\top\bm Y_c = \underbrace{-\hat{\bm \lambda}^\top\frac{1}{n}\sum_{c=1}^n \bm \Lambda_c(\bm A_c)^\top\bm \Lambda_c^*(\bm A_c)\bm h}_{=:T_1}  \underbrace{-\hat{\bm \lambda}^\top\frac{1}{n}\sum_{c=1}^n \bm \Lambda_c(\bm A_c)^\top\bm \epsilon_c}_{=:T_2}. \label{eq:Tterms}
\end{align}
In order to establish a CLT for this estimator, we first start with some observations. To begin with, define,
\[
    m_c(\bm \lambda) := -\frac{1}{2}\bm \lambda^\top\bm \Lambda_c(\bm A_c)^\top\bm \Lambda_c(\bm A_c)\bm\lambda - \bm \lambda^\top \bm v_c,
\]
which also gives:
\begin{align*}
    &\bm m_c'(\bm \lambda) = -\bm \Lambda_c(\bm A_c)^\top\bm \Lambda_c(\bm A_c)\bm \lambda - \bm v_c,\\
    &\bm m_c''(\bm \lambda) = -\bm \Lambda_c(\bm A_c)^\top\bm \Lambda_c(\bm A_c).
\end{align*}
Under Assumption~\ref{assm:clt_technical_fixed}, first define $\bm \lambda^* = -\mathbb E[\bm \Lambda_c(\bm A_c)^\top\bm \Lambda_c(\bm A_c)]^{-1}\mathbb E[\bm v_c]$ and $\bm V:= \mathbb E[\bm \Lambda_c(\bm A_c)^\top\bm \Lambda_c(\bm A_c)]$. Observe that $\mathbb E[\bm m_c'(\bm \lambda^*)] = \bm 0$. Next, we show that $\mathbb E[\|\bm m_c'(\bm \lambda^*)\|^2]<\infty$. Note $\bm v_c = \bm \Lambda_c(\bm A_c)^\top\bm w_{\mathrm{IPW},c}$, and define $\bm u_c = \bm v_c/\mathbb E\|\bm v_c\|]$. Then,
\begin{align*}
    \mathbb E[\|\bm m_c'(\bm \lambda^*)\|^2]&\leq \mathbb E[\|\bm \Lambda_c(\bm A_c)^\top\bm \Lambda_c(\bm A_c)\bm \lambda^*\|^2] + \mathbb E[\|\bm v_c\|^2]\\
    &\leq (\bm \lambda^*)^\top\mathbb E\left[\left(\bm \Lambda_c(\bm A_c)^\top\bm \Lambda_c(\bm A_c)\right)^2\right]\bm \lambda^* + \mathbb E[\|\bm v_c\|^2]\\
    &=\mathbb E[{\bm v_c}^\top]\mathbb E\left[\bm V^{-1}\left(\bm \Lambda_c(\bm A_c)^\top\bm \Lambda_c(\bm A_c)\right)^2\bm V^{-1}\right]\mathbb E[\bm v_c] + \mathbb E[\|\bm v_c\|^2]\\
    &=\mathbb E[\|\bm v_c\|^2]\left(\mathbb E[{\bm u_c}^\top]\mathbb E\left[\bm V^{-1}\left(\bm \Lambda_c(\bm A_c)^\top\bm \Lambda_c(\bm A_c)\right)^2\bm V^{-1}\right]\mathbb E[\bm u_c] + 1\right)\\
    &\leq \mathbb E[\|\bm \Lambda_c(\bm A_c)^\top\bm w_{\mathrm{IPW},c}\|^2]\left(\mathbb E{[\bm u_c}^\top]\mathbb E\left[\bm V^{-1}\left(\bm \Lambda_c(\bm A_c)^\top\bm \Lambda_c(\bm A_c)\right)^2\bm V^{-1}\right]\mathbb E[\bm u_c] + 1\right)\\
    &\leq \underbrace{\mathbb E\left[\lambda_{\max}\left(\bm \Lambda_c(\bm A_c)^\top\bm \Lambda_c(\bm A_c)\right)\cdot\|\bm w_{\mathrm{IPW},c}\|^2\right]}_{\mathrm{\textbf{I}}} \\
    & \quad \times \left(\underbrace{\mathbb E[{\bm u_c}^\top]\mathbb E\left[\bm V^{-1}\left(\bm \Lambda_c(\bm A_c)^\top\bm \Lambda_c(\bm A_c)\right)^2\bm V^{-1}\right]\mathbb E[\bm u_c]}_{\mathrm{\textbf{II}}} + 1\right)
\end{align*}
The first term of the above upper bound can be further bounded as follows,
\begin{align*}
    \mathrm{term\ \textbf{I}} &\leq \mathbb E\left[\lambda_{\max}\left(\bm \Lambda_c(\bm A_c)^\top\bm \Lambda_c(\bm A_c)\right)\cdot\|\bm w_{\mathrm{IPW},c}\|^2\right]\\
    &= \mathbb E\left[\sigma^2_{\max}\left(\bm \Lambda_c(\bm A_c)\right)\cdot \|\bm w_{\mathrm{IPW},c}\|^2\right]\\
    &\leq \sqrt{\mathbb E\left[\sigma^4_{\max}\left(\bm \Lambda_c(\bm A_c)\right)\right]\mathbb E\left[ \|\bm w_{\mathrm{IPW},c}\|^4\right]}\\
    &<\infty
\end{align*}
where the second inequality follows from the Cauchy-Schwartz inequality and the final inequality is due to Assumption~\ref{assm:clt_technical_fixed}.
Next, we show that the second term is finite. Note that if $\bm \lambda^*\neq \bm 0$, as $\|\mathbb E[\bm u_c]\|\leq 1$, we have,
\begin{align*}
    \mathrm{term\ \textbf{II}} &=\mathbb E[{\bm u_c}^\top] \mathbb E\left[\bm V^{-1}\left(\bm \Lambda_c(\bm A_c)^\top\bm \Lambda_c(\bm A_c)\right)^2\bm V^{-1}\right]\mathbb E[\bm u_c]\\
    &\leq \mathbb E\left[\lambda_{\max}\left(\bm V^{-1}\left(\bm \Lambda_c(\bm A_c)^\top\bm \Lambda_c(\bm A_c)\right)^2\bm V^{-1}\right)\right]\\
    &=\mathbb E\left[\sigma^2_{\max}\left(\bm \Lambda_c(\bm A_c)^\top\bm \Lambda_c(\bm A_c)\bm V^{-1}\right)\right]\\
    &\leq \mathbb E\left[\sigma^4_{\max}\left(\bm \Lambda_c(\bm A_c)\right)\sigma^2_{\max}\left(\bm V^{-1}\right)\right]\\
    &<\infty
\end{align*}
where the second inequality follows from the sub-multiplicativity of the spectral norm and the final inequality is due to Assumption~\ref{assm:clt_technical_fixed}.  This shows that if $\bm \lambda^*\neq \bm 0$, then $\mathbb E[\|\bm m_c'(\bm \lambda^*)\|^2]<\infty$. In case $\bm \lambda^*=\bm 0$, then we have that $\mathbb E[\|\bm m_c'(\bm \lambda^*)\|^2] = \mathbb E[\|\bm v_c\|^2]$, which equals term \textbf{I} above, and hence is finite. Finally, since $\bm m_c''(\bm \lambda)$ does not depend on $\bm \lambda$, the second order partial derivatives of $\bm m_c'(\bm \lambda)$ are all equal to 0. 

Now, let us define $\bar {\bm V}_n = \frac{1}{n}\sum_{c=1}^n \bm \Lambda_c(\bm A_c)^T\bm \Lambda_c(\bm A_c)$, and $\bar{\bm v}_n = \frac{1}{n}\sum_{c=1}^n \bm v_c$. Then note that,
\begin{align*}
    \hat{\bm \lambda} - \bm \lambda^* &= \bar{\bm V}_n^{-1}\left(-\bar{\bm V}_n\bm\lambda^* - \bar{\bm v}_n\right) =  \bar{\bm V}_n^{-1}\frac{1}{n}\sum_{c = 1}^n \bm m_c'(\bm \lambda^*) \\
    &= \mathbb E[\bm \Lambda_c(\bm A_c)^\top\bm \Lambda_c(\bm A_c)]^{-1}\frac{1}{n}\sum_{c = 1}^n \bm m_c'(\bm \lambda^*) + \left(\bar{\bm V}_n^{-1} - \mathbb E[\bm \Lambda_c(\bm A_c)^\top\bm \Lambda_c(\bm A_c)]^{-1} \right)\frac{1}{n}\sum_{c = 1}^n \bm m_c'(\bm \lambda^*).
\end{align*}
Next, note that $\mathbb E\left[\bm m_c'(\bm \lambda^*)\right] = 0$, so that $\frac{1}{n}\sum_{c = 1}^n \bm m_c'(\bm \lambda^*) = O_{\mathbb P}(n^{-1/2})$. Because $\bar{\bm V}_n^{-1}\Pto \mathbb E[\bm \Lambda_c(\bm A_c)^\top\bm \Lambda_c(\bm A_c)]^{-1}$, we get that:
\[
    \hat{\bm \lambda} = \bm \lambda^* + \mathbb E[\bm \Lambda_c(\bm A_c)^\top\bm \Lambda_c(\bm A_c)]^{-1}\frac{1}{n}\sum_{c=1}^n m_c'(\bm \lambda^*) + o_{\mathbb P}(n^{-1/2}).
\]

We get back to establishing the CLT for $T(\hat{\bm w})$. In order to establish the distribution of term $T_1$ in Equation~\eqref{eq:Tterms}, we first start with the following CLT: Define 
\[
    \bm W_c = \begin{pmatrix}\bm X_c \\ \bm Y_c \end{pmatrix} = \begin{pmatrix}
        \bm \Lambda_c(\bm A_c)^\top\bm \Lambda_c^*(\bm A_c)\bm h \\
        -\bm \Lambda_c(\bm A_c)^\top\bm \Lambda_c(\bm A_c)\bm \lambda^* - \bm v_c
    \end{pmatrix} =  \begin{pmatrix}
        \bm \Lambda_i(\bm A_c)^\top\bm \Lambda_c^*(\bm A_c)\bm h \\
        \bm \Lambda_c(\bm A_c)^\top\bm \Lambda_c(\bm A_c)\bm V^{-1}\mathbb E \bm v_c - \bm v_c
    \end{pmatrix}
\]
The variance of $\bm W_c$ exists because the variance of $\bm Y_c$ exists, which in turn follows from the above result that term \textbf{II} is finite. We can also show that the variance of $\bm X_c$ exists,
\begin{align*}
    \mathbb E[\bm X_c^\top\bm X_c] &= \mathbb E\left[\bm h^\top\left(\bm \Lambda_c(\bm A_c)^\top\bm \Lambda_c^*(\bm A_c)\right)^\top\left(\bm \Lambda_c(\bm A_c)^\top\bm \Lambda_c^*(\bm A_c)\right)\bm h\right]\\
    &\leq\|\bm h\|^2 \mathbb E\left[\lambda_{\max}\left\{ \left(\bm \Lambda_c(\bm A_c)^\top\bm \Lambda_c^*(\bm A_c)\right)^\top\left(\bm \Lambda_c(\bm A_c)^\top\bm \Lambda_c^*(\bm A_c)\right)\right\}\right]\\
    &\leq \|\bm h\|^2 \mathbb E\left[\sigma_{\max}^2\left(\bm \Lambda_c(\bm A_c)^\top\bm \Lambda_c^*(\bm A_c)\right)\right]\\
    &<\infty.
\end{align*}
where the final inequality follows from Assumption~\ref{assm:clt_technical_fixed}. 

Denote $\mathrm{Var}(\bm W_c) = \bm \Sigma$ and $\mathbb E[\bm W_c] = \bm\mu$. Now, consider term $T_2$ in Equation~\eqref{eq:Tterms}
\[
    T_2 = \hat{\bm \lambda}^\top\frac{1}{n}\sum_{c=1}^n\bm \Lambda_c(\bm A_c)^\top\bm \epsilon_c.
\]
First, note that under Assumption \ref{assm:clt_technical_fixed}, we have that by the CLT,
\[
    \frac{1}{\sqrt{n}}\sum_{c=1}^n\bm \Lambda_c(\bm A_c)^\top\bm \epsilon_c\dto \mathcal N\left(\bm 0, \mathbb E\left[(\bm \Lambda_c(\bm A_c)^\top\bm \epsilon_c)^2\right]\right) = \mathcal N\left(\bm 0, \sigma^2 \mathbb E\left[\bm \Lambda_c(\bm A_c)^\top\bm \Lambda_c(\bm A_c)\right]\right) = \mathcal N(\bm 0, \sigma^2\bm V).
\]
Now because $\hat{\bm \lambda}\Pto\bm \lambda^*$, we obtain
\[
     T_2 = {\bm \lambda^*}^\top\frac{1}{n}\sum_{c=1}^n\bm \Lambda_c(\bm A_c)^\top\bm \epsilon_c + o_p(n^{-1/2}).
\]
Define $Z_c = {\bm \lambda^*}^\top\bm \Lambda_c(\bm A_c)\bm \epsilon_c$. Note that $ Z_c$ has mean $\bm 0$, variance $\sigma_z^2 = \sigma^2{\bm \lambda^*}^\top\bm V\bm \lambda^* = \sigma^2\mathbb E\left[\|{\mathcal P}\bm w_{\mathrm{IPW},c}\|^2\right]$ and is un-correlated with $\bm W_c$. Thus, we have,
\[
    \sqrt{n}\left(\begin{pmatrix}\bar{\bm W}\\
    \bar{ Z}\end{pmatrix} - \begin{pmatrix}\bm \mu\\ 0\end{pmatrix}\right)\dto\mathcal N\left(\begin{pmatrix}\bm 0\\ 0\end{pmatrix}, \begin{bmatrix}\bm \Sigma & \bm 0\\ \bm 0 & \sigma_z^2\end{bmatrix}\right).
\]
Now, define the transformation
\[
    f(\bm x, \bm y) =-(\bm \lambda^* + \bm V^{-1}\bm y)^\top\bm x = (\mathbb E[\bm v_c] - \bm y)^\top\bm V^{-1}\bm x,
\]
and let $f_1(\bm x, \bm y, z):=(f(\bm x, \bm y), z)^\top$, where, recall that $\bm V = \mathbb E[\bm \Lambda_c(\bm A_c)^\top \bm \Lambda_c(\bm A_c)]$. Then, we have
\[
    \nabla f = \begin{pmatrix}
        -(\bm \lambda^* + \bm V^{-1}\bm y)\\
        -\bm V^{-1}\bm x
    \end{pmatrix} = \begin{pmatrix}
        \bm V^{-1}(\mathbb E \bm v_c - \bm y)\\
        -\bm V^{-1}\bm x
    \end{pmatrix},
\]
and
\[
    \nabla f_1 = \begin{bmatrix}
        \nabla f & \bm 0\\
        0 & 1
    \end{bmatrix}
\]
Now, observe that $T_1 = f(\bar{\bm W})$. Thus, we apply the delta method on the vector $(\bar{\bm W}^\top, \bar Z)^\top$ with the transformation $f_1$ to obtain
\[
    \sqrt{n}\left(\begin{pmatrix}T_1\\T_2\end{pmatrix} - \begin{pmatrix}f(\bm \mu) \\  0\end{pmatrix}\right)\dto \mathcal N\left( \bm 0, \begin{bmatrix}\nabla f(\bm \mu)^\top\bm \Sigma \nabla f(\bm \mu) &  0\\  0 & \sigma_Z^2\end{bmatrix}\right).
\]
Now, we find the entries in this expression of the CLT. First note that,
\[
    \bm \mu = \begin{pmatrix}
        \mathbb E[\bm \Lambda_c(\bm A_c)^\top\bm \Lambda_c^*(\bm A_c)\bm h]\\
        \bm 0
    \end{pmatrix},
\]
and hence, 
\begin{align*}
    f(\bm \mu) &= \mathbb E[\bm v_c]^\top\bm V^{-1}\mathbb E[\bm \Lambda_c(\bm A_c)^\top\bm \Lambda_c^*(\bm A_c)\bm h]\\
        &=\mathbb E[\bm w_{\mathrm{IPW},c}^\top\bm \Lambda_c(\bm A_c)\bm h_{{\bm \Lambda^\ast}}]\\
        &=\mathbb E\left[\bm w_{\mathrm{IPW},c}^\top P^*\bm g^{(\bm A_c)}_c\right] = \mu_f(\bm \Lambda, {\bm \Lambda^\ast}),
\end{align*}
where,
\[
    \bm h_{{\bm \Lambda^\ast}} = \mathbb E[\bm \Lambda_c(\bm A_c)^\top \bm \Lambda_c(\bm A_c)]^{-1}\mathbb E[\bm \Lambda_c(\bm A_c)^\top\bm \Lambda_c^*(\bm A_c)\bm h] = {\mathcal P}\bm g^{(\bm A_c)}_c,
\]
is the theoretical `OLS estimate' when regressing $\bm \Lambda_c^*(\bm A_c)\bm h$ on $\bm \Lambda_c(\bm A_c)$. Now to figure out the variance, 
\[
    \nabla f(\bm \mu) = \begin{pmatrix}
        \bm V^{-1}\mathbb E[\bm v_c]\\
        -\bm V^{-1}\mathbb E[\bm \Lambda_c(\bm A_c)^\top\bm \Lambda_c^*(\bm A_c)\bm h].
    \end{pmatrix} = \begin{pmatrix}
        \bm V^{-1}\mathbb E[\bm v_c]\\
        -\bm h_{{\bm \Lambda^\ast}}
    \end{pmatrix}.
\]
Now, we start with obtaining an expression for the limiting variance of $T(\wipwprojlamb(e))$. Note that it is given by
\[
    \sigma^2(\bm \Lambda, {\bm \Lambda^\ast}) = \mathbb E[\bm v_c^\top]\bm V^{-1}\bm \Sigma_{11}\bm V^{-1}\mathbb E[\bm v_c] - 2\mathbb E[\bm v_c^\top]\bm V^{-1}\bm \Sigma_{12}\bm h_{{\bm \Lambda^\ast}} + \bm h_{{\bm \Lambda^\ast}}^\top\bm \Sigma_{22}\bm h_{{\bm \Lambda^\ast}}.
\]
We consider each block of the covariance matrix $\bm \Sigma$. We have
\begin{align*}
    \bm \Sigma_{11} =& \underbrace{\mathbb E[\bm \Lambda_c(\bm A_c)^\top\bm \Lambda_c^*(\bm A_c)\bm h\bm h^\top \bm \Lambda_c^*(\bm A_c)^\top \bm \Lambda_c(\bm A_c)]}_{\mathrm{A2}} - \underbrace{\bm V\bm h_{{\bm \Lambda^\ast}}\bm h_{{\bm \Lambda^\ast}}^\top\bm V}_{\mathrm{A1}}\\
    \bm \Sigma_{12} =& \underbrace{\mathbb E[\bm \Lambda_c(\bm A_c)^\top\bm \Lambda_c^*(\bm A_c)\bm h \mathbb E[\bm v_c^\top]\bm V^{-1}\bm \Lambda(\bm A_c)^\top\bm \Lambda_c(\bm A_c)]}_{\mathrm{A2}} - \underbrace{\mathbb E[\bm \Lambda_c(\bm A_c)^\top\bm \Lambda_c^*(\bm A_c)\bm h \bm v_c^\top]}_{\mathrm{A4}}\\
    \bm \Sigma_{22} =& \underbrace{\mathbb E[\bm \Lambda_c(\bm A_c)^\top\bm \Lambda_c(\bm A_c)\bm V^{-1}\mathbb E[\bm v_c] \mathbb E [\bm v_c^\top] \bm V^{-1}\bm \Lambda_c(\bm A_c)^\top\bm \Lambda_c(\bm A_c)]}_{\mathrm{A3}} + \underbrace{\mathbb E[\bm v_c\bm v_c^\top]}_{\mathrm{A1}} - \underbrace{2\mathbb E[\bm \Lambda_c(\bm A_c)^\top\bm \Lambda_c(\bm A_c)\bm V^{-1}\mathbb E[\bm v_c]\bm v_c^\top]}_{\mathrm{A4}}
\end{align*}
Now, the terms marked $A1$ appear in the expression of $\sigma^2(\bm \Lambda, {\bm \Lambda^\ast})$ as:
\begin{align*}
    S_1 &= -\mathbb E[\bm v_c^\top]\bm V^{-1}\bm V\bm h_{{\bm \Lambda^\ast}}\bm h_{{\bm \Lambda^\ast}}^\top\bm V\bm V\mathbb E[\bm v_c] + \bm h_{{\bm \Lambda^\ast}}^\top\mathbb E[\bm v_c\bm v_c^\top]\bm h_{{\bm \Lambda^\ast}}\\
    &= \mathbb E[(\bm h_{{\bm \Lambda^\ast}}^\top\bm v_c)^2] - (\mathbb E[\bm h_{{\bm \Lambda^\ast}}^\top\bm v_c])^2\\
    &= \mathrm{Var}(\bm h_{{\bm \Lambda^\ast}}^\top\bm v_c)
\end{align*}
Similarly, the terms marked $A2$ appear as:
\begin{align*}
    S_2 =& \mathbb E[\bm v_c^\top]\bm V^{-1}\mathbb E[\bm \Lambda_c(\bm A_c)^\top\bm \Lambda_c^*(\bm A_c)\bm h\bm h^\top\bm \Lambda_c^*(\bm A_c)^\top\bm \Lambda_c(\bm A_c)] \\
    &- 2\mathbb E[\bm v_c^\top]\bm V^{-1}\mathbb E[\bm \Lambda_c(\bm A_c)^\top\bm \Lambda_c^*(\bm A_c)\bm h \mathbb E[\bm v_c^\top]\bm V^{-1}\bm \Lambda_c(\bm A_c)^\top\bm \Lambda_c(\bm A_c)]\bm h_{{\bm \Lambda^\ast}}\\
    =& \mathrm{Var}(\bm h^\top\bm \Lambda_c^*(\bm A_c)^\top\bm \Lambda_i(\bm A_c)\bm V^{-1}\mathbb E[\bm v_c]) + (\mathbb E[\bm h_{{\bm \Lambda^\ast}}^\top\bm v_c])^2 \\
    &- 2\mathrm{Cov}(\bm h^\top\bm \Lambda_c^*(\bm A_c)^\top\bm \Lambda_c(\bm A_c)\bm V^{-1}\mathbb E[\bm v_c],\ \bm h_{{\bm \Lambda^\ast}}^\top\bm \Lambda_c(\bm A_c)^\top\bm \Lambda_c(\bm A_c)\bm V^{-1}\mathbb E [\bm v_c]) - 2(\mathbb E[\bm h_{{\bm \Lambda^\ast}}^\top\bm v_c])^2\\
    =& \mathrm{Var}(\bm h^\top\bm \Lambda_c^*(\bm A_c)^\top\bm \Lambda_c(\bm A_c)\bm V^{-1}\mathbb E[\bm v_c]) - 2\mathrm{Cov}(\bm h^\top\bm \Lambda_c^*(\bm A_c)^\top\bm \Lambda_c(\bm A_c)\bm V^{-1}\mathbb E[\bm v_c],\ \bm h_{{\bm \Lambda^\ast}}^\top\bm \Lambda_c(\bm A_c)^\top\bm \Lambda_c(\bm A_c)\bm V^{-1}\mathbb E [\bm v_c])\\
    &- (\mathbb E[\bm h_{{\bm \Lambda^\ast}}^\top\bm v_c])^2,
\end{align*}
where we have used the fact that 
\[
    \mathbb E[\bm h^\top\bm \Lambda_c^*(\bm A_c)^\top\bm \Lambda_c(\bm A_c)\bm V^{-1}\mathbb E[\bm v_c]] = \mathbb E[\bm h_{{\bm \Lambda^\ast}}^\top\bm \Lambda_c(\bm A_c)^\top\bm \Lambda_c(\bm A_c)\bm V^{-1}\mathbb E[\bm v_c]] = \mathbb E[\bm h_{{\bm \Lambda^\ast}}^\top\bm v_c].
\]
The term marked $A3$ appears as:
\[
    S_3 = \mathrm{Var}(\bm h_{{\bm \Lambda^\ast}}^\top\bm \Lambda_c(\bm A_c)^\top\bm \Lambda_c(\bm A_c)\bm V^{-1}\mathbb E[\bm v_c]) + (\mathbb E[\bm h_{{\bm \Lambda^\ast}}^\top\bm v_c])^2
\]
Finally, using similar calculations, the term marked $A4$ appears as
\begin{align*}
    S_4 =& 2\mathbb E[\bm v_c^\top]\bm V^{-1}\mathbb E[\bm \Lambda_c(\bm A_c)^\top\bm \Lambda_c^*(\bm A_c)\bm h \bm v_c^\top]\bm h_{{\bm \Lambda^\ast}}\\
    &-2\bm h_{{\bm \Lambda^\ast}}^\top\mathbb E[\bm \Lambda_c(\bm A_c)^\top\bm \Lambda_c(\bm A_c)\bm V^{-1}\mathbb E[\bm v_c]\bm v_c^\top\bm h_{{\bm \Lambda^\ast}}]\\
    =& 2\mathbb E[\bm h^\top\bm \Lambda_c^*(\bm A_c)^\top\bm \Lambda_c(\bm A_c)\bm V^{-1}\mathbb E [\bm v_c]\bm v_c^\top\bm h_{{\bm \Lambda^\ast}}]\\
    &- 2\mathbb E[\bm h_{{\bm \Lambda^\ast}}^\top\bm \Lambda_c(\bm A_c)^\top\bm \Lambda_c(\bm A_c)\bm V^{-1}\mathbb E[\bm v_c]\bm v_c^\top\bm h_{{\bm \Lambda^\ast}}]\\
    =& 2\mathrm{Cov}(\bm h^\top\bm \Lambda_c^*(\bm A_c)^\top\bm \Lambda_c(\bm A_c)\bm V^{-1}\mathbb E[\bm v_c],\ \bm v_c^\top\bm h_{{\bm \Lambda^\ast}})\\
    &- 2\mathrm{Cov}(\bm h_{{\bm \Lambda^\ast}}^\top\bm \Lambda_c(\bm A_c)^\top\bm \Lambda_c(\bm A_c)\bm V^{-1}\mathbb E[\bm v_c],\ \bm v_c^\top\bm h_{{\bm \Lambda^\ast}})
\end{align*}
Combining all these terms yilds,
\begin{align*}
    \nabla f(\bm \mu)^\top\bm \Sigma \nabla f(\bm \mu) =& \mathrm{Var}(\bm h_{{\bm \Lambda^\ast}}^\top\bm v_c)+ \mathrm{Var}(\bm h^\top\bm \Lambda_c^*(\bm A_c)^\top\bm \Lambda_c(\bm A_c)\bm V^{-1}\mathbb E[\bm v_c])\\
    &+ \mathrm{Var}(\bm h_{{\bm \Lambda^\ast}}^\top\bm \Lambda_c(\bm A_c)^\top\bm \Lambda_c(\bm A_c)\bm V^{-1}\mathbb E[\bm v_c])\\
    &- 2\mathrm{Cov}(\bm h^\top\bm \Lambda_c^*(\bm A_c)^\top\bm \Lambda_c(\bm A_c)\bm V^{-1}\mathbb E[\bm v_c],\ \bm h_{{\bm \Lambda^\ast}}^\top\bm \Lambda_c(\bm A_c)^\top\bm \Lambda_c(\bm A_c)\bm V^{-1}\mathbb E[\bm v_c])\\
    &+ 2\mathrm{Cov}(\bm h^\top\bm \Lambda_c^*(\bm A_c)^\top\bm \Lambda_c(\bm A_c)\bm V^{-1}\mathbb E[\bm v_c],\ \bm v_c^\top\bm h_{{\bm \Lambda^\ast}})\\
    &- 2\mathrm{Cov}(\bm h_{{\bm \Lambda^\ast}}^\top\bm \Lambda_c(\bm A_c)^\top\bm \Lambda_c(\bm A_c)\bm V^{-1}\mathbb E[\bm v_c], \bm v_c^\top\bm h_{{\bm \Lambda^\ast}})\\
    =& \mathrm{Var}(\bm h^\top\bm \Lambda_c^*(\bm A_c)^\top\bm \Lambda_c(\bm A_c)\bm V^{-1}\mathbb E[\bm v_c] - \bm h_{{\bm \Lambda^\ast}}^\top\bm \Lambda_c(\bm A_c)^\top\bm \Lambda_c(\bm A_c)\bm V^{-1}\mathbb E[\bm v_c] + \bm h_{{\bm \Lambda^\ast}}^\top\bm v_c)\\
    =&\mathrm{Var}((\bm \Lambda_c^*(\bm A_c)\bm h - \bm \Lambda_c(\bm A_c)\bm h_{{\bm \Lambda^\ast}})^\top\bm \Lambda_c(\bm A_c)\bm V^{-1}\mathbb E[\bm v_c] + \bm h_{{\bm \Lambda^\ast}}^\top\bm v_c)
\end{align*}
Now, note that
\begin{align*}
    &\bm \Lambda_c^*(\bm A_c)\bm h = \bm g_c^{(\bm A_c)}\textrm{ and , }\\
    &\bm \Lambda_c(\bm A_c)\bm V^{-1}\mathbb E\bm v_c = \bm \Lambda_c(\bm A_c)\bm V^{-1}\mathbb E[\bm \Lambda_c(\bm A_c)^\top\bm w_{\mathrm{IPW}}] = {\mathcal P}\bm w_{\mathrm{IPW}},
    \Lambda_c(\bm A_c)\bm h_{{\bm \Lambda^\ast}} = {\mathcal P}\bm g_c^{(\bm A_c)}.
\end{align*}
Finally, note that $\bm h_{{\bm \Lambda^\ast}}^\top\bm v_c = (\bm \Lambda_c(\bm A_c)\bm h_{{\bm \Lambda^\ast}})^\top\bm w_{\mathrm{IPW},c} = \bm w_{\mathrm{IPW},c}^\top \mathcal P\bm g_c^{(\bm A_c)}$. This shows that
\[
    \nabla f(\bm \mu)^\top\bm \Sigma \nabla f(\bm \mu) = \mathrm{Var}\left[\left((I - {\mathcal P})\bm g_c^{(\bm A_c)}\right)^\top\left({\mathcal P}\bm w_{\mathrm{IPW},c}\right)+\bm w_{\mathrm{IPW},c}^\top \mathcal P\bm g_c^{(\bm A_c)}\right].
\]

Finally, applying the transformation $(t_1,t_2)\mapsto t_1+t_2$, we again invoke the Delta method on the joint distribution of $(T_1, T_2)$ to obtain,
\begin{align*}
&\sqrt{n}\left(T\left(\wipwprojlamb(e)\right) - \bm \mu_f(\bm \Lambda, {\bm \Lambda^\ast})\right) = \sqrt{n}\left(T_1+T_2 - \bm \mu\right)\\
\dto& \mathcal N\left(0,\mathrm{Var}\left[\left((I - {\mathcal P})\bm g_c^{(\bm A_c)}\right)^\top\left({\mathcal P}\bm w_{\mathrm{IPW},c}\right)+\bm w_{\mathrm{IPW},c}^\top \mathcal P\bm g_c^{(\bm A_c)}\right] + \sigma^2\mathbb E\left[\|{\mathcal P}\bm w_{\mathrm{IPW,c}}||^2\right] \right),
\end{align*}
which proves the required asymptotic normality. We next give a proof sketch for establishing asymptotic normality of the weighted projection estimator.
\end{proof}
\subsubsection{Proof sketch for establishing asymptotic normality of the weighted projection estimator}
\label{sec:proof_sketch_weighted_projection}
We begin by noting that the weighted projection estimator's weights are given by $\wipwlamb(e) = \bm R\wipwlambtilde(e)$, where $\wipwlambtilde(e)$ satisfied
\begin{align*}
    {\rm Minimize:}~\frac{1}{2}\tilde{\bm w}^T\bm D\tilde{\bm w}~\textrm{ Subject to: } \bm \Lambda^T\bm D(\tilde{\bm w} - \tilde{\bm w}_{\rm IPW}) = \bm 0.
\end{align*}
Similarly, we write down the Lagarangian in this case as
\begin{align*}
    \mathcal L(\bm \lambda, \tilde{\bm w}) = \frac{1}{2}\tilde{\bm w}^T\bm D\tilde{\bm w} + \bm \lambda^T\bm \Lambda^T\bm D(\tilde{\bm w} - \tilde{\bm w}_{\rm IPW}).
\end{align*}
One similarly obtains that the first order conditions give $\wipwlambtilde(e) = -\bm \Lambda \hat{\bm \lambda}$, where $\hat{\bm \lambda}$ maximizes the dual:
\begin{align*}
    \mathcal L^*(\bm \lambda)= -\left(\frac{1}{2}\bm \lambda^T\bm \Lambda^T\bm D\bm \Lambda \bm \lambda + \bm \lambda^T\bm \Lambda^T\bm D\tilde{\bm w}_{\rm IPW}\right).
\end{align*}
From the fact that $\wipwlamb(e) = \bm R\wipwlambtilde(e)$, we get that $\wipwlamb(e)_c = -\bm \Lambda_c(A_c)\hat{\bm \lambda}$. Next, invoking the first-order conditions on the dual, we get that $\hat{\bm \lambda}$ satisfies
\begin{align*}
    &\bm \Lambda^T\bm D\bm \Lambda\hat{\bm \lambda} = -\bm \Lambda^T\bm D\tilde{\bm w}_{\rm IPW}\\
    \textrm{or equivalently, }&\frac{1}{n}\sum_{c=1}^n \mathbb E\left[\bm \Lambda_c(\bm A_c)^T\bm \Lambda_c(\bm A_c)\middle| \bm X_c\right]\hat{\bm \lambda} = -\frac{1}{n}\sum_{c=1}^n \tilde{\bm v}_c,
\end{align*}
where, $\tilde{\bm v}_c = \mathbb E\left[\bm \Lambda_c(\bm A_c)^T\bm w_{{\rm IPW},c}\middle| \bm X_c\right]$. One can now analogously follow the steps from Equation~\eqref{eq:Tterms} in the above proof to get the desired CLT for $T(\wipwlamb(e))$.
\subsection{Proof of Theorem~\ref{thm:clt_fixed}}
\label{sec:proof_thm_clt_fixed}
\begin{proof}

To show the asymptotic normality for the balancing estimator, $T(\wnulllamb)$, we follow the proof shown in Section~\ref{sec:proof_thm_known_propensity_clt_fixed}. The only modification is the expression for $\bm v_c$, which is given by
\[
    \bm v_c = \frac{1}{M_c}\sum_{\bm a_c\in \{0,1\}^{M_c}}\bm \Lambda_c(\bm a_c)^\top\bm 1 f(\bm a_c, \bm X_c).
\]
Let us use $\bm v_c^{(\stwo)}$ to denote the $\bm v_c$ in the proof in Section~\ref{sec:proof_thm_known_propensity_clt_fixed}. Then, by the law of total expectation, we have
\begin{align*}
    \mathbb E[\bm v_c^{(\stwo)}] &= \mathbb E\left[\mathbb E[\bm v_c^{(\stwo)}\mid\bm X_c]\right]\\ 
    &=  \mathbb E\left[\mathbb E[\bm\Lambda_c(\bm A_c)^\top\bm w_{\mathrm{IPW,c}}\mid\bm X_c]\right]\\
    &=\mathbb E\left[\frac{1}{M_c}\sum_{\bm a\in \{0,1\}^{m_c}}\bm \Lambda_c(\bm a)^\top\bm 1f(\bm a;\bm X_c)\right] \\
    & = \mathbb E[\bm v_c].
\end{align*}
This shows that the expectation of $\bm v_c$ remains identical, so that any part of the derivation in Section~\ref{sec:proof_thm_known_propensity_clt_fixed} that only depends on $\bm v_c$ through $\mathbb E\bm v_c$ follows the same way. However, we need to verify that $\mathbb E[\|\bm v_c\|^2]<\infty$. We have shown this for $\bm v_c^{(\stwo)}$. We now show it for the balancing estimator as follows,
\begin{align*}
    \mathbb E[\|\bm v_c\|^2] &= \mathbb E\left[\left\|\mathbb E[\bm \Lambda_c(\bm A_c)^\top\bm w_{\mathrm{IPW},c}\mid \bm X_c]\right\|^2\right]\\
    &\leq \mathbb E\left[\mathbb E[\| \bm \Lambda_c(\bm A_c)^\top\bm w_{\mathrm{IPW},c}\|^2\mid \bm X_c]\right]\\
    &\leq \mathbb E\left[\| \bm \Lambda_c(\bm A_c)^\top\bm w_{\mathrm{IPW},c}\|^2\right]\\
    &=\mathbb E\left[\|\bm v_c^{(\stwo)}\|^2\right]<\infty.
\end{align*}
The other difference is the fact that the balancing equation has a solution only when $B_{\bm \Lambda}$ holds. But, under Assumption~\ref{assm:balancing_feasibility}, $P(B_{\bm \Lambda})\to 1$. Hence, one can define $\wnulllamb$ arbitrarily on $B_{\bm \Lambda}^c$, and because $P(B_{\bm \Lambda}^c)\to 0$, this extension does not affect the limiting distribution. Then, one can analogously follow the proof in Section~\ref{sec:proof_thm_known_propensity_clt_fixed} using this new expression for $\bm v_c$ (and note that $\bm v_c = \mathbb E[\bm v_c^{\stwo}\mid \bm X_c]$) and conclude that
\begin{align*}
&\sqrt{n}\left(T\left(\wnulllamb\right) - \bm \mu_f(\bm \Lambda, {\bm \Lambda^\ast})\right)\\
    \dto& \mathcal N\left(0,\mathrm{Var}\left[\left((I - {\mathcal P})\bm g_c^{(\bm A_c)}\right)^\top\left({\mathcal P}\bm w_{\mathrm{IPW},c}\right) + \mathbb E[\bm w_{\mathrm{IPW},c}^\top P^*\bm g_c^{(\bm A_c)}\mid \bm X_c]\right] + \sigma^2\mathbb E\left[\|{\mathcal P}\bm w_{\mathrm{IPW,c}}||^2\right] \right).
\end{align*}
\end{proof}

\subsection{Proof of Theorem \ref{thm:var_estimation}}
\label{sec:proof_thm_var_estimation}

The complete statement of Theorem~\ref{thm:var_estimation} is presented in the following lemma, which is proved subsequently.
\begin{lemma}[Consistent variance estimators: full variance expressions]
\label{lem:var_estimation}
Define $\bm v_c^{(\stwo)} = \bm \Lambda_c(\bm A_c)^\top \bm w_{\mathrm{IPW},c}$ and $\bm v_c^{(\sthree)} =  \mathbb E[\bm \Lambda_c(\bm A_c)^\top \bm w_{\mathrm{IPW},c}\mid \bm X_c]$. Next, let $\hat {\bm \lambda}^{j}$ be a solution to the equation,
$$
    \frac{1}{n}\sum_{c=1}^n \bm \Lambda_c(\bm A_c)^\top\bm \Lambda_c(\bm A_c)\bm \lambda = -\frac{1}{n}\sum_{c=1}^n\bm v^{(j)}_c,
$$
    for $j\in\{\stwo,\sthree\}$, and, $\hat {\bm \lambda}^{\sone}$ be a solution to 
$$
    \frac{1}{n}\sum_{c=1}^n\mathbb E[ \bm \Lambda_c(\bm A_c)^\top\bm \Lambda_c(\bm A_c)\mid \bm X_c]\bm \lambda = -\frac{1}{n}\sum_{c=1}^n\bm v^{(\sthree)}_c.
$$
Also, define,
\begin{align*}
\left(\bm\eta_c^{\stwo}\right)^\top &= \begin{pmatrix}
    \bm \Lambda_c(\bm A_c)^\top\wipwprojlamb(e)_c - \bm v_c^{(\stwo)} & \wipwprojlamb(e)_c^\top\bm y_c - T\left(\wipwprojlamb(e)\right)
    \end{pmatrix},\\
    \left(\bm\eta_c^{\sthree}\right)^\top &= \begin{pmatrix}
    \bm \Lambda_c(\bm A_c)^\top\left(\wnulllamb\right)_c - \bm v_c^{(\sthree)} & \left(\wnulllamb\right)_c^\top\bm y_c - T\left(\wnulllamb\right)
    \end{pmatrix},\\
    \left(\bm\eta_c^{\sone}\right)^\top &= \begin{pmatrix}
        -\mathbb [\bm \Lambda_i(\bm A_c)^\top\bm \Lambda_c\bm A_c)\mid \bm X_c]\hat{\bm \lambda}^{\sone} - \bm v_c^{(\stwo)} & \wipwlamb(e)_c^\top\bm y_c - T\left(\wipwlamb(e)\right)
        \end{pmatrix}.
    \end{align*}
    and
    \[
    \bm L = \begin{pmatrix}
        \left(\left(\bm R\bm \Lambda\right)^+\bm Y\right)^\top & -1
    \end{pmatrix}^\top.
    \]
    Finally, the variance estimator is given by
    \begin{align}
        \label{eqn:var_est_exprn}
        \hat \sigma^{2}_j = \frac{1}{n}\sum_{c=1}^n\left( \left(\bm \eta_c^{j}\right)^\top\bm L\right)^2.
    \end{align}
    Then, under the setting and assumptions in Theorem \ref{thm:clt_fixed}, we have that $\hat \sigma^{2}_j\Pto \sigma^2_{jf}(\bm \Lambda, \bm \Lambda^\ast)$, as $n\rightarrow \infty$, $\forall j \in \{\sone, \stwo, \sthree\}$.
\end{lemma}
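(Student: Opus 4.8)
The plan is to recognize $\hat\sigma^2_j$ as the classical sandwich variance estimator associated with the $Z$-estimator $\hat{\bm\lambda}^j$ and to establish its consistency by matching it, term by term, to the asymptotic linearization already produced in the proofs of Theorems~\ref{thm:clt_fixed}~and~\ref{thm:known_propensity_clt_fixed}. Writing $\bm L^\ast := (\bm h_{\bm\Lambda^\ast}^\top, -1)^\top$ for the probability limit of $\bm L$ and $\bm\eta_c^{\ast,j}$ for the oracle version of $\bm\eta_c^j$ (obtained by replacing $\hat{\bm\lambda}^j$, the optimal weights, and $T(\hat{\bm w})$ by their limits $\bm\lambda^{\ast,j}$, the population-optimal weights, and $\mu_f(\bm\Lambda,\bm\Lambda^\ast)$), the delta-method bookkeeping in Section~\ref{sec:proof_thm_known_propensity_clt_fixed} shows precisely that the scalar $\mathrm{IF}_c^j := (\bm\eta_c^{\ast,j})^\top\bm L^\ast$ is the per-cluster influence function of $\sqrt{n}\big(T(\hat{\bm w}) - \mu_f(\bm\Lambda,\bm\Lambda^\ast)\big)$, so that $\mathbb E[(\mathrm{IF}_c^j)^2] = \sigma^2_{jf}(\bm\Lambda,\bm\Lambda^\ast)$. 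Since $\hat\sigma^2_j = \frac1n\sum_{c=1}^n\big((\bm\eta_c^j)^\top\bm L\big)^2$, the task reduces to showing $\frac1n\sum_{c=1}^n\big((\bm\eta_c^j)^\top\bm L\big)^2 \Pto \mathbb E[(\mathrm{IF}_c^j)^2]$.

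First I would record the convergence of all plugged-in quantities. The vector $\bm L$ depends on the data only through the pseudo-inverse OLS estimate $(\bm R\bm\Lambda)^+\bm Y$; under Assumption~\ref{assm:clt_technical_fixed}(a) the matrix $\bm V = \mathbb E[\bm\Lambda_c(\bm A_c)^\top\bm\Lambda_c(\bm A_c)]$ is invertible, and the weak law of large numbers applied to $\frac1n\sum_c\bm\Lambda_c(\bm A_c)^\top\bm\Lambda_c(\bm A_c)$ and $\frac1n\sum_c\bm\Lambda_c(\bm A_c)^\top\bm y_c$ (whose summands are square-integrable by Assumption~\ref{assm:clt_technical_fixed}(b) and the second-moment bounds established in Section~\ref{sec:proof_thm_known_propensity_clt_fixed}) gives $(\bm R\bm\Lambda)^+\bm Y \Pto \bm h_{\bm\Lambda^\ast}$ and hence $\bm L \Pto \bm L^\ast$. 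The same WLLN, together with invertibility of $\bm V$, yields the $Z$-estimator consistency $\hat{\bm\lambda}^j \Pto \bm\lambda^{\ast,j}$ (already invoked in the CLT proofs), while $T(\hat{\bm w}) \Pto \mu_f(\bm\Lambda,\bm\Lambda^\ast)$ by Theorems~\ref{thm:clt_fixed}~and~\ref{thm:known_propensity_clt_fixed}.

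Next I would split $\frac1n\sum_c\big((\bm\eta_c^j)^\top\bm L\big)^2$ into the oracle average $\frac1n\sum_c(\mathrm{IF}_c^j)^2$ plus a remainder. The oracle average converges to $\mathbb E[(\mathrm{IF}_c^j)^2] = \sigma^2_{jf}$ by the ordinary WLLN for i.i.d.\ clusters, the finiteness of $\mathbb E[(\mathrm{IF}_c^j)^2]$ being exactly the square-integrability of the residual entries verified in Section~\ref{sec:proof_thm_known_propensity_clt_fixed}. The remainder is a finite sum of cross terms, each a product of a consistency gap $\big(\hat{\bm\lambda}^j - \bm\lambda^{\ast,j}\big)$, $\big(T(\hat{\bm w}) - \mu_f\big)$, or $\big(\bm L - \bm L^\ast\big)$ with an empirical data moment that is $O_{\mathbb P}(1)$ under the moment bounds of Assumption~\ref{assm:clt_technical_fixed}; since each gap is $o_{\mathbb P}(1)$, repeated application of the Cauchy--Schwarz inequality and Slutsky's theorem shows the remainder is $o_{\mathbb P}(1)$. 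Combining the two pieces gives the claimed $\hat\sigma^2_j \Pto \sigma^2_{jf}(\bm\Lambda,\bm\Lambda^\ast)$. For $j = \sthree$ the argument is run conditionally on the event $B_{\bm\Lambda}$, which changes nothing in the limit since $\mathbb P(B_{\bm\Lambda}) \to 1$ under Assumption~\ref{assm:balancing_feasibility}.

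The step I expect to be the main obstacle is the uniform control of this remainder: because $\bm\eta_c^j$ depends on the estimated nuisances $\hat{\bm\lambda}^j$ and $T(\hat{\bm w})$ (and, through the optimal weights, on the projection built from $\hat{\bm\lambda}^j$), the empirical second moment of the estimated residuals is not a plain i.i.d.\ average, and one must produce integrable envelopes guaranteeing that substituting consistent estimates for the true nuisances perturbs the average quadratic form by only $o_{\mathbb P}(1)$. This is exactly where the fourth-moment conditions in Assumption~\ref{assm:clt_technical_fixed} are indispensable, as they supply the dominating functions needed to turn $o_{\mathbb P}(1)$ parameter errors into $o_{\mathbb P}(1)$ errors in the averaged squares; by contrast, the accompanying algebraic identity $(\bm L^\ast)^\top\mathbb E[\bm\eta_c^{\ast,j}(\bm\eta_c^{\ast,j})^\top]\bm L^\ast = \sigma^2_{jf}$ is routine bookkeeping that follows the same block decomposition as the variance computation in Section~\ref{sec:proof_thm_known_propensity_clt_fixed}.
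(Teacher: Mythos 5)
Your proposal is correct and follows essentially the same route as the paper: the paper also frames $(\hat{\bm\lambda}, T(\hat{\bm w}))$ as a joint $Z$-estimator, identifies $\bm L$ as a consistent estimate of the last row of $\bm U^{-1} = \left(\mathbb E[\bm\eta_c'(\bm\theta^*)]\right)^{-1}$ (your $\bm L^\ast$), and controls the plug-in error from substituting $\hat{\bm\theta}$ for $\bm\theta^*$ via a local Lipschitz bound on $\bm\eta_c$ with square-integrable Lipschitz constant (Lemma~\ref{lem:eta_lipschitz}), which plays exactly the role of the integrable envelopes in your Cauchy--Schwarz treatment of the cross terms. If anything, your formulation is slightly more careful on one point: the paper reduces the problem to showing $\frac{1}{n}\sum_c\bm\eta_c(\hat{\bm\theta})\Pto\mathbb E[\bm\eta_c(\bm\theta^*)]$, whereas the quantity actually needed is the convergence of the averaged \emph{quadratic} form $\frac{1}{n}\sum_c\bigl((\bm\eta_c^{j})^\top\bm L\bigr)^2$, which is what you target directly; the paper's Lipschitz lemma supplies the tools to close this gap, but your explicit oracle-plus-remainder decomposition states the required argument more precisely.
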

\begin{proof}
The proof follows the same technique as in \cite{chanetal2016,additivefactorial}. For simplicity, we prove the result only for the case of $j=\stwo$. The cases of the other estimators follow similarly. For notational simplicity, we use $\hat{\bm w}$ to denote the fitted weights instead of $\wipwprojlamb(e)$, and in general, we drop the the script $j$ from notations. Note that from the proof in Section \ref{sec:proof_thm_known_propensity_clt_fixed}, we have shown that $\hat{\bm w}_c = -\bm \Lambda_c(\bm A_c)\hat{\bm \lambda}$. Thus, in general, one might consider the class of weights given by $\bm w_c(\bm \lambda) = -\bm \Lambda_c(\bm A_c)\bm \lambda$ and note that $\hat{\bm w}_c = \bm w_c(\hat{\bm \lambda})$. Now, define $\bm \theta = (\bm \lambda^\top, t)^\top$ and \begin{align*}
\bm\eta_c(\bm \theta) = \begin{pmatrix}-\bm \Lambda_c(\bm A_c)^\top\bm \Lambda_c(\bm A_c)\bm \lambda - \bm \Lambda_c(\bm A_c)^\top\bm w_{\mathrm{IPW},c}\\ \bm w_c(\bm \lambda)^\top\bm y_c - t\end{pmatrix}.
\end{align*}
Then note that $\hat{\bm \theta} = (\hat{\bm \lambda}, T(\hat{\bm w}))^\top$ satisfies
\[
    \sum_{c=1}^n \bm \eta_c(\hat{\bm \theta)} = \bm 0,
\]
showing that $\hat{\bm \theta}$ is a Z-estimator for $\bm \theta^* = (\bm \lambda^*, \mu_f(\bm \Lambda, {\bm \Lambda^\ast}))$ that satisfies $\mathbb E[\bm \eta_i(\bm \theta^*)] = \bm 0$. We again want a CLT for $\hat{\bm \theta}$ \citep[Theorem 5.41]{vandervaart1998asymptotic}. 

Note, we have that
\begin{align*}
    {\bm \eta_c}'(\bm \theta) = \begin{bmatrix}
        -\bm \Lambda_c(\bm A_c)^\top\bm \Lambda_c(\bm A_c) & \bm 0\\
        \bm 0^\top & -1
    \end{bmatrix}\implies \mathbb E{\bm \eta_c}'(\bm \theta) = \begin{bmatrix}
        -\bm V & \bm 0\\
        \bm 0^\top & -1
    \end{bmatrix},
\end{align*}
which due to Assumption \ref{assm:clt_technical_fixed}, is non-singular. Next, we show that $\mathbb \|\bm \eta_c(\bm\theta^*)\|<\infty$. Note that the first block of $\bm \eta_c(\bm \theta^*)$ is given by
\[
    -\bm \Lambda_c(\bm A_c)^\top\bm \Lambda_c(\bm A_c)\bm \lambda^* - \bm \Lambda_c(\bm A_c)^\top\bm w_{\mathrm{IPW},c},
\]
which has a finite expected squared norm under Assumption~\ref{assm:clt_technical_fixed} following the proof in Section~\ref{sec:proof_thm_known_propensity_clt_fixed}. The squared-norm (second moment) of the second block is given by
\begin{align*}
    &(\bm w_c(\bm \lambda^*)^\top\bm y_c- \mu_{f}(\bm \Lambda, {\bm \Lambda^\ast}))^2\\
    =&(-{\bm \lambda^*}^\top\bm \Lambda_c(\bm A_c)^\top\bm \Lambda_c(\bm A_c)\bm h  - {\bm \lambda^*}^\top\bm \Lambda_c(\bm A_c)^\top\bm \epsilon_c - \mu_f(\bm \Lambda, {\bm \Lambda^\ast}))^2\\
    < & 4\left(-{\bm \lambda^*}^\top\bm \Lambda_c(\bm A_c)^\top\bm \Lambda_c(\bm A_c)\bm h\right)^2 + 4\left({\bm \lambda^*}^\top\bm \Lambda_c(\bm A_c)^\top\bm \epsilon_c\right) + 4\mu_f^2(\bm \Lambda, {\bm \Lambda^\ast}).
\end{align*}
Of these, the first term has finite expectation, again following the proof in Section~\ref{sec:proof_thm_known_propensity_clt_fixed}, while the last term is a constant. Next, we show that $\mathbb E\left({\bm \lambda^*}^\top\bm \Lambda_c(\bm A_c)^\top\bm \epsilon_c\right)<\infty$, which follows from the fact
\begin{align*}
    \mathbb E\left[({\bm \lambda^*}^\top\bm \Lambda_c(\bm A_c)^\top\bm \epsilon_c)^2\right]
    = &\mathbb E\left[\mathbb E[({\bm \lambda^*}^\top\bm \Lambda_c(\bm A_c)^\top\bm \epsilon_c)^2\mid\bm X_c]\right]\\
    =& \sigma^2\mathbb E\left[    {\bm \lambda^*}^\top\bm \Lambda_c(\bm A_c)^\top \bm \Lambda_c(\bm A_c)\bm \lambda^*  \right],
\end{align*}
which is again finite, by the proof in Section~\ref{sec:proof_thm_known_propensity_clt_fixed}. Finally note that all the second partial derivatives of $\bm \eta_c(\bm \theta)$ are 0, and hence, we now invoke \citep[Theorem 5.41]{vandervaart1998asymptotic} to conclude that 
\[
    \sqrt{n}(\hat{\bm \theta} - \bm \theta^*)\dto \mathcal N\left(\bm 0, \bm U^{-1}\mathbb E[\bm \eta_i(\bm \theta^*)\bm \eta_i(\bm \theta^*)^\top](\bm U^{-1})^\top\right),
\]
where, $\bm U = \mathbb E[\bm \eta_i'(\bm \theta^*)]$. From here, we derive a CLT for $T(\hat{\bm w})$, as it is the last component of $\hat{\bm \theta}$, as:
\[
    \sqrt{n}(T(\hat{\bm w}) - \mu_f(\bm \Lambda, {\bm \Lambda^\ast}))\dto  \mathcal N\left( 0, \mathbb E\left(\bm \eta_i(\bm \theta^*)^\top\bm U_l^{-1} \right)^2\right), 
\]
where $\bm U_l^{-1}$ denotes the last row of $\bm U^{-1}$. Using expression for the inversion of block matrices, one obtains that the 
last row of $\bm U^{-1}$ is given by, 
\[
    \begin{bmatrix}(\bm V^{-1}\mathbb E[\bm \Lambda_c(\bm A_c)^\top\bm Y_c])^\top & -1\end{bmatrix}.
\]
Note that the first component of the vector is the theoretical OLS coefficient of regressing $\bm Y_c$ onto the columnspace of $\bm \Lambda_i(\bm A_c)$, so that $\bm L$ is a consistent estimator of $\bm U^{-1}_l$. Our proof is complete, if we just show that
\[
    \frac{1}{n}\sum_{c=1}^n \bm \eta_c(\hat{\bm \theta)}\Pto \mathbb E[\bm \eta_c(\bm \theta^*)].
\]
To see this, first write $\frac{1}{n}\sum_{c=1}^n \bm \eta_c(\hat{\bm \theta}) = \frac{1}{n}\sum_{c=1}^n (\bm \eta_c(\hat{\bm \theta)} - \bm \eta_c(\bm \theta^*)) + \frac{1}{n}\sum_{c=1}^n \bm \eta_c(\bm \theta^*)$. It suffices to show that the first summand goes to 0 in probability. The proof is completed from the fact that $\hat{\bm \theta}\Pto \bm \theta^*$, and Lemma~\ref{lem:eta_lipschitz} from which we can write that with probability going to 1,
\begin{align*}
    \left\|\frac{1}{n}\sum_{c=1}^n \left(\bm \eta_c(\hat{\bm \theta}) - \bm \eta_c(\bm \theta^*)\right)\right\|\leq \|\hat{\bm \theta} - \bm \theta^*\|\frac{1}{n}\sum_{c=1}^n \dot{\eta}_c.
\end{align*}
Because $\mathbb E |\dot\eta^2|<\infty$, we conclude that $\frac{1}{n}\sum_{c=1}^n (\bm \eta_c(\hat{\bm \theta)} - \bm \eta_c(\bm \theta^*))\Pto 0$, thereby completing the proof.
\end{proof}

\begin{lemma}
    \label{lem:eta_lipschitz}
    Under Assumption~\ref{assm:clt_technical_fixed}, there is a neighborhood about $\bm \theta^*$ and a $\dot{\eta}_c:=\dot\eta_c(\bm \Lambda_c(\bm A_c),\bm \epsilon_c)$ with $\mathbb E|\dot\eta_c|<\infty$, such that for $\bm \theta_1,\bm\theta_2$ in that neighborhood, we have that $\|\bm \eta_c(\bm \theta_1) - \bm \eta_c(\bm \theta_2)\|\leq \dot{\eta}_c\|\bm \theta_1 - \bm \theta_2\|$.
\end{lemma}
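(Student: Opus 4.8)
The plan is to exploit the fact that the estimating function $\bm\eta_c(\bm\theta)$ constructed in the proof of Lemma~\ref{lem:var_estimation} is \emph{affine} in $\bm\theta = (\bm\lambda^\top, t)^\top$: its first block $-\bm\Lambda_c(\bm A_c)^\top\bm\Lambda_c(\bm A_c)\bm\lambda - \bm v_c^{(\stwo)}$ is linear in $\bm\lambda$ up to a constant, and its second block $-\bm\lambda^\top\bm\Lambda_c(\bm A_c)^\top\bm y_c - t$ is linear in $(\bm\lambda, t)$ (recall $\bm w_c(\bm\lambda) = -\bm\Lambda_c(\bm A_c)\bm\lambda$). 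Consequently the difference $\bm\eta_c(\bm\theta_1) - \bm\eta_c(\bm\theta_2)$ is \emph{exactly} linear in $\bm\theta_1 - \bm\theta_2$, with data-dependent coefficients involving only $\bm\Lambda_c(\bm A_c)$ and $\bm y_c$. In particular the Lipschitz bound will hold globally, so the neighborhood of $\bm\theta^*$ required by the statement may be taken to be all of $\mathbb R^{d_h+1}$.

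First I would compute the difference block by block. Writing $\Delta\bm\lambda = \bm\lambda_1 - \bm\lambda_2$ and $\Delta t = t_1 - t_2$, the first block equals $-\bm\Lambda_c(\bm A_c)^\top\bm\Lambda_c(\bm A_c)\Delta\bm\lambda$, whose norm is at most $\lambda^2_{\max}(\bm\Lambda_c(\bm A_c))\,\|\Delta\bm\lambda\|$, while the scalar second block equals $-\Delta\bm\lambda^\top\bm\Lambda_c(\bm A_c)^\top\bm y_c - \Delta t$, whose absolute value is at most $\|\bm\Lambda_c(\bm A_c)^\top\bm y_c\|\,\|\Delta\bm\lambda\| + |\Delta t|$. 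Combining these two bounds with $(a+b)^2\le 2a^2+2b^2$ gives
\begin{equation*}
\|\bm\eta_c(\bm\theta_1) - \bm\eta_c(\bm\theta_2)\|^2 \le \left(\lambda^4_{\max}(\bm\Lambda_c(\bm A_c)) + 2\|\bm\Lambda_c(\bm A_c)^\top\bm y_c\|^2\right)\|\Delta\bm\lambda\|^2 + 2|\Delta t|^2,
\end{equation*}
so that the choice $\dot\eta_c := \left(\lambda^4_{\max}(\bm\Lambda_c(\bm A_c)) + 2\|\bm\Lambda_c(\bm A_c)^\top\bm y_c\|^2 + 2\right)^{1/2}$ yields $\|\bm\eta_c(\bm\theta_1) - \bm\eta_c(\bm\theta_2)\|\le\dot\eta_c\,\|\bm\theta_1-\bm\theta_2\|$.

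It then remains to verify $\mathbb E[\dot\eta_c^2]<\infty$, which is where the moment assumptions enter. The term $\mathbb E[\lambda^4_{\max}(\bm\Lambda_c(\bm A_c))]$ is finite directly by Assumption~\ref{assm:clt_technical_fixed}(b). For the middle term I would decompose $\bm\Lambda_c(\bm A_c)^\top\bm y_c = \bm\Lambda_c(\bm A_c)^\top\bm\Lambda_c^\ast(\bm A_c)\bm h + \bm\Lambda_c(\bm A_c)^\top\bm\epsilon_c$ using the model $\bm g = \bm\Lambda^\ast\bm h$, and bound the two pieces separately. The signal piece satisfies $\mathbb E\|\bm\Lambda_c(\bm A_c)^\top\bm\Lambda_c^\ast(\bm A_c)\bm h\|^2 \le \|\bm h\|^2\,\mathbb E[\lambda^2_{\max}(\bm\Lambda_c(\bm A_c)^\top\bm\Lambda_c^\ast(\bm A_c))]<\infty$, again by Assumption~\ref{assm:clt_technical_fixed}(b). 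For the noise piece, conditioning on $(\bm X_c,\bm A_c)$ and using that $\bm\epsilon_c$ has covariance $\sigma^2\bm I$ (Assumption~\ref{assm:iid_error}) gives $\mathbb E\|\bm\Lambda_c(\bm A_c)^\top\bm\epsilon_c\|^2 = \sigma^2\,\mathbb E\|\bm\Lambda_c(\bm A_c)\|_F^2 \le \sigma^2 d_h\,\mathbb E[\lambda^2_{\max}(\bm\Lambda_c(\bm A_c))]<\infty$, where the Frobenius-to-operator-norm bound uses that $\bm\Lambda_c(\bm A_c)$ has the fixed number $d_h=\dim(\bm h)$ of columns and $\mathbb E[\lambda^2_{\max}]\le(\mathbb E[\lambda^4_{\max}])^{1/2}$. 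This establishes the integrability.

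There is no serious obstacle here: the essential observation that collapses the lemma is that $\bm\eta_c$ is affine in $\bm\theta$, which makes the Lipschitz constant independent of the evaluation point and reduces everything to a second-moment computation. The only step requiring genuine care is the noise-piece bound, which crucially relies on the fixed-$\bm h$ setting to control $\|\bm\Lambda_c(\bm A_c)\|_F^2$ by $d_h\,\lambda^2_{\max}(\bm\Lambda_c(\bm A_c))$; without a fixed column dimension this step would fail, which is precisely why the lemma is confined to this regime. Finally I would remark that the weighted-projection ($\sone$) and balancing ($\sthree$) versions of $\bm\eta_c$ are handled identically: altering the constant offset $\bm v_c^{(j)}$ leaves the difference $\bm\eta_c(\bm\theta_1)-\bm\eta_c(\bm\theta_2)$ unchanged, and for $\sone$ replacing $\bm\Lambda_c(\bm A_c)^\top\bm\Lambda_c(\bm A_c)$ by its conditional expectation $\mathbb E[\bm\Lambda_c(\bm A_c)^\top\bm\Lambda_c(\bm A_c)\mid\bm X_c]$ only decreases the relevant operator norm by Jensen's inequality, so the same integrable envelope applies.
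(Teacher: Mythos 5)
Your proposal is correct and follows essentially the same route as the paper's proof: both exploit the affine dependence of $\bm \eta_c$ on $\bm \theta$ to get a global Lipschitz bound with constant controlled by $\lambda_{\max}(\bm \Lambda_c(\bm A_c))$ and $\|\bm \Lambda_c(\bm A_c)^\top \bm y_c\|$, then decompose $\bm y_c$ into signal plus noise and invoke Assumption~\ref{assm:clt_technical_fixed} for integrability. If anything, your treatment of the noise term---conditioning on $(\bm X_c,\bm A_c)$ to get $\sigma^2\|\bm \Lambda_c(\bm A_c)\|_F^2 \le \sigma^2 d_h\,\lambda^2_{\max}(\bm \Lambda_c(\bm A_c))$---is more careful than the paper's, which silently drops the $\|\bm\epsilon_c\|^2$ factor when bounding the quadratic form $\bm\epsilon_c^\top\bm\Lambda_c(\bm A_c)^\top\bm\Lambda_c(\bm A_c)\bm\epsilon_c$.
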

\begin{proof}
    Note that,
\begin{align}
\label{eqn:eta_lipschitz}
\begin{aligned}
\|\bm \eta_c(\bm \theta_1) - \bm \eta_c(\bm \theta_2)\|^2&\leq  \left\|\bm \Lambda_c(\bm A_c)^\top\bm \Lambda_c(\bm A_c)(\bm \lambda_1 - \bm \lambda_2)\right\|^2 + \left(\bm y_c^\top\bm \Lambda_c(\bm A_c)(\bm \lambda_1 - \bm \lambda_2)\right)^2 + (t_1 - t_2)^2\\
&\leq \sigma^2_{\mathrm{max}}(\bm \Lambda_c(\bm A_c)^\top\bm \Lambda_c(\bm A_c))\|\bm \lambda_1 - \bm \lambda_2\|^2 + \|\bm y_c^\top\bm \Lambda_c(\bm A_c)\|^2\|\bm \lambda_1 - \bm \lambda_2\|^2 + (t_1-t_2)^2.
\end{aligned}
\end{align}
we write $\bm y_c^\top\bm \Lambda_c(\bm A_c) = \bm h^\top\bm \Lambda_c^*(\bm A_c)^\top\bm \Lambda_c(\bm A_c) + \bm \epsilon_c^\top\bm \Lambda_c(\bm A_c)$, which implies
$$\begin{aligned}
\|\bm y_c^\top\bm \Lambda_c(\bm A_c)\|^2\leq &2\sigma^2_{\mathrm{max}}\left(\bm \Lambda_c(\bm A_c)^\top{\bm \Lambda^\ast}(\bm A_c)\right)\|\bm h\|^2 + 2\bm \epsilon_c^\top\bm \Lambda_c(\bm A_c)^\top\bm \Lambda_c(\bm A_c)\bm \epsilon_c\\
\leq &  2\sigma^2_{\mathrm{max}}\left(\bm \Lambda_c(\bm A_c)^\top{\bm \Lambda^\ast}(\bm A_c)\right)\|\bm h\|^2 + 2\sigma_{\mathrm{max}}\left(\bm \Lambda_c(\bm A_c)^\top\bm \Lambda(\bm A_c)\right)\| \bm \epsilon_c\|^2.
\end{aligned}
$$ 
These relations imply that
\begin{align*}
    &\|\bm \eta_c(\bm \theta_1) - \bm \eta_c(\bm \theta_2)\|^2\\
    \leq& \underbrace{ \left(\sigma^2_{\mathrm{max}}(\bm \Lambda_c(\bm A_c)^\top\bm \Lambda_c(\bm A_c)) + 2\sigma^2_{\mathrm{max}}\left(\bm \Lambda_c(\bm A_c)^\top{\bm \Lambda^\ast}(\bm A_c)\right)\|\bm h\|^2 + 2\sigma_{\mathrm{max}}\left(\bm \Lambda_c(\bm A_c)^\top\bm \Lambda(\bm A_c)\right)\| \bm \epsilon_c\|^2 + 1\right) }_{\dot\eta_c}\|\bm \theta_1 - \bm\theta_2\|^2.
\end{align*}
Then note that, 
\begin{align*}
    &\mathbb E \left[|\dot\eta_c|\right]\\
    \leq & \mathbb E\left[\sigma^2_{\mathrm{max}}(\bm \Lambda_c(\bm A_c)^\top\bm \Lambda_c(\bm A_c))\right] + 2\mathbb E\left[\sigma^2_{\mathrm{max}}\left(\bm \Lambda_c(\bm A_c)^\top{\bm \Lambda^\ast}(\bm A_c)\right)\right] \|\bm h\|^2 + 2\mathbb E\left[M_c\sigma_{\mathrm{max}}\left(\bm \Lambda_c(\bm A_c)^\top\bm \Lambda(\bm A_c)\right)\right]\sigma^2 + 1.
\end{align*}
The finiteness of the above expression now follows from Assumption~\ref{assm:clt_technical_fixed}.
\end{proof}

\subsection{Proof of Theorem~\ref{thm:decide_true_structure}}
\label{sec:proof_decide_true_structure}
\begin{proof}
Fix $1\le l\le L-1$. We first prove parts (a) and (b) for the comparison between $\bm{\Lambda}_l$ and $\bm{\Lambda}_L$. Throughout the proof, write
\[
\hat{\bm{w}}^{(1)}=\hat{\bm w}^{\sthree}_{\bm \Lambda_l}, \qquad \hat{\bm{w}}^{(2)}=\hat{\bm w}^{\sthree}_{\bm \Lambda_L}, \qquad \Delta\hat{\bm{w}}=\hat{\bm{w}}^{(1)}-\hat{\bm{w}}^{(2)}.
\]
Here $\hat{\bm w}^{\sthree}_{\bm \Lambda_j}$ denotes the balancing weight constructed using the low-rank structure $\bm{\Lambda}_j$, for $j\in \{l, L\}$.

First suppose that $H_l$ holds. Since $\operatorname{col.sp.}(\bm{\Lambda}_l)\le \operatorname{col.sp.}(\bm{\Lambda}_L)$, there exists a matrix $\bm{C}_l$ such that $\bm{\Lambda}_l=\bm{\Lambda}_L\bm{C}_l$. Under $H_l$, the true outcome regression lies in the column space of $\bm{\Lambda}_l$. Hence, for some vector $\bm{h}_l$,
\[
\bm{g}=\bm{\Lambda}_l\bm{h}_l=\bm{\Lambda}_L\bm{C}_l\bm{h}_l.
\]
On the event $B_{\bm{\Lambda}_l}\cap B_{\bm{\Lambda}_L}$, the two balancing weights satisfy
\[
\bm{\Lambda}_l^\top\bm{R}^\top\hat{\bm{w}}^{(1)}=\bm{\Lambda}_l^\top\bm{f}, \qquad \bm{\Lambda}_L^\top\bm{R}^\top\hat{\bm{w}}^{(2)}=\bm{\Lambda}_L^\top\bm{f}.
\]
Therefore,
\[
\begin{aligned}
(\Delta\hat{\bm{w}})^\top\bm{Y} &= (\Delta\hat{\bm{w}})^\top\bm{R}\bm{\Lambda}_l\bm{h}_l+(\Delta\hat{\bm{w}})^\top\bm{\epsilon} \\
&= (\hat{\bm{w}}^{(1)})^\top\bm{R}\bm{\Lambda}_l\bm{h}_l-(\hat{\bm{w}}^{(2)})^\top\bm{R}\bm{\Lambda}_L\bm{C}_l\bm{h}_l+(\Delta\hat{\bm{w}})^\top\bm{\epsilon} \\
&= \bm{f}^\top\bm{\Lambda}_l\bm{h}_l-\bm{f}^\top\bm{\Lambda}_L\bm{C}_l\bm{h}_l+(\Delta\hat{\bm{w}})^\top\bm{\epsilon} \\
&= (\Delta\hat{\bm{w}})^\top\bm{\epsilon}.
\end{aligned}
\]
The last equality follows from $\bm{\Lambda}_l=\bm{\Lambda}_L\bm{C}_l$.

It remains to derive the limiting distribution of the standardized noise term. By the expansion used in the proof of Theorem~\ref{thm:clt_fixed}, for $j\in\{l,L\}$, there exists a deterministic vector $\bm{\lambda}_j^*$ such that
\[
\hat{\bm{w}}_{\bm{\Lambda}_j,c}^{\sthree}=-\bm{\Lambda}_{j,c}(\bm{A}_c)\hat{\bm{\lambda}}_j, \qquad \hat{\bm{\lambda}}_j-\bm{\lambda}_j^*=O_p(n^{-1/2}).
\]
Hence,
\[
\begin{aligned}
\frac{1}{\sqrt n}(\Delta\hat{\bm{w}})^\top\bm{\epsilon} &= -\hat{\bm{\lambda}}_l^\top\frac{1}{\sqrt n}\sum_{c=1}^n\bm{\Lambda}_{l,c}(\bm{A}_c)^\top\bm{\epsilon}_c+\hat{\bm{\lambda}}_L^\top\frac{1}{\sqrt n}\sum_{c=1}^n\bm{\Lambda}_{L,c}(\bm{A}_c)^\top\bm{\epsilon}_c \\
&= -\bm{\lambda}_l^{*\top}\frac{1}{\sqrt n}\sum_{c=1}^n\bm{\Lambda}_{l,c}(\bm{A}_c)^\top\bm{\epsilon}_c+\bm{\lambda}_L^{*\top}\frac{1}{\sqrt n}\sum_{c=1}^n\bm{\Lambda}_{L,c}(\bm{A}_c)^\top\bm{\epsilon}_c+o_p(1).
\end{aligned}
\]
The remainder is $o_p(1)$ because
\[
    \frac1n\sum_{c=1}^n\bm{\Lambda}_{j,c}(\bm{A}_c)^\top\bm{\epsilon}_c\overset{p}{\to}\mathbb{E}[\bm{\Lambda}_{j,c}(\bm{A}_c)^\top\bm{\epsilon}_c]=0~{\rm and }~\hat{\bm{\lambda}}_j-\bm{\lambda}_j^*=O_p(n^{-1/2}).
\]
Hence, by the central limit theorem,
\[
    \frac{1}{\sqrt n}(\Delta\hat{\bm{w}})^\top\bm{\epsilon}\overset{d}{\to}N(0,\sigma_{lL}^2),
\]
where
\[
    \sigma_{lL}^2=\mathbb{E}\left[\left\|\bm{\Lambda}_{l,c}(\bm{A}_c)\bm{\lambda}_l^*-\bm{\Lambda}_{L,c}(\bm{A}_c)\bm{\lambda}_L^*\right\|^2\right].
\]
Moreover,
\begin{align*}
\begin{aligned}
\frac1n\|\Delta\hat{\bm{w}}\|^2 &= \frac1n\sum_{c=1}^n\left\|\bm{\Lambda}_{l,c}(\bm{A}_c)\hat{\bm{\lambda}}_l-\bm{\Lambda}_{L,c}(\bm{A}_c)\hat{\bm{\lambda}}_L\right\|^2 \Pto \mathbb{E}\left[\left\|\bm{\Lambda}_{l,c}(\bm{A}_c)\bm{\lambda}_l^*-\bm{\Lambda}_{L,c}(\bm{A}_c)\bm{\lambda}_L^*\right\|^2\right]=\sigma_{lL}^2.
\end{aligned}
\end{align*}
Since $\hat\sigma\overset{p}{\to}\sigma$, Slutsky's theorem gives
\[
\frac{(\Delta\hat{\bm{w}})^\top\bm{Y}}{\hat\sigma\|\Delta\hat{\bm{w}}\|}=\frac{n^{-1/2}(\Delta\hat{\bm{w}})^\top\bm{Y}}{\hat\sigma\{n^{-1}\|\Delta\hat{\bm{w}}\|^2\}^{1/2}}\dto N(0,1).
\]
Therefore,
\[
S_{lL}=\left(\frac{(\hat{\bm{w}}_{\bm{\Lambda}_l}^{\mathrm{bal}}-\hat{\bm{w}}_{\bm{\Lambda}_L}^{\mathrm{bal}})^\top\bm{Y}}{\hat\sigma\|\hat{\bm{w}}_{\bm{\Lambda}_l}^{\mathrm{bal}}-\hat{\bm{w}}_{\bm{\Lambda}_L}^{\mathrm{bal}}\|}\right)^2\dto \chi_1^2.
\]
This proves part (a).

Next suppose that $H_l$ does not hold and that $\mu_f(\bm{\Lambda}_l,\bm{\Lambda}_L)\neq \mu_f(\bm{\Lambda}_L,\bm{\Lambda}_L)$. By Theorem~\ref{thm:clt_fixed},
\[
T(\hat{\bm{w}}_{\bm{\Lambda}_l}^{\sthree})\Pto\mu_f(\bm{\Lambda}_l,\bm{\Lambda}_L), \qquad T(\hat{\bm{w}}_{\bm{\Lambda}_L}^{\sthree})\Pto\mu_f(\bm{\Lambda}_L,\bm{\Lambda}_L).
\]
Hence,
\[
T(\hat{\bm{w}}_{\bm{\Lambda}_l}^{\mathrm{bal}})-T(\hat{\bm{w}}_{\bm{\Lambda}_L}^{\mathrm{bal}})\Pto\mu_f(\bm{\Lambda}_l,\bm{\Lambda}_L)-\mu_f(\bm{\Lambda}_L,\bm{\Lambda}_L)\neq 0.
\]
Equivalently,
\[
\frac1n(\Delta\hat{\bm{w}})^\top\bm{Y}\Pto\mu_f(\bm{\Lambda}_l,\bm{\Lambda}_L)-\mu_f(\bm{\Lambda}_L,\bm{\Lambda}_L).
\]
Since $n^{-1}\|\Delta\hat{\bm{w}}\|^2=O_p(1)$ and $\hat\sigma\Pto\sigma$, we have
\[
S_{lL}=\frac{n\left\{n^{-1}(\Delta\hat{\bm{w}})^\top\bm{Y}\right\}^2}{\hat\sigma^2\left\{n^{-1}\|\Delta\hat{\bm{w}}\|^2\right\}}\Pto\infty.
\]
This proves part (b).

We now prove part (c). Let $l^*=\arg\min\{l:\bm{\Lambda}_l\text{ is a true low-rank structure}\}$. For every $l<l^*$, the structure $\bm{\Lambda}_l$ is misspecified. By part (b),
\[
P(S_{lL}<\chi^2_{1;\alpha})\to 0.
\]
Since $L$ is fixed,
\[
P\left(\exists l<l^*:S_{lL}<\chi^2_{1;\alpha}\right)\to 0.
\]
Therefore, with probability tending to one, no misspecified structure preceding $l^*$ is selected.

On the other hand, $H_{l^*}$ holds. By part (a), $S_{l^*L}\overset{d}{\to}\chi_1^2$, and hence, by continuity of the $(1-\alpha)$-quantile,
\[
P(S_{l^*L}<\chi^2_{1;\alpha})\to 1-\alpha.
\]
Consequently,
\[
\liminf_{n\to\infty}P(\hat l=l^*)\ge 1-\alpha.
\]
Moreover, because all structures $\bm{\Lambda}_l$ with $l\ge l^*$ are true low-rank structures and all $l<l^*$ are rejected with probability tending to one,
\[
P(\bm{\Lambda}_{\hat l}\text{ is a true low-rank structure})\to 1.
\]
This proves part (c).
\end{proof}

\subsection{Proof of Theorem~\ref{thm:simplified_indiv_weights}}
\begin{proof}
    
Under the assumption that each unit $i$ in any $c^{\mathrm{th}}$ cluster contributes a unique $h_{ij}$, $\bm \Lambda$ has a block matrix structure. In order to see this, note that in this case we have the low-rank structural assumption:
\[
    g_{ci}^{(\bm a_c)} = \bm \Lambda_{ci}(\bm a_c)\bm h_{ci},
\]
for all $\bm a_c\in \{0,1\}^{m_c}$ and all $i,c$. In Examples~\ref{ex:1} to \ref{ex:3}, $\bm \Lambda_{ci}(\bm a_c)$ is a row vector of indicators. One can then define,
\[
    \bm \Lambda(\bm a) = \begin{bmatrix}
        \bm \Lambda_{11}(\bm a_1) & \bm 0 & \cdots & \bm 0\\
        \vdots & \vdots & \ddots & \vdots\\
        \bm 0 & \bm 0 & \cdots & \bm \Lambda_{n,m_n}(\bm a_n)
    \end{bmatrix},
\]
and finally, one can then define
\[
    \bm \Lambda = \begin{bmatrix}
        \bm \Lambda(\bm a^{(0)})\\
        \vdots\\
        \bm \Lambda (\bm a^{(2^M-1)})
    \end{bmatrix}.
\]
Now assume that for numbers $\{v_{ci}(\bm a_c^{(j)})\}_{c,i,j}$, we define, $\bm v(\bm a_c^{(j)}):=(v_{11}(\bm a_c^{(j)}),\dots, v_{n,m_{n}}(\bm a_c^{(j)}))^\top$, and
\[
    \bm v = \left(\bm v(\bm a_c^{(0)})^\top,\dots, \bm v(\bm a_c^{(2^M-1)})^\top\right)^\top.
\]
Then note that we have,
\begin{align*}
    \bm \Lambda^\top\bm D\bm v &=  \begin{bmatrix}
        \bm \Lambda(\bm a_c^{(0)})\\
        \vdots\\
        \bm \Lambda (a_c^{(2^M-1)})
    \end{bmatrix}^\top\begin{bmatrix}
        e(\bm a^{(0)})\bm v(\bm a_c^{(0)})\\
        \vdots\\
        e(\bm a^{(2^M-1)})\bm v(\bm a_c^{(2^M-1)})
    \end{bmatrix}\\
    &= \sum_{j=0}^{2^M-1}e(\bm a_c^{(j)})\bm \Lambda(\bm a_c^{(j)})^\top\bm v(\bm a_c^{(j)})\\
    &= \sum_{j=0}^{2^M-1}\begin{bmatrix}
        e(\bm a^{(j)})\bm \Lambda_{11}(\bm a_1^{(j)})^\top v_{11}(\bm a_1^{(j)})\\
        \vdots\\
         e(\bm a^{(j)})\bm \Lambda_{nm_n}(\bm a_n^{(j)})^\top v_{nm_n}(\bm a_n^{(j)})
    \end{bmatrix}.
\end{align*}
The $(c,i)^{\mathrm{th}}$ entry of the above block matrix is given by
\begin{align*}
    &\sum_{\bm a\in \{0,1\}^M} e(\bm a)\bm \Lambda_{ci}(\bm a_c)^\top v_{ci}(\bm a_c)\\
    =& \sum_{\bm a_c\in \{0,1\}^{M_c}} e(\bm a_c;\bm X_c)\bm \Lambda_{ci}(\bm a_c)^\top v_{ci}(\bm a_c)\\
    =&\bm \Lambda_{ci}^\top\bm D_c\bm v_{ci},
\end{align*}
where, $\bm v_{ci}$ is defined as a vector with entries $v_{ci}(\bm a_c)$, for $\bm a_c\in \{0,1\}^{M_c}$. Hence, we have,
\[
    \bm \Lambda^\top\bm D \bm v = \begin{bmatrix}
        \bm v_{11}^\top\bm D_{nm_n}\bm \Lambda_{11},\cdots,  \bm v_{nm_n}^\top\bm D_{nm_n}\bm \Lambda_{nm_n}
    \end{bmatrix}^\top.
\]
Thus, based on the above observations, we write
\begin{align*}
    \Cipwlamb(e) = \left\{\bm R\tilde{\bm w}:\bm \Lambda_{ci}^\top\bm D_c(\tilde{\bm w}_{ci} - \tilde{\bm w}_{\mathrm{IPW},c,i})=\bm 0,\forall c,i\right\}.
\end{align*}
Because $\wipwlambtilde(e)$ is obtained by minimizing $\tilde{\bm w}^\top\bm D\tilde{\bm w} = \sum_{c,i}\tilde{\bm w}_{ci}^\top\bm D_c\tilde{\bm w}_{ci}$, which is separable in the terms indexed by $(c,i)$. Thus, $\wipwlambtilde(e)_{c,i}$ can actually be obtained by minimizing $\tilde{\bm w}_{ci}^\top\bm D_c\tilde{\bm w}_{ci}$ subject to $\bm \Lambda_{ci}^\top\bm D_{c}(\tilde{\bm w}_{ci} - \tilde{\bm w}_{\mathrm{IPW},c,i})=\bm 0$, and arguing similarly as Theorem \ref{thm:weights_expression}, the optima is given by $\bm D_{c}^{-1/2}\bm P_{\bm D_{c}^{1/2}\bm \Lambda_{ci}}\bm D_{c}^{1/2}\tilde{\bm w}_{\mathrm{IPW},c,i}$, which completes the proof. Note that the invertibility of $\bm D_c^{1/2}$ follows from Assumption~\ref{assm:pos}.
\end{proof}

\subsection{Proof of Lemma~\ref{lem:epxposure_map_weights}}
\label{sec:proof_lem_exposure_map_weights}
\begin{proof}
Let us start with understanding the explicit forms of $\bm \Lambda_{ci}$ and $\bm D_c$ under the discrete exposure mapping assumption. To begin with, let us enumerate $\mathcal S_{ci} = \left\{\phi_{ci\cdot 1}, \cdots, \phi_{ci\cdot r}\right\}$. Then note that, we have
\[
    \bm \Lambda_{ci}(\bm a_c) = \left[\mathbb I\left(\phi_{ci}(\bm a_c) = \phi_{ci\cdot 1}\right),\cdots,\mathbb I\left(\phi_{ci}(\bm a_c) = \phi_{ci\cdot r}\right) \right],
\]
and 
\[
    \bm D_c = \mathrm{diag}\left(e\left(\bm a_c^{(0)},\bm X_c\right),\cdots, e\left(\bm a_c^{(2^{M_c}-1)},\bm X_c\right)\right),
\]
where $\mathrm{diag}(\bm x)$ denotes a diagonal matrix with the entries of $\bm x$ arranged across the diagonal. Before proceeding, we first start with obtaining an expression for $(\bm \Lambda_{ci}^\top\bm D_c\bm \Lambda_{ci})$ that will help us in simplifying $\bm P_{\bm D_c^{1/2}\bm \Lambda_{ci}}$. For that, note
\begin{align*}
    \bm D_c\bm \Lambda_{ci} &= \mathrm{diag}\left(e\left(\bm a_c^{(0)},\bm X_c\right),\cdots, e\left(\bm a_c^{(2^{M_c}-1)},\bm X_c\right)\right)\begin{bmatrix}
        \bm \Lambda_{ci}\left(\bm a_c^{(0)}\right)\\
        \vdots\\
        \bm \Lambda_{ci}\left(\bm a_c^{(2^{M_c}-1)}\right)
    \end{bmatrix}\\
    &=\begin{bmatrix}
       e\left(\bm a_c^{(0)},\bm X_c\right) \bm \Lambda_{ci}\left(\bm a_c^{(0)}\right)\\
        \vdots\\
         e\left(\bm a_c^{(2^{M_c}-1)},\bm X_c\right)\bm \Lambda_{ci}\left(\bm a_c^{(2^{M_c}-1)}\right)
    \end{bmatrix}.
\end{align*}
This shows that
\begin{align*}
    \bm \Lambda_{ci}^\top\bm D_c\bm \Lambda_{ci} &= \sum_{0\leq j\leq 2^{M_c - 1}}e\left(\bm a_c^{(j)}, \bm X_c\right)\bm \Lambda_{ci}\left(\bm a_c^{(j)}\right)^\top\bm \Lambda_{ci}\left(\bm a_c^{(j)}\right).
\end{align*}
Next, note that for any $0\leq j\leq 2^{M_c}-1$,
\begin{align*}
    &\bm \Lambda_{ci}\left(\bm a_c^{(j)}\right)^\top\bm \Lambda_{ci}\left(\bm a_c^{(j)}\right)\\
    =&\begin{bmatrix}\mathbb I\left(\phi_{ci}(\bm a^{(j)}_c) = \phi_{ci\cdot 1}\right)\\
    \vdots\\
    \mathbb I\left(\phi_{ci}(\bm a^{(j)}_c) = \phi_{ci\cdot r}\right) \end{bmatrix}
    \begin{bmatrix}\mathbb I\left(\phi_{ci}(\bm a^{(j)}_c) = \phi_{ci\cdot 1}\right)&\dots&\mathbb I\left(\phi_{ci}(\bm a^{(j)}_c) = \phi_{ci\cdot r}\right) \end{bmatrix}\\
    =& \mathrm{diag}\left(\mathbb I\left(\phi_{ci}(\bm a^{(j)}_c) = \phi_{ci\cdot 1}\right),\cdots,\mathbb I\left(\phi_{ci}(\bm a^{(j)}_c) = \phi_{ci\cdot r}\right) \right).
\end{align*}
This gives
\begin{align*}
     \bm \Lambda_{ci}^\top\bm D_c\bm \Lambda_{ci} &= \sum_{0\leq j\leq 2^{M_c - 1}}e\left(\bm a_c^{(j)}, \bm X_c\right)\bm \Lambda_{ci}\left(\bm a_c^{(j)}\right)^\top\bm \Lambda_{ci}\left(\bm a_c^{(j)}\right)\\
     &= \sum_{0\leq j\leq 2^{M_c - 1}}e\left(\bm a_c^{(j)}, \bm X_c\right)\mathrm{diag}\left(\mathbb I\left(\phi_{ci}(\bm a_c) = \phi_{ci\cdot 1}\right),\cdots,\mathbb I\left(\phi_{ci}(\bm a_c) = \phi_{ci\cdot r}\right) \right)\\
     &= \mathrm{diag}\left(e(\phi_{ci\cdot 1}),\cdots, e(\phi_{ci\cdot r})\right),
\end{align*}
where we define
\[
    e(\phi_{ci\cdot l}) = \sum_{\bm a_c: \phi_{ci}(\bm a_c) = \phi_{ci\cdot l}} e(\bm a_c, \bm X_c),
\]
which is positive under Assumption~\ref{assm:pos}. Thus, under Assumption~\ref{assm:pos}, $\bm \Lambda_{ci}^\top\bm D_c\bm \Lambda_{ci}$ is invertible. Hence, we now simplify the expression of our weights as
\begin{align*}
    (\wipwlamb(e))_{ci} &= \bm R_{c}\bm D_{c}^{-1/2}\bm P_{\bm D_{c}^{1/2}\bm \Lambda_{ci}}\bm D_{c}^{1/2}\tilde{\bm w}_{\mathrm{IPW},c,i}\\
    &= \bm R_{c}\bm D_{c}^{-1/2}\bm D_c^{1/2}\bm \Lambda_{ci}\left(\bm \Lambda_{ci}^\top\bm D_c\bm \Lambda_{ci}\right)^{-1}\bm \Lambda_{ci}^\top\bm D_c^{1/2}\bm D_c^{1/2}\tilde{\bm w}_{\mathrm{IPW},c,i}\\
    &= \bm R_c \bm \Lambda_{ci}\left(\bm \Lambda_{ci}^\top\bm D_c\bm \Lambda_{ci}\right)^{-1}\bm \Lambda_{ci}^\top\bm D_c\tilde{\bm w}_{\mathrm{IPW},c,i}\\
    &=  \bm R_c \bm \Lambda_{ci}\left(\bm \Lambda_{ci}^\top\bm D_c\bm \Lambda_{ci}\right)^{-1}\bm \Lambda_{ci}^\top\bm f_c\cdot M_c^{-1},
\end{align*}
where, $\bm f_c = \left(f(\bm a_c^{(0)},\bm X_c), \cdots,f(\bm a_c^{(2^{M_c{-1}})},\bm X_c) \right)^\top$. Next,
\begin{align*}
    \bm \Lambda_{ci}^\top\bm f_c &= \sum_{0\leq j\leq 2^{M_c}-1}\bm \Lambda_{ci}\left(\bm a_c^{(j)}\right)^\top f\left(\bm a_c^{(j)},\bm X_c\right)\\
    &= \sum_{0\leq j \leq 2^{M_c}-1}\begin{bmatrix}
        \mathbb I\left(\phi_{ci}\left(\bm a_c^{(j)}\right) = \phi_{ci\cdot 1}\right)f\left(\bm a_c^{(j)},\bm X_c\right)\\
        \vdots\\
        \mathbb I\left(\phi_{ci}\left(\bm a_c^{(j)}\right) = \phi_{ci\cdot r}\right)f\left(\bm a_c^{(j)},\bm X_c\right)
    \end{bmatrix}\\
    &=\begin{bmatrix}
        f(\phi_{ci\cdot 1})\\
        \vdots\\
        f(\phi_{ci\cdot r})
    \end{bmatrix},
\end{align*}
where we again define
\[
    f(\phi_{ci\cdot l}) = \sum_{\bm a_c: \phi_{ci}(\bm a_c) = \phi_{ci\cdot l}} f(\bm a_c, \bm X_c).
\]
Hence, using the expression for $\bm \Lambda_{ci}^\top\bm D_c\bm \Lambda_{ci}$ we have,
\begin{align*}
     (\wipwlamb(e))_{ci} &=  \bm R_c \bm \Lambda_{ci}\left(\bm \Lambda_{ci}^\top\bm D_c\bm \Lambda_{ci}\right)^{-1}\bm \Lambda_{ci}^\top\bm f_c\cdot M_c^{-1}\\
     &= \bm \Lambda_{ci}(\bm A_c) \begin{bmatrix}
        \frac{f(\phi_{ci\cdot 1})}{M_c\cdot e(\phi_{ci\cdot 1})}\\
        \vdots\\
         \frac{f(\phi_{ci\cdot r})}{M_c\cdot e(\phi_{ci\cdot r})}
    \end{bmatrix}\\
    &= \left[\mathbb I\left(\phi_{ci}(\bm A_c) = \phi_{ci\cdot 1}\right),\cdots, \mathbb I\left(\phi_{ci}(\bm A_c) = \phi_{ci\cdot r}\right)\right]\begin{bmatrix}
        \frac{f(\phi_{ci\cdot 1})}{M_c\cdot e(\phi_{ci\cdot 1})}\\
        \vdots\\
         \frac{f(\phi_{ci\cdot r})}{M_c\cdot e(\phi_{ci\cdot r})}
    \end{bmatrix}\\
    &= \sum_{1\leq l\leq r} \frac{f(\phi_{ci\cdot l})}{M_c\cdot e(\phi_{ci\cdot l})}\cdot \mathbb I\left(\phi_{ci}(\bm A_c) = \phi_{ci\cdot l}\right)\\
    &=  \frac{f_{\phi_{ci}}(\bm A_c,\bm X_c)}{M_c\cdot e_{\phi_{ci}}(\bm A_c,\bm X_c)}.
\end{align*}
\end{proof}

\subsection{Proof of Theorem~\ref{thm:bal_feasible}}
\label{sec:proof_bal_feasible}
\begin{proof}
Under Assumption~\ref{assm:sufficient_feasibility}, we define the function $\mathrm{type}_{ci}(\bm a_c)$ that maps a treatment pattern $\bm a_c$ to a $j\in [1:r]$ such that $\bm \Lambda^{(\beta)}_{ci}(\bm a_c) = \bm \Lambda^{(j)}$. Then note that
\begin{align*}
    \mathbb P(\mathrm{type}_{c1}(\bm A_c) = j) = \mathbb E\left[\sum_{\bm a_c: \mathrm{type}_{c1}(\bm a_c)=j}e(\bm a_c, \bm X_c)\right]
\end{align*}
is strictly positive due to Assumption~\ref{assm:pos}. Next, note that for any Borel set $S$,
\begin{align*}
    \mathbb P(\bm X_{c1}\in S) = \sum_{1\leq j\leq r} \mathbb P(\bm X_{c1}\in S\mid \mathrm{type}_{c1}(\bm A_c) = j)\mathbb P(\mathrm{type}_{c1}(\bm A_c) = j).
\end{align*}
Because $\mathbb P(\mathrm{type}_{c1}(\bm A_c) = j)>0,\forall j$ we conclude from the above that $\mathbb P(\bm X_{c1}\in S) = 0\implies \mathbb P(\bm X_{c1}\in S\mid \mathrm{type}_{c1}(\bm A_c) = j) = 0,\forall j$. In particular, because $\bm X_{c1}$ has an absolutely continuous distribution, if $S$ has Lebesgue measure 0, then we must have $\mathbb P(\bm X_{c1}\in S\mid \mathrm{type}_{c1}(\bm A_c) = j) = 0,\forall j$. This shows that the conditional distribution of $\bm X_{c1}$ conditioned on $\mathrm{type}_{c1}(\bm A_c) = j$, for any $j$, is also absolutely continuous.

Having made this observation, we next make the following simplifications under Assumption~\ref{assm:sufficient_feasibility}:
\begin{align*}
    \bm \Lambda^\top\bm f = \sum_{c=1}^n\sum_{i=1}^{M_c}\sum_{\bm a_c \in \{0,1\}^{M_c}}\bm X_{ci}^\top\bm \Lambda_{ci}^{(\beta)}(\bm a_c)f(\bm a_c, \bm X_c) = \sum_{j=1}^r\bm v_j^\top\bm \Lambda^{(j)},
\end{align*}
where, 
\[
    \bm v_j = \sum_{\substack{c,i,\bm a_c: \\ \mathrm{type}_{ci}(\bm a_c) = j}}\bm X_{ci}f(\bm a_c, \bm X_c).
\]
Next note that similarly,
\[
    \bm \Lambda^\top\bm R^\top\bm w = \sum_{c=1}^n\sum_{i=1}^{M_c}w_{ci}\bm X_{ci}^\top\bm \Lambda^{(\beta)}(\bm A_c) = \sum_{j=1}^r \left(\sum_{c,i:\mathrm{type}_{ci}(\bm A_c) = j}w_{ci}\bm X_{ci}^\top\right)\bm \Lambda^{(j)}.
\]
Now, let us define the event $B_{j,n} = \left[\mathrm{span}\left(\left\{\bm X_{c1}: \mathrm{type}_{c1}(\bm A_c) = j, 1\leq c\leq n\right\}\right)= \mathbb R^p\right]$. Then note that if $B_j$ holds for all $j$, then we find weights $\{w_{ci}\}$ with $w_{ci}=0,i>1$ such that 
\[
    \bm v_j = \sum_{c,i:\mathrm{type}_{ci}(\bm A_c) = j}w_{ci}\bm X_{ci}, \forall j.
\]
This implies that for this set of weights, the balancing equation is satisfied and hence, $B_{\Lambda}$ holds. Thus,
\begin{align}
\label{eqn:bj_cap_relation}
    \bigcap_{1\leq j\leq r}B_{j,n} \subseteq B_{\Lambda}.
\end{align}
Next, fix a $j\in [1:r]$. Define the event
\[
    E_j = \{\textrm{There exists a sub-sequence }\{c_l\}_{l\geq 1}\textrm{ such that }\mathrm{type}_{c_l1}(\bm A_{c_l})=j\}.
\]
Then note that because $\mathbb P(\mathrm{type}_{c1}(\bm A_{c})=j)>0$ and does not depend on $c$ and $\mathrm{type}_{c1}(\bm A_{c})$'s are i.i.d. across $c$ (under Assumption~\ref{assm:super_pop}),  we have $\mathbb P(E_j) = 1$.
Thus, for $E:= \cap_{1\leq j\leq R}E_j$, we have that $\mathbb P(E) = 1$.

Fix a $j$. Assuming that $E$ (and hence, $E_j$) holds, consider the sequence $\{\bm X_{c_l1}\}_{l\geq 1}$ and note that this is an i.i.d. sequence in $l$ with the same common distribution $\bm X_{c1}\mid \mathrm{type}_{c_l,1} = j$. From the argument in the first paragraph in this section, this distribution is absolutely continuous and hence the square matrix $\begin{bmatrix}
    \bm X_{c_11}&\bm X_{c_21}&\cdots & \bm X_{c_p1}
\end{bmatrix}$ must be invertible almost surely. Thus, $\mathrm{span}\left\{\bm X_{c1}:\mathrm{type}_{c1}(\bm A_c) = j, c\geq 1\right\}\supseteq\mathrm{span}\left(\left\{\bm X_{c_l,1}:1\leq l\leq p\right\}\right) = \mathbb R^p$ almost surely. Define the event, $B_j$ as $\mathrm{span}\left(\left\{\bm X_{c1}: \mathrm{type}_{c1}(\bm A_c) = j\right\}, c\geq 1\right)= \mathbb R^p$, and note that $B_{j,n}\uparrow B_j$ as $n\rightarrow \infty$. Furthermore, the discussion above suggests $\mathbb P(B_j\mid E)=1$. Hence, we also have $\lim_{n\rightarrow \infty}\mathbb P(B_{j,n}\mid E) = 1,\forall j$. This implies that 
\begin{align*}
    &\lim_{n\rightarrow \infty}\mathbb P\left(\bigcap_{1\leq j\leq r}B_{j,n}\middle|E\right) = 1\\
    \implies & \underset{n\rightarrow \infty}{\lim\inf}~\mathbb P(B_{\Lambda}\mid E) \geq 1\textrm{ [Equation~\eqref{eqn:bj_cap_relation}]}\\
    \implies &\lim_{n\to \infty}~\mathbb P(B_\Lambda\mid E) = 1\textrm{ [since probability is bounded by 1]}\\
    \implies &  \lim_{n\to \infty}~\mathbb P(B_\Lambda) = 1\textrm{  [since }\mathbb P(E) = 1\textrm{]},
\end{align*}
thereby showing that Assumption~\ref{assm:balancing_feasibility} is satisfied.

\end{proof}

\section{Additional Simulations}
\label{sec:further_simulations_fixed}

In this section, we extend the simulations presented in Section~\ref{sec:experiments_simulated} by providing additional details and also considering the performance under stratified interference and additive interference. For stratified interference, we again assume that interference is restricted within the first five nearest neighbors and set
\[
    \bm h = \gamma_{\mathrm{strat}}\times (1,\dots, 6)^\top\otimes \bm 1_4,
\]
while for additive interference, we follow the convention set in Example~\ref{ex:7} and set,
\[
    \bm h = \gamma_{\mathrm{add}}\times (0,1,0,2,\cdots, 0,15)\otimes \bm 1_4.
\]
We can vary the parameters $\gamma_{\mathrm{strat}}$ and $\gamma_{\mathrm{add}}$ under their respective settings and directly control the SNR parameter $\eta$. In addition to the balancing estimator, we also consider the performance of the projection estimator. Furthermore, for the setting with nearest-neighbors interference, we consider the performance of the balancing estimator with a data-adaptive choice of the low-rank structure (as described in Section~\ref{sec:experiments_simulated}).

\subsection{Performance of confidence intervals}
\label{sec:further_simulations_fixed_ci}

Figure~\ref{fig:fixed_ci_graph_coverage} extends the results of Figure~\ref{fig:fixed_ci_main} by also presenting the results of the projection estimator under the same settings. This is the only setting where we evaluate the data-adaptive version of the balancing estimator. Figure~\ref{fig:fixed_ci_stratified_coverage} summarizes the corresponding results under stratified interference, while Figure~\ref{fig:fixed_ci_additive_coverage} shows the results under additive interference. In all of these cases, the balancing equations are always feasible.

We discuss several new findings beyond those mentioned in Section~\ref{sec:experiments_simulated}. First, the top row of each figure shows that under all interference settings, the performance of the projection estimator is between the balancing and IPW estimators in terms of standard deviation.  This is consistent with our discussion following Corollary~\ref{cor:propensity_correct_specification_fixed}. Here, across all settings, we find that the IPW estimator based on the estimated propensity scores exhibits performance similar to the IPW estimator with the true propensity score.  Finally, similar to Figure~\ref{fig:fixed_ci_main}, all the proposed estimators achieve a much smaller variance than the IPW estimator under all interference settings.

The middle row in Figures~\ref{fig:fixed_ci_stratified_coverage},~\ref{fig:fixed_ci_graph_coverage},~and~\ref{fig:fixed_ci_additive_coverage} also show results similar to those presented in Figure~\ref{fig:fixed_ci_main}.  They show that asymptotically, the CIs of all the proposed estimators have the desired coverage.  Unlike the balancing estimator, the coverage of the projection estimator also starts to deteriorate as $\kappa$ increases in the rightmost panel.  This is similar to the IPW estimator. Interestingly, for the additive interference case shown in Figure~\ref{fig:fixed_ci_additive_coverage}, the CI of the IPW estimator maintains the desired coverage even for extreme values of $\kappa$. However, this is simply due to its high variance, and the asymptotic normality still fails for the IPW estimator (see Figure~\ref{fig:additive_anomaly_quantile}). These findings are consistent with our recommendation that the balancing estimator is most appropriate under the linear model assumption.

The bottom row of Figures~\ref{fig:fixed_ci_stratified_coverage},~\ref{fig:fixed_ci_graph_coverage},~and~\ref{fig:fixed_ci_additive_coverage}
summarizes the length of the various confidence intervals. The results are similar to those shown in Figure~\ref{fig:fixed_ci_main}. As expected, the CI length of the projection estimator is placed between the CI lengths of the balancing and IPW estimators.

Another noteworthy observation is made based on the center panels in the bottom row of Figures~\ref{fig:fixed_ci_stratified_coverage},~\ref{fig:fixed_ci_graph_coverage},~and,~\ref{fig:fixed_ci_additive_coverage}. We find that as the SNR increases, the CI length of the IPW estimator approaches those of the other CIs. This finding is consistent with Theorem~\ref{thm:propensity_estimators_fixed_h} and Corollary~\ref{cor:correct_specification_fixed}, which show that the benefit of exploiting the low-rank structure is substantial only when the signal-to-noise ratio is not too high. 

\begin{figure}[t]
\centering
    \includegraphics{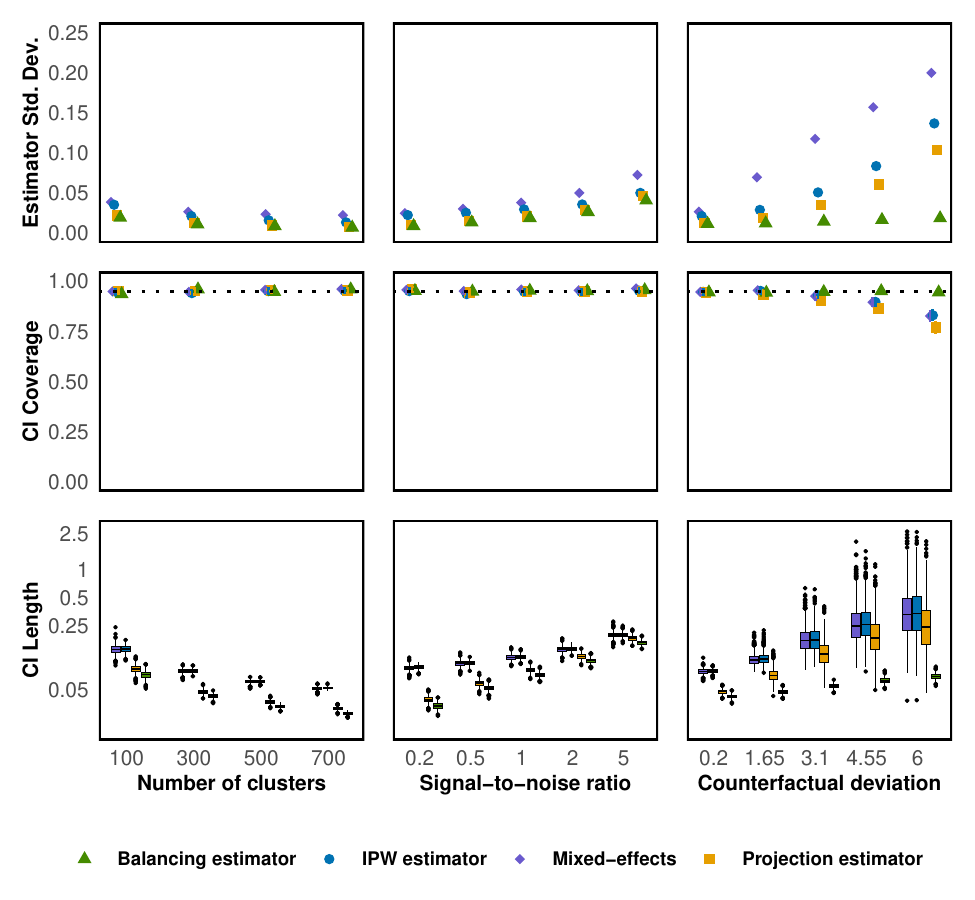}
    \caption{Performance of the various confidence intervals under stratified interference. Exactly the same settings and conventions as Figure~\ref{fig:fixed_ci_main}.}
    \label{fig:fixed_ci_stratified_coverage}
\end{figure}


\begin{figure}
    \centering
    \includegraphics{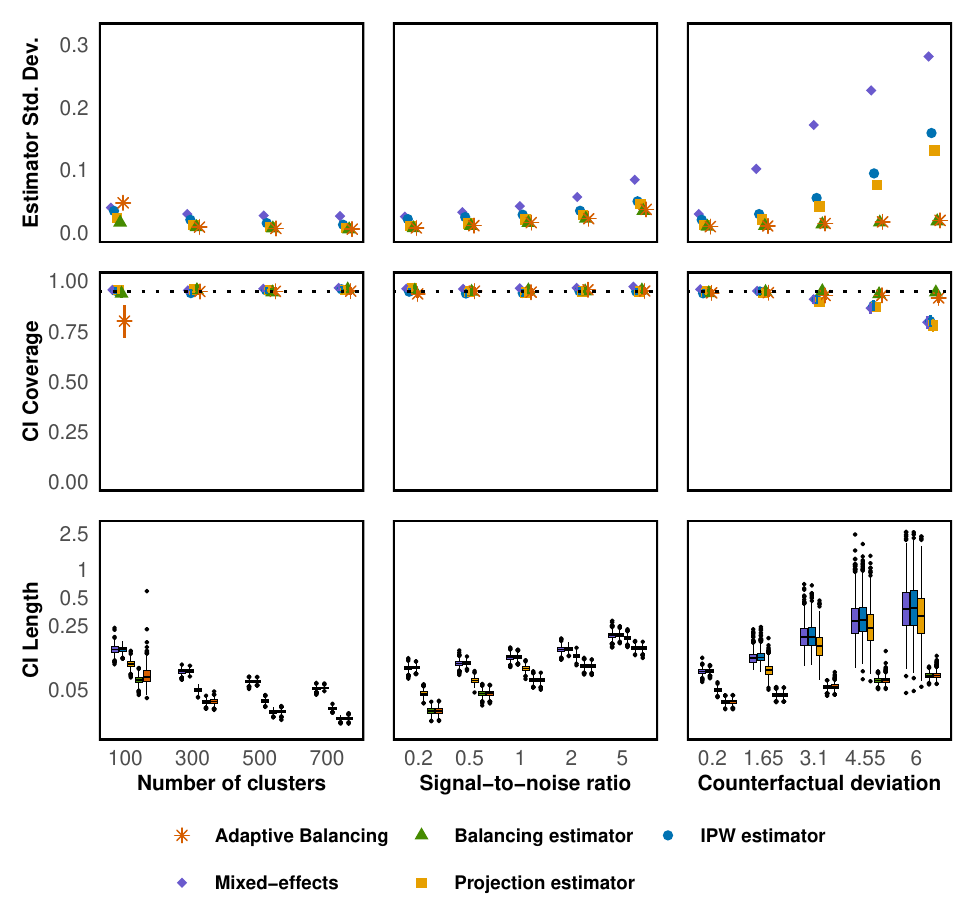}
    \caption{Performance of the various confidence intervals under nearest neighbors interference. Exactly the same settings and conventions as Figure~\ref{fig:fixed_ci_main}.}
    \label{fig:fixed_ci_graph_coverage}
\end{figure}


\begin{figure}
    \centering
    \includegraphics{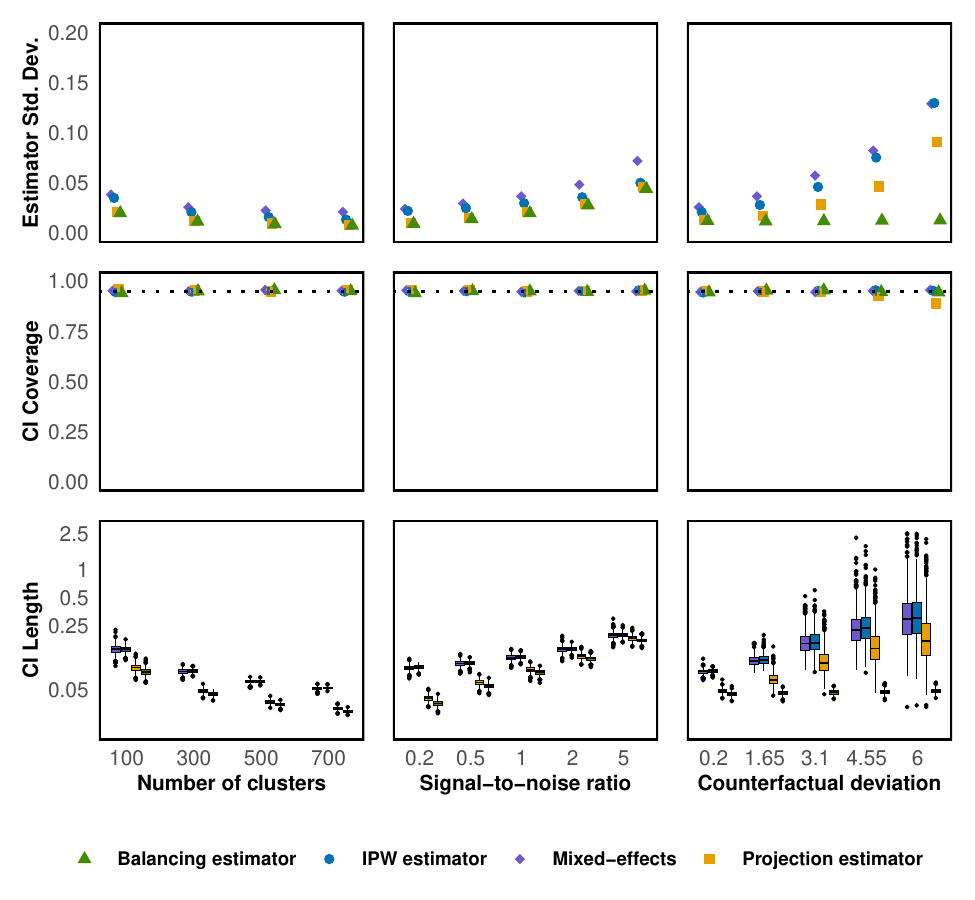}
    \caption{Performance of the various confidence intervals under additive interference. Exactly the same settings and conventions as Figure~\ref{fig:fixed_ci_main}.}
    \label{fig:fixed_ci_additive_coverage}
\end{figure}


\subsection{Bias of estimators}
\label{sec:further_simulations_fixed_bias}

Under the same settings as those of Figure~\ref{fig:fixed_ci_main}, Figure~\ref{fig:fixed_bias_main} compares the bias of the balancing, the projection, the IPW estimator with true and estimated propensity scores, over 1000 independent replicates. For the balancing estimator, we only use the replicates where the balancing equation is feasible.
Figures~\ref{fig:fixed_bias_stratified}~and~\ref{fig:fixed_bias_additive} summarize the results under analogous settings with stratified and additive interference, respectively. All these figures provide empirical evidence that is consistent with the following: (1) the projection estimator and the IPW estimator with true propensity score are always unbiased, with the counterpart with estimated propensity also exhibiting approximate unbiasedness; (2) while the balancing estimator is technically biased in finite samples (Section~\ref{sec:low_rank_weights}), this bias is small. 

\begin{figure}
\centering
\includegraphics{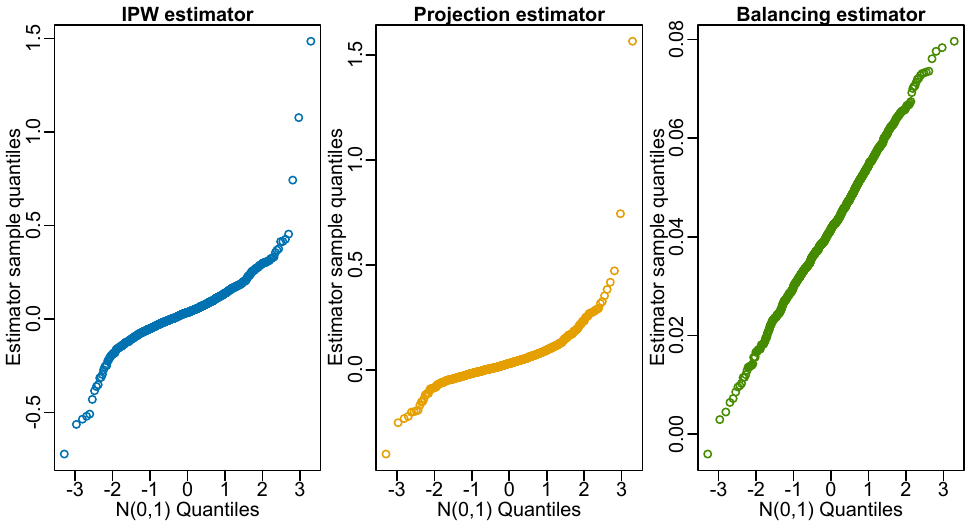}
    \caption{QQ-plot of the IPW, projection, and balancing estimators (against the standard normal quantiles) under $\kappa = 6$ from the rightmost panel of Figure~\ref{fig:fixed_ci_additive_coverage}. Even though the estimators are not scaled or centered, a non-straight line denotes departure from normality. Such a departure occurs for the IPW and projection estimators, but not for the balancing estimator. Thus, in the $\kappa = 6$ setting from the rightmost panel of Figure~\ref{fig:fixed_ci_additive_coverage}, the IPW estimator does not achieve normality for the particular sample size, and its empirical coverage simply reflects its high variance.}
    \label{fig:additive_anomaly_quantile}
\end{figure}

\begin{figure}
\centering
\includegraphics{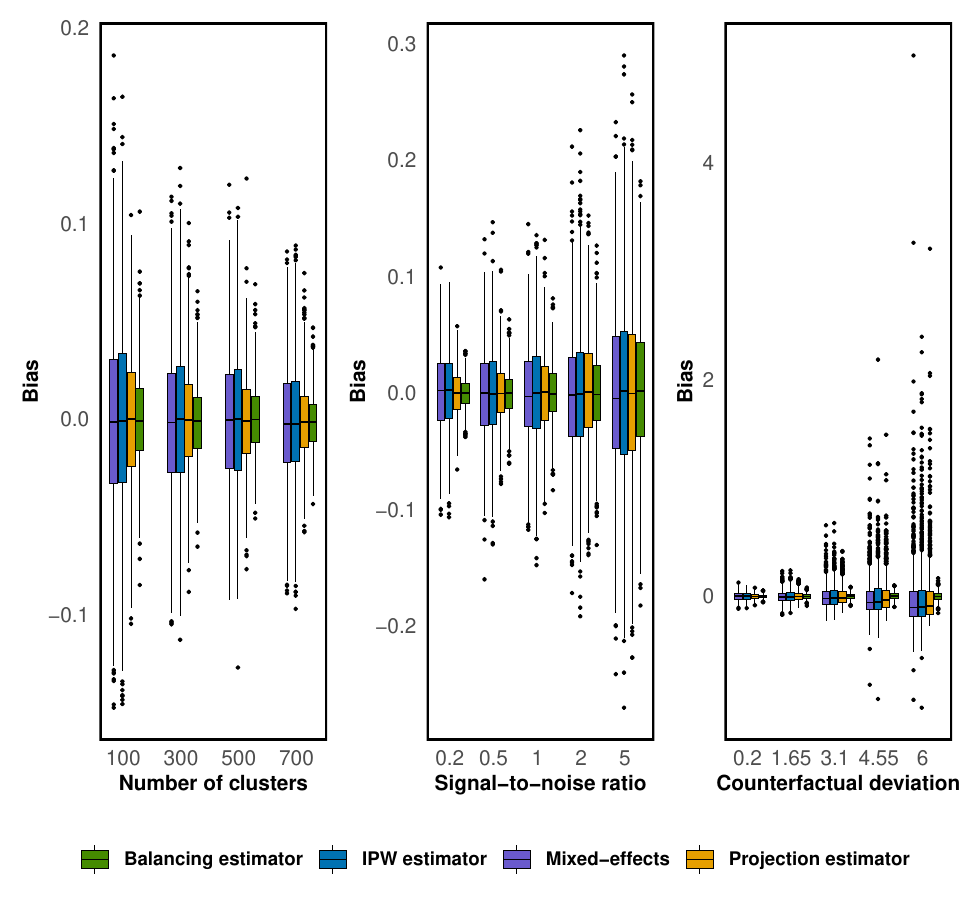}
    \caption{Bias of the various estimators under nearest neighbor interference. For the left, we fix $\kappa = 0.2,\eta = 0.2$ and vary $n$. For the center, we fix $n = 80,\eta = 0.2$ and vary $\kappa$. For the right, we fix $n = 80, \kappa = 0.2$ and vary $\eta$, with the x-axis plotting $\log(\eta)$. For all the simulations, we set $\sigma = 1$.}
    \label{fig:fixed_bias_main}
\end{figure}


\begin{figure}
    \centering
    \includegraphics{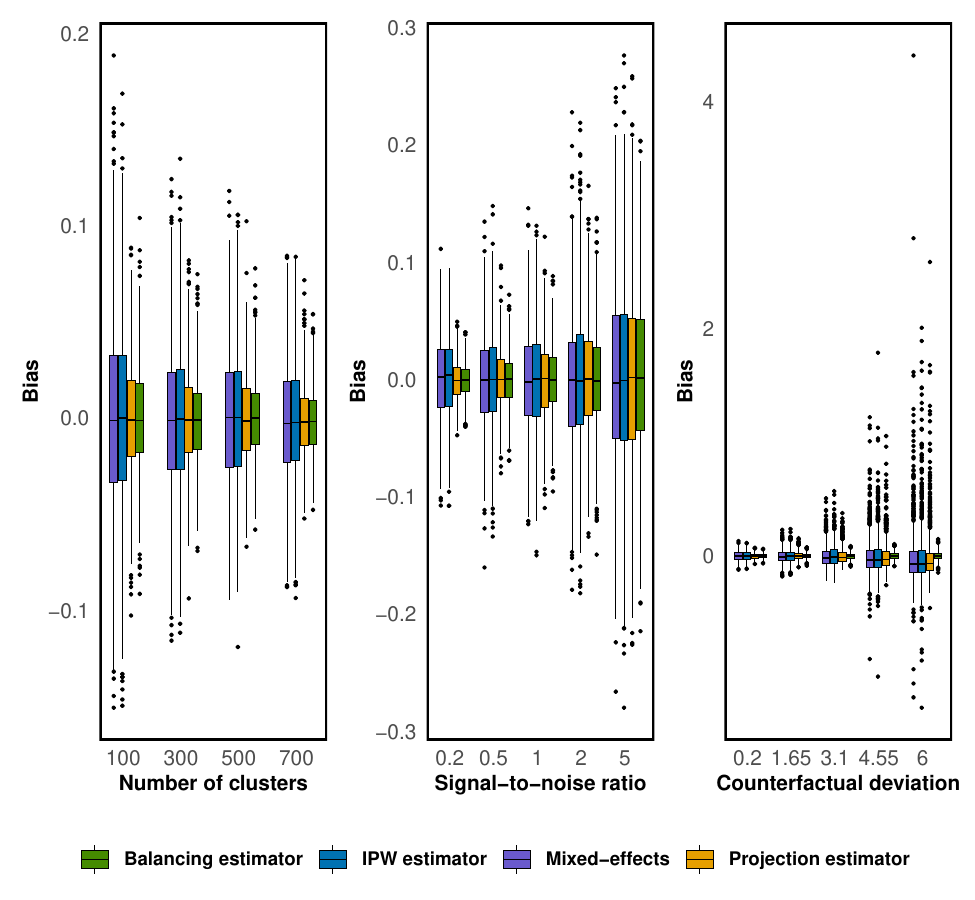}
    \caption{Bias of the various estimators under stratified interference. Exactly the same setting as Figure~\ref{fig:fixed_bias_main}.}
    \label{fig:fixed_bias_stratified}
\end{figure}

\begin{figure}
    \centering
    \includegraphics{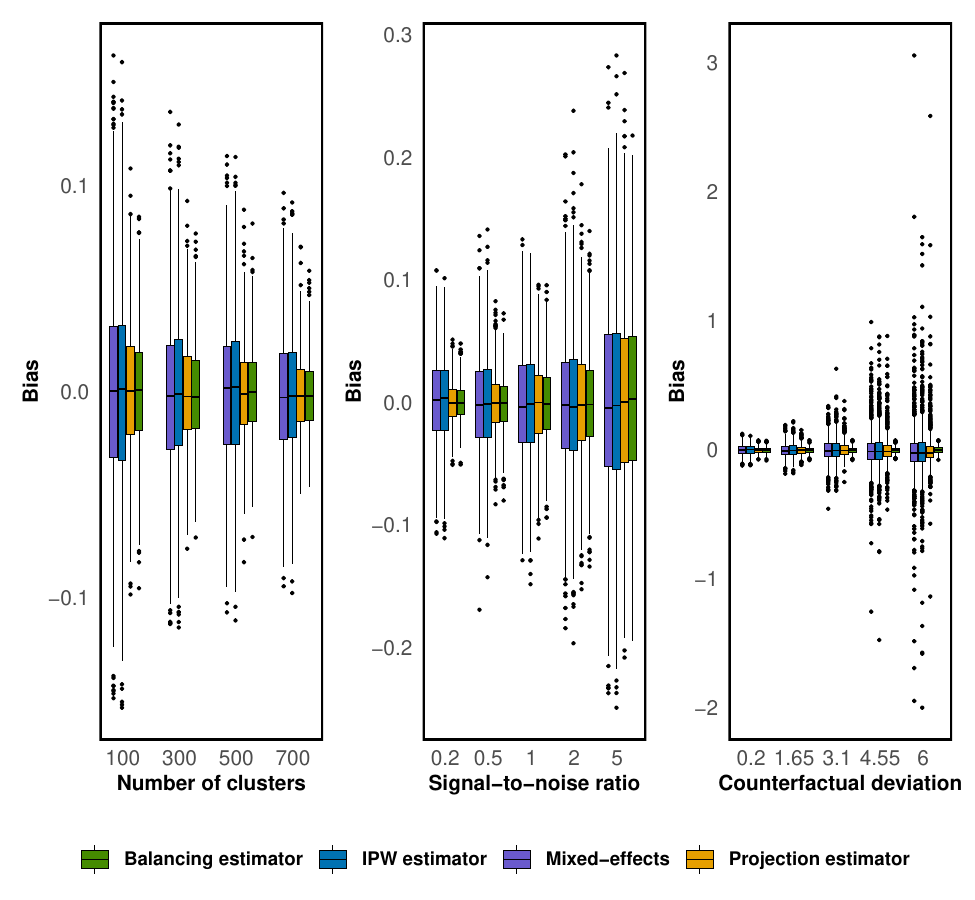}
    \caption{Bias of the various estimators under additive interference. Exactly the same setting as Figure~\ref{fig:fixed_bias_main}.}
    \label{fig:fixed_bias_additive}
\end{figure}

\section{Explicit mathematical expressions of low-rank structures}
\label{sec:imbal_real_data}

In this appendix, we give explicit mathematical expressions for the low-rank structures discussed in Section~\ref{sec:real_data_analysis_setup}.

\begin{itemize}
    \item {\bf Anonymous interference within neighborhood:} In the dataset, the number of treated neighbors for each of the units can range from 0 to 15 (excluding 14). Let $t_{ci\cdot 1}$ denote this number for the $i^{\mathrm{th}}$ unit in the $c^{\mathrm{th}}$ cluster. Then in this case, let $\bm e_{ci\cdot 1}$ denote a $15$-dimensional indicator vector having 1 at position $t_{ci\cdot 1}$ and 0 everywhere else. Then the low-rank structure takes the form: $\left(\bm \Lambda_{\text{neighbor1}}(\bm a_c)\right)_{ci} = (\mathbb I(a_{ci} = 0), \mathbb I(a_{ci}=1), \bm e_{ci\cdot 1})$.
    
    As discussed in Section~\ref{sec:real_data_analysis_setup}, we observe that $t_{ci\cdot}$ has a large number of unique values, leading to an effective treatment space of high dimension that might affect with our ability to achieve balance. We therefore consider another low-rank structure and reduce the number of effective treatments by discretizing this variable.  Specifically, we use three categories (low, medium, and high) using the 33 and 67 percentiles as thresholds; let us call this coarsened treatment status $r_{ci\cdot 2}$, and posit that the low-rank structure depends on $t_{ci\cdot 1}$ only through $r_{ci\cdot 1}$.  Under this assumption, the new low-rank structure takes the form:\\
    $\left(\bm \Lambda^*_{\text{neighbor1}}(\bm a_c)\right)_{ci} = \left(\mathbb I(a_{ci}=0),\mathbb I(a_{ci}=1),\mathbb I(r_{ci.1}=\text{low}  ),\mathbb I(r_{ci.1}=\text{medium}),\mathbb I(r_{ci.1}=\text{high} )\right)$.
    
    \item \textbf{Anonymous interference within neighborhoods of neighbors}: Recall that the neighbors of neighbors of a unit are defined as the units that are exactly at distance 2 in the graph. Let $r_{ci\cdot 2}$ denote the number of treated neighbors of neighbors (based on the dataset, this can take one of 40 different possible values). Let $\bm e_{ci\cdot 2}$ denote a 40-dimensional indicator vector for this. Then, the low-rank structure takes the form: $\left(\bm \Lambda_{\text{neighbor2}}(\bm a_c)\right)_{ci}  = (\mathbb I(a_{ci} = 0), \mathbb I(a_{ci}=1), \bm e_{ci\cdot 1},\bm e_{ci\cdot 2})$.
    
    Analogously, we again coarsen the number of treated in neighbors of neighbors to $r_{ci.2}$ taking values in $\{\textrm{low(l), medium(m), high(h)}\}$.  Then the modified low-rank structure given by:\\
    \begin{align*}
        &\left(\bm \Lambda^*_{\text{neighbor2}}(\bm a_c)\right)_{ci}\\
        =&\left(\mathbb I(a_{ci}=0),\mathbb I(a_{ci}=1),\mathbb I(r_{ci.1}=\mathrm l),\mathbb I(r_{ci.1}=\mathrm m),\mathbb I(r_{ci.1}=\mathrm h),\mathbb I(r_{ci.2}=\mathrm l),\mathbb I(r_{ci.2}=\mathrm m),\mathbb I(r_{ci.2}=\mathrm h)\right).
    \end{align*}
    
\end{itemize}

\end{document}